\documentclass[11pt]{article}
\usepackage[a4paper,margin=1in]{geometry}
\usepackage{amsfonts}
\usepackage{amsmath}
\usepackage{amssymb}
\usepackage{amsthm}
\usepackage{graphicx}
\usepackage{tikz}
\usepackage{sgame}
\usepackage{mathpazo}
\usepackage{times}
\usepackage{caption}
\usepackage{subcaption}
\usepackage{natbib}

\usepackage{hyperref}

\usetikzlibrary{decorations.pathreplacing}
\usetikzlibrary{angles, quotes}
\usepackage{color}

\usepackage[utf8]{inputenc}

\usepackage{soul}
\usepackage{url}
\usepackage[multiple]{footmisc}
\usepackage{comment}

\usepackage{empheq}

\newtheorem{theorem}{Theorem}
\theoremstyle{definition}
\newtheorem{assumption}{Assumption}

\newtheorem{corollary}{Corollary}

\newtheorem{lemma}{Lemma}
\newtheorem{proposition}{Proposition}
\newtheorem{remark}{Remark}

\newcommand{\CS}{\mathrm{CS}}
\newcommand{\TS}{\mathrm{TS}}
\newcommand{\PiR}{\Pi}

\newcommand{\supp}{\operatorname{supp}}
\newcommand{\E}{\mathbb{E}}
\newcommand{\e}{\varepsilon}

\begin{document}

\title{Market Composition and the Consumer-Surplus-Profit Frontier in Monopoly Screening\thanks{I would like to thank Zeinab Aboutalebi, \"Ozlem Bedre Defolie, Giacomo Calzolari, Pietro Dall'Ara, Piotr Dworczak, Vasundhara Mallick, Matthew Mitchell, Alessandro Pavan, and the participants of the 4th Italian Junior Workshop in Economic Theory and the Micro Working Group at EUI for valuable comments and suggestions. I gratefully acknowledge financial support received as part of a project, Digital Platforms: Pricing, Variety, and Quality Provision (DIPVAR), that has received funding from the European Research Council (ERC) under the European Union’s Horizon 2020 research and innovation programme (grant agreement No 853123).}}
\author{Panagiotis Kyriazis\thanks{
European University Institute, \url{panagiotis.kyriazis@eui.eu}.} 
}
\date{\today}

\maketitle

\begin{abstract}
Economic institutions often influence market outcomes not by directly controlling sellers’ menus, but by shaping the market composition sellers face. We study the welfare effects of this upstream choice in a monopoly screening model. An upstream actor chooses the distribution of buyer valuations, after which a monopolist screens optimally. We characterize the consumer-surplus–profit frontier across market compositions: as the weight on consumer surplus varies, the payoff pair induced by the optimal market composition traces the Pareto frontier. If profit receives at least as much weight as consumer surplus, the optimal market composition collapses to the top type. Otherwise, it exhibits no exclusion, no interior bunching, and a positive mass at the highest valuation. Under a mild curvature condition, the optimal market composition is unique. Greater weight on consumer surplus makes the market less top-heavy: the differentiated interior expands and the premium top segment shrinks. Segmentation convexifies the feasible set but does not enlarge the frontier.

\bigskip
\noindent  \it{ Keywords: monopoly screening, nonlinear pricing, price discrimination, market composition, consumer surplus }

\noindent \it{JEL Codes: D42, D82, L12, D60 }

\end{abstract}

\medskip
\newpage

\section{Introduction}

Upstream institutions often influence market outcomes not by directly controlling sellers’ menus of products and prices, but by shaping the market against which sellers price and screen. Platforms determine which users are shown a seller’s menu, marketplaces rank and direct traffic, brokers allocate leads, and regulators set eligibility or accreditation requirements. These choices alter the distribution of willingness to pay that sellers face. Consumer surplus, profits, and total surplus therefore depend not only on the menu a seller offers, but also on the composition of the market the seller faces.

In this paper, we ask what market composition an upstream actor should induce when a monopolist subsequently screens optimally. To study this question, we embed the problem in a standard \cite{mussa_rosen_1978} screening model. An upstream actor chooses the distribution of buyer valuations the seller will face. The seller observes this distribution and offers the optimal quality-price menu. The upstream actor evaluates the resulting outcome using a weighted sum of consumer surplus and seller profit. The paper therefore isolates a problem of optimal market composition under monopoly screening: the upstream actor does not directly control the seller’s menu, but instead shapes the market against which the seller screens. The central question is therefore not only which market composition is optimal for a given objective, but also what consumer-surplus--profit frontier is achievable through the choice of market composition.

Our main result characterizes that frontier. Each distribution of buyer valuations induces, under the seller’s optimal menu, an implementable pair of consumer surplus and profits. As the weight on consumer surplus varies, the upstream actor’s problem traces the supported frontier of this set. In our environment, the supported frontier coincides with the Pareto frontier. The weighted problem therefore recovers the entire efficient frontier of implementable consumer surplus-profit combinations and identifies the global tradeoff between consumer surplus and seller profit across markets.

This frontier is pinned down by a sharp characterization of the optimal market composition at each welfare weight. A threshold emerges at a consumer surplus weight of one-half. If the upstream actor places at least as much weight on profits as on consumer surplus, the optimal market collapses to the top type. If consumer surplus receives greater weight, the optimal market instead features a differentiated middle together with a premium top segment. Three structural properties are central. First, there is no exclusion, and all consumer types in the market are served. Second, there is no interior bunching. The active buyer types are fully separated on the interior. Third, a strictly positive mass remains at the highest valuation. Under a mild curvature condition, the optimal market composition is unique and can be characterized as the solution to a free-boundary value problem.

This characterization of the optimal market composition allows us to identify how the economy moves along the frontier as the upstream actor’s objective becomes more consumer-oriented. As the weight on consumer surplus rises, the interior expands and the premium top segment shrinks. Virtual values fall pointwise, the seller’s quality schedule falls pointwise in rank space, seller profit falls, consumer surplus rises, and total surplus falls. Thus, a more consumer-oriented upstream actor makes the seller’s screening environment less favorable in a precise sense: not by shutting down separation, but by reducing the seller’s effective marginal values and expanding the region of the market in which buyer rents are generated. The gain to consumers, however, is not a free lunch, as it comes at the cost of shrinking the total surplus available in the market.

A useful benchmark is the case of constant marginal cost of quality provision, in which the seller never uses interior quality distortions. She either sells the top-quality good or does not sell, so the downstream problem collapses from nonlinear screening to posted pricing, and the upstream actor chooses a demand environment rather than a screening environment. We solve this benchmark in closed form. If the weight on consumer surplus is at most one-half, the optimal market composition again collapses to the top type; for larger weights, it becomes a shifted equal-revenue distribution with a top atom. This yields an explicit characterization of the feasible consumer surplus-profit region in the standard monopoly selling problem. That region is convex, so the weighted solutions trace the entire Pareto frontier in closed form.

The frontier results in the convex-cost model and the linear-cost benchmark also clarify the paper’s relation to the information design and market segmentation literatures. In \cite{BergemannBrooksMorris2015}, the market is fixed and seller-observable segmentation determines how monopoly pricing shifts the division of surplus. In \cite{roessler_szentes_2017}, the market is again fixed, but buyer-side information changes the demand environment faced by an uninformed seller. In \cite{BergemannHeumannWang2026}, the aggregate market is fixed, and the designer chooses how to split it into seller-observable submarkets before segment-specific screening. Our results concern a different design margin. Rather than asking what can be achieved with information or segmentation within a given market, we characterize the frontier when the market composition itself is endogenously chosen. In the convex cost screening model, Theorem~\ref{thm:supported-equals-pareto} identifies the Pareto frontier across markets. In the linear-cost benchmark, Proposition~\ref{prop:linear-cost-geometry} characterizes the corresponding region in closed form. As market composition varies, this region can be interpreted as the envelope of the fixed-market payoff triangles that arise in monopoly-pricing models of information design, as in \cite{roessler_szentes_2017}, and market segmentation, as in \cite{BergemannBrooksMorris2015}. More generally, once aggregate market composition is endogenous, observable segmentation convexifies the feasible set of outcomes but does not enlarge its frontier. In this sense, the paper shifts attention from the use of information and segmentation within a fixed market to the endogenous choice of the market in which screening or price discrimination takes place.

We also characterize how the optimal market composition responds to the technology of quality provision. For constant-elasticity cost functions, a higher elasticity, holding fixed the marginal cost of full quality, makes sub-top qualities cheaper. The optimal market then becomes more polarized: the premium
top segment expands, the lower bound of the support of the optimal market falls, and interior quality rises. Unlike an increase in the weight on consumer surplus, this change makes the optimal market more top-heavy, and raises the upstream actor’s optimal value. This contrast helps separate the role of preferences from the role of technology: changing the upstream actor’s objective and changing the cost of
quality move the optimal market composition in opposite directions.

Our model allows the upstream actor to choose the market composition faced by the seller without restriction. The optimal composition we identify, however, also remains optimal in richer and more restrictive primitive environments. One useful interpretation imposes a fixed-mean restriction. If \(G_k\) denotes the optimal market composition at welfare weight \(k\), and \(\mu_k\) its mean, then \(G_k\) also solves the upstream actor’s problem when the feasible set is restricted to distributions with mean \(\mu_k\). The result is therefore not driven by the ability to choose a higher or lower average willingness to pay. Even holding the mean fixed, there remains a nontrivial, and under our uniqueness condition uniquely preferred, way to shape the market.

A still more restrictive interpretation yields a prior-specific information-design reading. Fix a prior \(H\) such that \(H\) is a mean-preserving spread of the optimal market composition \(G_k\). If the upstream actor is restricted to choose only among mean-preserving contractions of \(H\), then \(G_k\) remains uniquely optimal. This does not turn our problem into a general information-design problem. It does, however, show that the market composition selected in the unconstrained problem also solves a broad class of prior-specific information-design problems. In particular, the optimal market composition we identify may arise from a primitive environment that is coarse, or from one that contains more dispersed heterogeneity than the seller ever needs to use. The paper therefore characterizes not a special primitive distribution of buyer values, but an optimal effective market composition.

We also study a hold-up extension in which choosing the market composition is itself costly. We show that the baseline logic is robust when more seller-favorable market compositions are cheaper to generate: for low welfare weights, the market still collapses to the top type, while for higher weights every maximizer remains fully active and is either degenerate at the top type or exhibits a binding premium top segment with no interior bunching. By contrast, when the design cost increases with how seller-favorable the market composition is, the full baseline structure need not persist. In the tractable mean-based case, however, we derive threshold restrictions that characterize where bunching can arise and when a premium top segment re-emerges.

For platforms, marketplaces, and regulators, our results have clear implications. In a screening environment, a friendlier market to consumers is not necessarily one that does not involve price discrimination, but one that does, as long as the discriminated middle segment is sufficiently dispersed so that enough consumer rents are generated, and the efficient premium segment is not too large.
Although our model is
reduced-form, it provides a benchmark for how different instruments like
targeting, ranking, matching, disclosure, certification, and eligibility rules should reshape the market before trading takes place.

\subsection{Related literature}

This paper contributes to three related literatures. First, it builds on the canonical theory of
monopoly screening and nonlinear pricing. In the classic screening models of \cite{mussa_rosen_1978} and
\cite{maskin_riley_1984} the distribution of buyer types is taken as primitive,
and the main object of interest is the seller's optimal menu. The present paper keeps that screening
problem standard but endogenizes the environment against which the seller screens. Rather than asking how a monopolist
optimally screens a given demand distribution, it asks which demand distribution would be preferred by
an upstream actor who trades off consumer surplus and seller profit. In this sense, the paper shifts the design
problem from the menu to the demand environment. It is also related to \cite{SharkeySibley1993}, who study
nonlinear pricing with a regulator that places differential welfare weights across customer types. Relative to that paper,
our focus is not on the optimal tariff for a fixed environment, but on the optimal market composition that an upstream
actor induces before a standard monopolistic screener sets her menu.

Second, the paper relates to work that treats the demand environment itself as an object of design.
\cite{CondorelliSzentes2020} study a bilateral-trade environment in which the buyer chooses the distribution of her valuation before the seller makes a posted offer. \cite{roessler_szentes_2017} study buyer-optimal learning under monopoly pricing and characterize the signal structure that maximizes the buyer's payoff. Our linear cost benchmark recovers the posted-price logic of this literature, while the main model extends it to nonlinear screening with endogenous quality. \cite{DaiKoh2025} analyze how persuasive advertising shapes market power and welfare in a monopoly selling problem by considering a designer who can manipulate the demand curve by influencing individual valuations at a cost, while a monopolist prices against this manipulated demand curve. \cite{Yang2021} characterizes efficient market demands in a multi-product monopoly with a fixed surplus target; by contrast, we place the distributional choice directly in the upstream actor’s objective and characterize the optimal market composition and its comparative statics. 

Third, the paper is related to the literature on segmentation, information, and the division of surplus
under price discrimination. \cite{LewisSappington1994} study a seller’s incentive to provide buyers with private information about their tastes, while \cite{OttavianiPrat2001} show that a nonlinear-pricing monopolist prefers to commit to publicly reveal information affiliated with the buyer’s valuation.
\cite{BergemannBrooksMorris2015} show, in a unit-demand environment,
how additional information and market segmentation shape the set of attainable consumer- and
producer-surplus outcomes. More recently, \cite{HaghpanahSiegel2022} characterize when information available to a multiproduct seller can sustain the efficient allocation with consumers receiving the entire surplus gain, and when the \cite{BergemannBrooksMorris2015} surplus triangle is achievable. \cite{BergemannHeumannMorris2023} study cost-based
nonlinear pricing and characterize the upper frontier of feasible consumer-surplus and profit shares,
while \cite{BergemannHeumannWang2026} analyze consumer-optimal segmentation when a monopolist
can vary both prices and qualities across segments. \cite{Yang2022} studies a data broker that sells market
segmentations to a producer and characterizes the optimal segmentation design. Relative to these papers,
the present paper studies a different design margin: the type distribution that a standard monopolistic screener will face.

The paper is also motivated by platform and marketplace environments in which an intermediary shapes
who is exposed to a seller rather than the seller's contract itself. This is especially natural in
digital advertising and related matching environments. \cite{GoldfarbTucker2011} emphasize that
targetability and measurability are defining features of online advertising. Industry materials, such as Google Ads Help, likewise frame the intermediary’s problem as choosing which users see an advertiser’s message.\footnote{See Google Ads Help, ``About audience segments.''}
\cite{SudhirLeeRoy2022} study lookalike targeting, while
\cite{NeumannTuckerSubramanyamMarshall2023} analyze the use of first- and third-party audience
data to reach the ``right'' customers. These references motivate viewing the upstream actor as a gatekeeper who shapes the composition of the buyer pool
facing the seller.
\medskip

\section{Model}

A monopolist (she) sells goods of varying quality to a potential buyer (he). Before the monopolist offers a menu, an upstream
actor (he) chooses the market composition that the monopolist will face. A market composition is a probability measure $G\in \Delta([0,1]),$ where for every Borel set $A\subseteq[0,1]$, the mass $G(A)$ is the share of buyers whose valuation lies in $A$. The monopolist offers a menu $M\subseteq [0,\bar q]\times \mathbb R_+$,
where $\bar q$ denotes the maximal feasible quality and each element $(q,t)\in M$ specifies a quality-transfer pair. The buyer's utility is quasi-linear. If a
buyer of type $v$ chooses menu item $(q,t)$, his utility is
\begin{equation}
U^B(v,q,t)=vq-t.
\end{equation}
We assume that the buyer may refuse trade, so $(0,0)\in M$.

The monopolist's cost of providing quality $q$ is $c(q)$. We maintain the following assumptions on the
cost function throughout the paper.

\begin{assumption}\label{ass:cost}
The cost function $c:[0,\infty)\to\mathbb R_+$ is twice continuously differentiable on $(0,\infty)$, strictly increasing, strictly
convex, satisfies $c(0)=0$, $c'(0)=0$, $c''(q)>0$ for all $q\in(0,\overline Q]$, and there exists $0<\bar q<\overline Q$ such that $\bar q=(c')^{-1}(1)$.
\end{assumption}
We note that the requirement $c'(0)=0$ is just for notational convenience. All our results go through essentially unchanged in the case $c'(0)>0$. 

Assumption \ref{ass:cost} is enough to derive the structural properties that every optimal market composition must possess. To prove uniqueness and sharpen the characterization, we later impose the following additional assumption:

\begin{assumption}[Curvature Condition]\label{ass:cost-c3-unique}
Fix $k\in (1/2,1]$. Assume $c$ is thrice continuously differentiable on $(0,\infty)$ and
\[
-\,q\,\frac{c'''(q)}{c''(q)}<2
\qquad \forall q\in(0,\bar q].
\]
\end{assumption}


Assumption \ref{ass:cost-c3-unique} is a mild curvature regularity condition on the cost function that imposes a bound on how fast cost curvature can flatten. It requires that $c''(q)$ cannot fall faster than $q^{-2}$. Note that this assumption is satisfied, for instance, for constant elasticity cost functions of the form $c(q)=q^{\eta}/\eta$, $\eta>1$\footnote{Assumption \ref{ass:cost-c3-unique} is slightly stronger than what is required for uniqueness. In particular, the uniqueness result for each $k\in (1/2,1]$ goes through as long as \[
-\,q\,\frac{c'''(q)}{c''(q)}<\frac{3k-1}{2k-1}
\qquad \forall q\in(0,\bar q].
\] But since for the comparative statics results we would like Assumption \ref{ass:cost-c3-unique} to hold uniformly across $k$ and since $(3k-1)/(2k-1)\geq 2$ for $k\in (1/2,1]$ we maintain this stronger form. }.

\paragraph*{Timing.}
The timing is as follows:
\begin{enumerate}
    \item the upstream actor chooses $G\in\Delta([0,1])$;
    \item the seller observes $G$ and offers an optimal menu;
    \item a buyer with valuation distributed according to $G$ and privately
    observed, sees the menu and chooses an option, if any.
\end{enumerate}

\paragraph*{The seller's problem.}
Fix a market composition $G$. Let $\underline v:=\inf \operatorname{supp}(G)$, $\bar v:=\sup \operatorname{supp}(G)$, and $V:=[\underline v,\bar v]$. Because $G$ may have gaps in its support, it is convenient and without loss to define mechanisms on the whole interval
$V$.

By the revelation principle, we can restrict attention to direct mechanisms. A direct mechanism is a pair
of mappings $q:V\to[0,\overline Q]$, $t:V\to\mathbb R_+$, where $q(v)$ and $t(v)$ denote the quality and transfer assigned to a buyer who reports type $v$. The
mechanism must satisfy the usual incentive compatibility and individual rationality constraints:
\begin{equation}
vq(v)-t(v)\ge vq(v')-t(v')
\qquad\text{for all }v,v'\in V, \tag{B-IC}
\end{equation}
\begin{equation}
vq(v)-t(v)\ge 0
\qquad\text{for all }v\in V. \tag{B-IR}
\end{equation}

Given $G$, the seller's problem is
\begin{align*}
\max_{(q,t)} \int_{\underline v}^{\bar v}\bigl[t(v)-c(q(v))\bigr]\,dG(v)
\qquad\text{s.t.}\qquad
&\text{(B-IC), (B-IR).}
\end{align*}

By standard arguments, (B-IC) and (B-IR) hold if and only if $q$ is increasing, the envelope formula
holds, $U'^B(v)=q(v)$ a.e. on $V$,
the lowest type utility satisfies,
$U^B(\underline v)\geq0$,
and transfers satisfy $t(v)=vq(v)-U^B(v)$,
where $U^B(v)$ denotes the buyer's indirect utility. From now on, whenever we say that a function is
increasing or convex, we mean it in the weak sense.

Let $(q_G,t_G)$ denote a seller-optimal direct mechanism under $G$. Expected consumer surplus and seller profit are given by
\[
CS(G):=\E_G\bigl[vq_G(v)-t_G(v)\bigr],
\qquad
\Pi(G):=\E_G\bigl[t_G(v)-c(q_G(v))\bigr].
\]
Whenever multiple seller-optimal mechanisms exist, the expressions derived below imply
that these objects are uniquely pinned down by $G$.

\paragraph*{The upstream actor's problem.}
The upstream actor chooses $G$ anticipating the seller's optimal response. For a weight $k\in[0,1]$,
define
\[
W_k(G):=k\,CS(G)+(1-k)\Pi(G).
\]
The upstream actor's problem is
\[\max_{G\in\Delta([0,1])} W_k(G).\]

\subsection{Reformulating the problem}\label{sec:reformulation}

We now reformulate the upstream actor's problem in quantile space and express it as a problem where the upstream actor chooses the optimal virtual value profile. Because we allow
$G$ to have atoms and support gaps, we define virtual values through the
revenue curve rather than through the usual hazard-rate formula.

Fix $G \in \Delta([0,1])$. Define its lower quantile by $Q(u):=G^{-1}(u):=\inf\{v\in[0,1]:G(v)\ge u\}$ for $u\in(0,1]$,
and set $Q(0):=\lim_{u\downarrow 0} Q(u).$
Then $Q$ is nondecreasing and left-continuous on $[0,1]$.

Define the raw revenue curve by
$\widehat R(u):=(1-u)Q(u)$, $u\in[0,1]$.
Let $R:=\operatorname{cav}(\widehat R)$
denote the least concave majorant of $\widehat R$ on $[0,1]$. We refer to $R$ as the
\emph{concavified revenue curve}. Since $R$ is concave, its right derivative exists everywhere
on $[0,1)$. We define the associated ironed virtual value by $\phi(u):=-R'_+(u)$, $u\in[0,1)$,
and set $\phi(1):=\lim_{u\uparrow 1}\phi(u).$
Thus $\phi$ is nondecreasing and right-continuous, and
\[
R(u)=\int_u^1 \phi(t)\,dt \qquad \forall u\in[0,1].
\]
Moreover, $R(1)=0$ and $R(u)\le 1-u$ for all $u\in[0,1]$, so concavity implies $\phi(u)\le 1$, for all $u\in[0,1]$. 

For each $z\in\mathbb{R}$, define the seller's pointwise optimal quality and indirect profit by
\[
q(z):=\arg\max_{x\in[0,\bar q]}\{zx-c(x)\},
\qquad
\pi(z):=\max_{x\in[0,\bar q]}\{zx-c(x)\}.
\footnote{It is routine to see that
for every \(z\in\mathbb R\), the problem
$\max_{x\in[0,\bar q]}\{zx-c(x)\}$
has a unique solution. Moreover,
$q(z)=0$ for $z\leq 0$, $q(z)=(c')^{-1}(z)$ for $z\in(0,1)$ and $q(z)=\bar q$ for $z>1$,
where \((c')^{-1}\) is the inverse of \(c'\) on \([0,\bar q]\). The map \(q\) is nondecreasing and continuous. The value function \(\pi\) is convex and continuously differentiable, with
$\pi'(z)=q(z)$ for all $z\in\mathbb R$.
On \((0,1)\), \(q\) is continuously differentiable and $q'(z)=1/c''(q(z))$.
}\]
We also define $\mathcal X:=\left\{x:[ 0,1]\to[0,\bar q]:
x \text{ is nondecreasing and right-continuous}
\right\}.$

\paragraph*{Seller-side reduction\footnote{The proof of the seller-side reduction is given in the Online Appendix for completeness, as it is standard.}.}
Fix $G\in\Delta([0,1])$, and let $Q$, $\widehat R$, $R$, and $\phi$ be defined as above.
There exists a seller-optimal direct mechanism $(q_G,t_G)$ such that
$q_G(v)=0$ for all $v<Q(0)$ and $q_G$ is constant on every connected component of
$(Q(0),1)\setminus \operatorname{supp}(G)$. Define the induced quantile allocation by $x_G(u):=q_G(Q(u))$, $u\in[0,1]$. Then, $x_G\in\mathcal X$. 

Hence, the seller's problem is equivalent to
\begin{equation*}
\max_{x\in\mathcal X}\int_0^1 \big(\phi(u)x(u)-c(x(u))\big)\,du.
\label{eq:seller-quantile-problem}
\end{equation*}
Its unique pointwise maximizer is
$x^*(u)=q(\phi(u))$ for a.e. $u\in[0,1]$.
Therefore, seller-optimal profit, total surplus, and consumer surplus are uniquely pinned down
by $G$ and given by
\[\Pi(G)=\int_0^1 \pi(\phi(u))\,du,\quad TS(G)=
\int_0^1 \Big(Q(u)q(\phi(u))-c(q(\phi(u)))\Big)\,du,\]
\[\quad CS(G)=
\int_0^1 (Q(u)-\phi(u))q(\phi(u))\,du.\]

Define $\Omega:=\left\{
G\in\Delta([0,1]):
u\mapsto (1-u)G^{-1}(u) \text{ is concave on }[0,1]
\right\}$.
If $G\in\Omega$, then $\widehat R$ is already concave, so $R=\widehat R$.

\paragraph*{Without-loss regularization.}
The first preliminary result shows that we can restrict attention to regular distributions. Nonregularity is redundant because the seller already irons it away. Anticipating this, the upstream actor can directly choose the regularized distribution that generates the same ironed virtual values and, therefore, the same menu and profit while weakly improving the buyer side. 

\begin{lemma}\label{lem:concavify}
Fix $k\in[0,1]$ and let $G\in\Delta([0,1])$ have lower quantile $Q$, raw revenue curve
$\widehat R$, concavification $R$, and ironed virtual value $\phi$.
Define
\[
\widetilde Q(u):=\frac{R(u)}{1-u}
\quad\text{for }u\in[0,1),
\qquad
\widetilde Q(1):=\lim_{u\uparrow 1}\widetilde Q(u),
\]
and let $\widetilde G$ denote the distribution with quantile $\widetilde Q$. Then
$\widetilde G\in\Omega$ and $W_k(\widetilde G)\ge W_k(G).$
If $k>0$, equality holds if and only if
$(\widetilde Q(u)-Q(u))\,q(\phi(u))=0$ for a.e. $u\in(0,1)$.
Consequently,
\[
\sup_{G\in\Delta([0,1])} W_k(G)
=
\sup_{G\in\Omega} W_k(G)
\qquad\forall k\in[0,1].
\]
\end{lemma}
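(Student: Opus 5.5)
The argument proceeds in three steps. First, check that $\widetilde Q$ is a legitimate quantile function, so that $\widetilde G$ is well defined. Second, show that $\widetilde G\in\Omega$ and that $\widetilde G$ has the same concavified revenue curve $R$, and hence the same ironed virtual value $\phi$. Third, compare welfare using the seller-side reduction formulas.

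\emph{Step 1 (well-definedness).} The function $\widetilde Q(u)=R(u)/(1-u)$ is nonnegative because $R\ge \widehat R\ge 0$. It is bounded by $1$ because $R(u)\le 1-u$. It is nondecreasing because $R$ is concave with $R(1)=0$: the slope of the chord from $(u,R(u))$ to $(1,0)$ equals $-R(u)/(1-u)$, and concavity makes this slope nonincreasing in $u$. Left-continuity on $(0,1)$ follows from continuity of the concave function $R$ on $(0,1)$; the definition of $\widetilde Q(1)$ as a limit completes the construction. Hence $\widetilde Q$ is the lower quantile function of a unique $\widetilde G\in\Delta([0,1])$. At $u=0$ there is a possible discontinuity of $R$, but it matters only on a null set. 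The raw revenue curve of $\widetilde G$ is $(1-u)\widetilde Q(u)=R(u)$, which is concave. Therefore $\widetilde G\in\Omega$, its concavification is $R$ itself, and its ironed virtual value is the same $\phi$.

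\emph{Step 2 (welfare comparison).} By the seller-side reduction, $\Pi(\widetilde G)=\int_0^1\pi(\phi)\,du=\Pi(G)$. Consumer surplus satisfies
\[
CS(\widetilde G)-CS(G)=\int_0^1(\widetilde Q(u)-Q(u))\,q(\phi(u))\,du\ge 0,
\]
since $\widetilde Q\ge Q$ (because $R\ge\widehat R$) and $q\ge0$. Consequently $W_k(\widetilde G)-W_k(G)=k\,[CS(\widetilde G)-CS(G)]\ge0$. When $k>0$, equality holds exactly when this integral of a nonnegative integrand vanishes, that is, when $(\widetilde Q-Q)\,q(\phi)=0$ almost everywhere. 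The sup equality follows immediately: $\Omega\subseteq\Delta([0,1])$ gives one inequality, and the map $G\mapsto\widetilde G$ gives the other.

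\emph{Main obstacle.} The delicate point is boundary behavior, namely confirming that the quantile function of $\widetilde G$ as defined by the paper's convention, $\inf\{v:\widetilde G(v)\ge u\}$, returns exactly $\widetilde Q$ and not merely a version that agrees with it almost everywhere. This matters at $u=0$ and at $u=1$, and at jumps of $R$, which can occur only at $u=0$. I would handle it by noting that all welfare formulas are integrals in $u$, so agreement almost everywhere suffices. I would also note that the concavification of an almost-everywhere-equal revenue curve is unchanged on $(0,1)$, because $R$ is continuous there, so $\phi$ is unaffected.
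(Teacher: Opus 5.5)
Your proposal is correct and follows essentially the same route as the paper: bound $0\le R(u)\le 1-u$, derive monotonicity of $\widetilde Q$ from concavity of $R$ together with $R(1)=0$, observe that $\widetilde G$ has revenue curve $R$ and therefore the same $\phi$, and read off $\Pi(\widetilde G)=\Pi(G)$ and $CS(\widetilde G)-CS(G)=\int_0^1(\widetilde Q-Q)\,q(\phi)\,du\ge 0$ from the seller-side reduction. Your boundary hedge is unnecessary, because $R$ cannot jump at $u=0$: the convention $Q(0)=\lim_{u\downarrow0}Q(u)$ makes $\widehat R$ right-continuous at $0$, and $0\le R(u)\le 1-u$ at the other end, so $R$ is continuous on all of $[0,1]$ and $\widetilde Q$ is exactly (not merely a.e.) the lower quantile of $\widetilde G$, as the paper asserts.
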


\paragraph*{Without-loss truncation of negative virtual values.}
The second preliminary result shows that we can further restrict attention to nonnegative
virtual values. Negative virtual values do not affect the upstream actor's objective because the
seller sets zero quality in those regions.

\begin{lemma}\label{lem:truncate}
Fix $k\in[0,1]$ and let $G\in\Omega$ have concave revenue curve $R$ and ironed virtual value
$\phi$.
Define
\[
\phi^+(u):=\max\{\phi(u),0\},
\qquad
R^+(u):=\int_u^1 \phi^+(t)\,dt,
\]
\[
Q^+(u):=\frac{R^+(u)}{1-u}
\quad\text{for }u\in[0,1),
\qquad
Q^+(1):=\lim_{u\uparrow 1}Q^+(u),
\]
and let $G^+$ denote the distribution with quantile $Q^+$. Then $G^+\in\Omega$ and
$W_k(G^+)=W_k(G).$
\end{lemma}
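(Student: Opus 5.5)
Our plan is to show two things: first, that $G^+$ is a well-defined element of $\Omega$ whose concavified revenue curve is exactly $R^+$, so its ironed virtual value is $\phi^+$; second, that the integrands in $\Pi$ and $CS$ are unchanged almost everywhere.

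\emph{Step 1: $Q^+$ is a quantile function.} Since $\phi$ is nondecreasing and right-continuous with $\phi\le 1$, so is $\phi^+$, and $0\le \phi^+\le 1$. Hence $R^+(u)=\int_u^1\phi^+$ is concave, nonnegative, satisfies $R^+(1)=0$ and $R^+(u)\le 1-u$. Thus $Q^+(u)=R^+(u)/(1-u)\in[0,1]$ is an average of $\phi^+$ over $[u,1]$. It is therefore nondecreasing in $u$, because averages of a nondecreasing function over shrinking upper intervals increase. It is also continuous on $[0,1)$, and the limit at $1$ exists and equals $\phi^+(1^-)$. So $Q^+$ is a nondecreasing map into $[0,1]$; modulo a left-continuous modification at countably many points, which does not affect integrals, it is the quantile of a distribution $G^+\in\Delta([0,1])$. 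Its raw revenue curve is $(1-u)Q^+(u)=R^+(u)$, which is concave, so $G^+\in\Omega$. The concavification of $R^+$ is $R^+$ itself, and the ironed virtual value of $G^+$ is $-R^{+\prime}_+=\phi^+$.

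\emph{Step 2: profit.} By the seller-side reduction, $\Pi(G^+)=\int_0^1\pi(\phi^+(u))\,du$. Since $q(z)=0$ for $z\le 0$ and $\pi'=q$ with $\pi(0)=0$, we have $\pi(z)=0$ for $z\le 0$. Hence $\pi(\phi^+)=\pi(\phi)$ pointwise, and $\Pi(G^+)=\Pi(G)$.

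\emph{Step 3: consumer surplus.} We have $CS(G^+)=\int_0^1(Q^+-\phi^+)\,q(\phi^+)\,du$ and $CS(G)=\int_0^1(Q-\phi)\,q(\phi)\,du$, where $Q=R/(1-u)$ since $G\in\Omega$. Let $u_0:=\inf\{u:\phi(u)>0\}$. On $\{u<u_0\}$ we have $\phi\le 0$, so both $q(\phi)$ and $q(\phi^+)$ vanish and both integrands are zero. On $[u_0,1]$, monotonicity gives $\phi=\phi^+$ for all $t\ge u$, so $R^+(u)=R(u)$ and hence $Q^+(u)=Q(u)$. The integrands therefore coincide. (At $u_0$ itself, or on a set where $\phi=0$, the integrand is zero anyway.) Hence $CS(G^+)=CS(G)$, and combining the two steps gives $W_k(G^+)=W_k(G)$ for every $k$.

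\emph{Main obstacle.} The only delicate point is the identification in Step 1. We must ensure that the distribution defined through $Q^+$ has raw revenue curve exactly $R^+$, so that the seller-side reduction applies with virtual value $\phi^+$. This requires checking the left-continuity convention and the endpoint value $Q^+(0)$, which may produce an atom at the bottom. Because the reduction formulas depend only on the a.e. values of $Q$ and $\phi$, we would argue that such modifications on null sets are harmless.
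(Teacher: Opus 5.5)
Your proposal is correct and follows the paper's proof essentially step for step. Both arguments first verify that $\phi^+$ induces $G^+\in\Omega$ with ironed virtual value $\phi^+$, then use $q=\pi=0$ on $(-\infty,0]$ to get equal profit, and finally show the consumer-surplus integrands agree pointwise: $Q^+=Q$ wherever $\phi>0$ by monotonicity, while $q(\phi)=q(\phi^+)=0$ elsewhere. Your worry about a left-continuous modification is unnecessary, since $Q^+$ is already continuous on $[0,1)$ and right-continuity gives $\phi(u_0)\ge 0$.
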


By Lemmas \ref{lem:concavify} and \ref{lem:truncate}, it is without loss to restrict attention to regular distributions with
nonnegative virtual values. This allows us to formulate the upstream actor's problem directly
in virtual-value space. Define 
\[\Phi:=\left\{\phi:[0,1]\to[0,1]:
\phi \text{ is nondecreasing and right-continuous on} \  [0,1)
\right\},\]
and for each $\phi\in\Phi$, define the associated revenue curve and quantile by
\[
R_\phi(u):=\int_u^1 \phi(t)\,dt,
\qquad
Q_\phi(u):=\frac{R_\phi(u)}{1-u}
\quad\text{for }u\in[0,1),
\qquad
Q_\phi(1):=\lim_{u\uparrow 1}Q_\phi(u).
\]
Then $R_\phi$ is concave and $Q_\phi$ is the quantile of a regular distribution, denoted $G_\phi$.

We may therefore rewrite the upstream actor's objective as
\[
J_k(\phi):=k\,CS(\phi)+(1-k)\Pi(\phi),
\]
where
\[
\Pi(\phi):=\int_0^1 \left(\phi(u)q(\phi(u))-c(q(\phi(u)))\right)\,du,\quad CS(\phi):=\int_0^1 (Q_\phi(u)-\phi(u))q(\phi(u))\,du.\]

The upstream actor's problem can thus be written as
\begin{equation}\label{eq:upstream_problem}
\max_{\phi\in\Phi} J_k(\phi).
\end{equation}
\medskip

\section{Main Results}\label{main_result}

This section develops the paper’s main results: first, we characterize the optimal market composition and the induced virtual-value schedule, and study how the optimal market changes with \(k\), thereby describing how the economy moves along the frontier. Then, we prove that these weighted-sum solutions exhaust the entire Pareto frontier of implementable \((CS,\Pi)\)-pairs. Finally, we specialize the analysis to constant elasticity cost functions and derive comparative statics results with respect to cost elasticity.

\subsection{Structure of the Optimal Market Composition}

We begin by establishing the structural properties of the optimal market composition and, under Assumption \ref{ass:cost-c3-unique}, prove its uniqueness and fully characterize it for each welfare weight \(k\). Varying \(k\) later yields the path of implementable consumer surplus-profit pairs undrlying the frontier theorem.

\begin{theorem}\label{thm:main-structure}
Maintain Assumption~\ref{ass:cost} and fix $k\in[0,1]$. Then,
\begin{enumerate}
\item If $k\in[0,1/2]$, the unique solution to the upstream actor's problem is the degenerate distribution $G^*=\delta_1$.
\item If $k\in(1/2,1]$, every solution $\phi^*$ to the upstream actor's problem has the following structure:
\begin{itemize}
    \item [(i)] There exists a (unique) cutoff $b=b(\phi^*)\in(0,1)$ such that
$\phi^*$ is strictly increasing on $(0,b)$ with $0<\phi^*(u)<1$ for all $u\in(0,b)$, and $\phi^*\equiv 1$ on $[b,1]$. 
    \item [(ii)] For a.e. $u\in [0,b)$, $\phi^*(u)$ satisfies the Euler-Lagrange condition
    \[k\int_0^u \frac{q(\phi^*(s))}{1-s}\,ds+k(Q_{\phi^*}(u)-\phi^*(u))\,q'(\phi^*(u))+(1-2k)\,q(\phi^*(u))=0.\]
\end{itemize} 
\end{enumerate}
Moreover, under Assumption \ref{ass:cost-c3-unique}, for each $k\in(1/2,1]$, $\phi^*$ is unique and absolutely continuous. On $[0,b)$, the Euler-Lagrange condition reduces to a free-boundary value problem, the solution of which pins down $\phi^*(u)$ for all $u\in[0,b)$.
\end{theorem}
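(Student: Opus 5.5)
The proof works in virtual-value space, i.e.\ with problem \eqref{eq:upstream_problem}, using Lemmas~\ref{lem:concavify} and~\ref{lem:truncate}. The first step is to rewrite consumer surplus by Fubini. Since $Q_\phi(u)=\frac{1}{1-u}\int_u^1\phi(t)\,dt$, we get
\[
\int_0^1 Q_\phi(u)\,q(\phi(u))\,du=\int_0^1\phi(t)A_\phi(t)\,dt,
\qquad
A_\phi(t):=\int_0^t\frac{q(\phi(s))}{1-s}\,ds .
\]
Write $TS=CS+\Pi$, so that $J_k=k\,TS+(1-2k)\Pi$.

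\emph{Part 1 ($k\le 1/2$).}
\begin{itemize}
\item Values lie in $[0,1]$, so pointwise $Q\,q(\phi)-c(q(\phi))\le \max_x\{x-c(x)\}=\pi(1)$. Hence $TS\le\pi(1)$.
\item Since $\phi\le 1$, $\Pi\le\pi(1)$.
\item Both bounds are attained at $\delta_1$, where $Q\equiv\phi\equiv1$. Because $1-2k\ge0$, this gives $J_k\le(1-k)\pi(1)=J_k(\delta_1)$.
\item Uniqueness for $k<1/2$: optimality forces $\Pi=\pi(1)$, hence $\phi=1$ a.e.
\item Uniqueness for $k=1/2$: optimality forces $TS=\pi(1)$, hence $Q=1$ and $q(\phi)=\bar q$ a.e., so again $\phi\equiv1$.
\item To pass back from $\Omega$ to all of $\Delta([0,1])$: if $G$ is optimal then $\widetilde G$ is optimal, so $\widetilde G=\delta_1$ and $q(\phi)=\bar q>0$. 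The equality clause of Lemma~\ref{lem:concavify} then gives $Q=\widetilde Q=1$ a.e., i.e.\ $G=\delta_1$.
\end{itemize}

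\emph{Part 2 ($k>1/2$), structure.}
\begin{itemize}
\item \emph{Existence.} Helly's selection theorem gives compactness of $\Phi$. Along pointwise a.e.\ limits, $Q_\phi$ converges pointwise and bounded convergence applies, so $J_k$ is upper semicontinuous and a maximizer exists.
\item \emph{First variation.} Differentiating $J_k$ in a direction $h$ gives
\[
\delta J_k(\phi;h)=\int_0^1 h(u)D_\phi(u)\,du,
\qquad
D_\phi:=kA_\phi+k(Q_\phi-\phi)\,q'(\phi)+(1-2k)\,q(\phi).
\]
At an optimum, $\int hD_\phi\le0$ for every admissible direction, meaning one that keeps $\phi+\varepsilon h$ in $\Phi$.
\item \emph{$\phi^*\not\equiv1$.} At $\phi\equiv1$ we have $A=0$ at $u=0$ and $Q-\phi\equiv0$, so $D(0^+)=(1-2k)\bar q<0$. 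Lowering $\phi$ on a small initial interval therefore strictly improves, using the finite one-sided derivative $q'(1^-)=1/c''(\bar q)$.
\item \emph{No exclusion.} Suppose $\phi=0$ on some $[0,a)$ with $a>0$. There $A=0$ and $q=0$, but $Q>0$ and $q'(0^+)>0$, so $D>0$ and raising $\phi$ helps.
\item \emph{No interior bunching.} Suppose $\phi\equiv c\in(0,1)$ on $[l,r)$. There $A_\phi$ is strictly increasing. Also $Q_\phi'=(Q_\phi-\phi)/(1-u)\ge0$, and since $Q_\phi-c$ is nonnegative and nondecreasing, $(Q_\phi-c)q'(c)$ is nondecreasing. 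Hence $D_\phi$ is strictly increasing on $[l,r)$. Take $h=-1$ on $[l,m)$ and $h=+1$ on $[m,r)$ with $m$ the midpoint; this split preserves monotonicity and gives $\int hD>0$, a contradiction.
\item \emph{Top atom.} If $\phi<1$ for all $u<1$, then $q(\phi(s))$ stays bounded away from $0$ near $1$. So $A_\phi(u)\to\infty$ as $u\to1$, $D_\phi\to+\infty$, and raising $\phi$ near the top improves. Hence $b:=\inf\{\phi=1\}<1$, and $b>0$ by the previous step.
\item \emph{Euler--Lagrange.} On $(0,b)$, $\phi^*$ is strictly increasing and lies strictly inside $(0,1)$. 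Two-sided localized perturbations are therefore admissible, which yields $D_{\phi^*}=0$ a.e.\ on $[0,b)$. This is condition (ii).
\end{itemize}

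\emph{Uniqueness under Assumption~\ref{ass:cost-c3-unique}.} I expect this to be the main obstacle.
\begin{itemize}
\item On $[0,b)$, differentiate $D_{\phi^*}=0$, using $Q'=(Q-\phi)/(1-u)$ and $q''=-c'''/(c'')^3$. This produces an ODE for $\phi^*$ of the form $\phi'\,[\,\cdots\,]=\cdots$.
\item I would show that the curvature bound $-qc'''/c''<(3k-1)/(2k-1)$ makes the bracket strictly signed. This gives $\phi^*$ absolutely continuous with a locally Lipschitz right-hand side.
\item The boundary data are $\phi(b^-)=1$ together with the consistency condition linking $Q_\phi$ to $\int_u^1\phi$; this is the free boundary.
\item I would then run a shooting argument in $b$: each $b$ determines a unique trajectory, and the terminal condition at $u=0$ is strictly monotone in $b$, so exactly one $b$ works.
\item Proving this monotonicity, equivalently a strict comparison between two candidate solutions, is where the curvature condition must do its work, and is the delicate step.
\end{itemize}
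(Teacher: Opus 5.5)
Your Part 1 and most of the structural argument in Part 2 follow the paper's route. That includes the bound $J_k=k\,\TS+(1-2k)\Pi\le(1-k)\pi(1)$, the same first-variation density, the divergence of $A_\phi$ near $u=1$, and strict increase of $D_\phi$ on a flat block. The no-exclusion step, however, fails as written. It rests on $q'(0^+)>0$, which Assumption~\ref{ass:cost} does not provide: $q'(z)=1/c''(q(z))$, and $c''(0^+)$ may be infinite. For example, $c(q)=q^\eta/\eta$ with $1<\eta<2$ gives $q(z)=z^{1/(\eta-1)}$ and $q'(0^+)=0$. Then $D\equiv0$ on the putative zero block, so a first-order test in $\varepsilon$ cannot detect the improvement. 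Lifting the block to level $\varepsilon$ gains something of order $q(\varepsilon)$, not $\varepsilon$, so you need a finite-difference comparison as in Lemma~\ref{lem:no-prefix}. Set $b_\varepsilon=\inf\{\phi^*\ge\varepsilon\}$ and $C_\varepsilon=\int_{b_\varepsilon}^1\phi^*$. The rent created on $[0,a)$ is $k\,q(\varepsilon)(-\ln(1-a))\bigl(C_\varepsilon-\varepsilon(1-b_\varepsilon)\bigr)$, while the loss on $[a,b_\varepsilon)$ is at most $q(\varepsilon)(b_\varepsilon-a)=o(q(\varepsilon))$.

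Your no-bunching and Euler--Lagrange perturbations are also not admissible as stated. Moving $\mp\varepsilon$ on the two halves of a flat block, or adding a localized bump, breaks monotonicity whenever $\phi^*$ does not jump at the ends, and strict increase gives no slack for fixed-size local moves. You need clipped versions: $\min\{\phi^*,\gamma-\varepsilon\}$ left of $m$ and $\max\{\phi^*,\gamma+\varepsilon\}$ right of $m$; and the tail shift $\min\{\phi^*+\varepsilon,1\}$ on $[t,1)$, with $\phi^*-\varepsilon$ on $[t,b)$ clipped below $t$. You then show the clipping layers have vanishing measure, which is where maximality of the block, respectively strict increase, is used. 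A smaller point in Part 1: the equality clause of Lemma~\ref{lem:concavify} needs $k>0$. At $k=0$, use $\Pi\le\TS\le\pi(1)$ to get $\TS=\pi(1)$ and hence $Q=1$ a.e.\ directly.

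The uniqueness part is the larger gap, and the obstacle you single out is not the real one.

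First, $\phi^*(b^-)=1$ is not boundary data; it must be proved (Lemma~\ref{lem:sec-unique-smooth-fit}). Let $\lambda=(2k-1)/k$ and $p:=\phi^*(b^-)$. If $p<1$, lowering $\phi^*$ on $[b,b+\delta)$ is admissible and yields $A(b)\ge\lambda\bar q$. But the limit of the Euler--Lagrange equation gives $A(b)=\lambda q(p)-(1-p)q'(p)<\lambda\bar q$, a contradiction.

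Second, you cannot differentiate $D_{\phi^*}=0$ before knowing $\phi^*$ is absolutely continuous. The paper writes the condition algebraically as $Q=H(A,x)$, with $H(A,z)=c'(z)+c''(z)(\lambda z-A)$ and $x=q(\phi^*)$. From $Q\ge\phi^*$ it gets $A\le\lambda x$. The curvature bound is used only to get $\partial_zH>0$ on $\{z\ge A/\lambda\}$. This excludes jumps of $x$ and, by the implicit function theorem, gives $x=X(A,Q)$ with $X\in C^1$, hence absolute continuity. Passing from $x$ to $\phi^*=c'(x)$ also needs $\phi^*(0)>0$, which the paper derives from $q\,c''(q)\to0$. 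Using $(A,Q)$ rather than $\phi$ as the state keeps $c'''$ out of the vector field. Your $\phi$-ODE has $c'''$ in its coefficient and, with $c\in C^3$ only, need not be locally Lipschitz.

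Third, no comparison across $b$ is needed. In $t=-\ln(1-u)$, the system $\dot A=X(A,Q)$, $\dot Q=Q-c'(X(A,Q))$ is autonomous, and its terminal data $(A,Q,x)=(\lambda\bar q,1,\bar q)$ do not depend on $b$. So the reversed trajectories of two optimizers start from the same point and coincide by Lipschitz uniqueness. Because $\dot A=x>0$, the condition $A(0)=0$ then fixes $T=-\ln(1-b)$ (Proposition~\ref{prop:sec-unique-half-plus}). The ``shooting monotonicity'' you flag as delicate is automatic once these state variables and terminal conditions are in place.
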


We can now translate the structural properties identified by Theorem \ref{thm:main-structure} into properties of the optimal market composition $G^*$ and its quantile function $Q^*$.

\begin{corollary}[Support and atom structure of the optimal distribution]
\label{cor:support-atom-structure}
Assume Assumption~\ref{ass:cost} and fix $k\in(1/2,1]$. Let $\phi^*\in\Phi$ solve the upstream actor's problem, let $b\in(0,1)$ be the cutoff and let $Q^*$ and $G^*$ be the associated quantile function and CDF induced by $\phi^*$, respectively. Then, $G^*$ has support $\operatorname{supp}(G^*)=[\underline{v},1]$, with $\underline{v}>0$, has no atoms on $[\underline{v},1)$ and has a unique atom at $v=1$ of size $1-b$. Equivalently, $Q^*$ is strictly increasing on $[0,b)$ and satisfies $Q^*(u)=1$ for all $u\in[b,1]$.
\end{corollary}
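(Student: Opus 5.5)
We will derive every property of $G^*$ from the structure of $\phi^*$ in Theorem~\ref{thm:main-structure}(2)(i), using the representation
\[
Q^*(u)=Q_{\phi^*}(u)=\frac{1}{1-u}\int_u^1\phi^*(t)\,dt .
\]
First, consider $u\in[b,1)$. There $\phi^*\equiv1$, so $R_{\phi^*}(u)=1-u$ and hence $Q^*(u)=1$. By definition $Q^*(1)=\lim_{u\uparrow1}Q^*(u)=1$ as well. Thus $Q^*\equiv1$ on $[b,1]$. This gives $G^*(\{1\})\ge 1-b$.

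Next, consider $u\in[0,b)$. Differentiating the representation (valid for a.e.\ $u$, and as a one-sided derivative everywhere by right-continuity of $\phi^*$) gives
\[
Q^{*\prime}(u)=\frac{Q^*(u)-\phi^*(u)}{1-u}.
\]
So strict monotonicity of $Q^*$ on $[0,b)$ reduces to showing $Q^*(u)>\phi^*(u)$ there. Since $\phi^*$ is nondecreasing and $\phi^*\le1$, $Q^*(u)$ is the average of $\phi^*$ over $[u,1]$. This average strictly exceeds $\phi^*(u)$ unless $\phi^*$ is constant on $[u,1]$. Constancy is impossible for $u<b$: $\phi^*(u)<1$ there while $\phi^*=1$ on $[b,1]$, which has positive length because $b<1$. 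Hence $Q^*(u)>\phi^*(u)$, and $Q^*$ is strictly increasing on $[0,b)$.

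To avoid relying on differentiability, we will state this directly. For $u<u'<b$,
\[
(1-u)Q^*(u)=\int_u^{u'}\phi^*+(1-u')Q^*(u'),
\]
and $\int_u^{u'}\phi^*<(u'-u)Q^*(u')$ because $\phi^*\le\phi^*(u')<Q^*(u')$ on $[u,u']$. Rearranging yields $Q^*(u)<Q^*(u')$. Moreover $Q^*$ is continuous, since it is an integral divided by a positive continuous function. In particular $Q^*(u)<\lim_{s\uparrow b}Q^*(s)=Q^*(b)=1$ for all $u<b$.

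We then translate these facts to $G^*$. Set $\underline v:=Q^*(0)=\int_0^1\phi^*$. This is strictly positive because $\phi^*\ge0$ and $\phi^*=1$ on $[b,1]$ with $b<1$. Since $Q^*$ is continuous and strictly increasing from $\underline v$ to $1$ on $[0,b]$, its range covers $[\underline v,1]$ with no gaps, so $\operatorname{supp}(G^*)=[\underline v,1]$. Strict monotonicity of the quantile on $[0,b)$ means no value in $[\underline v,1)$ is taken on a quantile interval of positive length, so $G^*$ has no atoms there. The atom at $1$ has mass equal to the Lebesgue measure of $\{u:Q^*(u)=1\}$. This set is $[b,1]$, because $Q^*<1$ on $[0,b)$, so the atom has size exactly $1-b$. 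The equivalence with the quantile statements is the standard correspondence between a distribution and its lower quantile.

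The only delicate step is the strict inequality $Q^*>\phi^*$ on $[0,b)$. It is there that both parts of Theorem~\ref{thm:main-structure}(2)(i) are used: $\phi^*<1$ before $b$, and $b<1$. No uniqueness or curvature assumption is needed, which is consistent with the corollary assuming only Assumption~\ref{ass:cost}.
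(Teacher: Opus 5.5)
Your proof is correct and follows the paper's skeleton. You use the same tail computation $Q^*\equiv 1$ on $[b,1]$, the same identity $(1-u)Q^*(u)=\int_u^{u'}\phi^*+(1-u')Q^*(u')$ for $u<u'<b$, and the same translation of continuity and strict monotonicity of $Q^*$ into the support $[\underline v,1]$, no atoms below $1$, and an atom of size $1-b$.

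The one genuine difference is the source of the strict inequality $\int_u^{u'}\phi^*<(u'-u)Q^*(u')$.
\begin{itemize}
\item The paper takes it from strict increase of $\phi^*$ on $(0,b)$, which is Corollary~\ref{cor:strict} and rests on the no-bunching result of Proposition~\ref{prop:no-pooling-VI}. It pairs this with the weak bound $\phi^*(u')\le Q^*(u')$.
\item You use only weak monotonicity of $\phi^*$, paired with the strict bound $\phi^*(u')<Q^*(u')$. That bound comes from the binding top tail: $\phi^*<1$ before $b$, $\phi^*=1$ on $[b,1]$, and $b<1$.
\end{itemize}

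Your route shows that the support and atom structure follows from the premium top segment alone. It would survive interior flat blocks of $\phi^*$, since such blocks create bunching in the allocation, not atoms in $G^*$. The argument applies verbatim to any $\phi\in\Phi$ with a top tail of positive length. The paper's version is marginally shorter once strict increase is in hand, but it suggests a dependence on no-bunching that this corollary does not need.
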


In words, Theorem \ref{thm:main-structure} and Corollary \ref{cor:support-atom-structure} say that if the upstream actor places lower weight on consumer surplus, then the optimal market collapses to the top type. If consumer surplus receives greater weight, then there is no exclusion; all consumers in the market are served. Moreover, there is no bunching; consumers on the interior are fully separated and screened. In addition, a strictly positive mass remains at the top and is served efficiently. Finally, the optimal virtual value schedule is unique under the mild curvature condition of Assumption \ref{ass:cost-c3-unique} and is fully characterized by a free-boundary value problem. As a result, the optimal market composition that the optimal virtual value schedule it induces are also unique.

Corollary~\ref{cor:support-atom-structure} moreover shows that the lower endpoint of the support of $G^*$ is not chosen
independently. It is induced by the entire optimal virtual-value profile:
\[
\underline{v}=\int_0^1 \phi^*(u)\,du
    =1-\int_0^b (1-\phi^*(u))\,du.
\]
This formula has a clear interpretation. The benchmark $\phi\equiv 1$ would generate the degenerate market
composition $G=\delta_1$, in which all buyers are top types. The quantity
$\int_0^b (1-\phi^*(u))\,du$ therefore measures how much the upstream actor moves the market composition away from that fully top-heavy
benchmark on the interior region. The lower endpoint $\underline{v}$ is exactly what remains of the top benchmark
after subtracting this interior deviation. A larger interior branch lowers $\underline{v}$; a larger top
tail raises it.

\begin{figure}[htbp]
    \centering
    \includegraphics[scale=0.6]{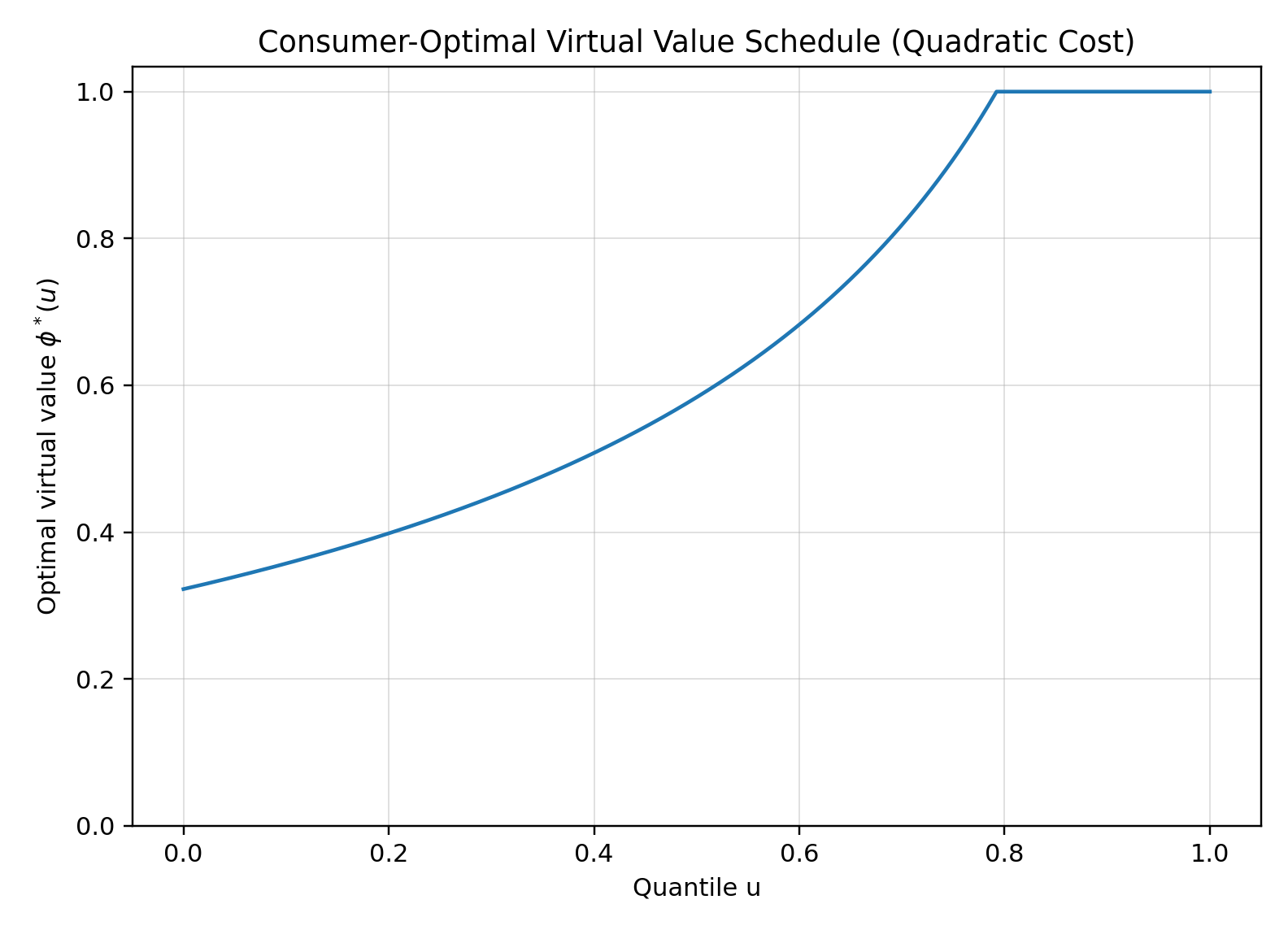}
    \caption{Consumer-Optimal Virtual Value Schedule }
    \label{fig:consumer_optimal_phi}
\end{figure}
We illustrate the optimal virtual value schedule and the associated optimal quantile function and optimal market composition in Figures \ref{fig:consumer_optimal_phi}, \ref{fig:consumer_optimal_Q}, and \ref{fig:consumer_optimal_market_composition}. In these figures, we plot the solution to the upstream actor's problem for the quadratic cost case with consumer surplus weight $k=1$.\footnote{In the quadratic cost case, the free-boundary value problem becomes linear, and we can solve for the optimal virtual value schedule $\phi^*$ and the optimal quantile function $Q^*$ in closed-form. The relevant expressions and derivations are available in the Online Appendix.} Thus, for this particular weight, we recover the unique consumer-optimal virtual value schedule, quantile function, and market composition, respectively.
\begin{figure}[htbp]
    \centering

    \begin{subfigure}{0.49\textwidth}
        \centering
        \includegraphics[width=\linewidth]{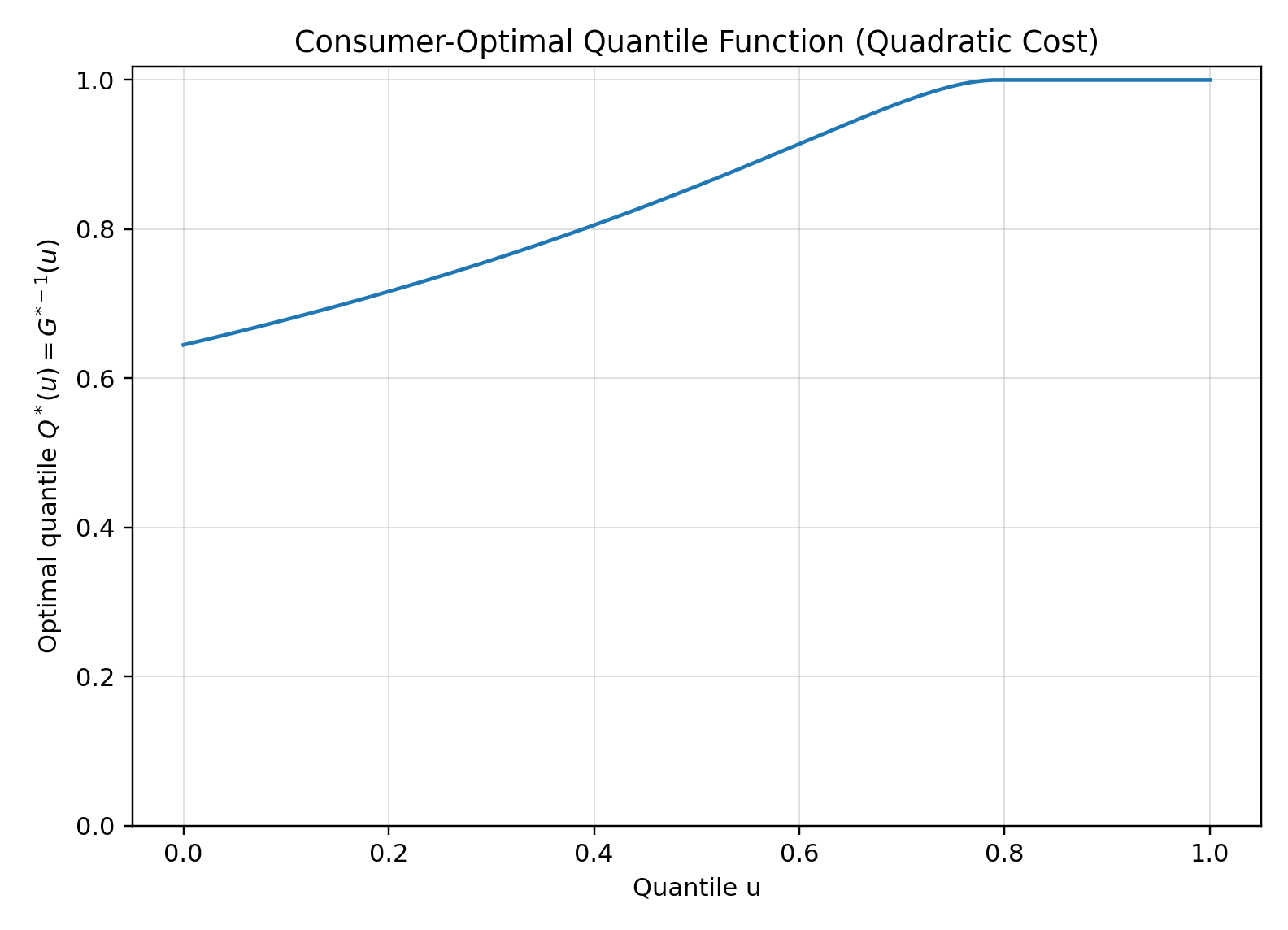}
        \caption{Consumer-Optimal Quantile Function}
        \label{fig:consumer_optimal_Q}
    \end{subfigure}
    \begin{subfigure}{0.49\textwidth}
        \centering
        \includegraphics[width=\linewidth]{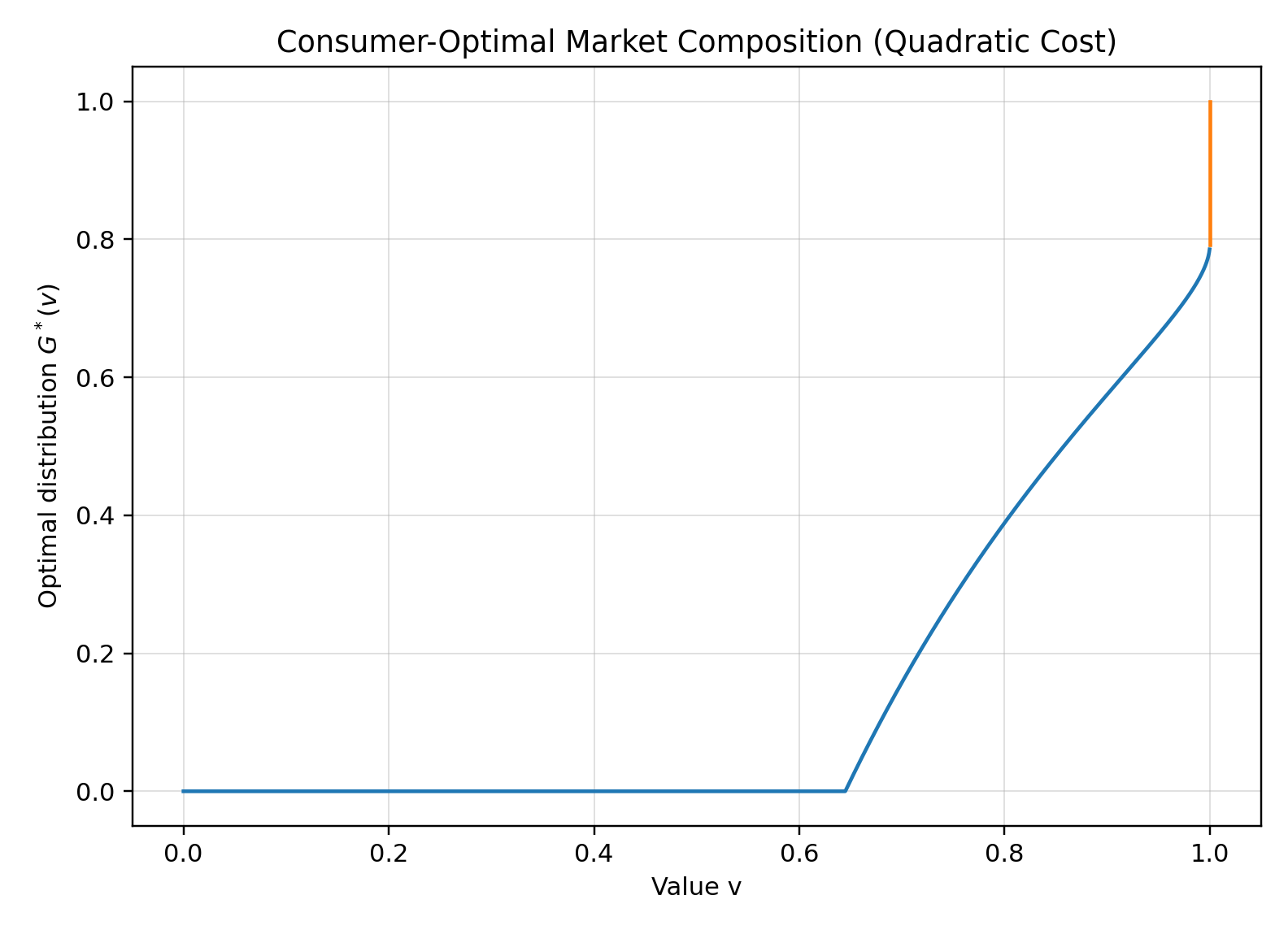}
        \caption{Consumer-Optimal Market Composition}       \label{fig:consumer_optimal_market_composition}
    \end{subfigure}
    \caption{Consumer-optimal $Q^*$ and $G^*$ induced by $\phi^*$ in the quadratic cost case}
\end{figure}
 Let us briefly comment on the high-level strategy of the proof of Theorem \ref{thm:main-structure} and introduce some objects that will be useful for the discussion that follows. The first step is to show that a solution to the upstream actor's problem exists. Then, we prove that the upstream actor's objective $J_k(\phi)$ admits a directional derivative, which has an integral representation, and derive a necessary first-order variational inequality that must be satisfied by every local maximizer. Formally, if $\phi^*$ is a local maximizer, we must have

\[
\int_0^1 H_k[\phi^*](u)\,(\phi(u)-\phi^*(u))\,du \le 0.
\]
where 

\begin{equation}\label{eq:Gk}
H_k[\phi](s)
=
kA_\phi(s)+k(Q_\phi(s)-\phi(s))\,q'(\phi(s))+(1-2k)\,q(\phi(s)).
\end{equation}
is the density of the first variation or the variational derivative.\footnote{Note that this variational inequality is nothing more than the natural infinite-dimensional analog of the familiar one-dimensional first-order condition for maximizing a differentiable function $f$ on an interval. If $x^*$ maximizes $f$ on an interval $I\subset\mathbb{R}$, then the necessary first-order condition is $f'(x^*)(x-x^*)\leq 0$. In the current setting, one replaces the scalar derivative by the first-order variation of $J$ at $\phi^*$ and the displacement $(x-x^*)$ by the feasible perturbation $(\phi-\phi^*)$.}

The object $H_k[\phi](u)$ is the upstream actor's marginal value of raising the virtual value assigned to quantile $u$. A perturbation at $u$ affects the objective through three distinct channels.

First, because
\[
Q_\phi(r)=\frac{1}{1-r}\int_r^1 \phi(s)\,ds,
\]
an increase in $\phi(u)$ raises the induced value schedule $Q_\phi(r)$ for every lower quantile $r\le u$. Each such lower quantile is currently served quantity $q(\phi(r))$, so the resulting gain in buyer surplus is weighted by $q(\phi(r))/(1-r)$. Integrating these spillovers over all lower quantiles yields the term $kA_\phi(u)$, where
\[A_{\phi}(u):=\int_0^u \frac{q(\phi(s))}{1-s}\,ds.\]
Thus, the term $A_\phi(u)$ summarizes the cumulative downstream benefit of making a high quantile more valuable.

Second, at quantile $u$ itself, a higher virtual value induces the seller to raise quality by $\frac{dq}{d\phi}(\phi(u))$.
The local social value of that extra quality is the current wedge $Q_\phi(u)-\phi(u)$,
so the corresponding distortion-reduction benefit is
\[
k\big(Q_\phi(u)-\phi(u)\big)\frac{dq}{d\phi}(\phi(u)).
\]

Third, raising $\phi(u)$ also compresses the buyer-rent wedge locally. Holding quantity fixed for a moment, a marginal increase in $\phi(u)$ lowers buyer surplus by $q(\phi(u))$ and raises seller profit by the same amount. Since the upstream actor weights buyer surplus by $k$ and profit by $1-k$, this local transfer enters the objective as
\[
(1-k)q(\phi(u)) - kq(\phi(u)) = (1-2k)q(\phi(u)).
\]
This last term is therefore a local redistribution term: when $k>1/2$, making the seller's problem easier shifts too much surplus from buyers to the seller.

Collecting these three channels, $H_k[\phi](u)$ can be read as

\[H_k[\phi](s)
=\underbrace{kA_\phi(s)}_{\text{downstream spillover}}+\underbrace{k(Q_\phi(s)-\phi(s))\,q'(\phi(s))}_{\text{local distortion-reduction benefit}}+\underbrace{(1-2k)\,q(\phi(s))}_{\text{local rent-compression cost}}.\]

Since the first-order variational inequality is a necessary condition for every local maximizer, it must also hold for every global maximizer. We then use this condition together with feasible perturbations to derive the structural properties identified in part (ii) of Theorem \ref{thm:main-structure}. 

We then use these structural properties and Assumption \ref{ass:cost-c3-unique} to show that the variational inequality implies a free-boundary value problem. Finally, we show that this free-boundary value problem admits a unique solution. Hence, every global maximizer must equal that solution; existence then gives the desired complete characterization. 

\subsubsection*{Discussion of the Structural Properties}

\paragraph*{Degenerate solution with profit- biased upstream actor.}Part 1 of Theorem \ref{thm:main-structure} is intuitively straightforward. For $k=1/2$, the upstream actor's problem is equivalent to maximizing the total surplus. Collapsing the market to the highest type is uniquely optimal, since a degenerate distribution at any lower valuation is clearly wasteful, while introducing heterogeneity creates inefficient distortion. The monopolist's profit is at the highest possible level, and she extracts all the consumer surplus. For $k\in (0,1/2)$, the upstream actor cares even more for the monopolist's profit. Thus, the degenerate distribution that places all mass at the highest valuation remains uniquely optimal.

\paragraph*{Binding tail.}Part 2(i) of Theorem \ref{thm:main-structure} shows that once $k>1/2$, the optimal virtual-value schedule eventually hits its upper bound and stays there: there is a cutoff $b\in(0,1)$ such that $\phi^*(u)=1$ for all $u\in[b,1]$. The intuition is that raising $\phi$ in the upper tail is especially valuable because it improves not only the screening terms faced by those top quantiles, but also the induced value schedule for all lower quantiles. 
At the very top of the quantile distribution, leaving $\phi^*(u)$
strictly below the feasibility ceiling becomes wasteful. The reason is that raising a very high quantile’s virtual value has a large cumulative effect on all lower quantiles. Near the top, this cumulative benefit becomes large enough that it dominates any bounded local downside from making the seller’s problem easier at that quantile. Thus, sufficiently high quantiles should be pushed all the way to the cap $\phi^*=1$. The optimum therefore combines a rent-generating interior with an efficient top tail. 

More formally, in the first-variation density $H_k[\phi^*](s),$
the cumulative term $A_{\phi^*}(s)$ 
 diverges as $s\uparrow 1$. This means that raising $\phi$ near the top improves the induced value schedule for a large mass of lower quantiles, so the cumulative gain becomes arbitrarily large near $u=1$. The remaining terms in $H_k$ stay bounded, hence $H_k[\phi^*](s)>0$ on a tail $(t,1)$. 

Optimality then forces the upper bound to bind on that tail. If there were a positive-measure subset of $(t,1)$ on which $\phi^*(u)<1$, the proof shows that one could take the feasible perturbation $\eta(u)=(1-\phi^*(u))\mathbf{1}_{[t,1)}(u)$,
which raises $\phi^*$ toward $1$ on the upper tail. We illustrate this feasible perturbation and its effect on the value schedule $Q$ in Figures \ref{fig:upper_tail_perturbation} and \ref{fig:upper_tail_perturbation_effect}. This feasible direction, whose support is contained in $[t,1)$, where $\phi^*(u)\ge \phi^*(t)>0$, must have weakly nonpositive first-order effect at the optimum to not contradict optimality. But here the directional derivative is strictly positive because $H_k>0$ on $(t,1)$. This contradiction implies that $\phi^*(u)=1$ on $[b,1]$ for some $b<1$. Since $Q_{\phi^*}(u)=1$ on the same tail, the induced distribution has an atom at $v=1$ of size $1-b$. Thus, the upstream actor uses the interior of the distribution to generate rents while preserving full efficiency at the very top.

\begin{figure}[htbp]
    \centering

    \begin{subfigure}{0.48\textwidth}
        \centering
        \includegraphics[width=\linewidth]{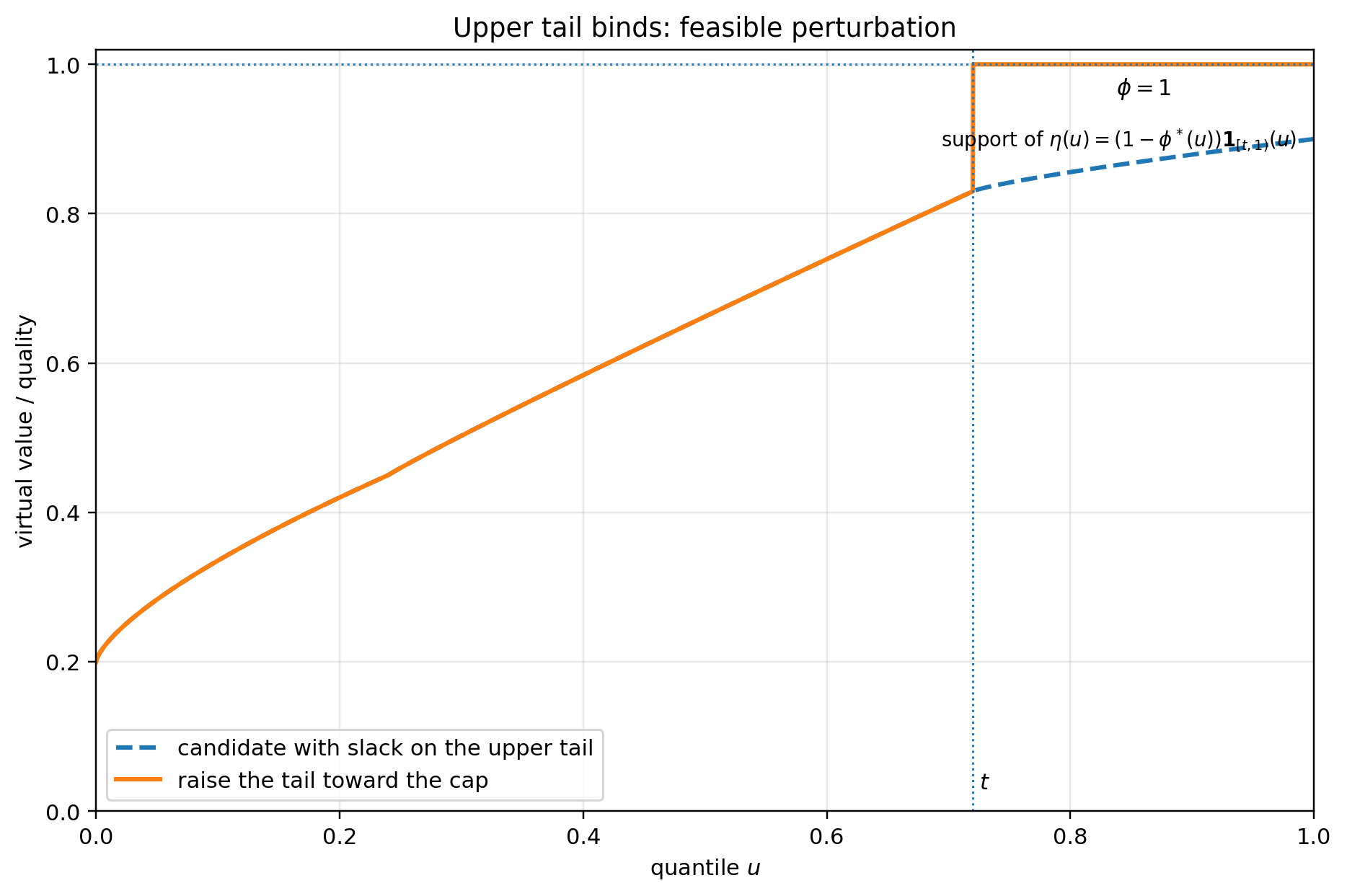}
        \caption{Feasible Perturbation on upper tail}
        \label{fig:upper_tail_perturbation}
    \end{subfigure}
    \hfill
    \begin{subfigure}{0.48\textwidth}
        \centering
        \includegraphics[width=\linewidth]{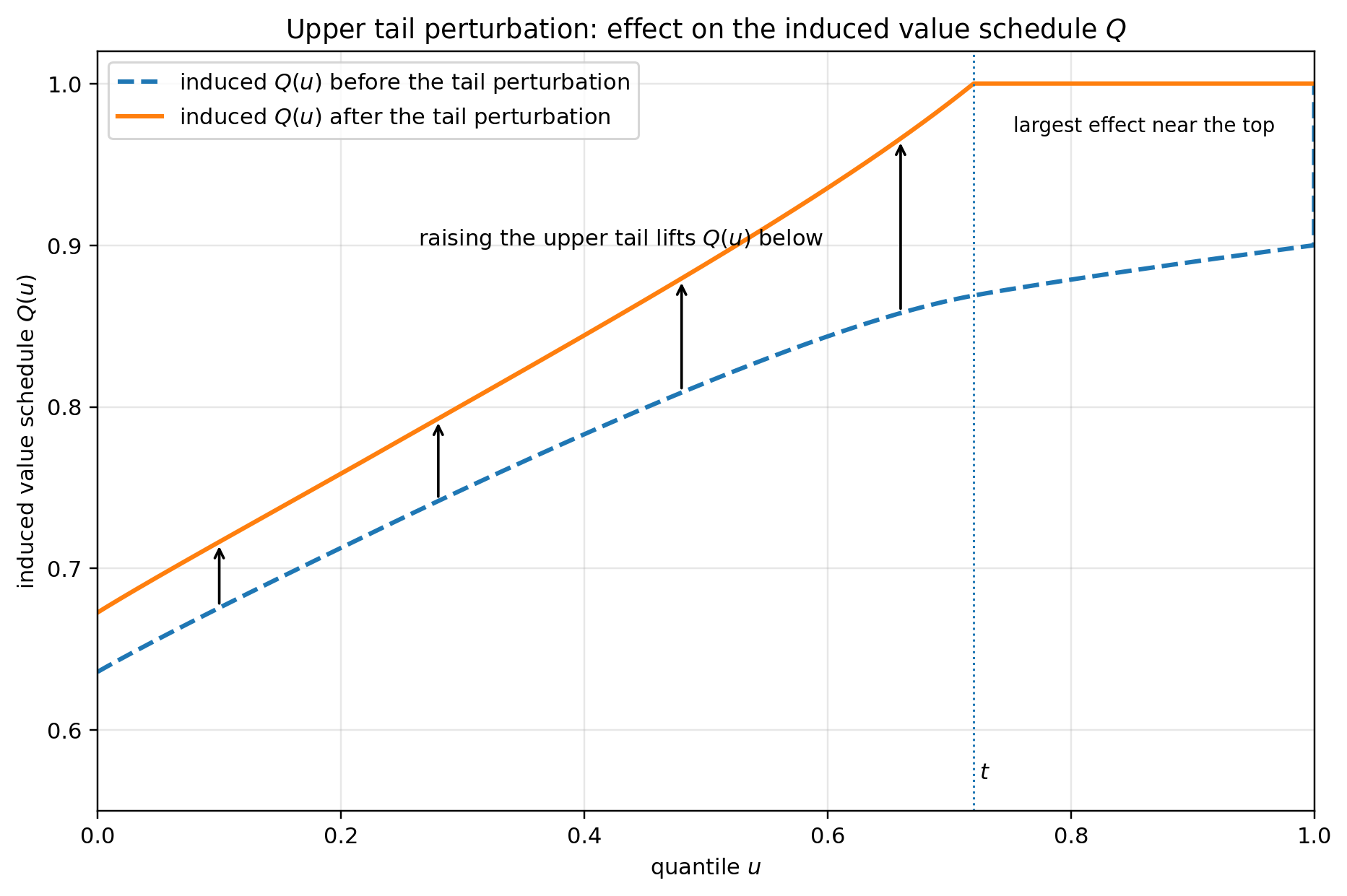}
        \caption{Effect of the perturbation on value schedule}
        \label{fig:upper_tail_perturbation_effect}
    \end{subfigure}
\end{figure}

\paragraph*{Bunching is never optimal.}Moreover, $\phi^*$ cannot be constant on any open interval contained in the interior region where $0<\phi^*<1$. A flat segment would mean that a whole range of quantiles induces the same virtual value and therefore receives the same quality, so the seller bunches them. But bunching throws away useful heterogeneity. Within such a block, raising $\phi$ is strictly more valuable for higher quantiles than for lower quantiles, so the upstream actor can always do better by tilting the block upward. Lowering $\phi$ slightly on the left part and raising it slightly on the right part preserves feasibility, creates a more informative screening problem, and more effectively generates buyer rents. Hence, at the optimum, the interior is not bunched; it is strictly increasing.

To see the intuition further, suppose instead that $\phi^*$ were constant at some $\gamma\in(0,1)$ on an open interval $(\ell,r)$. Then all quantiles in that block would induce the same virtual value and hence the same quality. The key observation is that inside such a flat block the marginal value of increasing $\phi$ is not flat. In particular, we can show that
\[
h'(u)\ge k\frac{q(\gamma)}{1-u}>0
\qquad\text{for a.e. }u\in(\ell,r).
\]
Thus, $h$ is strictly increasing on the constant block: adding virtual value is strictly more valuable at the right end than at the left end. Therefore, bunched quantiles are not symmetric, so treating them as identical is inefficient.

We can then turn this into a profitable local deviation. In the proof, we split the block at its midpoint $m=(\ell+r)/2$, lower $\phi$ slightly on $(\ell,m)$, and raise it slightly on $[m,r)$, with the small boundary adjustments needed to preserve monotonicity. We illustrate such a feasible perturbation in Figure \ref{fig:no_bunching}. On the main block, the first-order gain is
\[
\varepsilon\left(\int_m^r h(u)\,du-\int_\ell^m h(u)\,du\right)>0,
\]
because $h$ is strictly increasing. The boundary corrections are only $o(\varepsilon)$, so the total first-order effect remains positive for small $\varepsilon$. This contradicts the variational inequality in that a (local) optimizer must satisfy, which applies here because the perturbation is supported where $\phi^*(u)\in(\gamma-\varepsilon,\gamma+\varepsilon)\subset(\gamma/2,1]$, and therefore requires the first-order effect at the optimum to be weakly nonpositive. Hence, no such flat interior block can exist. As a result, we have that $\phi^*$ is strictly increasing on $(0,b)$: the upstream actor creates buyer rents through a smoothly separating interior schedule and not by bunching.

\begin{figure}[htbp]
    \centering
    \includegraphics[scale=0.18]{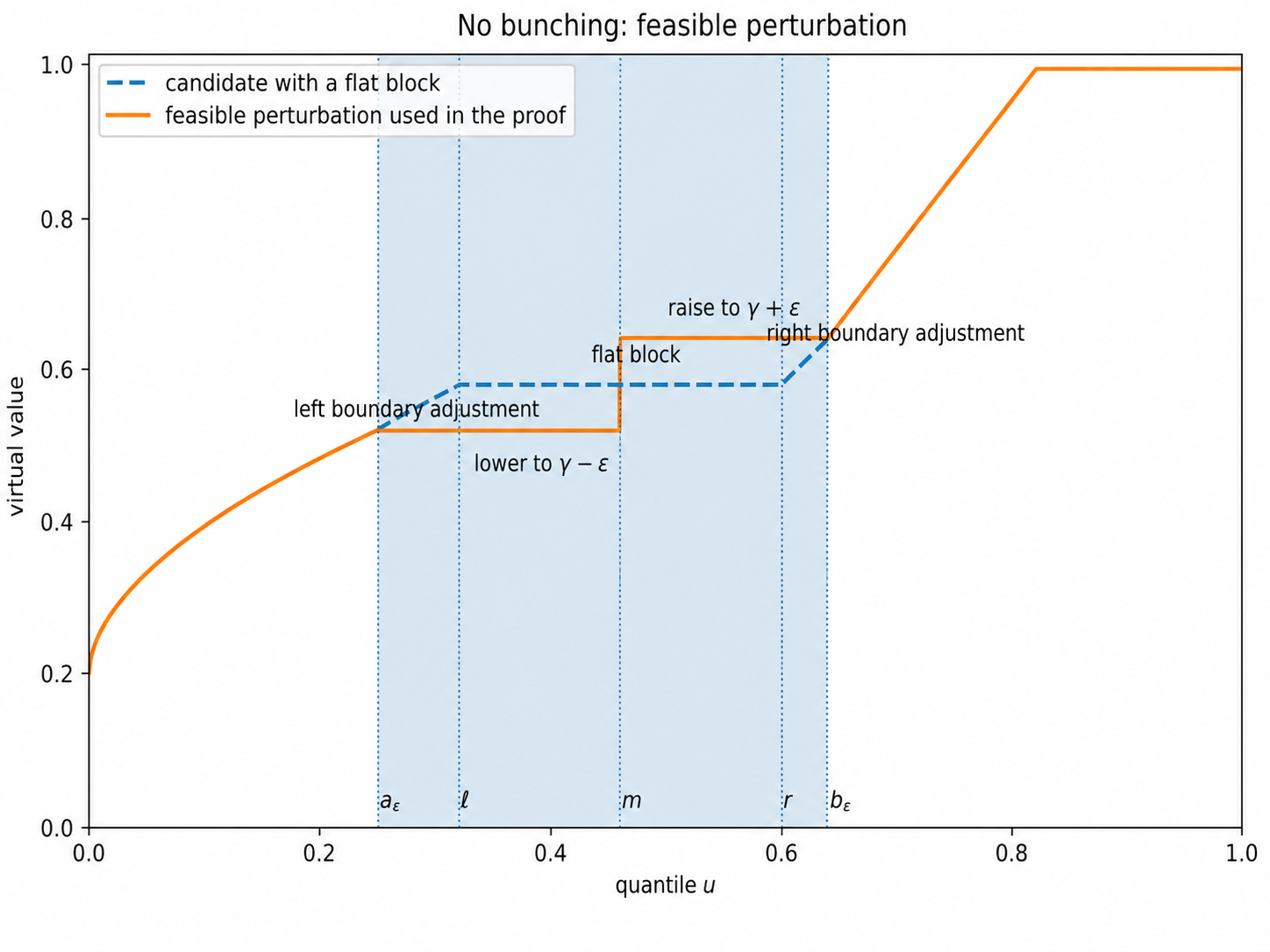}
    \caption{Bunching is not optimal}
    \label{fig:no_bunching}
\end{figure}

The no-bunching result is perhaps surprising. 
A natural conjecture is that an upstream actor who places more weight on consumer surplus would try to temper discrimination by creating bunching regions. In our environment, however, that intuition is false. Once the upstream actor is allowed to choose the market composition, the best way to help consumers is not to collapse nearby active types into the same contract, but to create a larger strictly separating interior region where the monopolist still screens, but does so against a less favorable and more heterogeneous virtual-value profile.  Moreover, because consumer surplus is generated on the active interior region through the wedge $Q(u)-\phi(u)$, and not on the efficient top tail where $Q(u)=\phi(u)=1$, the optimal way to further raise consumer surplus is to enlarge and reshape this rent-generating interior region. Our comparative statics results with respect to weight $k$ show that this is exactly what happens as $k$ rises: the strictly separating interior branch moves downward and extends further to the right, while the mass at the top quantiles decreases. Thus, helping consumers does not mean compressing the screening problem through bunching; it means making the active interior more heterogeneous in screening terms.

We can think of the no-bunching result in terms of the familiar rent-propagation logic in finite-type screening. Consider a finite-type version of the standard \cite{mussa_rosen_1978} problem with values \(\underline{v}<\cdots<v_N\), monotone qualities \(q_1\le\cdots\le q_N\), and adjacent downward incentive constraints binding. Normalizing the lowest type's rent to zero, the rent of type \(i\) can be written as
$U_i=\sum_{j=1}^{i-1}(v_{j+1}-v_j)\,q_j$.
Each additional separated step in the quality ladder therefore creates an additional margin through which rents are propagated to all higher types. Bunching removes one such step and, with it, one margin through which rents can be sustained.

Our no-bunching result is the counterpart of the same idea in the present environment. In quantile space, the direction of propagation is reversed. In the finite-type screening problem, an additional separated step raises the rents of higher types. Here, an additional separated step in \(\phi\) raises the induced value schedule of lower quantiles, because $Q_\phi(u)$ is a tail average. A flat block \(\phi(u)=\gamma\) on an interval \((\ell,r)\) collapses a continuum of intermediate steps into a bunched region. But the quantiles inside that block are not symmetric from the upstream actor's perspective, so a perturbation that lowers virtual value on the left half of the block and raises it on the right half increases the objective, and this is why the optimal interior must be fully separating.

\paragraph*{Interior optimality and the Euler--Lagrange condition.}
Once the main structural properties have been established so that every solution to the upstream actor's problem satisfies $0<\phi^*(u)<1$ and is strictly increasing on $(0,b)$, the shape constraints are locally slack on the interior. Around any interior quantile, the upstream actor can perturb $\phi^*$ slightly upward or downward without violating monotonicity or the bounds. Hence, an optimum cannot have a strictly positive or strictly negative first-order gain from changing $\phi$ at an interior point. The interior branch must therefore satisfy the Euler-Lagrange condition almost everywhere on $(0,b)$. This means that along the rent-generating interior region, the upstream actor exactly balances the marginal gain from making the seller's screening problem easier against the marginal loss from reducing buyer rents.

\paragraph*{Reduction to a Free Boundary Problem and Uniqueness.}
Under Assumption \ref{ass:cost-c3-unique},
the Euler-Lagrange condition becomes a pointwise relation linking the control variable, which is the quality schedule $q(\phi(u))$, to the current state, which is captured by the quantile function  $Q(u)$ and the cumulative state variable 
$A(u)$ that measures the ``cumulative downstream benefit" of raising a higher quantile’s virtual value.
The definitions of $A$ and $Q$ generate their own laws of motion. The interior branch is therefore pinned down by a dynamical system. The explicit expressions are given in Proposition \ref{prop:sec-unique-free-boundary}. It is a free-boundary problem because the endpoint $b$ is itself endogenous: the interior branch stops exactly when the solution hits the efficient top region $\phi^*=1$, which yields the terminal conditions at $u=b$. In particular, recall that $H_k[\phi](u)$
gives the marginal value of raising the virtual value assigned to quantile $u$. When we are in the tail region $u\geq b$, the middle term is zero since these quantiles are served efficiently, that is, $Q_\phi(u)=\phi(u)=1$ and $q(\phi(u))=\bar q$. Thus, the quantile $b$, which is the lowest efficiently served quantile, is the one at which the downstream spillover equals the local rent-compression cost; that is, the terminal condition at $b$ is given by
\[kA_\phi (b)=(2k-1)\bar q.\]

This condition pins down the cutoff $b$. Uniqueness then comes from the fact that Assumption \ref{ass:cost-c3-unique} makes the state-to-control relation single-valued. For a given pair $(A,Q)$, there is only one quality level consistent with the pointwise optimality condition. This rules out multiple admissible interior branches emanating from the same top boundary. Starting from the common terminal condition at the efficient top tail, there is therefore only one way to trace the interior solution backward. Specifically, once the upstream actor's trade-off between current screening distortions and accumulated rent creation is summarized by the state $(A,Q)$, the problem becomes deterministic: imposing the same efficient endpoint leaves only one admissible interior path. The uniqueness result is therefore a uniqueness result for the optimizer free-boundary problem, not for an arbitrary formal solution of the Euler--Lagrange equation.

\subsection{Comparative Statics}\label{comparative_statics}

Three main objects summarize the optimal market composition that the upstream actor induces. The size of the atom at the highest valuation captures the premium segment. The lower support endpoint captures how far the market extends into the active middle. The quality schedule captures how aggressively the monopolist serves each percentile of the market. Our main comparative statics result is that a more consumer-oriented upstream actor shrinks the premium segment, expands the active middle, and lowers the quality assigned to each percentile of the market composition.

\begin{proposition}\label{prop:comp_statics}
 Maintain Assumptions \ref{ass:cost} and \ref{ass:cost-c3-unique}. Let $k_1,k_2\in (1/2,1]$ with $k_2>k_1$. For $i=1,2$ let $\phi_i\in\Phi$ be the unique solution to the upstream actor's problem. Let $b_i\in(0,1)$ be its cutoff and let $G_i$ be the associated distribution. Moreover, let $q_i=q(\phi_i)$ be the associated monopolist-optimal quality schedules. Then:
\begin{enumerate}
\item The cutoffs satisfy $b_1<b_2$.
\item The virtual value profiles satisfy $\phi_2(u)\le \phi_1(u)$ for all $u\in[0,1]$, and $\phi_2(u)<\phi_1(u)$ for all $u\in(0,b_2)$. Equivalently, the monopolist-optimal quality schedules satisfy $q_2(u)\le q_1(u)$ for all $u\in[0,1]$, and $q_2(u)<q_1(u)$ for all $u\in(0,b_2)$.

\item  The support of market composition $G_2$ is extended to the left, that is, $\underline{v_2}<\underline{v_1}$, and $G_2$ is first-order stochastically dominated by $G_1$. Thus, $\operatorname{supp}(G_{1})\subsetneq \operatorname{supp}(G_{2})$ and $G_2(v)\geq G_1(v)$ for all $v<1$. In particular, the atom at $v=1$, which has size $1-b_i$, is strictly decreasing in $k$. Equivalently, for the quantile functions we have $Q_2(u)\le Q_1(u)$ for all $u\in[0,1],$
with strict inequality for every $u\in(0,b_2)$.
\item A higher weight on consumer surplus results in a strict increase for consumer surplus and a strict decrease for profits and total surplus.
\end{enumerate}
\end{proposition}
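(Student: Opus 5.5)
The plan is to prove parts 1--3 from a comparison argument on the free-boundary problem of Theorem~\ref{thm:main-structure}, and then to derive part 4 from revealed preference combined with the pointwise ordering of virtual values.

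\emph{Rewriting the Euler--Lagrange condition.} Dividing by $k$, the condition on $[0,b)$ becomes
\[
A(u)+(Q(u)-\phi(u))\,q'(\phi(u))=\theta_k\, q(\phi(u)),\qquad \theta_k:=\frac{2k-1}{k}.
\]
The parameter $\theta_k$ is strictly increasing in $k$. The state moves according to $A'=q(\phi)/(1-u)$ and $Q'=(Q-\phi)/(1-u)$. The terminal conditions are $\phi(b)=Q(b)=1$ and $A(b)=\theta_k\bar q$. Under Assumption~\ref{ass:cost-c3-unique}, the state-to-control relation is single-valued, so $q(\phi(u))=\Psi(A(u),Q(u);\theta_k)$ for a function $\Psi$. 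The key technical step is to sign the partial derivatives of $\Psi$. I expect $\Psi$ to be decreasing in $\theta$, increasing in $A$, and increasing in $Q$. I would verify this by implicit differentiation, using the curvature bound to sign the derivative of $(Q-\phi)q'(\phi)-\theta q(\phi)$ in $\phi$.

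\emph{Pointwise ordering of virtual values (part 2).} I would argue by contradiction. Suppose $\phi_2(u)\ge\phi_1(u)$ at some $u<b_2$, and take $u_0$ to be the largest point at which the two paths touch. Then compare the states at $u_0$. Because $A_\phi$ and $Q_\phi$ are monotone functionals of $\phi$ (a larger $\phi$ on $[0,u]$ gives a larger $A(u)$; a larger $\phi$ on $[u,1]$ gives a larger $Q(u)$), the ordering $\phi_2<\phi_1$ to the right of $u_0$ gives $Q_2(u_0)\le Q_1(u_0)$. Together with $\theta_{k_2}>\theta_{k_1}$ and the monotonicity of $\Psi$, this should force $q_2(u_0)<q_1(u_0)$, which contradicts touching.

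\emph{Main obstacle.} The $A$-comparison runs from the left, since $A$ integrates over $[0,u]$, while the $Q$-comparison runs from the right. A one-sided maximum principle therefore does not close directly. If that happens, I would fall back on a continuation argument in $k$. First, differentiate the free-boundary system in $k$, which is justified because uniqueness and a transversality condition at $b$ give continuous dependence. Second, show that $\partial_k\phi<0$ on $(0,b)$ by a linearized comparison. Strict inequality then follows from strict monotonicity of $q$ on $(0,1)$. The quality statement is immediate, because $q$ is nondecreasing and strictly increasing on $(0,1)$.

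\emph{Cutoffs (part 1).} Given $\phi_2\le\phi_1$, we get $A_2(u)\le A_1(u)$ for every $u$. Hence
\[
A_1(b_2)\ge A_2(b_2)=\theta_{k_2}\bar q>\theta_{k_1}\bar q=A_1(b_1).
\]
Since $A_1$ is strictly increasing, this forces $b_2>b_1$.

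\emph{Support and dominance (part 3).} Because $Q_\phi(u)=\frac{1}{1-u}\int_u^1\phi$ is a tail average, part 2 gives $Q_2\le Q_1$, with strict inequality wherever $(u,1)$ meets $(0,b_2)$ on a set of positive measure. This holds for every $u<b_2$. The quantile ordering is equivalent to $G_2\ge G_1$, that is, first-order stochastic dominance of $G_1$ over $G_2$. Next, $\underline v_i=\int_0^1\phi_i$, so $\underline v_2<\underline v_1$. By Corollary~\ref{cor:support-atom-structure}, $\operatorname{supp}(G_i)=[\underline v_i,1]$, which gives the strict support inclusion. The atom has size $1-b_i$, so part 1 shows it is strictly decreasing in $k$.

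\emph{Welfare (part 4).} Here $\Pi(\phi)=\int\pi(\phi)$ and $\pi$ is strictly increasing on $(0,1]$, so part 2 gives $\Pi_2<\Pi_1$. Optimality and uniqueness give $W_{k_2}(G_2)>W_{k_2}(G_1)$, that is,
\[
k_2(\CS_2-\CS_1)>(1-k_2)(\Pi_1-\Pi_2)\ge 0,
\]
so $\CS_2>\CS_1$. Similarly, $W_{k_1}(G_1)>W_{k_1}(G_2)$ gives
\[
k_1(\CS_2-\CS_1)<(1-k_1)(\Pi_1-\Pi_2).
\]
Since $k_1>1/2$, we have $1-k_1<k_1$, so $\CS_2-\CS_1<\Pi_1-\Pi_2$. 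Therefore $\TS_2<\TS_1$.
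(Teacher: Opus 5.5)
There is a genuine gap, and it is exactly where you flagged it. Part 2 is not proved, and your parts 1 and 3 and the strict ranking $\Pi_2<\Pi_1$ in part 4 all rest on it.

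Your touching argument has two problems. First, it needs $A_2(u_0)\le A_1(u_0)$, which requires the ordering on $[0,u_0]$, i.e.\ the conclusion itself. Second, it needs an anchor on the right: $\phi_2<\phi_1$ just left of $b_2$ and $\phi_2\le\phi_1$ on $[b_2,1]$. That holds only if $b_1<b_2$, which you derive from part 2. If $b_2\le b_1$, then $\phi_2=1>\phi_1$ on $(b_2,b_1)$, a violation your hypothesis ``at some $u<b_2$'' does not even cover.

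The fallback does not repair this. The paper gives only $L^1$-continuity of $k\mapsto\phi_k$ (Berge plus uniqueness), not differentiability, and no transversality at $b$ has been shown. Even granting a derivative, the linearized system keeps the same coupling: $\delta A$ is driven from $u=0$, $\delta Q$ from the free boundary, and there is a boundary-motion term of unknown sign. So a ``linearized comparison'' meets the same two-sided obstacle.

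The missing idea is to remove time altogether by using the cumulative state $A$ as the independent variable. In log time the system is autonomous: $\dot A=x>0$ and $\dot Q=c''(x)(\lambda x-A)$, where $x=X_\lambda(A,Q)$ is the unique root of $Q=c'(x)+c''(x)(\lambda x-A)$ on $[A/\lambda,\bar q]$. Since $A$ is strictly increasing, write $Q=\mathcal Q_\lambda(A)$. It solves the scalar ODE
\[
\frac{d\mathcal Q_\lambda}{dA}=R_\lambda(A,\mathcal Q_\lambda),\qquad R_\lambda(A,Q):=\frac{Q-c'(X_\lambda(A,Q))}{X_\lambda(A,Q)},
\]
with a single terminal condition $\mathcal Q_\lambda(\lambda\bar q)=1$ at a fixed point. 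The condition $A(0)=0$ now only sets the clock, $s_\lambda(A)=\int_0^A da/X_\lambda(a,\mathcal Q_\lambda(a))$, and the free boundary is read off as $T=s_\lambda(\lambda\bar q)$.

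Your sign computations give $X_\lambda$ increasing in $Q$ and decreasing in $\lambda$. Since $x\mapsto (Q-c'(x))/x$ is decreasing wherever $Q\ge c'(x)$, it follows that $R_\lambda$ is strictly increasing in $\lambda$. A one-sided comparison now closes:
\begin{itemize}
\item $\mathcal Q_2(\lambda_1\bar q)<1=\mathcal Q_1(\lambda_1\bar q)$, because $\mathcal Q_2$ is strictly increasing and reaches $1$ only at $\lambda_2\bar q$.
\item At a largest crossing $A^*\in(0,\lambda_1\bar q)$ one would have $(\mathcal Q_1-\mathcal Q_2)'(A^*)=R_{\lambda_1}-R_{\lambda_2}<0$, a contradiction. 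Hence $\mathcal Q_2<\mathcal Q_1$ and $X_2<X_1$ on $(0,\lambda_1\bar q]$.
\item Therefore $s_2>s_1$ there, which yields $T_2>T_1$, i.e.\ $b_1<b_2$, first.
\item For $t\le T_1$ and $A=\widetilde A_1(t)$, one gets $\widetilde A_2(t)<A$ and so $\widetilde x_2(t)=X_2(\widetilde A_2(t))<X_2(A)<X_1(A)=\widetilde x_1(t)$. On $(b_1,b_2)$, trivially $x_1=\bar q>x_2$. This is part 2.
\end{itemize}
With this in hand, your arguments for parts 1, 3 and 4 go through. Your revealed-preference derivation of $\mathrm{CS}_2>\mathrm{CS}_1$ and $\mathrm{TS}_2<\mathrm{TS}_1$ is essentially the paper's.
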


Proposition \ref{prop:comp_statics} shows how the optimal market composition changes with $k$. The key point is that a higher value of $k$ does not alter the basic architecture of the optimum. For every $k>1/2$, the optimal virtual value profile still has two parts: a strictly separating interior region $(0,b_k)$ and an efficient top tail $[b_k,1]$. What changes with $k$ is the balance between these two regions. As $k$ increases above $1/2$, the upstream actor begins to replace part of that top mass with a continuum of lower but still active types: the lower support point $\underline{v}_{k}$ falls, the support expands
downward, and the atom at $v=1$ becomes smaller. In this sense, higher values of $k$ induce a market
composition that is less concentrated at the efficient top type and more finely differentiated on the
active interior. Therefore, the optimal market composition becomes more
heterogeneous \emph{in screening terms}. A higher-$k$ upstream actor helps
consumers by enlarging the strictly separating interior region and shrinking the top atom. 

The top tail is valuable because it preserves efficient trade. On $[b_k,1]$ one has $\phi_k(u)=Q_k(u)=1$, so the monopolist assigns the first-best top quality $\bar q$. At the same time, this top tail is not the margin through which the upstream actor expands consumer surplus. The relevant object is the wedge $Q_k(u)-\phi_k(u)$, and this wedge is zero on $[b_k,1]$. 

An upstream actor who places more weight on consumer surplus, therefore, becomes less willing to allocate mass to the efficient top segment and more willing to sustain a larger rent-generating interior region. This trade-off is especially transparent in the terminal condition, which is given by 
$kA_k(b_k)=(2k-1)\bar q$.
Since $(2k-1)/k$ is increasing in $k$, the terminal condition requires a higher-$k$ upstream actor to accumulate a larger interior rent term before it is willing to place the remaining mass at the efficient top type. Moreover, a higher weight on consumer surplus pushes the entire interior branch downward. Proposition \ref{prop:comp_statics} shows that $\phi_k$ and $Q_k$ are lower pointwise for higher $k$.

The decline in total surplus is the efficiency cost of the upstream actor's redistribution instrument. For each quantile $u$, total surplus equals $Q_k(u)q(\phi_k(u)) - c(q(\phi_k(u)))$.
This expression is increasing both in the actual value $Q_k(u)$ and in the virtual value $\phi_k(u)$: higher values make trade more socially valuable, and higher virtual values induce higher quality and hence move the allocation closer to the first best. Proposition \ref{prop:comp_statics} shows that both objects fall pointwise as $k$ rises, while the efficient top atom shrinks. Thus a higher-$k$ upstream actor lowers total surplus at each interior quantile. 

To summarize, consumer surplus is increased by changing the market composition so that the monopolist faces a less favorable screening problem. This creates more informational rents for buyers, but it does so by replacing some efficient top-type mass with lower-value interior types and by expanding the region in which quality is distorted downward. The result is a larger consumer-surplus share of a smaller total pie.

\begin{figure}[htbp]
    \centering

    \begin{subfigure}{0.48\textwidth}
        \centering
        \includegraphics[width=\linewidth]{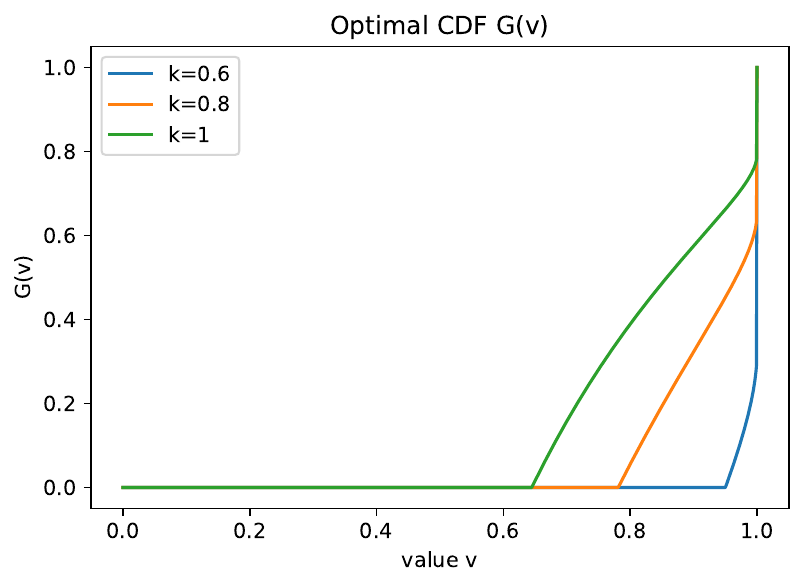}
        \caption{Optimal Market Composition}
        \label{fig:quadratic_opt_G}
    \end{subfigure}
    \hfill
    \begin{subfigure}{0.48\textwidth}
        \centering
        \includegraphics[width=\linewidth]{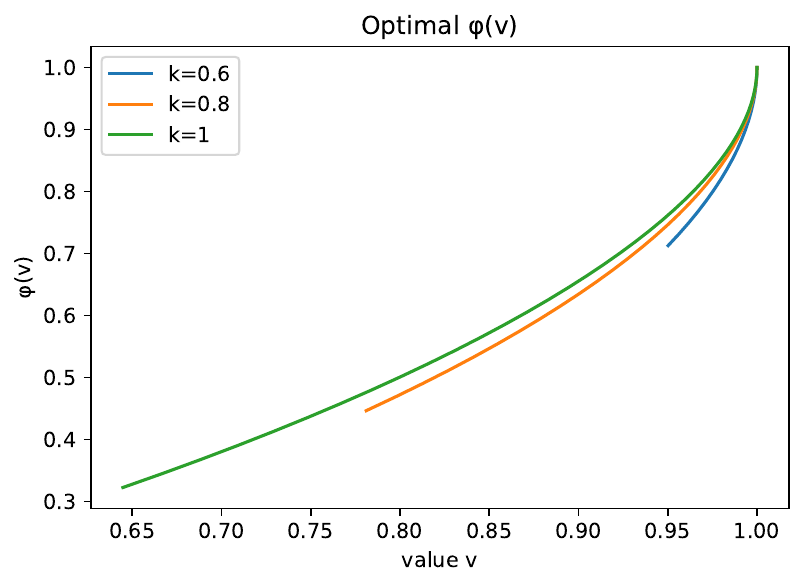}
        \caption{Optimal Virtual Value (Value Space)}
        \label{fig:quadratic_opt_phi}
    \end{subfigure}
    \begin{subfigure}{0.48\textwidth}
        \centering
        \includegraphics[width=\linewidth]{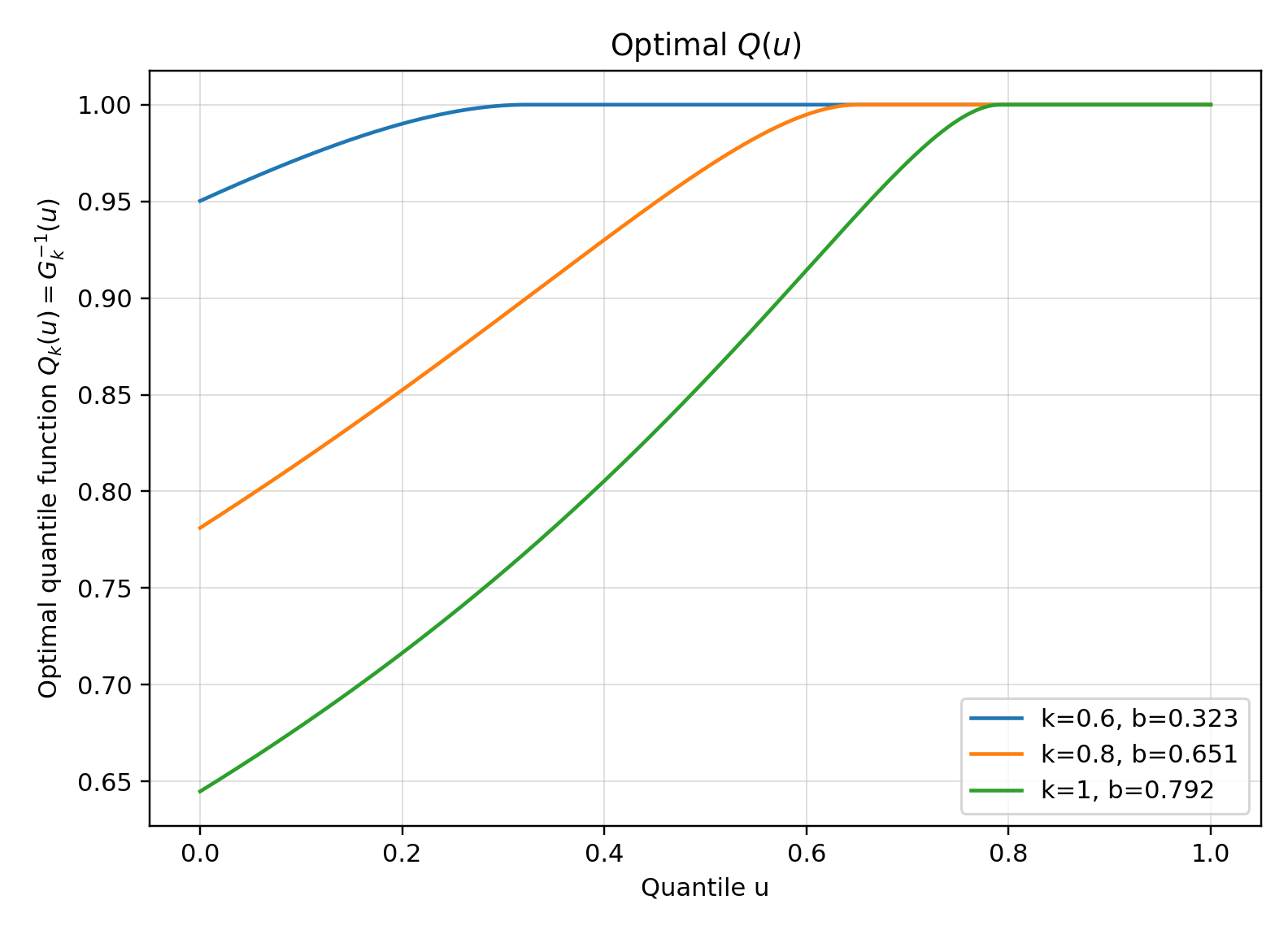}
        \caption{Optimal Quantile Function }
        \label{fig:quadratic_opt_phi_quantile}
    \end{subfigure}
    \hfill
    \begin{subfigure}{0.48\textwidth}
        \centering
        \includegraphics[width=\linewidth]{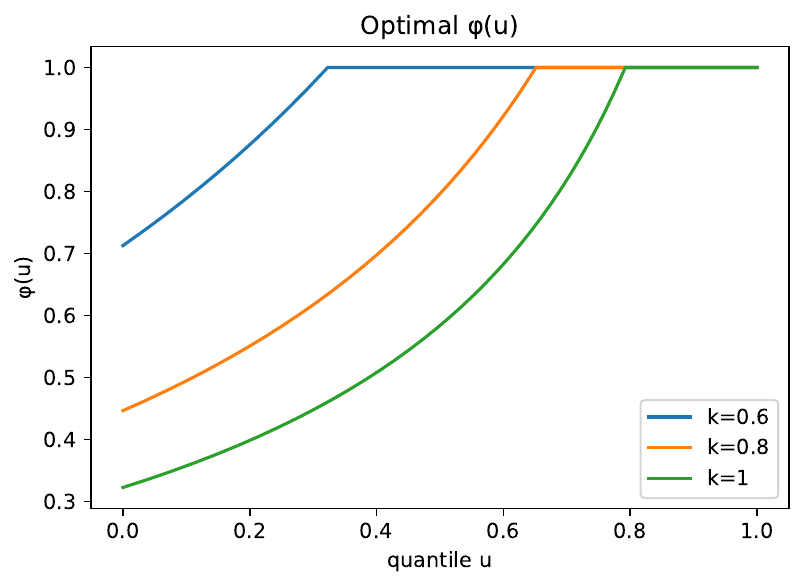}
        \caption{Optimal Virtual Value (Quantile Space)}
        \label{fig:quadratic_opt_phi_quantile}
    \end{subfigure}
    \caption{Optimal market composition and virtual values for different weights $k$}
    \label{fig:fourpanel}
\end{figure}

Figures \ref{fig:fourpanel} and \ref{fig:CS_profit_TS} illustrate the comparative statics for the quadratic-cost case, in which the optimal quality schedule coincides with the optimal virtual-value schedule.

\begin{figure}[htbp]
    \centering
    \includegraphics[scale=0.5]{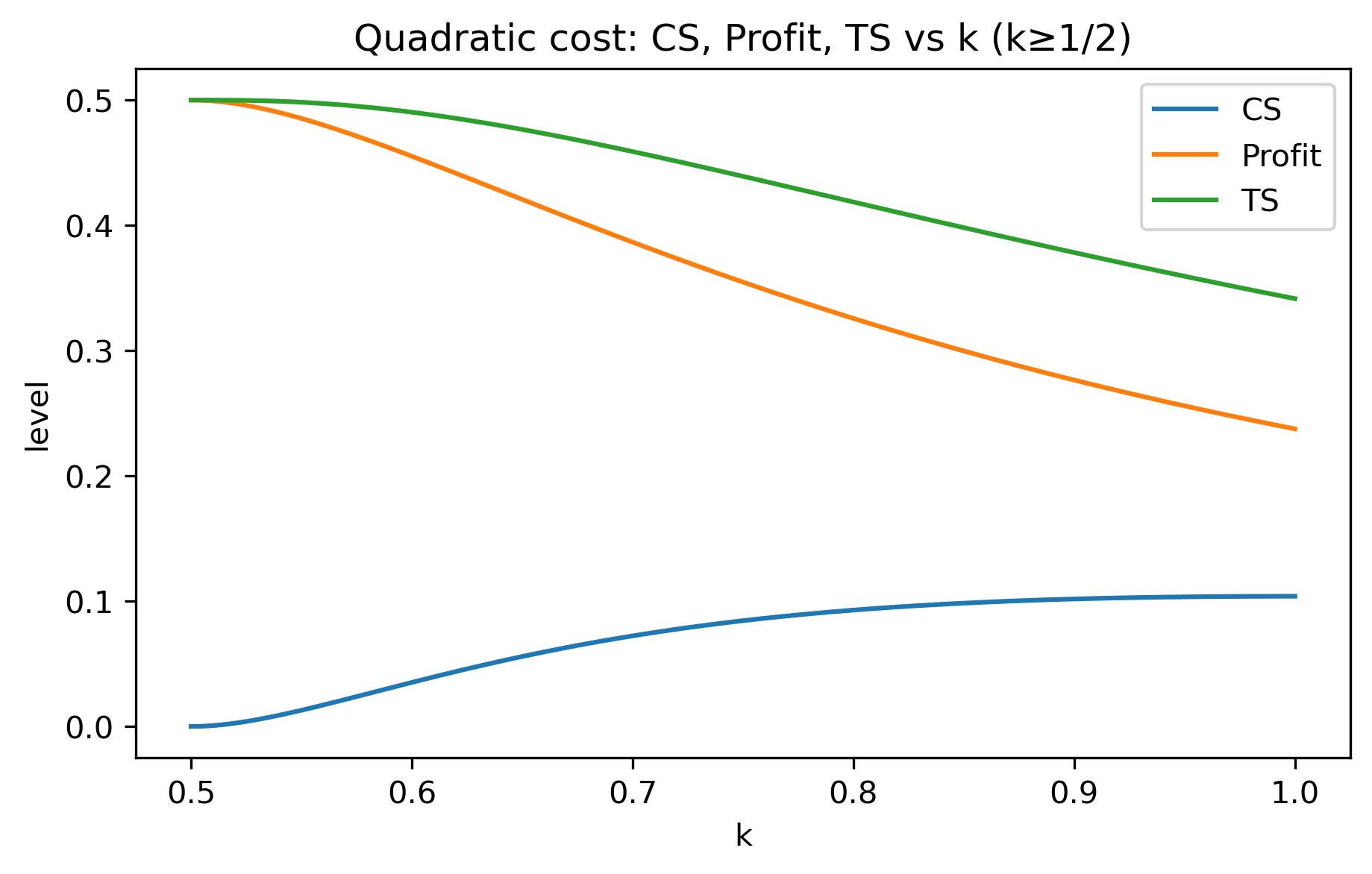}
    \caption{Changes in profit, consumer surplus, and total surplus in the quadratic cost example.}
    \label{fig:CS_profit_TS}
\end{figure}

\subsubsection*{Properties of the induced monopolist-optimal quality schedule}

Define the value-space quality schedule $x_i^V$ on $\operatorname{supp}(G_i)=[Q_i(0),1]$ by
\[
x_i^V(v):=
\begin{cases}
x_i(Q_i^{-1}(v)), & v\in[Q_i(0),1),\\[0.5em]
\bar q, & v=1.
\end{cases}
\]

\begin{corollary}[Rank-matched comparison in value space]
\label{cor:value-space-rank-matched}
Assume Assumptions \ref{ass:cost} and \ref{ass:cost-c3-unique} and let $1/2<k_1<k_2\le 1$. For $i=1,2$, let
$\phi_i\in\Phi$ be the unique maximizer of $J_{k_i}$ and let $b_i$ be its cutoff. Then, for every $u\in[0,1]$,
$Q_2(u)\le Q_1(u)$ and $x_2^V(Q_2(u))=x_2(u)\le x_1(u)=x_1^V(Q_1(u))$, where both inequalities are strict for every $u\in(0,b_2)$.
\end{corollary}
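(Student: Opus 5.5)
The corollary is a direct translation of Proposition~\ref{prop:comp_statics} into value space, and we would prove it by unpacking the definition of $x_i^V$ along matched ranks. The quantile comparison $Q_2(u)\le Q_1(u)$ on $[0,1]$, with strict inequality on $(0,b_2)$, is exactly part 3 of Proposition~\ref{prop:comp_statics}, so nothing new is needed there. For the quality comparison, set $x_i(u)=q(\phi_i(u))$, the seller-optimal quantile allocation from the seller-side reduction. Part 2 of the proposition gives $x_2(u)\le x_1(u)$ everywhere and $x_2(u)<x_1(u)$ on $(0,b_2)$. What remains is to verify the identity $x_i^V(Q_i(u))=x_i(u)$ for every $u\in[0,1]$.

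To establish that identity we split into the interior and the tail. On $[0,b_i)$, Corollary~\ref{cor:support-atom-structure} says $Q_i$ is strictly increasing, hence injective, and takes values in $[Q_i(0),1)$. Since $Q_i$ is continuous there (it is $R_{\phi_i}(u)/(1-u)$ with $R_{\phi_i}$ absolutely continuous), $Q_i^{-1}$ is a genuine inverse on $Q_i([0,b_i))$, so $Q_i^{-1}(Q_i(u))=u$ and $x_i^V(Q_i(u))=x_i(u)$ by definition. On $[b_i,1]$ we have $Q_i(u)=1$ and $\phi_i(u)=1$, so $x_i^V(1)=\bar q=q(1)=x_i(u)$. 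One point needs care here: we must check that $Q_i([0,b_i))$ is exactly $[Q_i(0),1)$, so that $x_i^V$ is well defined on all of it. We would do this by checking $Q_i(b_i^-)=1$ from continuity of $R_{\phi_i}(u)/(1-u)$ at $b_i$, using $R_{\phi_i}(b_i)=1-b_i$.

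Combining the identity with part 2 of Proposition~\ref{prop:comp_statics} yields the full chain of inequalities, with strictness inherited on $(0,b_2)$. Since $b_1<b_2$, on $[b_1,b_2)$ the comparison reads $x_2(u)<\bar q=x_1(u)$, which is consistent with the general statement. We expect no serious obstacle. The only delicate step is the well-definedness of $Q_i^{-1}$ at the boundary $u=b_i$ and at $u=0$, which we would handle through left-continuity and the strict monotonicity from Corollary~\ref{cor:support-atom-structure}.
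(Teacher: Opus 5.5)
Your proposal is correct and follows the paper's proof essentially verbatim. Like the paper, you take the pointwise orderings of $Q_i$ and $x_i$ from Proposition~\ref{prop:comp_statics} and reduce the corollary to the identity $x_i^V(Q_i(u))=x_i(u)$, checked separately on the strictly increasing interior $[0,b_i)$ and on the flat tail $[b_i,1]$, where both sides equal $\bar q$. Your extra check that $Q_i([0,b_i))=[Q_i(0),1)$, via continuity and $Q_i(b_i^-)=1$, simply makes explicit the well-definedness of $Q_i^{-1}$ that the paper asserts in one line.
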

Corollary~\ref{cor:value-space-rank-matched} makes the comparison at fixed rank. As the weight on consumer surplus rises, the upstream actor chooses a less
top-heavy market, so both the actual value $Q_k(u)$ and the virtual value $\phi_k(u)$ attached to a given
quantile rank fall. Since the monopolist's quality choice satisfies $
x_k(u)=q(\phi_k(u))$,
and $q$ is increasing, lower virtual values imply lower quality at every rank. Thus, a more
consumer-oriented upstream actor makes the seller's screening environment less favorable in a precise
sense: each percentile of the market is weaker and is served less aggressively by the monopolist.

We note that Corollary~\ref{cor:value-space-rank-matched} identifies a clean comparison only after matching buyers by
\emph{quantile rank}. A comparison at the same absolute valuation is generally not identified. To see this, fix a common valuation $v\in[Q_1(0),1),$
and let $u_i(v)\in[0,b_i)$ be the unique rank such that $Q_i(u_i(v))=v$. Since the higher-$k$ distribution is lower in the sense that $Q_2\le Q_1$ pointwise, we have $u_2(v)\ge u_1(v)$.
Hence $x_2^V(v)=x_2(u_2(v))$ and $x_1^V(v)=x_1(u_1(v))$.

Two forces now work in opposite directions. First, the higher-$k$ environment shifts the quality schedule
downward in quantile space: $x_2(u)\le x_1(u)$ for all $u$. Second, for a fixed valuation $v$, the buyer occupies a weakly higher rank under $k_2$: $u_2(v)\ge u_1(v)$,
and since each $x_i$ is increasing, this rank effect pushes quality upward. As a result, the comparative
statics derived in the paper imply neither
$x_2^V(v)\le x_1^V(v)$ nor $x_2^V(v)\ge x_1^V(v)$ for all common valuations $v$. At $v=1$, both schedules coincide: $x_1^V(1)=x_2^V(1)=\bar q$.

Therefore, as the weight on consumer surplus $k$ rises, the monopolist offers lower quality at every
\emph{rank}, and the efficient top segment becomes smaller. What is not identified is whether a buyer
with the same \emph{absolute valuation} receives more or less quality, because the buyer's rank in the
market also changes when the value distribution shifts downward.

\subsection{The Efficient Consumer Surplus-Profit Frontier}\label{sec:pareto}

We are now ready to characterize the efficient frontier of implementable pairs of consumer surplus and profit in the canonical screening model of \cite{mussa_rosen_1978} that can arise from any distribution of prior consumer valuations and any cost function that satisfies Assumptions \ref{ass:cost} and \ref{ass:cost-c3-unique}. 

The previous results solve the upstream actor's weighted problem for each value of \(k\) and characterize
how the implementing market composition changes with that welfare weight. We now turn to the global
object these solutions generate. For each market composition \(G\), the seller's optimal screening problem
induces a pair \((CS(G),\Pi(G))\). As \(G\) varies, these pairs trace the set of implementable consumer surplus-
profit outcomes. The economic question is therefore not only which market is optimal for a given welfare
weight, but also what the outer limit of feasible consumer surplus/profit tradeoffs is once the upstream
actor can choose the market itself.

Theorem \ref{thm:supported-equals-pareto} shows that this frontier is completely pinned down by the weighted problem. As the upstream
actor's weight on consumer surplus varies, the corresponding optimal market compositions \(G_k^*\) trace the
entire Pareto frontier of implementable \((CS,\Pi)\)-pairs. Thus, the weighted problem does not merely select
one optimal market for each objective; it recovers the full efficient frontier across markets.

 To this end, for each $k\in[0,1]$, let $G_k^*$ denote the unique
maximizer of the upstream actor's problem, and write $(c_k,\pi_k):=\bigl(CS(G_k^*),\Pi(G_k^*)\bigr)$.
By Theorem~\ref{thm:main-structure}, we have $G_k^*=G_{1/2}^*=\delta_1$ for all $k\in[0,1/2]$. Define the implementable set
\[
\mathcal V:=\{(c,\pi)\in\mathbb R_+^2:\ \exists G\in\Delta([0,1])\text{ such that }
c=CS(G),\ \pi=\Pi(G)\}.
\]
Define its Pareto frontier
\[
\mathcal F^P
:=
\Bigl\{(c,\pi)\in\mathcal V:\ \nexists (c',\pi')\in\mathcal V
\text{ with }c'\ge c,\ \pi'\ge \pi,\text{ and at least one inequality strict}\Bigr\},
\]
and its supported frontier
\[
\mathcal F^{\mathrm{sup}}
:=
\Bigl\{(c,\pi)\in\mathcal V:\ \exists k\in[0,1]\text{ such that }
kc+(1-k)\pi=\max_{(c',\pi')\in\mathcal V}\bigl[kc'+(1-k)\pi'\bigr]\Bigr\}.
\]
The next Theorem shows that the supported frontier coincides with the Pareto frontier.

\begin{theorem}
\label{thm:supported-equals-pareto}
Maintain Assumptions \ref{ass:cost} and \ref{ass:cost-c3-unique}. Then
\[
\mathcal F^P=\mathcal F^{\mathrm{sup}}
=
\{(c_{1/2},\pi_{1/2})\}\cup \{(c_k,\pi_k):\ k\in(1/2,1]\}.
\]
In particular, the weighted-sum solutions exhaust the entire Pareto frontier of implementable
\((CS,\Pi)\)-pairs. Equivalently, every Pareto efficient implementable \((CS,\Pi)\)-pair is implemented by a single optimal market
composition \(G_k^*\) for some \(k\in[1/2,1]\).
\end{theorem}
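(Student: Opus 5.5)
Our plan is to prove the two inclusions $\mathcal F^{\mathrm{sup}}\subseteq\{(c_{1/2},\pi_{1/2})\}\cup\{(c_k,\pi_k):k\in(1/2,1]\}\subseteq\mathcal F^P$ and then close the loop with $\mathcal F^P\subseteq\mathcal F^{\mathrm{sup}}$.

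\emph{Supported points are the $(c_k,\pi_k)$.} By Lemmas~\ref{lem:concavify} and \ref{lem:truncate}, $\max_{\mathcal V}[kc+(1-k)\pi]=\max_{\phi\in\Phi}J_k(\phi)$. Theorem~\ref{thm:main-structure} gives a unique maximizing $\phi^*$ for each $k$. A point $(c,\pi)=(CS(G),\Pi(G))$ attaining the maximum therefore has $W_k(G)=W_k(G_k^*)$. For $k>0$, the equality clauses of the two lemmas show that the regularized and truncated $\phi$ attains the same value. Hence it equals $\phi^*$, so $(c,\pi)=(c_k,\pi_k)$. For $k=0$ the maximizer of $\Pi$ is $\delta_1$ (profit $\pi(1)$ is attained only when $\phi=1$ a.e.). Such a $G$ has $\phi\equiv1$ after ironing, so $Q=1$ a.e. and $CS=0$; this point coincides with $(c_{1/2},\pi_{1/2})=(0,\pi(1))$.

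\emph{Each $(c_k,\pi_k)$ is Pareto.} For $k\in(1/2,1)$, any $(c',\pi')$ weakly dominating $(c_k,\pi_k)$ with one strict inequality gives a strictly higher weighted sum, since both weights are positive. This contradicts optimality. For $k=1/2$ the same argument applies. For $k=1$, a dominating point with $c'=c_1$ and $\pi'>\pi_1$ would also maximize $CS$. By the uniqueness argument just given, it would equal $(c_1,\pi_1)$, a contradiction.

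\emph{Pareto points are supported.} This is the main obstacle, because $\mathcal V$ need not be convex. The plan is a continuity and connectedness argument. We first show $k\mapsto(c_k,\pi_k)$ is continuous on $[1/2,1]$. This follows from uniqueness of the maximizer together with compactness of $\Phi$ in $L^1$ (Helly selection) and continuity of $J_k$, $CS$, and $\Pi$ in that topology; the last is plausibly already used in the existence step. By Proposition~\ref{prop:comp_statics}, $c_k$ is strictly increasing and $\pi_k$ strictly decreasing in $k$. Hence the curve $\{(c_k,\pi_k):k\in[1/2,1]\}$ is the graph of a continuous strictly decreasing function $f$ on $[c_{1/2},c_1]=[0,c_1]$.

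Now take a Pareto point $(c,\pi)$. We have $c\le c_1$, since $c_1=\max CS$. Choose $k$ with $c_k=c$ by the intermediate value theorem. Since $(c_k,\pi_k)$ is in $\mathcal V$ and $(c,\pi)$ is Pareto, we get $\pi\ge\pi_k$. Conversely, suppose $\pi>\pi_k$. Let $\lambda$ be the slope of a supporting line at $k$: the weight condition $kc'+(1-k)\pi'\le kc_k+(1-k)\pi_k$ holds for all of $\mathcal V$. At $c'=c_k$ this forces $\pi'\le\pi_k$. This holds for $k<1$; at $k=1$, $c=c_1$ forces $\pi\le\pi_1$ by Pareto comparison with $(c_1,\pi_1)$ after reversing roles, using the uniqueness above. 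Hence $\pi=\pi_k$ and $(c,\pi)=(c_k,\pi_k)\in\mathcal F^{\mathrm{sup}}$.

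The delicate points are the $L^1$-continuity of $CS$ (the term $Q_\phi$ near $u=1$) and the endpoint $k=1$. For the latter, uniqueness of the $CS$-maximizing $\phi$ pins down $\pi$ as well.
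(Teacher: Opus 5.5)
Your proof is correct and follows essentially the same route as the paper. Supported points are identified with the unique weighted optimizers $(c_k,\pi_k)$; these are Pareto because both weights are positive, with uniqueness handling $k=1$; and every Pareto point is recovered using $L^1$-continuity of $k\mapsto(c_k,\pi_k)$ (Berge's theorem plus uniqueness) together with an intermediate-value argument. The only difference is cosmetic: you match a Pareto point by its consumer surplus $c=c_k$, which needs only the trivial bound $c\ge 0=c_{1/2}$, and close with the supporting inequality, whereas the paper matches by profit $\pi=\pi_k$ (after first showing $\pi\in[\pi_1,\pi_{1/2}]$) and closes via Pareto efficiency of $(c_k,\pi_k)$.
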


Theorem \ref{thm:supported-equals-pareto} implies that a one-dimensional welfare parameter is sufficient to recover every efficient consumer surplus-profit combination that monopoly screening can generate across market compositions. The path \(k\mapsto (c_k,\pi_k)\) therefore, gives a complete global characterization of the tradeoff between consumer surplus and seller profit: it runs from the profit-maximizing
point \(B:=(c_{1/2},\pi_{1/2})\) to the consumer-surplus-maximizing point \(C:=(c_1,\pi_1)\), and every
efficient point on that boundary is reached exactly once.

Combined with Proposition \ref{prop:comp_statics}, the frontier also has a sharp primitive interpretation. Moving in the direction
of consumer surplus does not mean shutting down screening or creating bunching. Instead, it means making
the seller's screening environment less top-heavy: the support expands downward, the premium top segment
shrinks, virtual values and qualities fall pointwise in rank space, consumer surplus rises, profit falls, and
total surplus declines. The efficient frontier is therefore generated by replacing some efficient top-type mass
with a broader rent-generating interior.

Theorem \ref{thm:supported-equals-pareto} also clarifies the relation to the information design and market segmentation literatures. For a fixed
primitive prior, information or segmentation moves the realized outcome within a prior-specific feasible
region. By contrast, Theorem \ref{thm:supported-equals-pareto} identifies the envelope of those regions as the market composition itself
varies. In this sense, the result characterizes the global limits of consumer surplus and profit across markets,
rather than the scope for moving surplus within a given market. Figure \ref{fig:pareto_frontier} illustrates this frontier in the
quadratic-cost case.

\begin{figure}[htbp]
    \centering
    \includegraphics[scale=0.5]{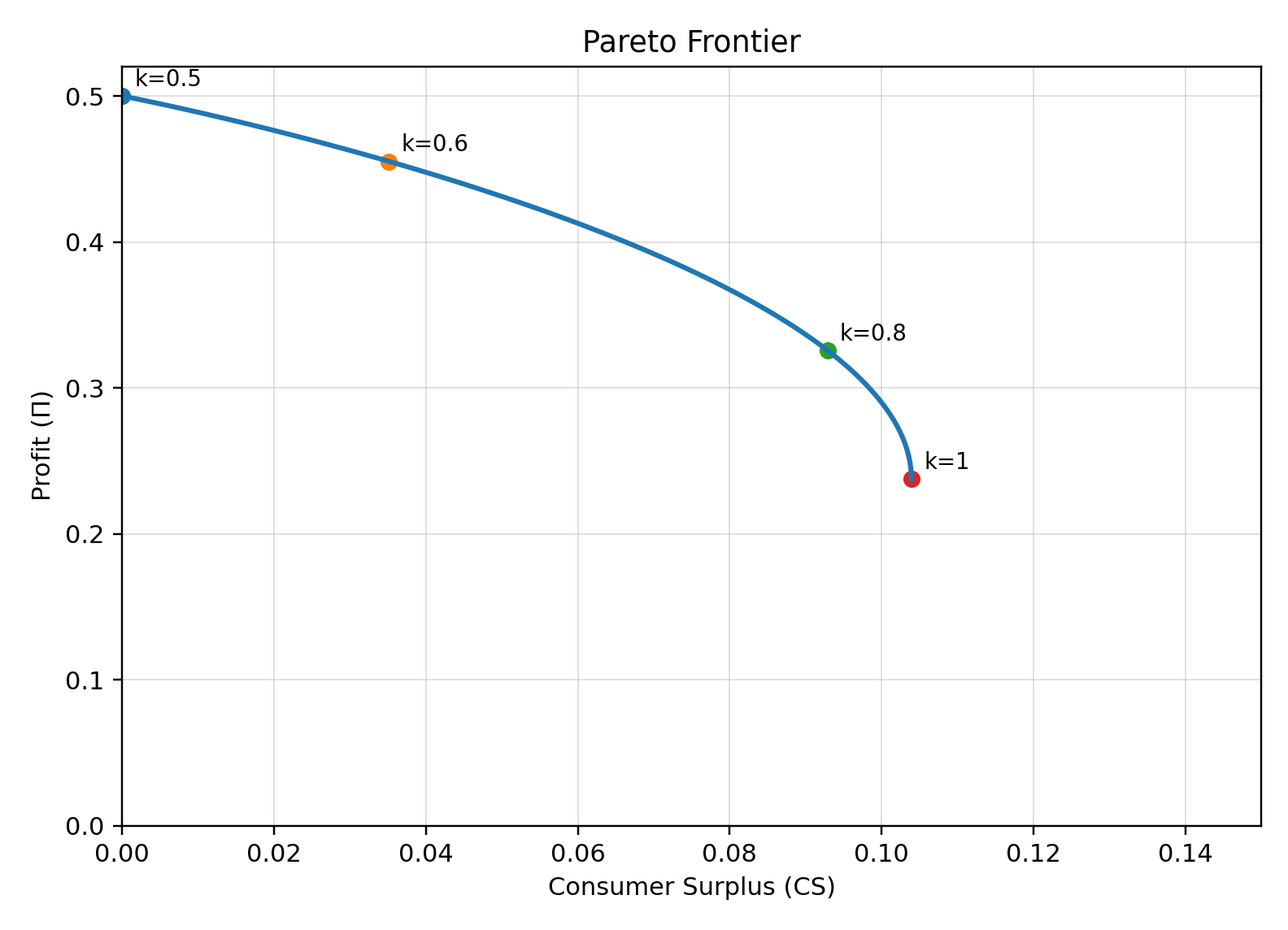}
    \caption{The Pareto frontier of the set of implementable ($\CS$,$\Pi$)-pairs in the quadratic-cost example.}
    \label{fig:pareto_frontier}
\end{figure}

\paragraph*{Connection to information design and market segmentation.}
For each primitive prior \(H\in\Delta([0,1])\), define the set of implementable \((CS,\Pi)\)
pairs that could arise from an information-design problem as
\[
V^{ID}(H)
:=
\{(CS(G),\Pi(G)):\ H\succ_{cx} G\},
\]
where \(H\succ_{cx} G\) means that $H$ dominates $G$ in the convex order so \(G\) is a mean-preserving contraction of \(H\). This is the
natural analog, in our screening environment, of the buyer-side information-design
problem studied by \cite{roessler_szentes_2017} in monopoly pricing. Since every distribution is
a mean-preserving contraction of itself, we have
\[
\mathcal{V}=\bigcup_{H\in\Delta([0,1])} V^{ID}(H).
\]
Accordingly, Theorem \ref{thm:supported-equals-pareto} identifies the Pareto frontier of the union of all fixed-prior
information-design regions. Equivalently, it identifies the envelope of their
supported frontiers as the primitive prior varies.

This perspective also clarifies the relation to \cite{BergemannHeumannWang2026}. For a
fixed aggregate market \(H\), observable segmentation allows the upstream actor to split
\(H\) into seller-observable submarkets and let the seller screen separately within each
segment. Once the aggregate market itself is also allowed to vary, segmentation no longer
creates new Pareto-efficient outcomes but only convexifies the set of screening
outcomes generated by single markets. 

In order to state this result, for each aggregate market \(H\in\Delta([0,1])\), define the set of observable-segmentation
outcomes by
\[
S(H):=
\left\{
\sum_{m=1}^M \lambda_m (CS(G_m),\Pi(G_m))
:
\begin{array}{l}
M\in\mathbb N,\ \lambda_m\ge 0,\ \sum_{m=1}^M \lambda_m=1,\\[0.2em]
H=\sum_{m=1}^M \lambda_m G_m
\end{array}
\right\}.
\]
Let $\mathcal S:=\bigcup_{H\in\Delta([0,1])} S(H)$ and $F^P(\mathcal{S})$ be its Pareto frontier, defined in an analogous way as the Pareto frontier of $\mathcal{V}$. We then have:

\begin{corollary}\label{cor:segmentation}
It holds that $\mathcal S=\operatorname{co}(\mathcal{V})$ and $F^P(\mathcal S)=F^P(\mathcal{V})=F^{sup}(\mathcal{V})$.
In particular, observable segmentation convexifies the cross-market screening set but does
not enlarge the Pareto frontier identified in Theorem~\ref{thm:supported-equals-pareto}.
\end{corollary}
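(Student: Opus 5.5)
The plan has two parts. First we show $\mathcal S=\operatorname{co}(\mathcal V)$. Then we show that taking convex hulls does not change the Pareto frontier, using the fact from Theorem~\ref{thm:supported-equals-pareto} that every Pareto point of $\mathcal V$ is supported.

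\emph{Step 1: $\mathcal S\subseteq\operatorname{co}(\mathcal V)$.} Each element of $S(H)$ is a finite convex combination $\sum_m\lambda_m(CS(G_m),\Pi(G_m))$. Each $(CS(G_m),\Pi(G_m))$ lies in $\mathcal V$ by definition, so the combination lies in $\operatorname{co}(\mathcal V)$.

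\emph{Step 2: $\operatorname{co}(\mathcal V)\subseteq\mathcal S$.} Take $\sum_m\lambda_m(CS(G_m),\Pi(G_m))$ with arbitrary $G_m\in\Delta([0,1])$. Set $H:=\sum_m\lambda_m G_m\in\Delta([0,1])$. The segmentation $(\lambda_m,G_m)$ is then admissible for $H$, so the point lies in $S(H)\subseteq\mathcal S$. The key observation is that the aggregate market is unrestricted, so any finite mixture of markets is itself a legitimate aggregate market. Steps 1 and 2 give $\mathcal S=\operatorname{co}(\mathcal V)$.

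\emph{Step 3: the frontiers coincide.} Because the weights are nonnegative, the two sets have the same support function on nonnegative directions:
\[
\max_{(c,\pi)\in\operatorname{co}(\mathcal V)}\bigl[kc+(1-k)\pi\bigr]=\max_{(c,\pi)\in\mathcal V}\bigl[kc+(1-k)\pi\bigr]=:w_k \qquad \text{for all } k\in[0,1],
\]
and the maximum on the right is attained by $G_k^*$.

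First, take $(c,\pi)\in F^P(\mathcal V)$. By Theorem~\ref{thm:supported-equals-pareto} it equals $(c_k,\pi_k)$ for some $k\in[1/2,1]$. Suppose some $(c',\pi')\in\operatorname{co}(\mathcal V)$ Pareto-dominates it.
\begin{itemize}
\item If $k\in(1/2,1)$, both weights are strictly positive. Then $kc'+(1-k)\pi'>w_k$, contradicting the equality of maxima.
\item If $k=1$, dominance forces $c'=c_1$ and $\pi'>\pi_1$. Write $(c',\pi')$ as a convex combination of points of $\mathcal V$. Each component has $c\le c_1$, so every component must attain $CS=c_1$. By uniqueness of the maximizer $G_1^*$, which Assumption~\ref{ass:cost-c3-unique} gives via Theorem~\ref{thm:main-structure}, each component equals $(c_1,\pi_1)$. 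This contradicts $\pi'>\pi_1$.
\item The endpoint $k=1/2$ is handled the same way. One can either use the $k=1/2$ weights directly, or use that $\pi_{1/2}$ is the maximal profit and apply the same uniqueness argument with $G=\delta_1$.
\end{itemize}
Hence $F^P(\mathcal V)\subseteq F^P(\mathcal S)$.

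Conversely, take $p\in F^P(\operatorname{co}(\mathcal V))$. The set $\operatorname{co}(\mathcal V)$ is convex. Assuming $\mathcal V$ is compact, which the next paragraph addresses, its Pareto points are supported by some $k\in[0,1]$. So $p$ maximizes $kc+(1-k)\pi$ over $\operatorname{co}(\mathcal V)$, and it is a convex combination of points of $\mathcal V$ that each attain $w_k$. By uniqueness of $G_k^*$, each of these points equals $(c_k,\pi_k)$, so $p=(c_k,\pi_k)\in\mathcal V$. Since $\mathcal V\subseteq\operatorname{co}(\mathcal V)$ and $p$ is undominated in the larger set, $p$ is undominated in $\mathcal V$, so $p\in F^P(\mathcal V)$.

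\emph{Main obstacle.} The delicate step is the supporting-hyperplane argument in the converse direction. It needs $\operatorname{co}(\mathcal V)$ to have supporting lines at its Pareto points, and this requires some closedness or compactness of $\mathcal V$. Otherwise a Pareto point of a non-closed convex set might fail to be supported with weights in $[0,1]$.

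I would first try to get compactness of $\mathcal V$ from continuity of $(CS,\Pi)$ on the compact set $\Phi$, in the topology already used for the existence argument in Theorem~\ref{thm:main-structure}, together with Lemmas~\ref{lem:concavify}--\ref{lem:truncate}. A fallback is to argue directly: a Pareto point of the convex set that is not supported would be approached by supported points $(c_k,\pi_k)$. Continuity of $k\mapsto(c_k,\pi_k)$, which follows from uniqueness and compactness, would then place it on the curve anyway.
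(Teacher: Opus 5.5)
Steps 1--2 and your inclusion $F^P(\mathcal V)\subseteq F^P(\mathcal S)$ are essentially the paper's argument. That includes using uniqueness of the weighted maximizer to collapse a convex combination that attains the maximum; the paper uses the same step to show that $p_1$ is the unique consumer-surplus maximizer over $\mathcal S$.

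For the converse, the obstacle you flag is not real, and the fix you propose would not work as stated. You need no closedness of $\mathcal V$. If $p\in F^P(\operatorname{co}(\mathcal V))$, then $p\in\operatorname{co}(\mathcal V)$, and the convex set $\operatorname{co}(\mathcal V)$ is disjoint from the open convex set $p+\mathbb R^2_{++}$. The finite-dimensional separation theorem then gives a nonzero $\alpha$ with $\alpha\cdot y\le\alpha\cdot z$ for all $y\in\operatorname{co}(\mathcal V)$ and all $z\in p+\mathbb R^2_{++}$. Unboundedness of the orthant forces $\alpha\ge 0$, and letting $z\to p$ shows that $p$ maximizes $\alpha\cdot y$ over $\operatorname{co}(\mathcal V)$.

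By contrast, continuity of $(CS,\Pi)$ on the compact set $\Phi$, together with Lemmas~\ref{lem:concavify}--\ref{lem:truncate}, does not by itself give compactness of $\mathcal V$. Those lemmas only say that each pair in $\mathcal V$ is matched in profit, and weakly exceeded in consumer surplus, by a pair in the image of $\Phi$. A non-regular $G$ with ironing on the active region generates a pair with strictly lower consumer surplus, and that pair need not lie in the image. With the separation step inserted, the rest of your converse is complete: $p$ is supported by some $k\in[0,1]$, and uniqueness of $G_k^*$ forces $p=(c_k,\pi_k)$. Uniqueness comes from Theorem~\ref{thm:main-structure} for $k\le 1/2$ and from Assumption~\ref{ass:cost-c3-unique} for $k>1/2$.

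The paper's converse avoids separation altogether. It first bounds $\pi\in[\pi_1,\pi_{1/2}]$ using the profit and consumer-surplus maximizers. It then uses the continuity and strict monotonicity of $k\mapsto\pi_k$ on $[1/2,1]$, established in the proof of Theorem~\ref{thm:supported-equals-pareto}, to pick the unique $k$ with $\pi_k=\pi$. Finally, $k$-weighted optimality gives $c\le c_k$, and Pareto efficiency rules out $c<c_k$.

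Your route uses only uniqueness of the maximizer at each fixed $k$. It does not need the regularity of the optimizer path, and it lands directly in $F^{\mathrm{sup}}(\mathcal V)$. The paper's route instead reuses the frontier parametrization it has already built and needs no convex-analytic input. Your proposed ``fallback'' is essentially the paper's argument.
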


Taken together, Theorem~\ref{thm:supported-equals-pareto} and
Corollary~\ref{cor:segmentation} identify market composition as the
first-order design margin. The weighted problem does more than select one
optimal market for each welfare weight: it traces the entire efficient
consumer-surplus-profit frontier. Theorem~\ref{thm:supported-equals-pareto}
immediately implies that this frontier is the envelope of the fixed-prior
information-design regions \(V^{ID}(H)\), while
Corollary~\ref{cor:segmentation} shows that observable segmentation
convexifies the cross-market feasible set but does not enlarge that
frontier. In this sense, the frontier identified in Theorem \ref{thm:main-structure} gives the global limits of monopoly screening
once the market itself becomes an object of design.

Moreover, it is easy to see that additional observable seller information after the choice of market composition can weakly increase profit but not consumer surplus. Formally, fix \(k\in(1/2,1]\) and let \(G_k\) denote the unique optimal market composition.We say that a probability measure \(\mu\in\Delta(\Delta([0,1]))\) is an admissible seller-observable signal with prior \(H\in\Delta([0,1])\) if it satisfies Bayes plausibility,
\[
\int_{\Delta([0,1])}\left(\int_{[0,1]} f(v)\,dG(v)\right)\mu(dG)
=
\int_{[0,1]} f(v)\,dH(v)
\]
for every bounded Borel function \(f:[0,1]\to\mathbb R\), and if the maps
$G\mapsto CS(G)$ and $G\mapsto \Pi(G)$ are \(\mu\)-measurable. All seller-observable signals considered below are admissible in this sense. Observable segmentation is the special case in which \(\mu\) has finite
support. Let $\bar c:=\int_{\Delta([0,1])} CS(G)\,\mu(dG)$ and $\bar\pi:=\int_{\Delta([0,1])} \Pi(G)\,\mu(dG)$
denote the ex ante consumer surplus and profit generated by the signal. We have:
\begin{corollary}\label{cor:info-after-Gk}
Fix \(k\in(1/2,1]\), and let \(G_k\) be the unique optimal market composition. Let \(\mu\in\Delta(\Delta([0,1]))\) be any admissible seller-observable signal with prior \(G_k\).
Then, additional seller-observable information cannot raise consumer surplus and weakly raises profit. That is, $\bar \pi\ge \pi_k$ and $\bar c\le c_k$.
Moreover, if \(\mu\) is not concentrated on \(G_k\), that is, if \(\mu(\{G_k\})<1\), then $\bar c<c_k$.
\end{corollary}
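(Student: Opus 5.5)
The profit inequality rests on the seller's option value of information. The consumer-surplus inequality combines Theorem~\ref{thm:supported-equals-pareto}, uniqueness of the $k$-optimal composition, and Bayes plausibility.

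\emph{Step 1: profit.} In each posterior $G$, the seller may ignore the signal and offer the menu that is optimal under $G_k$. That menu is incentive compatible and individually rational type by type, so it is feasible in every segment. Its expected profit, integrated against $\mu$, is exactly $\pi_k$ by Bayes plausibility; take $f(v)=t(v)-c(q(v))$ for the $G_k$-optimal direct mechanism, which is a bounded Borel function. Since $\Pi(G)$ is the seller's optimal profit in segment $G$, it weakly exceeds the profit of this menu pointwise in $G$. Integrating gives $\bar\pi\ge\pi_k$.

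\emph{Step 2: the weighted objective cannot rise.} For every $G$ in the support of $\mu$ we have $kCS(G)+(1-k)\Pi(G)\le W_k(G_k)$, because $G_k$ maximizes $W_k$ over all of $\Delta([0,1])$. Integrating against $\mu$ yields
\[
k\bar c+(1-k)\bar\pi\le k c_k+(1-k)\pi_k .
\]
Combining this with $\bar\pi\ge\pi_k$ and $k>0$ gives $k(\bar c-c_k)\le (1-k)(\pi_k-\bar\pi)\le 0$, so $\bar c\le c_k$.

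\emph{Step 3: strictness.} Suppose $\mu(\{G_k\})<1$. By the uniqueness part of Theorem~\ref{thm:main-structure}, $G_k$ is the unique maximizer of $W_k$ in terms of the induced $\phi$. Lemma~\ref{lem:concavify} says non-regular $G$ attain $W_k$ only if they agree with their regularization where quality is positive; in the regime $k>1/2$, where there is no exclusion, this should force $G=G_k$.

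The main obstacle is upgrading $W_k(G)<W_k(G_k)$ pointwise off $G_k$ to a strict inequality after integration. This fails if distributions $G\neq G_k$ with $W_k(G)=W_k(G_k)$ can carry positive $\mu$-mass. I would first try to show that the maximizer set of $W_k$ on $\Delta([0,1])$ is exactly $\{G_k\}$. Part of that argument is to show that any $G$ with the same $\phi$ has $Q=\widetilde Q$ wherever $q(\phi)>0$, which holds on all of $[0,1]$ since $\phi^*>0$ on $(0,1]$.

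With this in hand, the measurable set $\{G\neq G_k\}$ has positive $\mu$-mass and carries strictly smaller $W_k$. Integrating gives a strict version of the display in Step 2, and hence $k(\bar c-c_k)<(1-k)(\pi_k-\bar\pi)\le 0$. Measurability of the $W_k$ map follows from admissibility of $\mu$.

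One subtle case remains: $k=1$, where $1-k=0$. The argument above still works, since it reads directly $\bar c<c_1$ from strictness of $W_1=CS$.
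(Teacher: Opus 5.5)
Your proposal is correct and follows the paper's proof essentially step for step. The seller's option to ignore the signal and use the $G_k$-optimal menu gives $\bar\pi\ge\pi_k$ via Bayes plausibility; integrating the pointwise bound $W_k(G)\le W_k(G_k)$ and using $\bar\pi\ge\pi_k$ and $k>0$ gives $\bar c\le c_k$; and uniqueness of $G_k$ makes the integrated inequality strict when $\mu(\{G_k\})<1$. The uniqueness of $G_k$ over all of $\Delta([0,1])$ that you set out to re-derive is already proved in the paper as the final lemma of the proof of Theorem~\ref{thm:main-structure}, by the regularization-plus-truncation argument you sketch, and the statement presupposes it in any case.
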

Another implication of Theorem~\ref{thm:supported-equals-pareto} is that the weighted objective
\(k\,CS+(1-k)\Pi\) is without loss for efficient-boundary analysis and can be viewed as a reduced-form
representation of upstream motives. If an upstream institution maximizes any continuous objective
\(U(CS,\Pi)\) that attains its maximum at a Pareto-efficient point in \(V\), then that point is supported and
hence solves the weighted problem for some \(k\in[0,1]\). Thus, along the efficient boundary, heterogeneity
in primitive motives collapses to a single welfare weight. The same logic applies to instruments. If \(H\) is
an underlying prior and \(\mathcal F(H)\) is the set of effective market compositions that can be induced from
\(H\) by the available upstream instruments, then those instruments matter only through attainability.
Whenever a frontier market composition \(G_k\) lies in \(\mathcal F(H)\), it also solves the weighted problem
with weight \(k\) restricted to \(\mathcal F(H)\subseteq\Delta([0,1])\). In this sense, any primitive environment
that can implement \(G_k\) is observationally equivalent, from the seller’s perspective, to one generated by
an upstream actor with welfare weight \(k\). The weighted formulation can therefore be understood not only
as a literal restriction on primitive institutional motives, but also as a reduced-form parametrization of efficient
effective market compositions.

A further implication of the frontier characterization is that the feasible set contains the radial hull of the Pareto frontier from the origin, as illustrated in Figure \ref{fig:set_R_pareto_frontier}. For \(\theta\in[0,1]\), define
$\widetilde G_{k,\theta}:=(1-\theta)\delta_0+\theta G_k$.

\begin{figure}[htbp]
    \centering
    \includegraphics[scale=0.5]{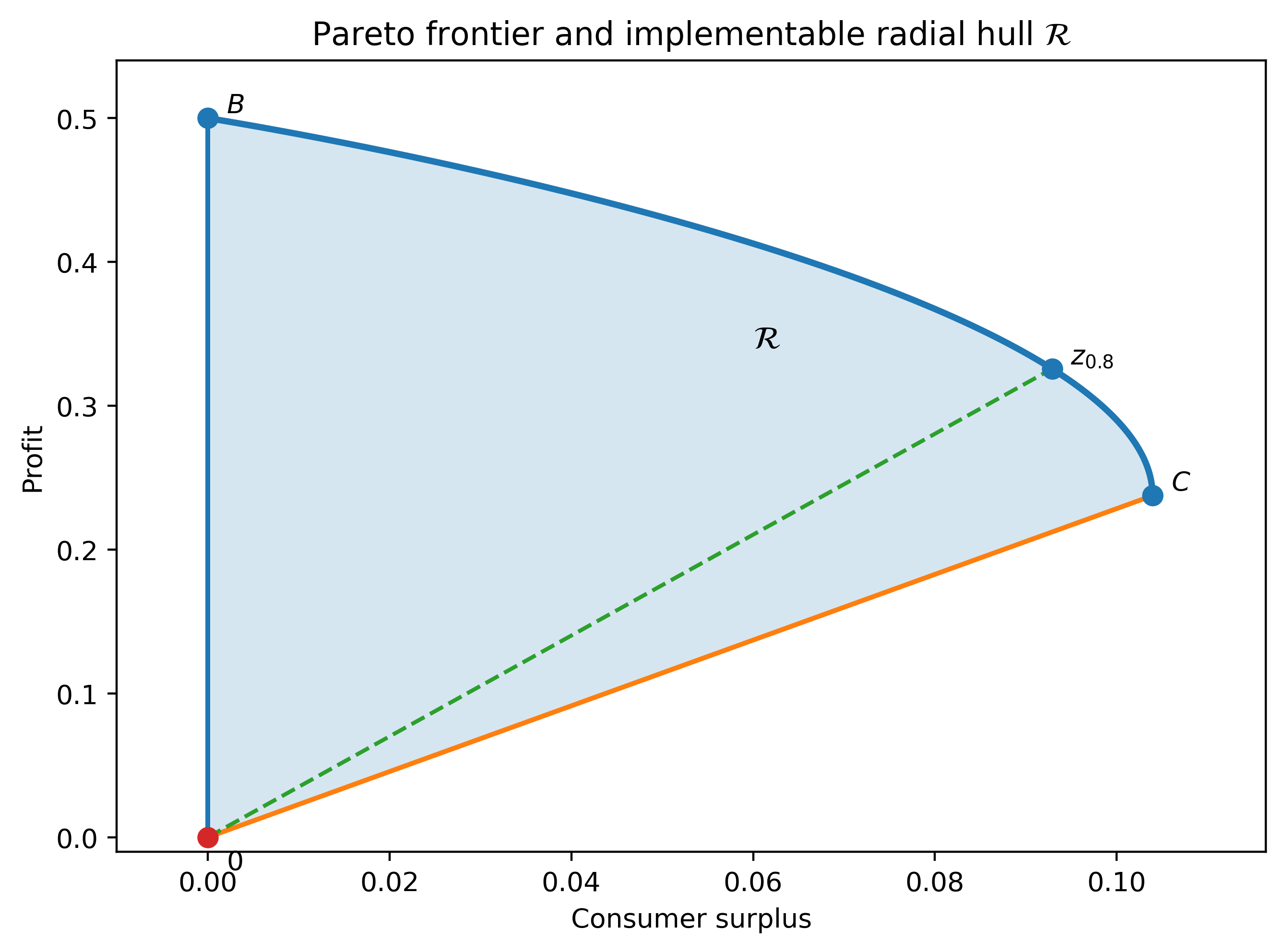}
    \caption{The Pareto frontier and the implementable radial hull in the quadratic-cost example.}
    \label{fig:set_R_pareto_frontier}
\end{figure}

\begin{corollary}[Radial contractions of the frontier are implementable]
\label{cor:radial-inner-set}
Maintain Assumptions~\ref{ass:cost} and~\ref{ass:cost-c3-unique}, and let \(G_k\) denote the unique optimal market composition associated
with welfare weight \(k\in[1/2,1]\). 
Then $\bigl(CS(\widetilde G_{k,\theta}),\Pi(\widetilde G_{k,\theta})\bigr)
=
(\theta c_k,\theta\pi_k)$.
Consequently,
\[
\mathcal R:=\{(\theta c_k,\theta\pi_k):k\in[1/2,1],\ \theta\in[0,1]\}\subseteq V.
\]
\end{corollary}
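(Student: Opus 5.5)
The plan is a direct computation in quantile space, using the seller-side reduction and the formulas for $\Pi(G)$ and $CS(G)$ in terms of $Q$ and the ironed virtual value $\phi$.

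\emph{Step 1: the quantile of the mixture.} Let $Q_k$ and $\phi_k$ be the quantile and virtual value of $G_k$. For $\theta\in(0,1]$ (the case $\theta=0$ gives $\delta_0$ with zero surplus for everyone, which is trivial), the mixture $\widetilde G_{k,\theta}=(1-\theta)\delta_0+\theta G_k$ has an atom of mass $1-\theta$ at $0$. Its lower quantile is
\[
\widetilde Q(u)=0 \quad\text{for } u\le 1-\theta,\qquad \widetilde Q(u)=Q_k\!\left(\tfrac{u-(1-\theta)}{\theta}\right)\quad\text{for } u>1-\theta .
\]
Write $s=(u-1+\theta)/\theta$. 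Then $1-u=\theta(1-s)$, so the raw revenue curve is
\[
\widehat R(u)=0 \ \text{on } [0,1-\theta],\qquad \widehat R(u)=\theta R_k(s)\ \text{on } (1-\theta,1].
\]

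\emph{Step 2: the concavification (the delicate step).} Here $R_k$ is concave because $G_k\in\Omega$; for $k=1/2$, $G_k=\delta_1$ and $R_k(s)=1-s$. Note that $\widehat R$ jumps from $0$ up to $\theta R_k(0)=\theta\underline v_k$ at $u=1-\theta$. Hence the least concave majorant is not $\widehat R$ itself. Instead it is the chord from $(0,\widehat R(0))=(0,0)$ to the curve, and since $R_k$ is concave it is tangent to $R_k$ or meets it at $s=0$.

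I claim that on $[0,1-\theta]$ the majorant is linear from $(0,0)$ to $(1-\theta,\theta\underline v_k)$, with slope $\theta\underline v_k/(1-\theta)>0$. This gives a negative ironed virtual value, $\phi=-\theta\underline v_k/(1-\theta)$, on $[0,1-\theta)$. To confirm that no tangency point lies beyond $s=0$, I must check concavity at the junction: the left slope $\theta\underline v_k/(1-\theta)$ must be at least the right slope $-\phi_k(0)\le 0$. This holds, so on the mixture part the concavified curve is $\theta R_k(s)$ and $\phi(u)=\phi_k(s)\ge 0$.

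\emph{Step 3: change of variables.} The seller then assigns quality $q(\phi)=0$ on $[0,1-\theta)$, which contributes nothing to either profit or consumer surplus. On $[1-\theta,1]$ we have $du=\theta\,ds$, so
\[
\Pi=\int_{1-\theta}^1 \pi(\phi_k(s))\,du=\theta\,\pi_k,\qquad
CS=\int_{1-\theta}^1 \big(Q_k(s)-\phi_k(s)\big)\,q(\phi_k(s))\,du=\theta\, c_k .
\]
Varying $k\in[1/2,1]$ and $\theta\in[0,1]$ then gives $\mathcal R\subseteq\mathcal V$.

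As an alternative that avoids the concavification, I could argue directly in value space. Adding an atom of zero-value buyers who are never served does not change the seller's optimal menu. For any menu with quality nondecreasing in type, the seller's revenue is $(1-\theta)\cdot 0+\theta\cdot(\text{revenue on }G_k)$, and type-$0$ buyers contribute no profit whatever they are offered (they pay $0$ at most). So the seller's objective is $\theta$ times her objective under $G_k$. The optimal menus therefore coincide, and payoffs scale by $\theta$.

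This second argument is cleaner, and I would present it as the main proof, using the quantile computation only as a check. The main obstacle is uniqueness of the induced payoffs when zero types are offered positive quality. This is handled by noting that the paper's expressions pin $CS$ and $\Pi$ down uniquely given $G$ (the seller-side reduction).
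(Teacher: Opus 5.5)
Your proof is correct. The computation you actually carry out (Steps 1--3) takes a genuinely different route from the paper's.

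\textbf{Your route.} You push $\widetilde G_{k,\theta}$ through the seller-side reduction, which is legitimate because the reduction is stated for arbitrary $G\in\Delta([0,1])$, atoms and gaps included. The atom at $0$ makes the raw revenue curve vanish on $[0,1-\theta]$ and equal $\theta R_k\big((u-1+\theta)/\theta\big)$ on $(1-\theta,1]$. Its least concave majorant is the chord from $(0,0)$ to $(1-\theta,\theta\underline v_k)$, followed by that rescaled curve. This majorant is concave because the left slope is positive while the right slope is $-\phi_k(0)\le 0$. It is minimal because any concave majorant is continuous at the interior point $1-\theta$, hence lies above $\theta\underline v_k$ there. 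The excluded region then carries a negative ironed virtual value, and the substitution $du=\theta\,ds$ delivers $\Pi$ and $CS$ simultaneously.

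\textbf{The paper's route.} The paper argues at the mechanism level in value space.
\begin{itemize}
\item For $\Pi\ge\theta\pi_k$, it extends a $G_k$-optimal mechanism, normalized to $U_k(\underline v_k)=0$, by assigning $(0,0)$ below $\underline v_k$, and checks that low types do not deviate upward.
\item For the reverse bound, it restricts an arbitrary IC--IR mechanism to $[\underline v_k,1]$ and shifts transfers by $\alpha=U(\underline v_k)=\int_0^{\underline v_k}q$. This gives profit at most $-(1-\theta)c(q(0))+\theta\pi_k-\theta\alpha$.
\item It reads off $CS=\theta c_k$ from the equality case: $q(0)=0$, $\alpha=0$, and the restricted mechanism is $G_k$-optimal.
\end{itemize}

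\textbf{Comparison.} Your route is shorter and needs no equality-case analysis, since the reduction already pins down $CS$ for every $G$. The paper's route avoids ironing across the jump and exhibits the seller's optimal mechanism explicitly. It too ultimately relies on uniqueness of $CS$ under $G_k$.

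\textbf{The value-space sketch.} Your sketch is essentially the paper's argument, but the claim that type-$0$ buyers ``contribute no profit whatever they are offered'' is not right. They pay nothing, yet they can receive positive quality (for instance, a free item with $q>0$) at cost $c(q(0))>0$. So the seller's objective under $\widetilde G_{k,\theta}$ is only \emph{at most} $\theta$ times her objective under $G_k$, with equality when type $0$ gets $(0,0)$. That inequality is all the argument needs, and it is exactly the $c(q(0))$ term in the paper's bound.
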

Geometrically, \(\mathcal R\) is the radial hull of the Pareto frontier from the origin; equivalently, it is
the region bounded by the Pareto frontier, the segment joining the origin to \(C=(c_1,\pi_1)\), and the
vertical segment joining the origin to \(B=(0,\pi_{1/2})\). Therefore, every radial contraction of a frontier point toward the origin is implementable by a single market composition. 

\begin{remark}[Beyond the Pareto frontier]
Theorem~\ref{thm:supported-equals-pareto} characterizes only the efficient frontier of the feasible
set. A full characterization of the implementable set \(V\) would require more. First, to characterize the full frontier of
\(S=\operatorname{co}(V)\), one would need to solve the problem $\max_{G\in\Delta([0,1])}\{CS(G)-\lambda \Pi(G)\}$, $\lambda\ge 0$, which would trace the remaining nontrivial supported boundary of \(V\) and then prove that the resulting supported boundary coincides with the actual one. That, together with \ref{thm:supported-equals-pareto} would imply that the implementable set $V$ is convex. We do not pursue that problem
here. Corollary~\ref{cor:radial-inner-set} instead identifies a nontrivial inner region of \(V\).
\end{remark}

\subsection{Constant Elasticity Cost Functions}\label{sec:constant_elasticity}

We now specialize the analysis to cost functions with constant elasticity 
$c(q)=q^\eta / \eta$
and derive comparative statics results with respect to the elasticity parameter $\eta$.

Fix $k\in(1/2,1]$. Let $b_{\eta}$, $\underline{v}_\eta $ and $m_\eta$ denote the cutoff, the lower support endpoint and the top atom, respectively. Also let $\phi_\eta(u):=x_\eta(u)^{\eta-1}\mathbf 1_{\{u<b_\eta\}}+\mathbf 1_{\{u\ge b_\eta\}}$
denote the optimal virtual-value profile, and let $V_k(\eta):=\sup_G W_k(G)$ denote the optimal value.

\begin{proposition}
\label{prop:eta-comparative-statics}
Fix $k\in(1/2,1]$. Then for any $1<\eta_1<\eta_2$:

\begin{enumerate}
\item The cutoffs are decreasing while the top atoms are increasing, that is, $b_{\eta_2}<b_{\eta_1}$ and $m_{\eta_2}>m_{\eta_1}$.
\item The optimal quality schedules are increasing, that is, $x_{\eta_2}(u)>x_{\eta_1}(u)$ for all $u\in(0,b_{\eta_1})$ and $x_{\eta_2}(u)=x_{\eta_1}(u)=1$ for all $u\in[b_{\eta_1},1]$.
\item The lower support endpoints are decreasing, that is, $\underline v_{\eta_2}<\underline v_{\eta_1}$.
\item The quantiles $Q_{\eta_1}$ and $Q_{\eta_2}$ cross. Equivalently, neither induced optimizer
distribution first-order stochastically dominates the other.

\item The upstream actor's payoff is increasing, that is, $V_k(\eta_2)>V_k(\eta_1)$.
\end{enumerate}
\end{proposition}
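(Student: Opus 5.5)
The plan is to rewrite the free-boundary problem of Theorem~\ref{thm:main-structure} for the cost $c(q)=q^\eta/\eta$ in the quality variable $x=q(\phi)$ and then compare two elasticities with a single-crossing argument run backward from the top boundary. Here $\bar q=1$, $q(z)=z^{1/(\eta-1)}$ on $(0,1)$, and $q'(\phi)=\tfrac{1}{\eta-1}\phi^{(2-\eta)/(\eta-1)}$.

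Writing $\phi=x^{\eta-1}$, the Euler--Lagrange condition of Theorem~\ref{thm:main-structure}(ii) becomes
\[
kA(u)+\frac{k}{\eta-1}\,(Q(u)-x(u)^{\eta-1})\,x(u)^{2-\eta}+(1-2k)x(u)=0,
\]
with $A'(u)=x(u)/(1-u)$ and $\big((1-u)Q(u)\big)'=-x(u)^{\eta-1}$. The terminal conditions are $kA(b)=2k-1$ and $Q(b)=1$, and $x(b^-)=1$ by continuity of $\phi^*$. Assumption~\ref{ass:cost-c3-unique} holds for every $\eta>1$, so Theorem~\ref{thm:main-structure} gives uniqueness. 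I will therefore work with the unique trajectory $(A_\eta,Q_\eta,x_\eta)$ on $[0,b_\eta)$.

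\textbf{Step 1: time reversal.} I reparametrize by the distance $s=b_\eta-u$ from the free boundary, or equivalently by the accumulated rent $A$, which is monotone. Both systems then start from the same state $(A,Q,x)=((2k-1)/k,1,1)$. Solving the Euler--Lagrange relation for $x$ gives a single-valued map $x=\chi_\eta(A,Q)$. The key pointwise claim is that $\chi_\eta$ is strictly increasing in $\eta$ for $x\in(0,1)$.

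To see why, note that for fixed $x<1$ the term $x^{\eta-1}$ falls in $\eta$, so the wedge $Q-x^{\eta-1}$ rises. The factor $x^{2-\eta}/(\eta-1)$ is the inverse curvature of quality, and it is what makes sub-top quality cheaper. The strategy is to show that the left-hand side, viewed as a function of $x$ on the relevant branch, shifts so that its root moves up. The terminal value $\phi_\eta(b^-)=1$ is the same for both $\eta$. A comparison principle for the backward ODE system then gives $x_{\eta_2}>x_{\eta_1}$ wherever both are interior. This is item 2.

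\textbf{Step 2: the cutoff, atom and support.} Higher quality means $A$ accumulates faster, since $A'=x/(1-u)$. The interior branch must accumulate $A$ down to the initial condition $A(0)=0$. With a larger integrand the required interior length is therefore shorter, which gives $b_{\eta_2}<b_{\eta_1}$ and $m_\eta=1-b_\eta$ increasing (item 1). I will make this rigorous by comparing the solutions in the $A$-parametrization, where the lengths of the $u$-intervals are integrals of $(1-u)/x$.

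For item 3, I use $\underline v_\eta=\int_0^1\phi_\eta=1-\int_0^{b_\eta}(1-x_\eta^{\eta-1})\,du$. Higher $x$ does not by itself imply higher $\phi=x^{\eta-1}$, because the exponent also rises. So I expect to have to show that $\phi_{\eta_2}<\phi_{\eta_1}$ at matched distance from the boundary on a large enough set, probably by integrating the $Q$-equation backward from $Q(b)=1$. This sign question is the step I expect to be the main obstacle. If the direct comparison fails, I will fall back on the explicit formula in the quadratic case $\eta=2$ to fix signs and use a monotone homotopy in $\eta$.

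\textbf{Step 3: crossing of the quantiles.} Item 3 gives $Q_{\eta_2}(0)<Q_{\eta_1}(0)$. Item 1 gives $Q_{\eta_2}=1>Q_{\eta_1}$ on $(b_{\eta_2},b_{\eta_1})$. Continuity then forces a crossing, so neither distribution first-order stochastically dominates the other (item 4).

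\textbf{Step 4: the value comparison.} This part is easier and uses no dynamics. For $q\in[0,1]$, the cost $q^\eta/\eta$ is pointwise decreasing in $\eta$ (weakly at $q=0$, and at $q=1$ it equals $1/\eta$, which also decreases). Fix the optimal $G$ for $\eta_1$ and use the same seller menu $x_{\eta_1}$ under cost $\eta_2$. Then $W_k$ computed with that menu weakly increases, since profit rises and consumer surplus is unchanged.

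The seller under $\eta_2$ re-optimizes. That raises profit but can change consumer surplus, so I need to handle this differently. The plan is instead to write the payoff as $\min$ over nothing, using the envelope theorem in $\eta$. By the envelope theorem applied to $V_k(\eta)=\max_\phi J_{k,\eta}(\phi)$ over the fixed feasible set $\Phi$ (the $\phi$-parametrization does not depend on $\eta$), we get
\[
V_k'(\eta)=\partial_\eta J_{k,\eta}(\phi_\eta),
\]
where the right-hand side is computed through $q_\eta$ and $c_\eta$. I will verify the sign directly using the explicit $q_\eta(\phi)=\phi^{1/(\eta-1)}$. The result is increasing in $\eta$ for $\phi<1$ and equals $1$ at $\phi=1$. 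This should yield a strictly positive sign on the interior, which has positive measure, and hence item 5.
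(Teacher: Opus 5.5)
Your Steps 1--2 do not contain a working argument for items 1--3. A ``comparison principle for the backward ODE system'' is not available off the shelf. The reversed log-time system in $(A,Q)$ is two-dimensional and not cooperative: with $\hat x=\chi_\eta(\hat A,\hat Q)$ one has $\partial_{\hat Q}\hat A'=-\partial_Q\chi_\eta<0<c''(\hat x)\,\partial_A\chi_\eta=\partial_{\hat A}\hat Q'$, so no Kamke-type ordering applies.

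In the scalar $A$-parametrization, pointwise monotonicity of $\chi_\eta$ in $\eta$ still does not order the trajectories. Your pointwise claim is in fact true on the relevant region. At fixed $(A,x)$, $\partial_\eta H_\eta=x^{\eta-1}\bigl[(1+(\eta-1)z)\ln x+z\bigr]$ with $z:=\lambda-A/x$, and $Q\le 1$ gives $-\ln x\ge \ln(1+(\eta-1)z)/(\eta-1)>z/(1+(\eta-1)z)$. But along the optimal paths the value $\mathcal Q_\eta(A)$ of $Q$ at rent level $A$ satisfies $\mathcal Q_{\eta_2}(A)<\mathcal Q_{\eta_1}(A)$ on $[0,\lambda)$. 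Since $\chi_\eta$ is increasing in $Q$, this pushes $x$ the other way. You would need a quantitative argument that the direct $\eta$-effect wins, and the plan contains none.

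Two smaller points. ``Distance from the boundary'' and $A$ are not equivalent clocks across $\eta$, and item 2 is stated at matched $u$, so a conversion step is needed. Given a matched-$A$ ordering $X_{\eta_2}>X_{\eta_1}$, your length argument for item 1 does go through. For item 3 you concede you have no argument. The fallback does not help: a computation at the single value $\eta=2$ carries no information about monotonicity in $\eta$, which is exactly what a homotopy would require.

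The missing idea is that the constant-elasticity cost makes the system scale invariant, so the single variable $z=\lambda-A/x\in[0,\lambda]$ closes it. Since $c'''/c''=(\eta-2)/x$, \eqref{eq:sec-unique-FB-x} becomes $x'/x=(1+z)/[(1-u)(1+\lambda+(\eta-2)z)]$. Then $z$ solves an autonomous scalar ODE in $t=-\ln(1-u)$, falling from $\lambda$ at $u=0$ to $0$ at $b_\eta$. With $N_\eta(s):=1+(\eta-1-\lambda)s+s^2$, everything becomes explicit:
\begin{itemize}
\item $T_\eta=\int_0^\lambda f_\eta(s)\,ds$ with $f_\eta=(1+\lambda+(\eta-2)s)/N_\eta$;
\item $\log\hat x_\eta(z)=-\int_0^z(1+s)/N_\eta(s)\,ds$;
\item $Q=\hat x_\eta(z)^{\eta-1}(1+(\eta-1)z)$.
\end{itemize}
Items 1--3 then reduce to signing $\eta$-derivatives of explicit integrands:
\begin{itemize}
\item item 1: $\partial_\eta f_\eta(s)=s(1+s)(s-\lambda)/N_\eta(s)^2<0$;
\item item 2: $\partial_\eta\log\hat x_\eta(z)>0$, combined with $z_{\eta_2}(u)<z_{\eta_1}(u)$ at matched $t$;
\item item 3: since $A(0)=0$ gives $\underline v_\eta=\hat x_\eta(\lambda)^{\eta-1}(1+(\eta-1)\lambda)$ directly, no integral of $1-\phi$ is needed. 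One shows $\partial_\eta\log\hat Q_\eta<0$ using $(1+s)(1-\lambda s+s^2)>1$ and $N_\eta(s)\le 1+(\eta-1)s$.
\end{itemize}

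Your Step 3 matches the paper. In Step 4 the envelope theorem needs a differentiability property of $V_k$ that you have not verified, and it is unnecessary. The fixed-$\phi$ sign computation you plan already gives
\[
V_k(\eta_2)\ge J_{k,\eta_2}(\phi_{\eta_1})>J_{k,\eta_1}(\phi_{\eta_1})=V_k(\eta_1),
\]
because for $\phi\in(0,1)$ both $q_\eta(\phi)=\phi^{1/(\eta-1)}$ and $\pi_\eta(\phi)$ increase in $\eta$, and $Q_\phi\ge\phi$. This also settles your worry about the seller re-optimizing: at fixed $G$, re-optimization raises consumer surplus as well as profit.
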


Note that for a constant elasticity cost function, one has, for every $q\in(0,1)$,
$\partial_\eta c_\eta(q)=q^\eta/\eta^2\bigl(\eta\ln q-1\bigr)<0$ and $\partial_\eta c_\eta'(q)=q^{\eta-1}\ln q<0$.
So, holding the normalization $c_\eta'(1)=1$
fixed, a larger elasticity lowers both total and marginal cost for every \emph{sub-top} quality $q<1$.
Equivalently, a higher $\eta$ makes costs more back-loaded: low and medium qualities become cheaper
relative to the normalized full-quality benchmark $q=1$. Then, the comparative statics work through three distinct margins.
First, with a higher cost elasticity, a larger fraction of the market is assigned to the fully efficient top tail with $v=1$ and $q=1$. Moreover, the interior screening distortion is relaxed everywhere, and finally, the market broadens to the left, that is, the optimizer also extends further down in values.

Putting these pieces together, we can conclude that higher cost elasticity makes the optimal market composition more polarized.
It becomes simultaneously more top-heavy, broader on the left, and better served throughout the separating
interior. This is why there is no FOSD ranking: the higher-elasticity optimizer is not simply an upward shift
of the lower-elasticity one.

Note that these comparative statics results are qualitatively different from the ones with respect to the weight on consumer surplus $k$. A higher $k$
makes the optimizer less top-heavy because it expands the interior screening region and shrinks the atom at
the top. By contrast, while a higher $\eta$ also leads to an expanded interior screening region, it lowers the cost of
sub-top qualities and therefore allows the upstream actor both to raise the interior quality and to switch earlier to the
fully efficient top tail.
\medskip

\section{Constant Marginal Cost}\label{sec:linear-cost}

In this section, we specialize to the case of constant marginal cost,
$c(q)=Mq$, where $M\in[0,1)$, with feasible qualities in \([0,\bar Q]\). In this benchmark, the seller never uses interior quality
distortions: she either sells the top-quality good or does not sell. The downstream problem, therefore,
collapses from nonlinear screening to monopoly pricing, and the upstream actor chooses the demand
environment faced by a monopolist\footnote{Note that his nests the costless-information benchmark of \cite{CondorelliSzentes2020}, who solve for the consumer-optimal prior distribution as when $\bar Q=1$, $M=0$, and $k=1$}.
This benchmark is useful for two reasons. First, it yields a closed-form solution for the optimal market
composition. Second, it gives a complete closed-form geometry of the implementable consumer surplus-profit tradeoff. The economic question is the same as in the convex cost model, but here the relevant region and its efficient
boundary can be described explicitly.

Accordingly, the section delivers three results. Proposition \ref{prop:linear-cost-posted} solves for the optimal market composition.
Proposition \ref{prop:linear-cost-comp} shows how that optimal market changes with the upstream actor's welfare weight.
Proposition \ref{prop:linear-cost-geometry} then characterizes the entire set of implementable \((CS,\Pi)\)-pairs and its Pareto frontier
in closed form. The final corollary interprets that region as the envelope of the fixed-market payoff
triangles familiar from the information-design and market segmentation literatures.

For a CDF \(G\) on \([0,1]\), define the survival function \(S_G(v):=G([v,1])=\Pr_G[V\ge v]\), \(v\in[0,1]\).
If the seller offers only the full-quality bundle \((\bar Q,r\bar Q)\) with cutoff \(r\in[M,1]\), then a buyer
purchases if and only if \(v\ge r\), and therefore
$\Pi_G(r)=\bar Q(r-M)S_G(r)$ and$CS_G(r)=\bar Q\int_r^1 S_G(v)\,dv$.
For a fixed weight \(k\in[0,1]\), the upstream actor wishes to maximize 
\[
J_k(G,r):=k\,CS_G(r)+(1-k)\Pi_G(r).
\]
Let \(p_G(r):=(r-M)S_G(r)\), \(r\in[M,1]\), and define the set of seller-optimal posted-price cutoffs by
\[
R^*(G):=\arg\max_{r\in[M,1]} p_G(r).
\]
\begin{remark}[Tie-breaking]
If there are multiple seller-optimal posted prices, we assume ties are broken in favor of the consumer.
\end{remark}

The optimal value to the upstream actor's problem is given by
\[
V_k^{pp}:=\sup\{J_k(G,r):\, G\in\Delta([0,1]),\ r\in R^*(G)\}.
\]
\begin{proposition}\label{prop:linear-cost-posted}
Assume marginal cost is constant and equal to $M\in[0,1)$. Then:

\begin{enumerate}
\item[(i)] For every $G$, the set $R^*(G)$ is nonempty and compact.

\item[(ii)] If $k\in[0,1/2]$, then $V_k^{pp}=\bar Q(1-k)(1-M)$, and the unique maximizing pair is $(G^*,r^*)=(\delta_1,1)$.

\item[(iii)] If $k\in(1/2,1]$, define $r_k:=M+(1-M)e^{-(2k-1)/k}\in(M,1)$,
and let $G_k$ be the distribution
\[
G_k(v):=
\begin{cases}
0, & v<r_k,\\
1-\dfrac{r_k-M}{v-M}, & r_k\le v<1,\\
1, & v=1.
\end{cases}
\]
Then $(G_k,r_k)$ is the unique maximizing pair, and $V_k^{pp}=\bar Qk(1-M)e^{-(2k-1)/k}$.
Moreover,
\[
\Pi_{G_k}(r_k)=\bar Q(1-M)e^{-(2k-1)/k} \ \ \text{and} \ \ CS_{G_k}(r_k)=\bar Q(1-M)e^{-(2k-1)/k}\frac{2k-1}{k}.
\]
\end{enumerate}
\end{proposition}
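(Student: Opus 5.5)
The plan is to reduce the problem to a one-dimensional bound in the seller's optimal profit, using the standard equal-revenue argument. Throughout, factor out $\bar Q$, which scales both $CS_G(r)$ and $\Pi_G(r)$. Shift values by the cost, writing $x:=r-M$ and $p^*:=p_G(r)=\max_{s\in[M,1]}p_G(s)$ for $r\in R^*(G)$.

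\emph{Part (i).} The survival function $S_G(v)=G([v,1])$ is nonincreasing and left-continuous, hence upper semicontinuous. Therefore $p_G(r)=(r-M)S_G(r)$ is upper semicontinuous on the compact set $[M,1]$, being the product of a nonnegative continuous function and a nonnegative upper semicontinuous one. It follows that its argmax is nonempty and closed, hence compact.

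\emph{Key bound.} Fix $G$ and $r\in R^*(G)$. Optimality of $r$ gives $(v-M)S_G(v)\le p^*$ for every $v\in(M,1]$, that is, $S_G(v)\le p^*/(v-M)$. Integrating,
\[
CS_G(r)=\int_r^1 S_G(v)\,dv\le p^*\ln\frac{1-M}{r-M}\le p^*\ln\frac{1-M}{p^*}.
\]
The last step uses $p^*=(r-M)S_G(r)\le r-M$. The degenerate case $p^*=0$ requires a separate check: then $S_G(v)=0$ for all $v>M$, so $CS=0$. It follows that $J_k(G,r)\le f(p^*)$, where
\[
f(p):=k\,p\ln\frac{1-M}{p}+(1-k)p,\qquad p\in[0,1-M].
\]
This function is strictly concave, since $f''(p)=-k/p<0$ when $k>0$. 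Its derivative is $f'(p)=k\ln\frac{1-M}{p}+1-2k$.
\begin{itemize}
\item If $k\le 1/2$, then $f'\ge 0$ on the whole interval, so $f$ is maximized at $p=1-M$ with value $(1-k)(1-M)$. For $k=0$, $f(p)=p$ directly.
\item If $k>1/2$, the unique maximizer is $p_k=(1-M)e^{-(2k-1)/k}=r_k-M$. Its value is $f(p_k)=k p_k\frac{2k-1}{k}+(1-k)p_k=k p_k$, which matches the claimed $V_k^{pp}$.
\end{itemize}

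\emph{Attainment.} For $G_k$ one has $S_{G_k}(v)=p_k/(v-M)$ on $[r_k,1]$, with an atom of size $p_k/(1-M)$ at $1$. Hence $p_{G_k}$ is constant and equal to $p_k$ on $[r_k,1]$, and strictly smaller below $r_k$. So $R^*(G_k)=[r_k,1]$. Since $CS_{G_k}(r)$ is decreasing in $r$, consumer-favorable tie-breaking selects $r_k$. Direct integration then gives $CS=p_k\ln\frac{1-M}{p_k}=p_k\frac{2k-1}{k}$ and $\Pi=p_k$.

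\emph{Uniqueness.} This is the main obstacle: tracing the equality cases of the chain of inequalities. Equality in $J_k\le f(p^*)\le \max f$ forces, by strict concavity, $p^*=p_k$ (respectively $p^*=1-M$ when $k\le1/2$). It also forces the two inequalities in the key bound to be tight:
\begin{itemize}
\item $r-M=p^*$, so $S_G(r)=1$ and there is no mass below $r$;
\item $S_G(v)=p^*/(v-M)$ for a.e. $v\in[r,1]$. Left-continuity of $S_G$ upgrades this to every $v\in(r,1]$.
\end{itemize}
This pins down $G=G_k$ and $r=r_k$. For $k\le 1/2$ the same conditions give $r=1$ and $S_G(1)=1$, so $G=\delta_1$. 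The points needing care are the case $p^*=0$, the behavior near $r=M$ where the logarithm blows up (this is excluded because $p^*>0$ at any candidate optimum), and the role of tie-breaking. On the last point, the selected $r$ must be the consumer-preferred element of $R^*(G)$, which is consistent with the bound since that bound holds for every $r\in R^*(G)$.
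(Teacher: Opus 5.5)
Your proof is correct and follows the paper's argument. Seller optimality bounds $S_G$ by the equal-revenue curve, integration reduces $J_k$ to a one-dimensional strictly concave program, and tracing the equality cases (using left-continuity of $S_G$) gives uniqueness. The only difference is cosmetic: you parametrize by the optimal profit $p^*$, bounding $\ln\frac{1-M}{r-M}\le\ln\frac{1-M}{p^*}$, whereas the paper parametrizes by the cutoff $r$, bounding $S_G(r)\le 1$, and maximizes the same function $F_k(r)=(r-M)\bigl[(1-k)+k\ln\frac{1-M}{r-M}\bigr]$.
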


\begin{figure}[htbp]
    \centering

    \begin{subfigure}{0.48\textwidth}
        \centering
        \includegraphics[width=\linewidth]{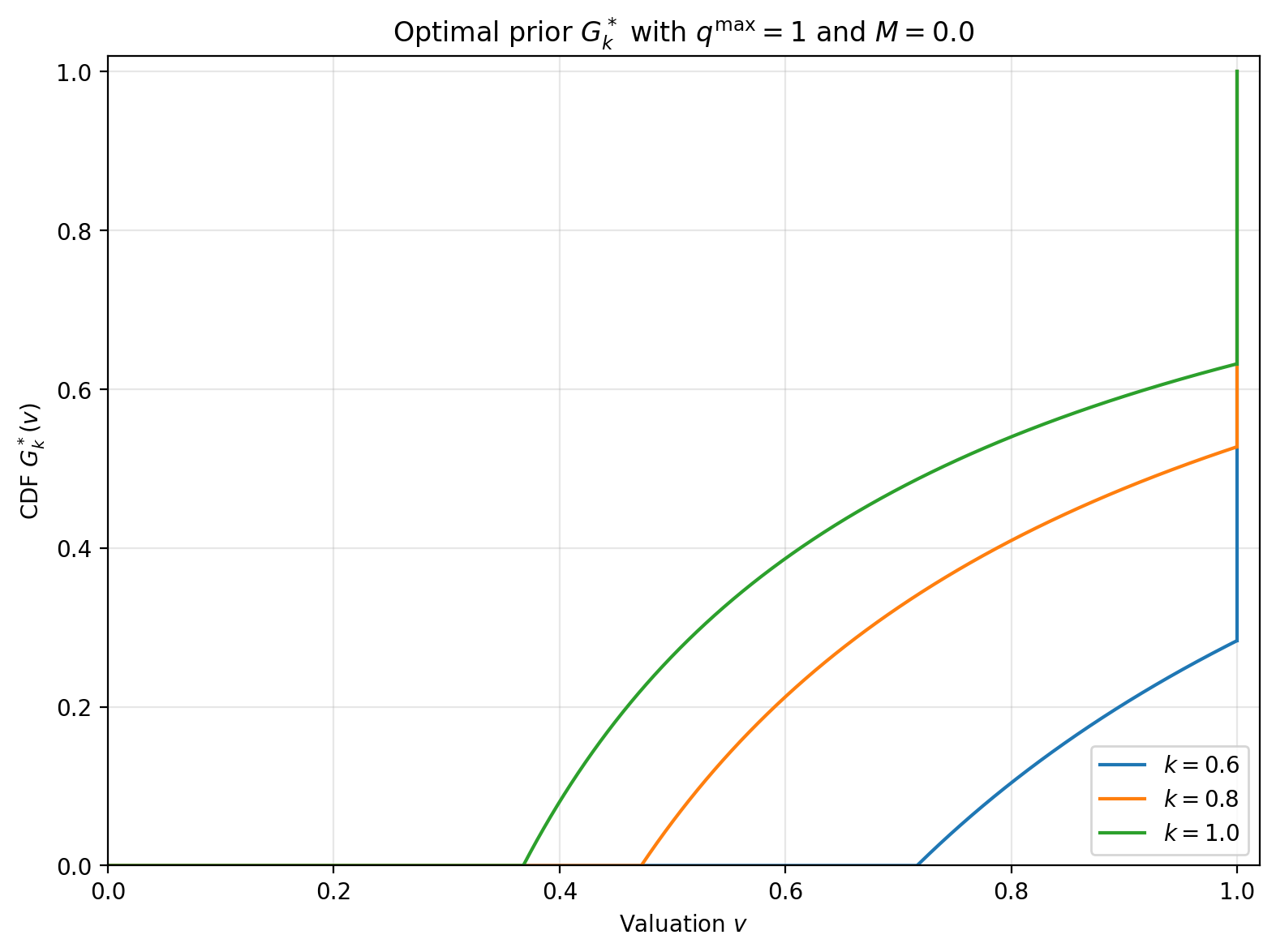}
       
        \label{fig:linear_G_M=0}
    \end{subfigure}
    \hfill
    \begin{subfigure}{0.48\textwidth}
        \centering
        \includegraphics[width=\linewidth]{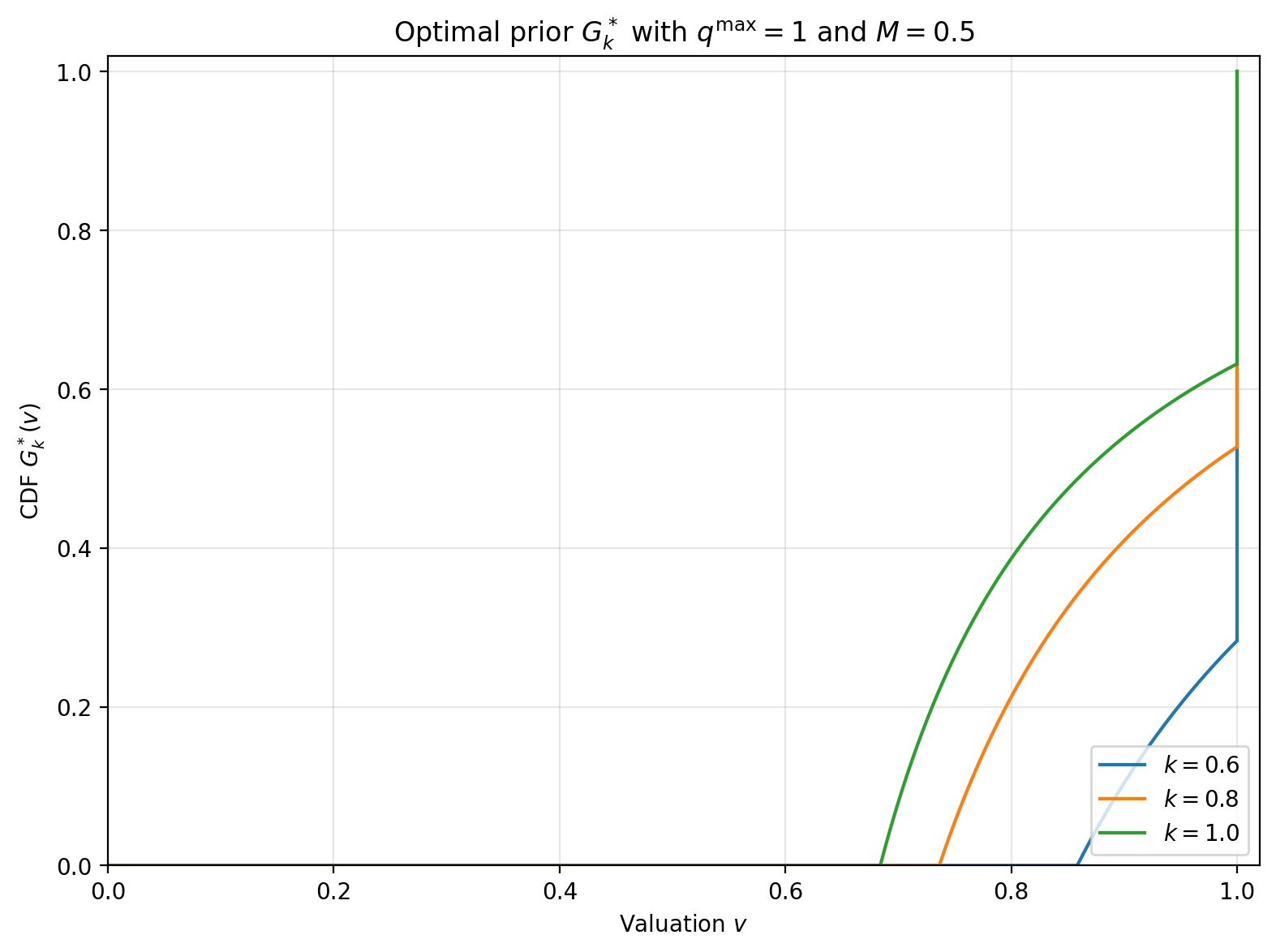}
        
        \label{fig:linear_G_M=1/2}
    \end{subfigure}
    \caption{Optimal priors under constant marginal cost: shifted equal-revenue distributions for different $k$.}
    \label{fig:linear_cost_opt}
\end{figure}
Proposition \ref{prop:linear-cost-posted} shows that, once interior quality distortions disappear, the upstream actor's problem
becomes one of optimal demand design. When \(k\le 1/2\), profit considerations dominate and the optimal
market again collapses to the top type. When \(k>1/2\), the optimal market is a shifted equal-revenue
distribution with a top atom. Thus, the familiar equal-revenue logic appears in
the present environment, but the welfare weight \(k\) determines how far the support extends below the
top and the cost shift \(M\) translates the support.

This benchmark isolates the extensive-margin analogue of the main model. In the convex cost environment,
the upstream actor reshapes both the distribution of values and the intensity of screening distortions
through the induced virtual-value schedule. Here, there are no interior quality distortions to manipulate. The only margin is how much top mass to retain and how far down to extend the active demand tail.
The linear cost solution, therefore, gives the cleanest possible version of the market composition problem.

At \(M=0\) and \(k=1\), the solution reduces to the \(1/e\) benchmark of \cite{CondorelliSzentes2020}.
More generally, positive marginal cost and intermediate welfare weights do not alter the logic of the
solution; they shift and rescale it. The linear-cost benchmark is therefore the exact posted-price analogue
of the broader market-composition problem studied in the rest of the paper.

\subsection{Comparative Statics}

Because the seller either serves a buyer at full quality or not at all, all comparative statics in the
linear-cost benchmark operate through the composition of demand rather than through the intensity of
quality distortion. Proposition \ref{prop:linear-cost-comp} shows that a more consumer-oriented upstream actor expands the active
support downward and shrinks the top atom. In other words, the linear-cost benchmark preserves the basic comparative-static message of
Proposition~\ref{prop:comp_statics}. A more consumer-oriented upstream actor makes the optimal
market less top-heavy. Here, however, the adjustment works purely by
expanding the support downward and shrinking the top atom rather than through changes in the intensity of quality distortion.

\begin{proposition}[Comparative statics]
\label{prop:linear-cost-comp}
The following are true.
\begin{enumerate}

\item[(i)] The cutoff $r_k$ is weakly decreasing in $k$, and strictly decreasing on $(1/2,1]$.
Equivalently, if $0\le k_1<k_2\le 1$, then
$r_{k_2}\le r_{k_1}$,
with equality if and only if $k_2\le 1/2$.

\item[(ii)] The support of the optimal prior is
$\operatorname{supp}(G_k^*)=[r_k,1]$.
Hence if $0\le k_1<k_2\le 1$, then $
\operatorname{supp}(G_{k_1}^*)\subseteq \operatorname{supp}(G_{k_2}^*)$,
with strict inclusion if and only if $k_2>1/2$.

\item[(iii)] The atom at the top value $1$ has size
\[
a_k:=G_k^*(\{1\})=
\begin{cases}
1, & k\in[0,1/2],\\
\dfrac{r_k-M}{1-M}=e^{-(2k-1)/k}, & k\in(1/2,1].
\end{cases}
\]
Hence $a_k$ is weakly decreasing in $k$, and strictly decreasing on $(1/2,1]$.

\item[(iv)] If $0\le k_1<k_2\le 1$, then $G_{k_2}^*(v)\ge G_{k_1}^*(v)$, for all $v\in[0,1]$,
with strict inequality for every $v\in(r_{k_2},1)$ whenever $k_2>1/2$.
Equivalently, the lower-$k$ optimal prior first-order stochastically dominates the higher-$k$ optimal prior.
\end{enumerate}
\end{proposition}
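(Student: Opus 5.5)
The plan is to read every item directly off the closed form in Proposition~\ref{prop:linear-cost-posted}, with the convention that for $k\in[0,1/2]$ the optimal prior is $G_k^*=\delta_1$ with $r_k:=1$. This convention is consistent with the formula, since $e^{-(2k-1)/k}\to 1$ as $k\downarrow 1/2$. For $k\in(1/2,1]$, $G_k^*=G_k$ is the shifted equal-revenue distribution.

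For (i), consider $f(k):=(2k-1)/k=2-1/k$. We have $f'(k)=1/k^2>0$, so $f$ is strictly increasing on $(1/2,1]$ with $f(1/2)=0$. Hence $r_k=M+(1-M)e^{-f(k)}$ is strictly decreasing on $(1/2,1]$, because $1-M>0$. Moreover $r_k<1$ there, while $r_k=1$ on $[0,1/2]$. Now take $k_1<k_2$. If $k_2\le 1/2$, then both cutoffs equal $1$. If $k_2>1/2$, then $r_{k_2}<1$, and either $r_{k_1}=1$ or, by strict monotonicity, $r_{k_1}>r_{k_2}$. This gives the equality case.

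For (ii), when $k>1/2$ the CDF $G_k$ is $0$ below $r_k$. It is continuous and strictly increasing on $[r_k,1)$, because $(r_k-M)/(v-M)$ is strictly decreasing in $v$. It also jumps to $1$ at $v=1$. So the support is exactly $[r_k,1]$. When $k\le1/2$ the support is $\{1\}=[r_k,1]$. Nesting of the supports and the strict-inclusion criterion then follow from (i).

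For (iii), the atom is $1-\lim_{v\uparrow1}G_k(v)=(r_k-M)/(1-M)=e^{-f(k)}$. Its monotonicity follows from that of $f$, and it equals $1$ for $k\le 1/2$.

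For (iv), we need a pointwise comparison of CDFs, and the case analysis is the only real step. If $k_2\le1/2$, both priors are $\delta_1$. If $k_1\le 1/2<k_2$, then $G^*_{k_1}=\mathbf 1_{\{v\ge1\}}$, which is $0$ on $[0,1)$; for $v\in(r_{k_2},1)$ we have $G_{k_2}(v)=1-(r_{k_2}-M)/(v-M)>0$, which gives strictness. If $1/2<k_1<k_2$, we show $G_{k_2}\ge G_{k_1}$ region by region. On $v<r_{k_2}$ both vanish. On $[r_{k_2},r_{k_1})$ we have $G_{k_1}=0\le G_{k_2}$, with strict inequality for $v>r_{k_2}$. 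On $[r_{k_1},1)$ the inequality reduces to $r_{k_2}-M<r_{k_1}-M$, which holds by (i). At $v=1$ both equal $1$. So the inequality is strict on $(r_{k_2},1)$. First-order stochastic dominance is then just the restatement $G^*_{k_2}\ge G^*_{k_1}$ pointwise.

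I expect no genuine obstacle. The main care point is the boundary convention at $k\le 1/2$: one must get the strictness claims right where the top atom jumps from $1$ to $e^{-f(k)}<1$, and use $M<1$ throughout so that $1-M>0$.
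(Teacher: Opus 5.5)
Your proposal is correct and follows the paper's proof essentially step for step: both read every item off the closed form of Proposition~\ref{prop:linear-cost-posted} (with $r_k=1$ and $G_k^*=\delta_1$ for $k\le 1/2$), obtain strict monotonicity of $r_k$ and of the atom $e^{-(2k-1)/k}$ on $(1/2,1]$ from a derivative sign, and prove the CDF ordering with the same three-case split of $(k_1,k_2)$ relative to $1/2$. The only cosmetic differences are that you differentiate $(2k-1)/k$ rather than $r_k$ itself, and that you verify $\operatorname{supp}(G_k^*)=[r_k,1]$ explicitly where the paper simply asserts it.
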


With linear cost, the downstream problem collapses from screening to monopoly selling, and the upstream
actor manipulates the demand curve rather than a full screening environment. For \(k>1/2\), the upstream
actor is willing to sacrifice profit to create consumer surplus. The optimal way to do so is to replace
some of the top mass with a lower tail of active buyers. As \(k\) rises, \(r_k\) falls, the support expands
downward, and the atom at \(1\) shrinks. Equivalently, the optimal prior becomes less top-heavy and
moves downward in the first-order stochastic dominance order.

The main implication is that, in the posted-price benchmark, helping consumers means broadening
the active demand tail. Since quality is binary, full quality or no trade, there is no intensive margin distortion to reshape. The entire welfare adjustment, therefore, works by changing which valuations are
present in the market and how much mass remains at the top. The linear cost benchmark, thus, isolates the pure demand-composition margin behind the paper's general results.

\subsection{Closed Form Geometry of the Posted-Price Frontier}\label{sec:linear-cost-frontier}

We now characterize the set of implementable \((CS,\Pi)\)-pairs, as well as its Pareto frontier, in the
standard monopoly selling problem with constant marginal cost. Relative to the convex cost model, the
advantage of the linear benchmark is that the entire region can be described explicitly. Proposition \ref{prop:linear-cost-geometry}
therefore gives a closed-form counterpart to Theorem \ref{thm:supported-equals-pareto} as it identifies the full consumer surplus-profit tradeoff when the seller's problem is monopoly selling rather than nonlinear screening.

Define
\[
\mathcal V^{pp}
:=
\{(CS_G(r),\Pi_G(r)):\ G\in\Delta([0,1]),\ r\in R^*(G)\},
\]
and let
\[
A:=\bar Q(1-M),
\qquad
f(0):=0,
\qquad
f(\pi):=\pi\ln\!\Bigl(\frac{A}{\pi}\Bigr)
\quad\text{for }\pi\in(0,A].
\]

\begin{proposition}[Geometry of the posted-price benchmark]
\label{prop:linear-cost-geometry}
The following are true:
\begin{enumerate}
    \item[(i)] The set of implementable seller-optimal posted-price pairs is
    \[
    \mathcal V^{pp}
    =
    \Bigl\{(c,\pi)\in\mathbb R_+^2:\ 0\le \pi\le A,\ 0\le c\le f(\pi)\Bigr\}.
    \]

    \item[(ii)] \(\mathcal V^{pp}\) is compact and convex.

    \item[(iii)] Its Pareto frontier coincides with its supported frontier and is given by
    \[
    \mathcal F^P(\mathcal V^{pp})
    =
    \mathcal F^{\mathrm{sup}}(\mathcal V^{pp})
    =
    \Bigl\{(f(\pi),\pi):\ \pi\in[A/e,\ A]\Bigr\}.
    \]
\end{enumerate}
\end{proposition}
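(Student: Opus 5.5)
Let me write a plan. I will first show that every implementable pair satisfies $0\le\pi\le A$ and $c\le f(\pi)$, which is the upper boundary. Fix $G$ and a seller-optimal cutoff $r\in R^*(G)$, and set $\pi=\bar Q(r-M)S_G(r)$. Optimality of $r$ gives $(v-M)S_G(v)\le \pi/\bar Q$ for every $v\in[r,1]$, so $S_G(v)\le \min\{S_G(r),\ \pi/(\bar Q(v-M))\}$. Then
\[
c=\bar Q\int_r^1 S_G(v)\,dv\le \int_r^1 \frac{\pi}{v-M}\,dv=\pi\ln\frac{1-M}{r-M}.
\]
Since $S_G(r)\le 1$, we have $\pi\le \bar Q(r-M)$, that is, $(r-M)\ge \pi/\bar Q$. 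Hence $c\le \pi\ln(A/\pi)=f(\pi)$, and trivially $\pi\le A$. If $\pi=0$, then $(v-M)S_G(v)=0$ on $[M,1]$. This forces $S_G(v)=0$ for $v>M$, so $c=0=f(0)$.

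Next I will show that every point of the claimed region is implementable. For the upper boundary, given $\pi\in(0,A]$, set $r=M+\pi/\bar Q$ and take the shifted equal-revenue distribution of Proposition~\ref{prop:linear-cost-posted}: $S(v)=(r-M)/(v-M)$ on $[r,1)$, with an atom at $1$. Every cutoff in $[r,1]$ is then seller-optimal. Consumer-favorable tie-breaking selects $r$, and this gives $(f(\pi),\pi)$. Lower points $(c,\pi)$ with $c<f(\pi)$ I will obtain by raising the seller's unique optimal price. Take a two-part distribution: mass $\theta\in(0,1]$ on $[r,1]$, distributed equal-revenue but truncated so that the seller's revenue is strictly maximized at $r$, and remaining mass at $0$ or below $M$. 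Alternatively, a convex mixture with $\delta_0$ scales both coordinates by $\theta$ without changing $R^*$. Combined with atoms at a single value $v_0$, which give $(0,\bar Q(v_0-M))$, this sweeps out the region. I expect the cleanest route is a one-parameter family: an atom of mass $\pi/(\bar Q(p-M))$ at a price $p\in[r,1]$, together with an equal-revenue segment on $[p',p)$ scaled so that $p$ remains optimal. By continuity of $c$ in the parameter, running from $c=0$ (pure atom at price $p$ with mass $\pi/(\bar Q(p-M))\le1$) to $c=f(\pi)$ (full equal-revenue), the intermediate value theorem fills the vertical segment. Checking that the seller's optimum and the tie-breaking stay at the intended price along this path is the main obstacle, and I will handle it by making revenue constant on the equal-revenue segment and ensuring the atom's price is the lowest maximizer.

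For (ii), compactness follows from $f$ being continuous on $[0,A]$. Convexity follows because $f$ is concave: $f''(\pi)=-1/\pi<0$, so the hypograph intersected with the strip is convex. For (iii), $f'(\pi)=\ln(A/\pi)-1$ is positive for $\pi<A/e$ and negative for $\pi>A/e$. Points with $\pi<A/e$ are therefore dominated by $(f(A/e),A/e)$, and points with $c<f(\pi)$ are dominated by moving to the boundary. Hence the Pareto frontier is $\{(f(\pi),\pi):\pi\in[A/e,A]\}$. Each such point is supported by concavity of $f$. The weight $k$ solving $f'(\pi)=-(1-k)/k$ gives $\pi=Ae^{-(2k-1)/k}$, which matches Proposition~\ref{prop:linear-cost-posted}. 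Conversely, supported points are Pareto for $k\in(0,1)$ and, on this set, also for $k=1$ (the unique $c$-maximizer).
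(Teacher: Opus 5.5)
Your upper bound in (i) and your arguments for (ii) and (iii) are the paper's, essentially line for line. Your extra check that supported points are Pareto makes explicit a direction the paper leaves implicit: strictly positive weights handle $k\in(0,1)$, and uniqueness of the consumer-surplus maximizer handles $k=1$. The case $k=0$, which you omit, is trivial since $(0,A)$ is the only profit maximizer.

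The real divergence, and the one place your write-up needs repair, is filling in the region below the graph of $f$. The paper does this with a single closed-form family. Set $r=M+\pi/\bar Q$ and $b:=M+(r-M)e^{c/\pi}$; then $b\le 1$ exactly because $c\le f(\pi)$. Take $S_G(v)=(r-M)/(v-M)$ on $[r,b]$ and $S_G=0$ above $b$, so there is an atom at $b$. Then $R^*(G)=[r,b]$, $\Pi_G(r)=\pi$ and $CS_G(r)=\pi\ln\frac{b-M}{r-M}=c$, with no intermediate-value argument needed.

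Your $\delta_0$-mixture is a genuinely different and equally short route, but you assert rather than check that it sweeps the region. Here is the check. Mixing the boundary distribution with profit $\pi'$ at weight $\theta$ yields $(\theta f(\pi'),\theta\pi')$. Solving $\theta\pi'=\pi$ and $\theta f(\pi')=c$ gives $\pi'=Ae^{-c/\pi}\in[\pi,A]$ and $\theta=(\pi/A)e^{c/\pi}\in(0,1]$, precisely when $0\le c\le f(\pi)$. So the region is the radial hull of the graph of $f$, the posted-price analogue of Corollary~\ref{cor:radial-inner-set}, and your extra atoms at $v_0$ are redundant.

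Your ``cleanest route'' paragraph does not work as literally written. Suppose the equal-revenue segment sits on $[p',p)$ below the atom at $p$, and you make the atom's price the one the seller charges. Then nobody has value above the price, and consumer surplus is $0$ all along the path. The seller must charge the bottom $p'$ of the segment. With $p'=r$ fixed and $p$ rising from $r$ to $1$, this is exactly the paper's family. With $p=1$ fixed and $p'$ falling from $1$ to $r$, it is exactly your $\delta_0$ mixture. In both cases $c=\pi\ln\frac{p-M}{p'-M}$ in closed form, so no continuity argument is needed.

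Finally, the tie-breaking ``obstacle'' is not one. $\mathcal V^{pp}$ ranges over every $r\in R^*(G)$. In any event, the intended price in these constructions is the smallest maximizer, which is also what the consumer-favorable tie-break selects.
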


Proposition~\ref{prop:linear-cost-geometry} gives the full geometry of the posted price benchmark. The implementable set is the
hypograph of
\[
c=f(\pi)=\pi\ln\!\left(\frac{A}{\pi}\right),
\]
so the upper boundary directly gives the maximal consumer surplus compatible with each profit level.
Its Pareto frontier is the decreasing branch of that boundary, running from the profit-maximizing point $B=(0,A)$ to the consumer-surplus-maximizing point
$C=\left(\frac{A}{e},\frac{A}{e}\right)$.
Hence, as in the convex cost model, the weighted solutions from Proposition \ref{prop:linear-cost-posted} trace the entire efficient frontier.

Thus, once the market composition is endogenous, the relevant object is not merely
one optimal prior, but the entire outer limit of the posted price surplus division. In the linear benchmark, this outer limit is explicit, and we can have a closed-form picture of how much consumer
surplus can be generated, and at what profit cost, by redesigning the demand environment faced by a
monopolist.

\begin{figure}[htbp]
    \centering

    \begin{subfigure}{0.48\textwidth}
        \centering
        \includegraphics[width=\linewidth]{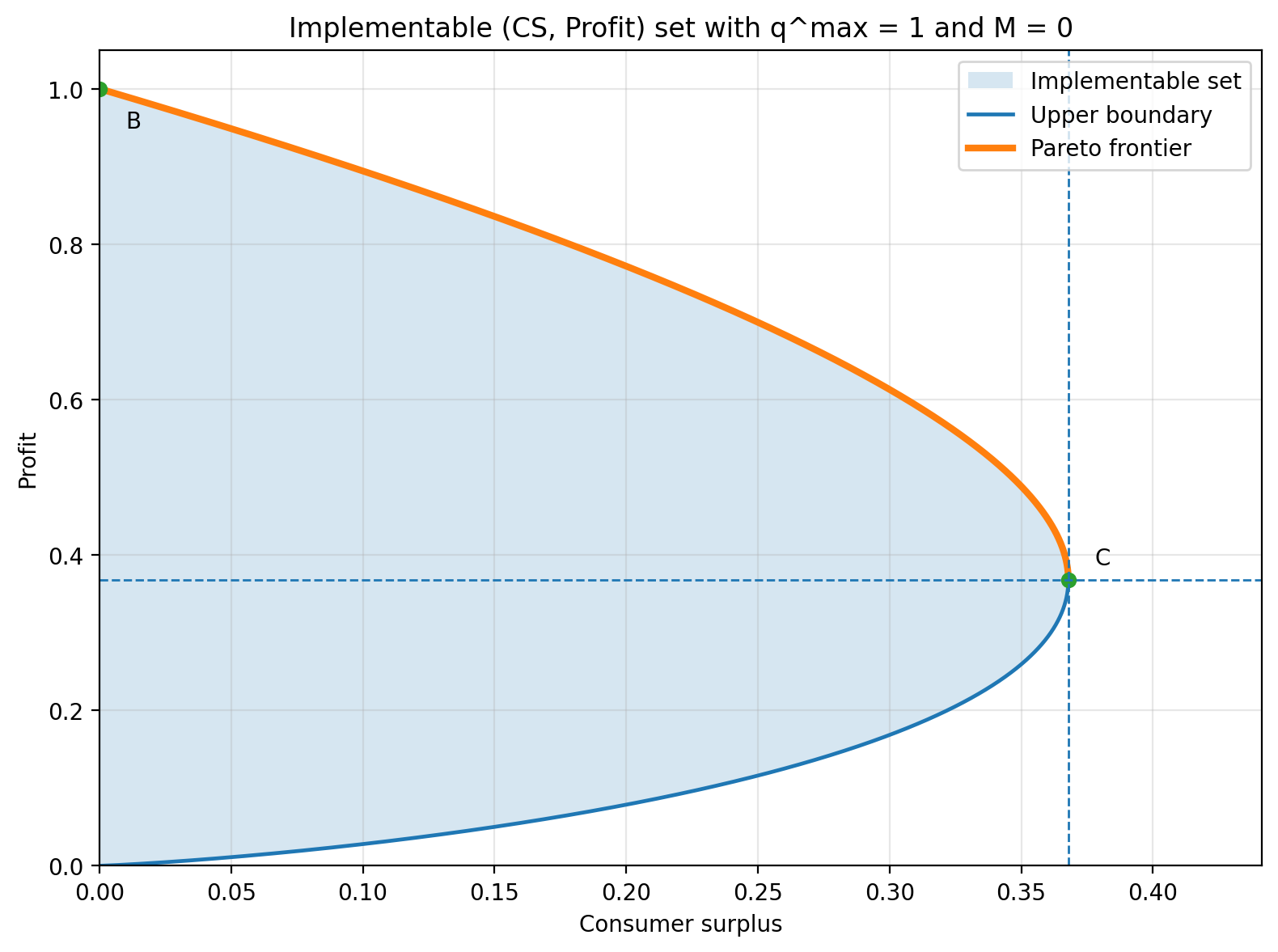}
        \label{fig:linear_M_zero}
    \end{subfigure}
    \hfill
    \begin{subfigure}{0.48\textwidth}
        \centering
        \includegraphics[width=\linewidth]{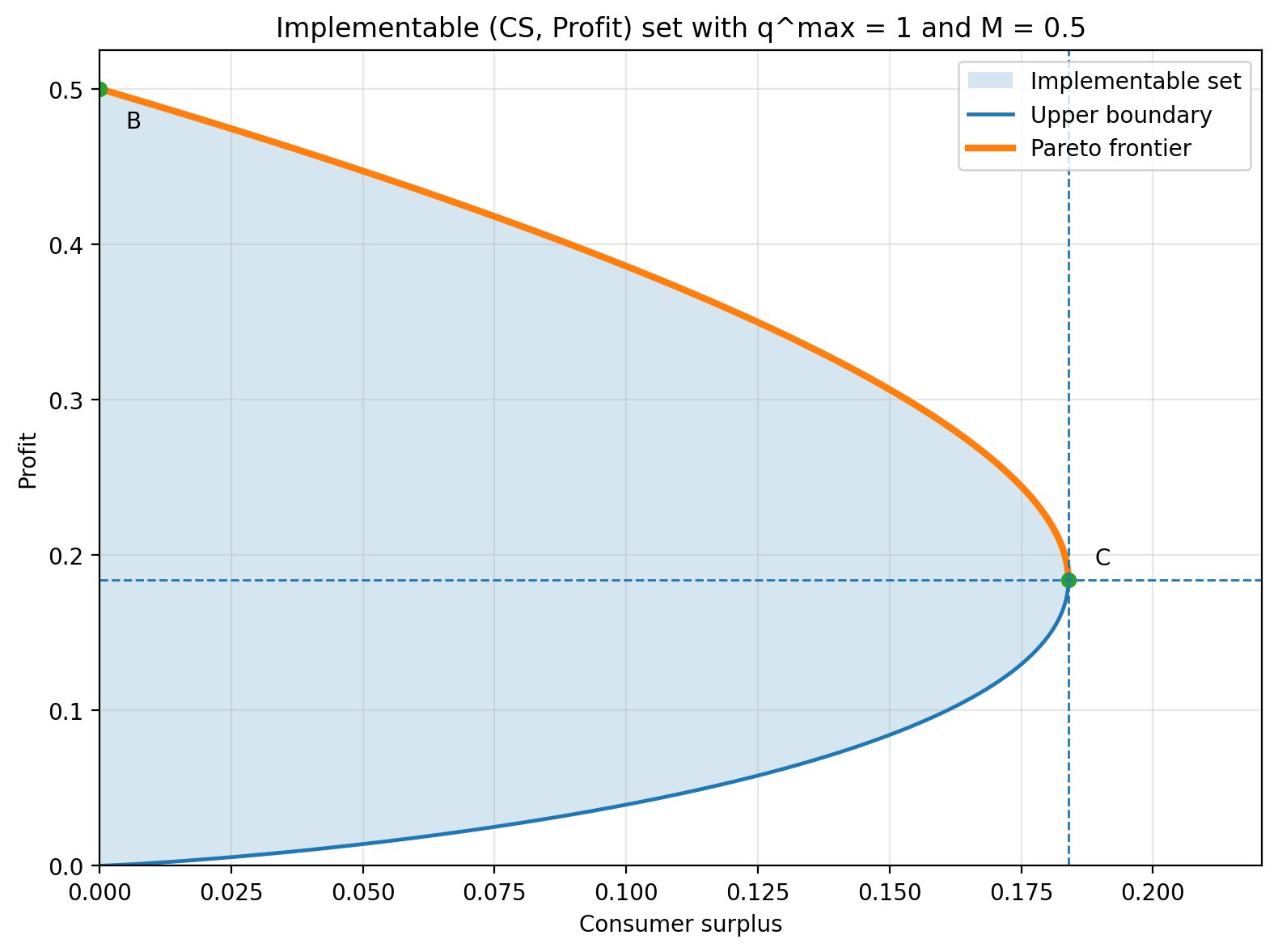}
        \label{fig:linear_M_half}
    \end{subfigure}

    \caption{Closed-form implementable set and Pareto frontier under constant marginal cost.}
    \label{fig:linear_cost_frontier}
\end{figure}

\paragraph*{Endogenous limits of price discrimination.}
Proposition~\ref{prop:linear-cost-geometry} admits an especially transparent interpretation
in the language of price discrimination. For a fixed prior distribution, both seller-side segmentation and buyer-side information generate fixed-market payoff triangles. Once the market itself becomes an object of design, those triangles move with the market.
The next corollary shows that the relevant object is the common envelope of these
fixed-market triangles as the prior varies. In this sense, the linear-cost benchmark
identifies the endogenous limits of price discrimination.

To this end, for each \(H\in\Delta([0,1])\), define efficient surplus by
\[
W(H):=\bar Q\int_M^1 S_H(v)\,dv.
\]
 Let $\Pi^m(H):=\bar Q\max_{r\in[M,1]} (r-M)S_H(r)$, and define the fixed-market \cite{BergemannBrooksMorris2015} triangle by
    \[
    T^{BBM}(H)
    :=
    \{(c,\pi)\in\mathbb R_+^2:\ c\ge 0,\ \pi\ge \Pi^m(H),\ c+\pi\le W(H)\}.
    \]
Also, let \(\underline{\Pi}(H)\) denote the seller-profit floor in the fixed-prior
    buyer-information problem of \cite{roessler_szentes_2017}, and define the corresponding
    fixed-market triangle by
    \[
    T^{RS}(H)
    :=
    \{(c,\pi)\in\mathbb R_+^2:\ c\ge 0,\ \pi\ge \underline{\Pi}(H),\ c+\pi\le W(H)\}.
    \]    

\begin{corollary}
\label{cor:common-envelope}

It holds that
\[
    \mathcal V^{pp}=\bigcup_{H\in\Delta([0,1])} T^{BBM}(H)=\bigcup_{H\in\Delta([0,1])} T^{RS}(H).
    \]

That is, once the prior distribution itself is allowed to vary, seller-side
segmentation and buyer-side information generate the same payoff region.
\end{corollary}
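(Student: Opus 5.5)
The plan is to prove the two set inclusions $\mathcal V^{pp}\supseteq\bigcup_H T^{BBM}(H)\supseteq\bigcup_H T^{RS}(H)$, which are trivial directions, and then the reverse inclusions. The reverse inclusions use the closed form $\mathcal V^{pp}=\{0\le\pi\le A,\ 0\le c\le f(\pi)\}$ from Proposition~\ref{prop:linear-cost-geometry}.

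\textbf{Step 1: each fixed-market triangle sits inside $\mathcal V^{pp}$.} Fix $H$ and set $\pi_0:=\Pi^m(H)$. Every point of $T^{BBM}(H)$ has $\pi\in[\pi_0,W(H)]$ and $c\le W(H)-\pi$. Since $\mathcal V^{pp}$ is the hypograph of the concave function $f$ over $[0,A]$, it suffices to show two things:
\begin{itemize}
\item $W(H)\le A$;
\item $W(H)-\pi\le f(\pi)$ for all $\pi\in[\pi_0,W(H)]$.
\end{itemize}
Because the left side is linear and $f$ is concave, the second claim reduces to checking the endpoints. At $\pi=W(H)$ it is trivial. At $\pi=\pi_0$ it is the classical bound: if $(r-M)S_H(r)\le \pi_0/\bar Q$ for all $r$, then
\[
W(H)\le \bar Q\int_M^1\min\{1,\pi_0/(\bar Q(v-M))\}\,dv=\pi_0+\pi_0\ln(A/\pi_0)=\pi_0+f(\pi_0).
\]
This is exactly the equal-revenue computation already behind Proposition~\ref{prop:linear-cost-posted}, and it is where the logarithm comes from. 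The same bound also gives $W(H)\le A$ when $\pi_0\le A$. For $T^{RS}(H)$, I would note that $\underline\Pi(H)\le\Pi^m(H)$, since full revelation is a buyer signal. Hence $T^{RS}(H)\supseteq T^{BBM}(H)$, and I need the same bound with $\underline\Pi(H)$ in place of $\pi_0$. Roesler--Szentes show that $\underline\Pi(H)$ is attained by some mean-preserving contraction $G$ of $H$ with monopoly profit $\underline\Pi(H)$ and full trade. The total surplus $W(G)=W(H)$ is preserved on $[M,1]$ when the supports lie above $M$; I need to handle mass below $M$, where $W$ only counts $v\ge M$. Applying the bound to $G$ then gives $W(H)\le\underline\Pi+f(\underline\Pi)$. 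I expect this to be the main obstacle, and I would fall back on the direct argument: the Roesler--Szentes triangle is contained in the convex hull of the BBM triangles of contractions, or else I would prove the bound directly for each signal's induced posterior distribution.

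\textbf{Step 2: every point of $\mathcal V^{pp}$ lies in some triangle.} For $(c,\pi)$ with $\pi\in(0,A]$ and $c\le f(\pi)$, take $H$ to be the truncated equal-revenue distribution with revenue level $\pi$: support $[M+\pi/\bar Q,1]$, $S_H(v)=\pi/(\bar Q(v-M))$, and an atom at $1$. For this $H$ every price in the support is optimal, so $\Pi^m(H)=\pi$. Its efficient surplus is $W(H)=\pi+f(\pi)$. Since $\underline\Pi(H)\le\Pi^m(H)=\pi$, the point $(c,\pi)$ satisfies $\pi\ge\Pi^m(H)\ge\underline\Pi(H)$ and $c+\pi\le W(H)$, so it lies in both $T^{BBM}(H)$ and $T^{RS}(H)$. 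The boundary case $\pi=0$ is handled by $H=\delta_M$, which gives $W=0$.

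Combining the two steps, $\mathcal V^{pp}\subseteq\bigcup T^{BBM}\subseteq\bigcup T^{RS}\subseteq\mathcal V^{pp}$, and all three sets coincide.
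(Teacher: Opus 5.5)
Your Step~2 and the $T^{BBM}$ half of Step~1 follow the paper's argument. The paper writes the bound as $c+\pi\le W(H)\le g(\Pi^m(H))\le g(\pi)$ with $g(\pi):=\pi+f(\pi)$ increasing, which is equivalent to your concavity/endpoint check. Your opening sentence misstates the inclusions, though. $\mathcal V^{pp}\supseteq\bigcup_H T^{BBM}(H)$ is not trivial: it is exactly the equal-revenue bound. And $\bigcup_H T^{BBM}(H)\supseteq\bigcup_H T^{RS}(H)$ is backwards. Since $\underline\Pi(H)\le\Pi^m(H)$, one has $T^{BBM}(H)\subseteq T^{RS}(H)$, which is what your closing chain correctly uses.

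The genuine gap is the inclusion $\bigcup_H T^{RS}(H)\subseteq\mathcal V^{pp}$, i.e.\ $W(H)\le\underline\Pi(H)+f(\underline\Pi(H))$. You located it correctly, but it cannot be closed in general.

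\emph{When your argument works.} Suppose $W$ is invariant under mean-preserving contractions. Then $W(H)=W(G)\le\Pi^m(G)+f(\Pi^m(G))$ for every contraction $G$ of $H$. Taking the infimum over $G$, and using that $\pi\mapsto\pi+f(\pi)$ is continuous and increasing, gives the bound. This does not even need the existence or full-trade part of Roesler--Szentes. It covers $M=0$, and priors with $\operatorname{supp}H\subseteq[M,1]$, since their contractions stay in $[M,1]$.

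\emph{Why it fails otherwise.} In general $W(G)=\bar Q\,\mathbb E_G[(v-M)_+]\le W(H)$ by convexity, which is the wrong direction. Neither fallback can repair this, because the inequality itself is false. Take $M=1/2$, $\bar Q=1$, $H=\tfrac12\delta_0+\tfrac12\delta_1$, so $W(H)=1/4$. The point mass $\delta_{1/2}$ is a contraction of $H$ with zero monopoly profit, so $\underline\Pi(H)=0$ and $(1/4,0)\in T^{RS}(H)$. But $f(0)=0$, so $(1/4,0)\notin\mathcal V^{pp}$. Pooling values on both sides of $M$ destroys efficient surplus, so a triangle built on $W(H)$ overstates what buyer-side signals can achieve.

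\emph{Relation to the paper.} The paper's proof never confronts this, because it imports $W(H)\le g(\underline\Pi(H))$ from Roesler--Szentes, whose zero-cost setting has all values weakly above cost. The second equality in the corollary therefore holds only under that restriction, or with $T^{RS}(H)$ redefined using attainable rather than efficient surplus. Within that restriction, your derivation is a self-contained replacement for the citation.
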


Taken together, Proposition \ref{prop:linear-cost-geometry} and Corollary \ref{cor:common-envelope} show that the posted price benchmark identifies the
endogenous limits of price discrimination in closed form. For a fixed prior, seller-side segmentation and
buyer-side information generate prior-specific payoff triangles. Once the prior itself is also chosen
upstream, those fixed-market distinctions no longer matter for the outer boundary: both families of
triangles generate the same envelope, namely \(V^{pp}\).

The implication is the same as in the convex-cost model, but here it is completely transparent. Information and segmentation move outcomes within prior-specific regions; market
composition determines the envelope of those regions. In this sense, the linear-cost benchmark provides
the sharpest possible closed-form statement of the paper's main message.
\medskip

\section{Richer Primitive Environments}\label{sec:interpretations}

Our main results assume that the upstream actor can choose any prior probability measure in the set of Borel probability measures on $[0,1]$. Thus, the upstream actor is essentially unrestricted in his choice of a market composition. In this section, we discuss and make precise the sense in which our main results hold for richer primitive environments with more restrictions on the set of feasible market compositions.

We first need to introduce some notation.
For two Borel probability measures $G,H$ on $[0,1]$, we write $H\succ_{cx} G$ if $H$ dominates $G$ in the convex-order, which is equivalent to $G$ being a mean-preserving contraction of $H$, or, equivalently, to $H$ being a mean-preserving spread of $G$. We let $MPS(G)$ denote the set of mean-preserving spreads of $G$. 

Moreover,given two  two Borel probability measures $F,G$ on $[0,1]$, we write
$F\succeq_{icx} G$ if
$\int \psi(v)\,dF(v)\ge \int \psi(v)\,dG(v)$
for every increasing convex function \(\psi:[0,1]\to\mathbb R\). First-order
stochastic dominance implies increasing-convex order, but the converse need not hold.\footnote{We can think of the relation $F\succeq_{icx} G$ as $F$ being a mean-preserving spread of some $H$ which in turn first-order stochastically dominates $G$, that is, $F\succeq_{icx} G$ if and only if there exists $H$ such that $F\succ H\succeq_{fosd} G$.} 

\subsection{Fixed Average Valuation and Information Design}
Maintain Assumptions~\ref{ass:cost} and \ref{ass:cost-c3-unique}, and fix $k\in(1/2,1]$. Let
$G_k$ be the unique optimizer of the upstream actor's problem. Recall that by our main structural results,
$G_k$ has support $\operatorname{supp}(G_k)=[\underline{v}_{k},1]$, its lower quantile $Q_k$ is strictly increasing on $(0,b_k)$ and satisfies $Q_k(u)=1$ for all
$u\in[b_k,1]$, and $G_k$ has a unique atom at $1$ of size $1-b_k$.

We now discuss how the same object admits a hierarchy of interpretations. The least
constrained interpretation is the one built into the model: the upstream actor directly chooses the
effective market composition faced by the seller. The second, more restrictive, interpretation fixes only
the mean of the chosen distribution. The third, most restrictive, interpretation fixes a primitive prior and
allows the upstream actor to choose only among distributions that can be obtained from that prior as its
mean-preserving contractions. The results below show that the same law $G_k$ solves all three problems.

\paragraph*{Fixed Average Buyer Valuation.}
Let $\mu_k:=\int_{[0,1]} x\,dG_k(x)$ be the mean of $G_k$ for a given $k$. Suppose the upstream actor is restricted to keeping the average consumer valuation fixed. Then $G_k$ is the unique solution to the upstream actor's problem under the fixed average valuation restriction. Therefore, the optimal market composition is not selected merely because it changes average demand. Even conditional on average willingness to
pay for quality, the upstream actor has a unique preferred way to shape the effective market composition faced by
the seller.

\paragraph*{Information Design.} Suppose now consumers have a primitive prior $H$. The upstream actor is an information intermediary who provides information to consumers prior to trading. This is done by choosing an information structure whose realization is privately observed by the consumer. The choice of an information structure induces a distribution over posterior values, and we can think of the upstream actor's problem as choosing directly this distribution. 
For any prior $H\in\Delta([0,1])$, the feasible distributions of posterior means induced by an information structure are exactly
the mean-preserving contractions of $H$. Therefore, if $H$ is a mean-preserving spread of $G_k$, then $G_k$ is the unique maximizer of $\max\{W_k(G): H \succ_{cx} G\}$.

Note that the set of mean-preserving spreads of $G_k$ given $k\in(1/2,1]$ is a ``large" set. Necessary conditions for a distribution $H$ to be a mean-preserving spread of $G_k$ include $\inf \supp H\leq \underline{v_k}$ and having an atom at the highest valuation of size at least $1-b_k$.\footnote{Recall that $\underline{v_k}=\inf\supp (G_k)$.} In the Online Appendix, we illustrate that $MPS(G_k)$ includes discrete distributions supported on $N$ points for all integer $N\geq 2$, that is, the set of mean-preserving spreads of $G_k$ given $k\in (1/2,1]$ contains arbitrarily coarse finitely supported priors.  Moreover, we show that it also contains smooth priors with an absolutely
continuous body and the same top atom. Figure \ref{fig:mpc_discrete_cost_opt} illustrates two such examples in the quadratic cost case ($k=1$), where the consumer-optimal market composition we identified remains optimal in the information design environment.

\begin{figure}[htbp]
    \centering

    \begin{subfigure}{0.49\textwidth}
        \centering
        \includegraphics[width=\linewidth]{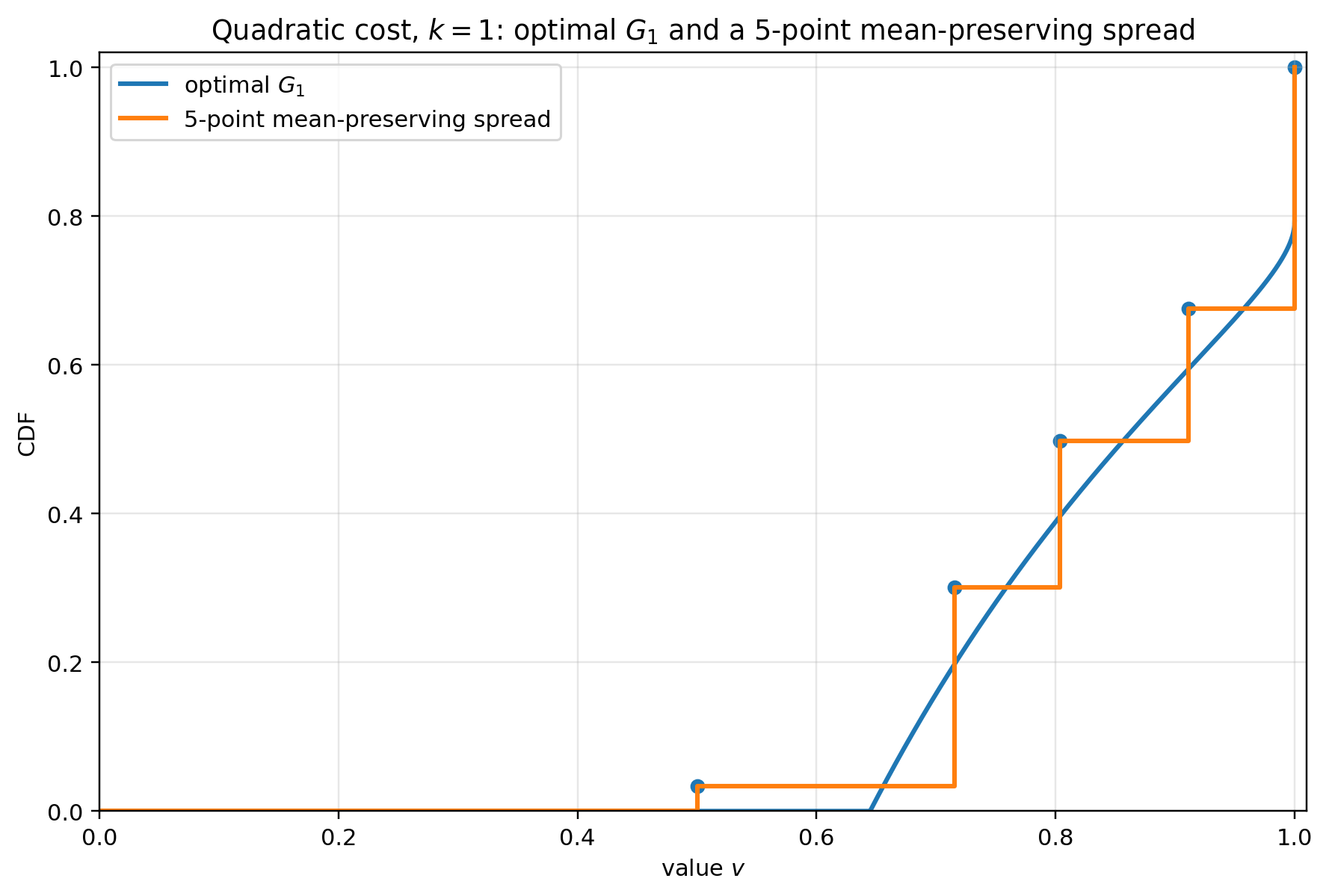}
       \label{fig:discrete_G_M=0}
    \end{subfigure}
    \hfill
    \begin{subfigure}{0.49\textwidth}
        \centering
        \includegraphics[width=\linewidth]{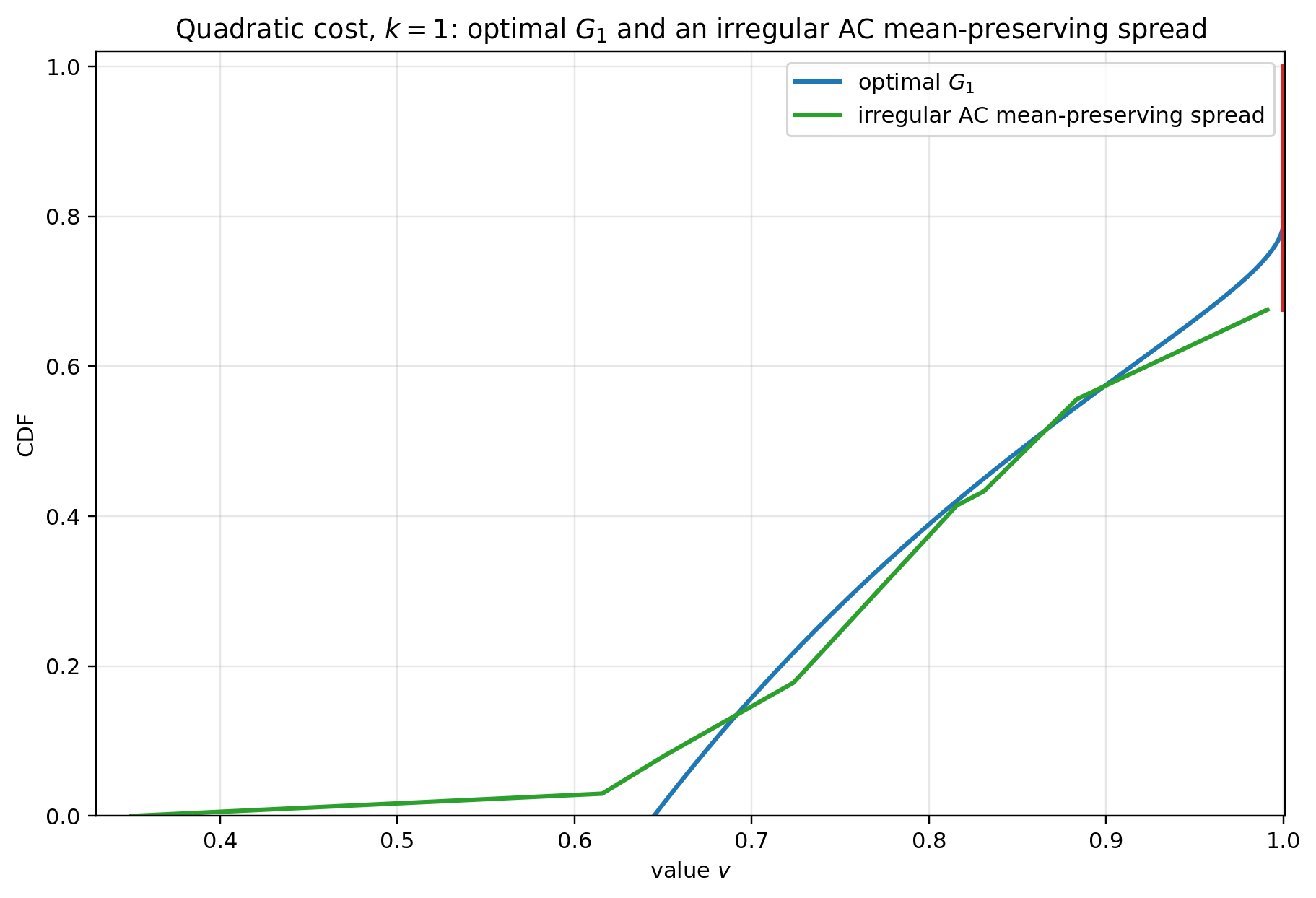}  
        \label{fig:irregular_G_M=0}
    \end{subfigure}
    \caption{Consumer-Optimal Market Composition in the Information Design Environment}
    \label{fig:mpc_discrete_cost_opt}
\end{figure}

Therefore, the optimal market composition we identify is the market environment relevant for screening, even if we allow for certain richer primitive environments.

\subsection{The Hold-Up Problem}
\label{sec:hold-up}

We now investigate the robustness of the structural properties of the optimal market composition derived in Theorem \ref{thm:main-structure} in the informational hold-up problem  of \cite{CondorelliSzentes2020}. A consumer who wishes to soften the seller's screening problem may have to suppress or distort  useful information about her willingness to pay. A regulator who wishes to broaden access may have to subsidize or compel the participation of weaker buyers. A platform or broker who wants to make the market less top-heavy may have to forego the easy targeting of premium demand. We model these frictions by allowing the upstream actor to incur a cost \(C(G)\) from choosing market composition \(G\).

In our environment, the upstream actor manipulates not only dispersion but also the average valuation of the market composition. One market composition can differ from another one by having a higher mean, a thicker upper tail, or both. For this reason, we consider cost functions that are monotone in the increasing-convex order.

We consider the problem
\[
\max_{\phi\in\Phi}\ \widehat J_k(\phi):=J_k(\phi)-C(G_\phi),
\qquad k\in[0,1],
\]
where \(J_k\) is the baseline upstream actor's objective from the main model and establish that if costs are decreasing in increasing convex order, the structure of the optimal market composition is the one identified in Theorem \ref{thm:main-structure}.


\begin{assumption}
\label{ass:decreasing-icx}
The cost functional \(C:\Delta([0,1])\to\mathbb R\) is decreasing in increasing-convex
order: for any distributions \(F\) and \(G\) on \([0,1]\),
$F\succeq_{icx} G$ implies $C(F)\le C(G)$.
\end{assumption}

In words, Assumption~\ref{ass:decreasing-icx} says that making the market composition less top-heavy is costly. In particular, in the context of Proposition \ref{prop:comp_statics},  seller-friendly markets are cheaper to generate, while
consumer-friendly market compositions require costly intervention. This is the natural
``hold-up'' analog of the costless model: the upstream actor still wants to
soften the market, but must now pay to do so.

The next proposition shows that the structural properties of the optimal market composition are fully preserved:
for low welfare weights, the degenerate top-type market remains optimal, and for high
welfare weights every maximizer remains fully active, exhibits a binding upper tail and features no bunching on the interior.

\begin{proposition}
\label{prop:decreasing-icx-structure}
Maintain Assumptions~\ref{ass:cost} and
\ref{ass:decreasing-icx}. Suppose that \(\phi^*\in\Phi\) maximizes $\widehat J_k(\phi)$. Then,

\begin{enumerate}
\item If \(k\in[0,1/2]\), then $\phi^*(u)=1$ for all $u\in[0,1]$.

\item If $k\in(1/2,1]$, every solution $\phi^*$ to the upstream actor's problem has the following structure: there exists a (unique) cutoff $b\in[0,1)$ such that $\phi^*$ is strictly increasing on $(0,b)$ with $0<\phi^*(u)<1$ for all $u\in(0,b)$, and $\phi^*\equiv 1$ on $[b,1]$.

\end{enumerate}
\end{proposition}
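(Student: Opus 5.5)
The key observation is that every perturbation used in the proof of Theorem~\ref{thm:main-structure} raises $\phi$ toward more seller-favorable profiles. Such a perturbation moves the induced distribution up in the increasing-convex order. Under Assumption~\ref{ass:decreasing-icx}, the design cost therefore weakly falls along these directions, so those arguments carry over to $\widehat J_k$.

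The basic monotonicity fact to establish first is this. If $\phi\le\psi$ pointwise, then $R_\phi(u)=\int_u^1\phi\le R_\psi(u)$, and so $Q_\phi\le Q_\psi$ pointwise. This gives $G_\psi\succeq_{fosd}G_\phi$, hence $G_\psi\succeq_{icx}G_\phi$, and therefore $C(G_\psi)\le C(G_\phi)$.

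The no-bunching perturbation does not raise $\phi$ pointwise, so it needs a separate check. It lowers $\phi$ on $(\ell,m)$ and raises it on $[m,r)$, keeping $\int_\ell^r\phi$ fixed up to $o(\varepsilon)$ boundary corrections. For this tilt I would compare $R$ directly. Write $\widetilde\phi$ for the perturbed profile, with the corrections arranged so that $\int_\ell^r\widetilde\phi\ge\int_\ell^r\phi^*$ (I expect adjusting the right-boundary correction upward suffices). Then $\widetilde R(u)\ge R(u)$ for every $u$: outside $(\ell,r)$ the tail integrals are unchanged or larger, and inside the block the tilt shifts mass to the right.

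For a quantile comparison, $\int_u^1 Q$ is a normalized integrated tail, so $\widetilde R\ge R$ should be converted into the statement that $\int_u^1 \widetilde Q\ge\int_u^1 Q$ for all $u$. That statement is equivalent to $\widetilde G\succeq_{icx}G$. This conversion is the step I expect to be the main obstacle. The route I would try first uses $Q=R/(1-u)$ together with the integration-by-parts identity $\int_u^1 R(s)/(1-s)^2\,ds$. Here $\int_u^1 Q=\int_u^1 R/(1-s)$, which is monotone in $R$ pointwise, so it should go through.

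With cost monotonicity along the relevant directions in hand, the proof follows the structure of Theorem~\ref{thm:main-structure}.

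\emph{Part 1.} For $k\le1/2$, Theorem~\ref{thm:main-structure} gives that $\phi\equiv1$ uniquely maximizes $J_k$. Moreover $\phi\equiv1$ is the pointwise largest element of $\Phi$, so $G_{\phi\equiv1}=\delta_1$ icx-dominates every $G_\phi$ and minimizes $C$. Hence $\widehat J_k(1)\ge\widehat J_k(\phi)$. Equality would force $J_k(\phi)=J_k(1)$, which by the uniqueness in Theorem~\ref{thm:main-structure} forces $\phi\equiv1$.

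\emph{Part 2.} For $k>1/2$, take a maximizer $\phi^*$ (its existence is assumed by the statement). For each admissible perturbation $\phi_\varepsilon$ from the baseline proof, namely the upper-tail push $\eta=(1-\phi^*)\mathbf 1_{[t,1)}$, the positivity perturbation, and the no-bunching tilt, we have $C(G_{\phi_\varepsilon})\le C(G_{\phi^*})$. Therefore
\[
\widehat J_k(\phi_\varepsilon)-\widehat J_k(\phi^*)\ \ge\ J_k(\phi_\varepsilon)-J_k(\phi^*)=\varepsilon\int H_k[\phi^*]\,\eta+o(\varepsilon).
\]
The baseline arguments show that this first-order term is strictly positive whenever the structure fails, which contradicts optimality. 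Specifically:
\begin{itemize}
\item $A_{\phi^*}(s)\to\infty$ gives $H_k>0$ near $1$, which forces the tail to bind.
\item A flat interior block gives strictly increasing $h$, which rules out bunching.
\end{itemize}
The only change from Theorem~\ref{thm:main-structure} is that now $b=0$ (i.e.\ $\phi^*\equiv1$) cannot be excluded. With costs, the baseline argument ruling out $b=0$ used a downward perturbation of $\phi$, which can raise $C$. This is why the statement allows $b\in[0,1)$. Finally, I would verify that no Euler–Lagrange equality is claimed, since downward perturbations are not available, and so only the one-sided arguments above are needed.
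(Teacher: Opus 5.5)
Your plan follows the paper's proof. Any perturbation whose tail integrals $\int_u^1(\widetilde\phi-\phi^*)$ are all nonnegative raises $Q$ pointwise, hence first-order (and so icx) dominates and weakly lowers $C$. The one-sided arguments of Theorem~\ref{thm:main-structure} (upper-tail push, $\max\{\phi^*,\varepsilon\}$, block tilt) then transfer, and $b=0$ can no longer be excluded. Part 1 is exactly the paper's argument.

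The gap is the cost comparison for the tilt when $\ell>0$. With the split at the midpoint, the baseline tilt already has $\int_\ell^r(\widetilde\phi-\phi^*)=0$, so the condition you propose to arrange is not the relevant one. The problem is the left-boundary correction: on $(a_\varepsilon,\ell)$, where $\phi^*\in(\gamma-\varepsilon,\gamma)$, the tilt lowers $\phi^*$ to $\gamma-\varepsilon$. For $u<\ell$ the tail integral picks up this negative mass, so your claim that outside $(\ell,r)$ ``the tail integrals are unchanged or larger'' is false. Suppose $\phi^*$ rises continuously to $\gamma$ at $\ell$ and jumps by more than $\varepsilon$ at $r$. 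Then the right correction $\max\{\phi^*,\gamma+\varepsilon\}-\phi^*$ on $[r,1)$ is zero, and $R_{\widetilde\phi}(0)=Q_{\widetilde\phi}(0)<Q_{\phi^*}(0)$. So you get neither pointwise $R$-dominance nor first-order dominance, and there is no right-boundary correction left to ``adjust upward.''

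What you need is $\int_u^1(\widetilde\phi-\phi^*)\ge0$ for all $u$. Since the increment is $\le0$ left of the split and $\ge0$ right of it, this is equivalent to nonnegative total mass. The paper gets this by moving the split point. The map $F_\varepsilon(m):=\int_0^1(\phi_{\varepsilon,m}-\phi^*)$ is continuous with $F_\varepsilon(\ell)=\varepsilon(r-\ell)+o(\varepsilon)>0>F_\varepsilon(r)$, so some $m_\varepsilon$ gives zero total mass. Because the boundary corrections are $o(\varepsilon)$, $m_\varepsilon\to(\ell+r)/2$ and the gain $\varepsilon D+o(\varepsilon)$ survives. Then $Q_{\widetilde\phi}=R_{\widetilde\phi}/(1-u)\ge R_{\phi^*}/(1-u)=Q_{\phi^*}$ pointwise, so the step you flagged as the main obstacle is immediate.

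Your integrated-tail idea could also rescue the unbalanced tilt, giving icx though not first-order dominance. Write $T(s):=\int_s^1(\widetilde\phi-\phi^*)$, so that $\int_u^1(Q_{\widetilde\phi}-Q_{\phi^*})=\int_u^1T(s)/(1-s)\,ds$. For $u\le\ell$ the block contributes at least $\varepsilon(r-\ell)^2/4$. The negative part of $T$ is supported in $[0,\ell]$ and bounded by $\varepsilon(\ell-a_\varepsilon)=o(\varepsilon)$, so the integral is positive for small $\varepsilon$.

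There are two smaller omissions. First, the no-exclusion step needs $\phi^*\not\equiv0$ (so that $a<1$). With costs this no longer follows from $J_k(\phi^*)>0$, and at $k=1$ the perturbation $\max\{0,\varepsilon\}\equiv\varepsilon$ gains nothing. Rule it out by comparing with the costless optimizer $\bar\phi_k\ge0$, which has $J_k(\bar\phi_k)>0=J_k(0)$ and $C(G_{\bar\phi_k})\le C(G_0)$.

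Second, your display $J_k(\phi_\varepsilon)-J_k(\phi^*)=\varepsilon\int H_k\eta+o(\varepsilon)$ does not literally apply to two of your perturbations. It fails for the tilt, whose increment is not $\varepsilon$ times a fixed direction. There you can use a uniform first-order expansion, as the paper does. Alternatively, fix small $\varepsilon$ and move along $\phi^*+t\Delta_\varepsilon$; this keeps the tails nonnegative, so Lemma~\ref{lem:gateaux}(i) plus the cost comparison applies. It also fails for the positivity perturbation, which needs the finite-difference bound of Lemma~\ref{lem:no-prefix}.
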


Therefore, we can recover all three qualitative ingredients of the baseline
structure theorem: no exclusion, an efficient top tail, and full separation of the active
interior. We note that now we cannot in principle rule out the case that the degenerate $\delta_1$ is optimal for any $k\in(1/2,1]$. However, this will be the case if for instance the optimizer of the costless baseline satisfies $\widehat J_k(\bar\phi_k)>\widehat J_k(\mathbf 1)$.\footnote{To see this, note that by Proposition~\ref{prop:decreasing-icx-structure}, every maximizer \(\phi^*\) has an upper-tail cutoff \(b\in[0,1)\) such that$\phi^*(u)=1$ for all $u\in[b,1]$. If \(b=0\), then \(\phi^*=\mathbf 1\). But the hypothesis says that \(\mathbf 1\) is not optimal, since \(\widehat J_k(\bar\phi_k)>\widehat J_k(\mathbf 1)\). Hence \(b>0\).}



The decreasing-in-icx case is the one most closely aligned with the baseline costless benchmark: although seller-friendlier market compositions are now costlier, the main qualitative structure survives. The opposite monotonicity case is different. If \(C\) is increasing in increasing-convex order, the regularization and truncation reductions need not be without loss, because they can raise the induced market in the \(\succeq_{icx}\) order and therefore raise the design cost. In the Online Appendix, we restrict attention to the reduced-form problem over \(\Phi\) and show that, for \(k>1/2\), the top-type market is not optimal. Then we specialize the cost functional to the differentiable mean-based form \(C(G)=\Gamma(\int v\,dG(v))\), and derive threshold restrictions on exclusion, bunching, and the emergence of a premium top segment.
\medskip

\section{Conclusion}

In this paper, we show that the welfare effects of monopoly screening depend not only on how a seller screens and prices a given market but also on how upstream institutions shape the market composition the seller faces in the first place. By characterizing the optimal market composition and the efficient frontier of consumer surplus and profit, we provide a benchmark for how exposure, targeting, ranking, matching, disclosure, certification, and eligibility rules shape the division of surplus. The central implication is that improving consumer outcomes need not require suppressing differentiation. Instead, it requires reshaping the market so that screening becomes less favorable to the seller, which can be achieved by expanding the interior region in which information rents are generated and optimally shaping its composition, while preserving an efficient premium segment at the top. More broadly, the paper identifies market composition as a central determinant of the welfare consequences of monopoly screening.
\medskip

\clearpage

\bibliographystyle{te}
\bibliography{references}
\newpage
\medskip

\medskip

\section*{Appendix: Proofs of the Main Results}

\subsection*{Proof of Lemma \ref{lem:concavify}}

\begin{proof}
We first verify that \(\widetilde Q\) is a valid quantile and that \(\widetilde G\in\Omega\). Since \(R\) is a concave majorant of \(\widehat R\), and \(\widehat R\ge0\), we have \(R\ge0\). Also, because \(Q(u)\le1\), the affine function \(u\mapsto1-u\) is a concave majorant of \(\widehat R\). By minimality of \(R=\operatorname{cav}(\widehat R)\),
$0\le R(u)\le 1-u$ for all $u\in[0,1]$.
Hence
\[
0\le \widetilde Q(u)=\frac{R(u)}{1-u}\le1
\qquad\forall u\in[0,1).
\]
For \(0\le u<v<1\), concavity of \(R\) and \(R(1)=0\) imply
\[
R(v)\ge \frac{1-v}{1-u}R(u)+\frac{v-u}{1-u}R(1)
=\frac{1-v}{1-u}R(u)\Longleftrightarrow \frac{R(v)}{1-v}\ge \frac{R(u)}{1-u},\]
so \(\widetilde Q\) is nondecreasing on \([0,1)\). Since \(R\) is continuous on \([0,1]\), \(\widetilde Q\) is continuous on \([0,1)\). The limit
$\widetilde Q(1):=\lim_{u\uparrow1}\widetilde Q(u)$
therefore exists in \([0,1]\). Thus \(\widetilde Q\) is a nondecreasing left-continuous map from \([0,1]\) into \([0,1]\), and hence is the lower quantile of the distribution \(\widetilde G\) induced by \(\widetilde Q(U)\), where \(U\sim \operatorname{Unif}[0,1]\). Moreover,
$(1-u)\widetilde Q(u)=R(u)$
is concave, so \(\widetilde G\in\Omega\).

The concavified revenue curve of \(\widetilde G\) is \(R\), so \(\widetilde G\) has the same ironed virtual value \(\phi\) as \(G\). Hence, by the seller-side reduction, $\Pi(\widetilde G)=\Pi(G)$,
and $CS(\widetilde G)-CS(G)
=
\int_0^1\bigl(\widetilde Q(u)-Q(u)\bigr)q(\phi(u))\,du$.
Since \(R\ge \widehat R\),
\[
\widetilde Q(u)-Q(u)
=
\frac{R(u)-\widehat R(u)}{1-u}
\ge0
\qquad\forall u\in[0,1).
\]
Therefore \(CS(\widetilde G)\ge CS(G)\), and so \(W_k(\widetilde G)\ge W_k(G)\). If \(k>0\), then
\[
W_k(\widetilde G)-W_k(G)
=
k\int_0^1\bigl(\widetilde Q(u)-Q(u)\bigr)q(\phi(u))\,du,
\]
so equality holds if and only if
$\bigl(\widetilde Q(u)-Q(u)\bigr)q(\phi(u))=0$
for Lebesgue-a.e. \(u\in(0,1)\). The displayed supremum equality follows immediately.
\end{proof}

\subsection*{Proof of Lemma \ref{lem:truncate}}

\begin{proof}
Since \(\phi\) is nondecreasing and right-continuous, so is
$\phi^+(u):=\max\{\phi(u),0\}$.
Moreover, \(0\le \phi^+\le1\). Hence, by the preceding construction, \(R^+\) is concave, \(Q^+\) is the lower quantile of a distribution \(G^+\in\Omega\), and the ironed virtual value of \(G^+\) is \(\phi^+\).

We next show that replacing \(\phi\) by \(\phi^+\) leaves the upstream objective unchanged. By the pointwise seller problem,
$q(z)=0$ and $\pi(z)=0$ for every $z\le0$.
Therefore, for every \(u\),
$q(\phi^+(u))=q(\phi(u))$ and $\pi(\phi^+(u))=\pi(\phi(u))$.
Indeed, if \(\phi(u)>0\), then \(\phi^+(u)=\phi(u)\); if \(\phi(u)\le0\), then both \(\phi^+(u)\) and \(\phi(u)\) induce zero quality and zero pointwise profit.

It follows immediately that
$\Pi(G^+)=\int_0^1\pi(\phi^+(u))\,du
=
\int_0^1\pi(\phi(u))\,du
=\Pi(G)$.

For consumer surplus, if \(\phi(u)>0\), then monotonicity of \(\phi\) implies \(\phi(t)>0\) for every \(t\ge u\). Hence \(\phi^+(t)=\phi(t)\) for every \(t\ge u\), and therefore
$R^+(u)=R(u)$ and $Q^+(u)=Q(u)$.
If \(\phi(u)\le0\), then
$q(\phi^+(u))=q(\phi(u))=0$.
Consequently, for every \(u\in[0,1]\),
$\bigl(Q^+(u)-\phi^+(u)\bigr)q(\phi^+(u))
=
\bigl(Q(u)-\phi(u)\bigr)q(\phi(u))$.
Thus, $CS(G^+)=CS(G)$.
Combining the profit and consumer-surplus equalities gives $W_k(G^+)=W_k(G)$.
\end{proof}

\subsection*{Proof of Theorem \ref{thm:main-structure}}

\subsubsection*{Profit Biased Upstream Actor}

We start by proving the first (straightforward) part of Theorem \ref{thm:main-structure}. To this end, we define the first-best surplus at value $v$:
\[
S(v):=\max_{x\in[0,\Bar{Q}]}\{vx-c(x)\}=\pi(v),\qquad v\in[0,1].
\]

Total surplus is given by $\TS=\CS+\PiR$.
Under the seller’s optimal mechanism (with ironed virtual value $\phi$ and allocation $q(\phi)$),
\[
\TS=\int_0^1\Big(Q(u)\,q(\phi(u)) - c(q(\phi(u)))\Big)\,du.
\]

Fix any $G$ and consider the seller's optimal mechanism under $G$.
For every realized $v\in[0,1]$ and every $q\ge 0$,
$vq-c(q)\le 1\cdot q-c(q)\le \max_{x\in[0,\Bar{Q}]}\{x-c(x)\}=S(1)$.
Hence $\TS\le S(1)$. Also $\PiR\le \TS\le S(1)$.
Using $\CS=\TS-\PiR$,
\[
J_k=k\CS+(1-k)\PiR=k\TS+(1-2k)\Pi\le kS(1)+(1-2k)S(1)=(1-k)S(1).
\]
Now let $G^*=\delta_1$. Then $v\equiv 1$ and the seller chooses $q=\bar q=(c')^{-1}(1)$ and extracts all surplus, so $\TS(G^*)=\PiR(G^*)=S(1)$ and $\CS(G^*)=0$.
Thus $J_k(G^*)=(1-k)S(1)$ and $G^*$ is optimal.

\medskip\noindent
For uniqueness, suppose $J_k(G)=(1-k)S(1)$.
If $k<1/2$, then the upper bound forces $\TS(G)=S(1)$ and $\PiR(G)=S(1)$.
If $k=1/2$, then $J_{1/2}(G)=\tfrac12\TS(G)$ and equality forces $\TS(G)=S(1)$.
In either case, $\TS(G)=S(1)$ implies that the realized per-type surplus equals $S(1)$ almost surely.
But for any $v<1$ and any $q>0$, $vq-c(q)<1\cdot q-c(q)\le S(1)$,
so attaining $S(1)$ forces $v=1$ almost surely, i.e.\ $G=\delta_1$.
\qed

\subsubsection*{Consumer Surplus Biased Upstream Actor}
We now proceed to the case where the intermediary attaches weight $k>1/2$ on consumer surplus. The first step is to prove that a solution exists. 

\begin{proposition}\label{prop:exist}
For each $k\in(1/2,1]$, there exists $\phi^*\in\Phi$ maximizing $J_k$ on $\Phi$.
\end{proposition}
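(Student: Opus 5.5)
The plan is the direct method: equip $\Phi$ with a topology in which it is compact and $J_k$ is upper semicontinuous (indeed continuous). Take a maximizing sequence $(\phi_n)\subset\Phi$ with $J_k(\phi_n)\to\sup_\Phi J_k$. Each $\phi_n$ is nondecreasing with values in $[0,1]$, so Helly's selection theorem yields a subsequence converging pointwise at every $u\in[0,1]$ to a nondecreasing $\tilde\phi:[0,1]\to[0,1]$. Let $\phi^*$ be its right-continuous modification on $[0,1)$; $\phi^*$ differs from $\tilde\phi$ only on the countable set of jump points, so $\phi_n\to\phi^*$ a.e., and $\phi^*\in\Phi$.

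Next I verify that each term of $J_k$ passes to the limit. Since $q$ and $\pi$ are continuous and bounded on $[0,1]$ (by $\bar q$ and $\pi(1)$), dominated convergence gives $\int_0^1\pi(\phi_n)\,du\to\int_0^1\pi(\phi^*)\,du$, so $\Pi(\phi_n)\to\Pi(\phi^*)$. Likewise $\int_0^1\phi_n q(\phi_n)\,du$ converges. For the remaining piece $\int_0^1 Q_{\phi_n}(u)q(\phi_n(u))\,du$, note $R_{\phi_n}(u)=\int_u^1\phi_n\to R_{\phi^*}(u)$ for every $u$ by bounded convergence, hence $Q_{\phi_n}(u)\to Q_{\phi^*}(u)$ for every $u<1$. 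Moreover $0\le Q_{\phi_n}\le 1$, because $\phi_n\le1$ gives $R_{\phi_n}(u)\le 1-u$. Thus the integrand $Q_{\phi_n}q(\phi_n)$ is bounded by $\bar q$ and converges a.e., and dominated convergence applies again. Hence $CS(\phi_n)\to CS(\phi^*)$, so $J_k(\phi_n)\to J_k(\phi^*)$, and $\phi^*$ attains the supremum. The supremum is finite since $|J_k|\le 2\bar q+\pi(1)$.

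The only delicate step is the compactness/modification argument. One must make sure the right-continuous representative really lies in $\Phi$, including the convention at $u=1$, and that changing $\phi$ on a null set leaves $J_k$ unchanged. The latter holds because $Q_\phi$ depends on $\phi$ only through integrals and the other terms are integrals in $u$. I do not expect any singularity near $u=1$ to cause trouble: although the weight $1/(1-s)$ appears in $A_\phi$, it does not enter $J_k$ itself, and $Q_\phi$ stays bounded by $1$ uniformly, so no uniform integrability issue arises. By Lemmas \ref{lem:concavify} and \ref{lem:truncate}, this maximizer over $\Phi$ also gives a maximizer of $W_k$ over $\Delta([0,1])$.
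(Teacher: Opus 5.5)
Your proof is correct and follows essentially the same route as the paper: take a maximizing sequence, apply Helly's selection theorem with a right-continuous modification to stay in $\Phi$, and show $J_k$ is continuous along a.e.-convergent sequences via dominated convergence. The one minor difference is in the $Q_{\phi_n}$ step: you get convergence pointwise for each $u<1$ directly from bounded convergence of the tail integrals, whereas the paper proves $L^1$ convergence through the weighted estimate with $w(t)=-\ln(1-t)$; both suffice, and yours is slightly more direct.
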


\begin{proof}

We start by proving the following result.

\begin{lemma}[Compactness and continuity on $\Phi$]\label{lem:compactness-continuity}
The following hold.

\begin{enumerate}
\item[(i)] Every sequence $(\phi_n)\subset \Phi$ has a subsequence converging in
$L^1([0,1])$ to some $\phi\in\Phi$.

\item[(ii)] If $\phi_n\to\phi$ in $L^1([0,1])$ with $0\le \phi_n,\phi\le 1$, then
$Q_{\phi_n}\to Q_\phi$ in $L^1([0,1])$.

\item[(iii)] If $\phi_n\to\phi$ in $L^1([0,1])$ with $0\le \phi_n,\phi\le 1$, then
$J_k(\phi_n)\to J_k(\phi)$.
\end{enumerate}
\end{lemma}

\begin{proof}
For part (i), let \((\phi_n)\subset\Phi\). By Helly's selection theorem, there is a subsequence, still denoted \((\phi_n)\), and a nondecreasing function \(\bar\phi:[0,1]\to[0,1]\) such that \(\phi_n(u)\to\bar\phi(u)\) at every continuity point of \(\bar\phi\). Define the right-continuous modification
$\phi(u):=\lim_{v\downarrow u}\bar\phi(v)$ for $u\in[0,1)$ and $\phi(1):=\lim_{u\uparrow1}\phi(u)$.
Then \(\phi\in\Phi\). Since a monotone function has at most countably many discontinuities, \(\phi=\bar\phi\) at Lebesgue-a.e. \(u\). Hence \(\phi_n(u)\to\phi(u)\) for Lebesgue-a.e. \(u\). Since \(0\le\phi_n,\phi\le1\), dominated convergence gives
$\|\phi_n-\phi\|_{L^1([0,1])}\to0$.
This proves sequential compactness of \(\Phi\) in \(L^1\).

For part~(ii), for every $u<1$,
\[
|Q_{\phi_n}(u)-Q_\phi(u)|
\le
\frac{1}{1-u}\int_u^1 |\phi_n(t)-\phi(t)|\,dt.
\]
Integrating over $u$ and applying Fubini gives
\[
\|Q_{\phi_n}-Q_\phi\|_{L^1}
\le
\int_0^1 w(t)|\phi_n(t)-\phi(t)|\,dt,
\qquad
w(t):=-\ln(1-t)\in L^1([0,1]).
\]
Since $|\phi_n-\phi|\le 1$, for every $M>0$,
\[
\int_0^1 w|\phi_n-\phi|
\le
M\|\phi_n-\phi\|_{L^1}+\int_{\{w>M\}} w.
\]
First let $n\to\infty$, then let $M\to\infty$. Hence $Q_{\phi_n}\to Q_\phi$ in $L^1([0,1])$.

For part (iii), suppose \(\phi_n\to\phi\) in \(L^1([0,1])\). Note that if f \(f:[0,1]\to\mathbb R\) is bounded and uniformly continuous, then
$f(\phi_n)\to f(\phi)$ in $L^1([0,1])$.
Applying this to \(f=q\) and \(f=\pi\), we obtain
$q(\phi_n)\to q(\phi)$ and $\pi(\phi_n)\to\pi(\phi)$ in $L^1$.
By part (ii), $Q_{\phi_n}\to Q_\phi$ in $L^1$. Therefore,
\[
\Pi(\phi_n)=\int_0^1\pi(\phi_n(u))\,du
\longrightarrow
\int_0^1\pi(\phi(u))\,du
=\Pi(\phi).
\]
For consumer surplus, write \(Q_n:=Q_{\phi_n}\), \(Q:=Q_\phi\), \(q_n:=q(\phi_n)\), and \(q:=q(\phi)\). Since \(0\le Q_n,\phi_n,Q,\phi\le1\) and \(0\le q\le\bar q\),
\[
\begin{aligned}
|CS(\phi_n)-CS(\phi)|
&\le
\int_0^1 |Q_n-\phi_n|\,|q_n-q|\,du
+\int_0^1 |(Q_n-\phi_n)-(Q-\phi)|\,q\,du  \\
&\le
\|q_n-q\|_1
+\bar q\,\|Q_n-Q\|_1
+\bar q\,\|\phi_n-\phi\|_1.
\end{aligned}
\]
The right-hand side converges to zero. Hence \(CS(\phi_n)\to CS(\phi)\), and therefore $J_k(\phi_n)\to J_k(\phi)$.
\end{proof}

\textbf{Proof of Proposition \ref{prop:exist}}

Let $(\phi_n)\subset\Phi$ be a maximizing sequence. By Lemma~\ref{lem:compactness-continuity}(i), after passing to a
subsequence, $\phi_n\to\phi^*$ in $L^1([0,1])$
for some $\phi^*\in\Phi$. By Lemma~\ref{lem:compactness-continuity}(iii),
$J_k(\phi_n)\to J_k(\phi^*)$.
Hence $\phi^*$ attains the supremum.

\end{proof}

Next, for $k\in (1/2,1]$, the upstream actor cares more about consumer surplus than profit, so the degenerate distribution that places all mass at the highest valuation cannot be optimal. Formally,

\begin{lemma}\label{lem:not-all-one}
If $k\in(1/2,1]$, then $\phi\equiv 1$ does not maximize $J_k$ on $\Phi$.
\end{lemma}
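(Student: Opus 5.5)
To prove Lemma~\ref{lem:not-all-one}, I would exhibit an explicit feasible perturbation of $\phi\equiv 1$ inside $\Phi$ that strictly raises $J_k$. At $\phi\equiv\mathbf 1$ we have $Q_\phi\equiv 1$, $q(\phi)\equiv\bar q$, $CS=0$ and $\Pi=\pi(1)=S(1)$, so $J_k(\mathbf 1)=(1-k)\pi(1)$. The natural deviation lowers virtual values on a small bottom block: for $\varepsilon\in(0,1)$ and a level $\gamma\in(0,1)$, set $\phi_\varepsilon(u)=\gamma$ for $u\in[0,\varepsilon)$ and $\phi_\varepsilon(u)=1$ for $u\ge\varepsilon$. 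This function is nondecreasing, right-continuous and $[0,1]$-valued, so $\phi_\varepsilon\in\Phi$. I will then compute $J_k(\phi_\varepsilon)-J_k(\mathbf 1)$ to first order in $\varepsilon$ and show that it is positive for a suitable $\gamma$.

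The computation runs as follows. For $u\ge\varepsilon$ nothing changes: $Q=\phi=1$, so the $CS$ contribution is zero and the profit contribution is $\pi(1)$. For $u<\varepsilon$,
\[
Q_{\phi_\varepsilon}(u)=\frac{(1-\varepsilon)+\gamma(\varepsilon-u)}{1-u}=1-\frac{(1-\gamma)(\varepsilon-u)}{1-u}=1-O(\varepsilon).
\]
The change in the objective is therefore
\[
\Delta(\varepsilon)=\int_0^\varepsilon\Big[k\big(Q_{\phi_\varepsilon}(u)-\gamma\big)q(\gamma)+(1-k)\pi(\gamma)-(1-k)\pi(1)\Big]du .
\]
Dividing by $\varepsilon$ and letting $\varepsilon\downarrow0$ gives the limit
\[
D(\gamma):=k(1-\gamma)q(\gamma)+(1-k)\big(\pi(\gamma)-\pi(1)\big).
\]
It thus suffices to find some $\gamma<1$ with $D(\gamma)>0$.

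Here I would use the envelope fact $\pi'(z)=q(z)$ from the footnote. It gives $D(1)=0$ and
\[
D'(\gamma)=-kq(\gamma)+k(1-\gamma)q'(\gamma)+(1-k)q(\gamma).
\]
Since $q$ is continuous with $q(1^-)=\bar q$, the limit as $\gamma\uparrow1$ is $(1-2k)\bar q+\lim_{\gamma\uparrow1}k(1-\gamma)q'(\gamma)$. The first term is strictly negative because $k>1/2$. The second term vanishes because $q'(\gamma)=1/c''(q(\gamma))\to1/c''(\bar q)<\infty$ by Assumption~\ref{ass:cost}. Hence $D'(\gamma)<0$ for $\gamma$ close to $1$, and together with $D(1)=0$ this yields $D(\gamma)>0$ for such $\gamma$. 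Equivalently, and avoiding the derivative, we can write
\[
D(\gamma)=k(1-\gamma)q(\gamma)-(1-k)\int_\gamma^1 q \le (1-\gamma)\bar q\,(2k-1)+o(1-\gamma),
\]
and argue the sign the same way. Either route gives $D(\gamma)>0$, hence $\Delta(\varepsilon)>0$ for small $\varepsilon$, which contradicts the optimality of $\mathbf 1$.

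The step that needs the most care is the passage from the limit $D(\gamma)>0$ to $\Delta(\varepsilon)>0$. It requires the $O(\varepsilon)$ error in $Q_{\phi_\varepsilon}$ to contribute only $O(\varepsilon^2)$ after integration over $[0,\varepsilon)$. This holds because the bracket is uniformly bounded and $Q_{\phi_\varepsilon}-1=O(\varepsilon)$ uniformly on $[0,\varepsilon)$. The argument also identifies the correct sign of the bottom-block deviation, which is where $k>1/2$ enters: the rent-compression term $(1-2k)\bar q$ is exactly what makes lowering $\phi$ profitable.
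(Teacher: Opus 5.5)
Your proof is correct and is essentially the paper's argument. The paper uses the same bottom-block perturbation ($1-\varepsilon$ on $[0,a)$), but it fixes the width $a$, computes $CS$ exactly, and bounds the profit loss by convexity of $\pi$ with both $a$ and $\varepsilon$ small; you instead let the width vanish first and then sign $D(\gamma)$ near $\gamma=1$ using $D(1)=0$ and $\lim_{\gamma\uparrow 1}D'(\gamma)=(1-2k)\bar q<0$.

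The one slip is in your derivative-free aside. There, $\int_\gamma^1 q\ge(1-\gamma)q(\gamma)$ yields only an upper bound on $D(\gamma)$, which cannot prove positivity. Use $\int_\gamma^1 q\le(1-\gamma)\bar q$ instead, which gives $D(\gamma)\ge(1-\gamma)\bigl[kq(\gamma)-(1-k)\bar q\bigr]>0$ as soon as $q(\gamma)>\tfrac{1-k}{k}\bar q$.
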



\begin{proof}
Let $\phi\equiv 1$. Then $Q_\phi\equiv 1$ and $J_k(\phi)=(1-k)\pi(1)$.

Fix $k\in(1/2,1]$.
Choose $\delta\in\big(0,\frac{2k-1}{k}\big)$.
Pick $a\in(0,1)$ small enough that\footnote{This is possible because $(1-a)(-\ln(1-a))/a\to 1$ as $a\downarrow 0$.}
\begin{equation}\label{eq:log-lower-bound}
(1-a)\big(-\ln(1-a)\big)\ge (1-\delta/2)\,a.
\end{equation}

Next, by continuity of $q$ at $1$, pick $\varepsilon\in(0,1)$ small enough that
\begin{equation}\label{eq:q-lower-bound}
q(1-\varepsilon)\ge (1-\delta/2)\,q(1).
\end{equation}

Define
\[
\phi_{\varepsilon}(u)=
\begin{cases}
1-\varepsilon,&u\in[0,a),\\
1,&u\in[a,1].
\end{cases}
\]
Then $\phi_\varepsilon\in\Phi$ for $u\in[0,a)$,
\[
Q_{\phi_\varepsilon}(u)-\phi_\varepsilon(u)=\varepsilon\frac{1-a}{1-u},
\]
so $\CS(\phi_\varepsilon)
=\varepsilon(1-a)q(1-\varepsilon)\big(-\ln(1-a)\big)$.
Also, $\PiR(\phi_\varepsilon)=a\,\pi(1-\varepsilon)+(1-a)\pi(1)$.
Hence, 
\[J_k(\phi_\varepsilon)-J_k(\phi)
=
k\,\varepsilon(1-a)\,q(1-\varepsilon)\,\big(-\ln(1-a)\big)
+(1-k)\,a\,(\pi(1-\varepsilon)-\pi(1)).\]

Since $\pi$ is convex and differentiable on $(0,1]$ with derivative $\pi'(\phi)=q(\phi)$,
we have the standard subgradient inequality at $1$: $\pi(1-\varepsilon)\ge \pi(1)-\varepsilon\,q(1)$,
so $\pi(1-\varepsilon)-\pi(1)\ge -\varepsilon q(1)$. Therefore
\[
J_k(\phi_\varepsilon)-J_k(\phi)
\ge
\varepsilon\Big(k(1-a)q(1-\varepsilon)\big(-\ln(1-a)\big)-(1-k)a\,q(1)\Big).
\]
Using \eqref{eq:log-lower-bound} and \eqref{eq:q-lower-bound},
\[
J_k(\phi_\varepsilon)-J_k(\phi)
\ge
\varepsilon\,a\,q(1)\Big(k(1-\delta/2)^2-(1-k)\Big).
\]
Since $(1-\delta/2)^2\ge 1-\delta$ and $\delta<\frac{2k-1}{k}$ implies $k(1-\delta)>1-k$, the bracket is strictly positive.
Hence $J_k(\phi_\varepsilon)>J_k(\phi)$ for these choices, so $\phi\equiv 1$ is not optimal.
\end{proof}

Next, we establish that in every solution of the problem, every buyer must trade. That is, we must have $\phi^*(u)>0$ and, thus, there is no exclusion. The reason for this result is that excluding buyers is never optimal, since the upstream actor is unconstrained in the choice of distribution and can add an arbitrarily small amount of trade and generate positive rents without disturbing the rest of the allocation.  Formally, we have 

\begin{lemma}\label{lem:no-prefix}
Let $k\in(1/2,1]$ and let $\phi^*$ maximize $J_k$ on $\Phi$. Then $J_k(\phi^*)>0$ and
$\phi^*(u)>0$, for all $u\in(0,1)$.
\end{lemma}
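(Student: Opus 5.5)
The plan is a direct perturbation argument: first show the optimal value is strictly positive, then show that any initial block of zero virtual values could be profitably raised to a small positive level, contradicting optimality.

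\emph{Step 1: $J_k(\phi^*)>0$.} For $\phi\equiv 1$ we have $J_k(\mathbf 1)=(1-k)\pi(1)\ge 0$. The proof of Lemma~\ref{lem:not-all-one} exhibits a feasible $\phi_\varepsilon\in\Phi$ with $J_k(\phi_\varepsilon)>J_k(\mathbf 1)$. Hence $J_k(\phi^*)\ge J_k(\phi_\varepsilon)>0$, which also covers the case $k=1$, where $J_1(\mathbf 1)=0$.

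\emph{Step 2: setup for the contradiction.} Suppose $\phi^*(u_0)=0$ for some $u_0\in(0,1)$. By monotonicity and nonnegativity, $\phi^*\equiv 0$ on $[0,u_0]$. Let $a:=\sup\{u:\phi^*(u)=0\}\ge u_0>0$. If $\phi^*$ vanished on all of $[0,1)$, then $q(\phi^*)=0$ a.e. and $J_k(\phi^*)=0$, contradicting Step 1. Hence $a<1$ and
\[
R^*(a):=\int_a^1\phi^*(t)\,dt>0 .
\]
For $u<a$ the induced value is $Q_{\phi^*}(u)=R^*(a)/(1-u)\ge R^*(a)$, yet these quantiles receive zero quality. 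They carry a strictly positive value wedge that the seller does not serve.

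\emph{Step 3: the perturbation.} For small $\varepsilon>0$ define $\phi_\varepsilon:=\max\{\phi^*,\varepsilon\}$. It is nondecreasing, right-continuous and $[0,1]$-valued, so $\phi_\varepsilon\in\Phi$. Since $\phi_\varepsilon\ge\phi^*$ pointwise, $R_{\phi_\varepsilon}\ge R_{\phi^*}$ and $Q_{\phi_\varepsilon}\ge Q_{\phi^*}$. Also $\pi$ is nondecreasing, so $\Pi(\phi_\varepsilon)\ge\Pi(\phi^*)$.

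For consumer surplus, split $[0,1]$ into three sets and compare integrands.
\begin{itemize}
\item Where $\phi^*\ge\varepsilon$, the integrand weakly increases because only $Q$ rises.
\item On $[0,a)$, the old integrand is $0$ and the new one is at least $(R^*(a)-\varepsilon)q(\varepsilon)$.
\item On $B_\varepsilon:=\{u\ge a:\phi^*(u)<\varepsilon\}$, use $q(\phi^*)\le q(\varepsilon)$ and $Q_{\phi_\varepsilon}\ge Q_{\phi^*}$. The new integrand minus the old one is then at least $Q_{\phi^*}\bigl(q(\varepsilon)-q(\phi^*)\bigr)-\varepsilon q(\varepsilon)\ge-\varepsilon q(\varepsilon)$.
\end{itemize}
Adding these pieces gives
\[
CS(\phi_\varepsilon)-CS(\phi^*)\ \ge\ q(\varepsilon)\Bigl[a\bigl(R^*(a)-\varepsilon\bigr)-\varepsilon\,|B_\varepsilon|\Bigr].
\]
The bracket is positive for small $\varepsilon$ because $a\,R^*(a)>0$ and $|B_\varepsilon|\le 1$. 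Also $q(\varepsilon)>0$ for $\varepsilon\in(0,1)$. Since $k>0$ and $1-k\ge 0$, it follows that $J_k(\phi_\varepsilon)>J_k(\phi^*)$, contradicting optimality. Hence $\phi^*>0$ on $(0,1)$.

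\emph{Main obstacle.} The delicate point is the set $B_\varepsilon$ just to the right of $a$. There $\phi^*$ may rise continuously from $0$, so $\phi^*(a)$ can equal $0$ and the naive perturbation ``$\varepsilon$ on $[0,a)$, unchanged on $[a,1]$'' need not be monotone. Using $\max\{\phi^*,\varepsilon\}$ restores monotonicity. The price is a possible per-quantile loss of $\varepsilon q(\varepsilon)$ on $B_\varepsilon$, which is of smaller order than the gain $a R^*(a)q(\varepsilon)$ on $[0,a)$. The contradiction therefore needs no differentiability of $J_k$, and neither the variational inequality nor any order of vanishing of $q$ at $0$ is required.
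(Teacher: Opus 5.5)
Your proof is correct and follows essentially the paper's route. Both arguments get $J_k(\phi^*)>0$ from Lemma~\ref{lem:not-all-one} and use the same competitor $\max\{\phi^*,\varepsilon\}$, which the paper writes as $\varepsilon$ on $[0,b_\varepsilon)$ and $\phi^*$ afterwards. The only difference is bookkeeping: the paper bounds the loss just to the right of $a$ by $q(\varepsilon)(b_\varepsilon-a)$ and then needs $b_\varepsilon\downarrow a$, whereas you use the monotonicity of $Q$ and $\pi$ to bound it by $\varepsilon q(\varepsilon)|B_\varepsilon|$, so you avoid that limit.
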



\begin{proof}
By Lemma~\ref{lem:not-all-one}, there exists $\hat\phi\in\Phi$ with $J_k(\hat\phi)>J_k(\mathbf 1)\ge 0$, so every maximizer $\phi^*$ satisfies $J_k(\phi^*)>0$.
Suppose, for contradiction, that $\phi^*(u_0)=0$ for some $u_0\in(0,1)$. Since $\phi^*$ is nondecreasing, there exists
$a:=\inf\{u\in[0,1]:\phi^*(u)>0\}\in(0,1)$
such that $\phi^*(u)=0$ for all $u<a$ and $\phi^*(u)>0$ for all $u>a$. Set $C_0:=\int_a^1 \phi^*(t)\,dt>0$.
For $\varepsilon>0$, define $b_\e:=\inf\{u\in[0,1]:\phi^*(u)\ge \e\}$ and $C_\e:=\int_{b_\e}^1 \phi^*(t)\,dt$.
Then $b_\e\downarrow a$ and $C_\e\to C_0$ as $\e\downarrow0$.

Define the competitor
\[
\phi_\e(u):=
\begin{cases}
\e,&u<b_\e,\\
\phi^*(u),&u\ge b_\e.
\end{cases}
\]
Clearly $\phi_\e\in\Phi$. For $u<a$,
\[
Q_{\phi_\e}(u)-\e
=
\frac{C_\e-\e(1-b_\e)}{1-u},
\]
so the new contribution on consumer surplus on $[0,a)$ equals $kq(\e)\bigl(C_\e-\e(1-b_\e)\bigr)\,(-\ln(1-a))$.
On $[a,b_\e)$ we have $\phi^*(u)<\e$, hence
$q(\phi^*(u))\le q(\e)$ and $\pi(\phi^*(u))\le \phi^*(u)q(\phi^*(u))\le \e q(\e)\le q(\e)$, and also $0\le Q_{\phi^*}(u)-\phi^*(u)\le 1$. Therefore the old integrand on $[a,b_\e)$ is bounded by $q(\e)$, while the two profiles coincide on $[b_\e,1]$. Thus
\[
J_k(\phi_\e)-J_k(\phi^*)
\ge
q(\e)\Bigl[
k(-\ln(1-a))\bigl(C_\e-\e(1-b_\e)\bigr)
-(b_\e-a)
\Bigr].
\]
As $\e\downarrow0$, the bracket converges to
$k(-\ln(1-a))C_0>0$.

Since $q(\e)>0$ for every $\e>0$, the right-hand side is strictly positive for all sufficiently small $\e$, contradicting optimality of $\phi^*$. Hence, $\phi^*(u)>0$, for all $u\in(0,1)$.
\end{proof}


The next result we establish is that the right directional derivative of the objective $J_k$ exists along feasible directions bounded away from zero. Moreover, at the optimum, a variational inequality must be satisfied. 

\begin{lemma}[First variation and local variational inequality]\label{lem:gateaux}
Fix $k\in(1/2,1]$.

\begin{itemize}
\item [(i)] Let $\phi\in\Phi$ and $\eta\in L^\infty([0,1])$. Define $E:=\{u\in[0,1]:\eta(u)\neq 0\}$ and let $\phi_t:=\phi+t\eta$.
Assume there exist $\delta\in(0,1]$ and $\tau>0$ such that $\phi(u)\ge \delta$ for a.e. $u\in E$ and $\phi_t\in\Phi$ for all $t\in[0,\tau)$.
Then the right derivative exists and
\[
\frac{d}{dt}J_k(\phi_t)\Big|_{t=0+}
=
\int_0^1 \eta(s)\Big(
kA_\phi(s)
+k(Q_\phi(s)-\phi(s))q'(\phi(s))
+(1-2k)q(\phi(s))
\Big)\,ds.
\]

\item [(ii)] Let $\phi^*\in\Phi$ maximize $J_k$ over $\Phi$, fix $\phi\in\Phi$, and let
$\eta=\phi-\phi^*,$ so that $
\phi_t=(1-t)\phi^*+t\phi$, $t\in[0,1]$. Assume there exists $\delta\in(0,1]$ such that $
\phi^*(u)\ge \delta$ for a.e. $u\in E$.
Then
\[
\frac{d}{dt}J_k(\phi_t)\Big|_{t=0+}
=
\int_0^1 H_k[\phi^*](u)\,(\phi(u)-\phi^*(u))\,du,
\]
and therefore it must be the case that
\[
\int_0^1 H_k[\phi^*](u)\,(\phi(u)-\phi^*(u))\,du \le 0.
\]
\end{itemize}
\end{lemma}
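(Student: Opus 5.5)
The plan is to differentiate the two pieces of $J_k=k\,CS+(1-k)\Pi$ separately along $\phi_t=\phi+t\eta$ and then pass the derivative under the integral by dominated convergence.

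\emph{Profit term.} Since $\pi$ is $C^1$ with $\pi'=q$, the difference quotient $[\pi(\phi_t)-\pi(\phi)]/t$ converges pointwise to $q(\phi)\eta$. It is bounded by $\bar q\|\eta\|_\infty$. Hence
\[
\frac{d}{dt}\Pi(\phi_t)\Big|_{0+}=\int_0^1 q(\phi)\,\eta .
\]

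\emph{Consumer-surplus term.} Write the integrand as $(Q_{\phi_t}-\phi_t)\,q(\phi_t)$. Linearity of $\phi\mapsto Q_\phi$ gives $Q_{\phi_t}=Q_\phi+tQ_\eta$, where $Q_\eta(u)=\frac1{1-u}\int_u^1\eta$ and $|Q_\eta|\le\|\eta\|_\infty$. The pointwise derivative is therefore
\[
(Q_\eta-\eta)\,q(\phi)+(Q_\phi-\phi)\,q'(\phi)\,\eta .
\]

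The main obstacle is justifying the second piece, because $q$ is not differentiable at $z=1$ and $q'=1/c''(q)$ may blow up as $z\downarrow0$. Two facts handle this.
\begin{itemize}
\item[(a)] Near zero: on $E$ we have $\phi_s\ge\delta-s\|\eta\|_\infty\ge\delta/2$ for small $s$. By Assumption~\ref{ass:cost}, $c''>0$ is continuous on $[q(\delta/2),\bar q]$, so $q'$ is bounded on $[\delta/2,1)$. Hence $q$ is Lipschitz on $[\delta/2,\infty)$ (it is constant beyond $1$).
\item[(b)] Near one: where $\phi(u)=1$, feasibility ($\phi_t\le1$) forces $\eta(u)\le0$. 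Then $q(\phi_t)$ has one-sided derivative $q'(1^-)\eta$, or $0$ if $\eta=0$. Moreover $Q_\phi-\phi=0$ there. So in all cases the pointwise limit of the quotient is $(Q_\phi-\phi)q'(\phi)\eta$, with the convention that this term vanishes when $\phi=1$.
\end{itemize}
The Lipschitz bound from (a) gives a uniform dominating constant for the quotients on $E$. Off $E$, $\eta=0$ and the quotient is zero. Dominated convergence then yields the derivative.

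Next, rewrite the spillover term with Fubini:
\[
\int_0^1 q(\phi(u))\,Q_\eta(u)\,du=\int_0^1\eta(s)\int_0^s\frac{q(\phi(u))}{1-u}\,du\,ds=\int_0^1 \eta\, A_\phi .
\]
This is legitimate because $\int_0^1\!\int_u^1 \frac{|\eta(s)|}{1-u}\,ds\,du\le\|\eta\|_\infty$. Collecting terms gives
\[
k\,A_\phi+k(Q_\phi-\phi)\,q'(\phi)+\bigl[(1-k)-k\bigr]\,q(\phi),
\]
which is exactly the integrand of $H_k[\phi]$ in (i).

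For (ii), note that $\Phi$ is convex, so $\phi_t=(1-t)\phi^*+t\phi\in\Phi$ for all $t\in[0,1]$. With $\eta=\phi-\phi^*\in L^\infty$ and the lower bound $\phi^*\ge\delta$ on $E$, part (i) gives the stated formula for the right derivative. Optimality of $\phi^*$ gives $J_k(\phi_t)\le J_k(\phi^*)$ for all $t$. Dividing by $t>0$ and letting $t\downarrow0$ yields the variational inequality.
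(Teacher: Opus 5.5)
Your proof is correct and takes essentially the paper's route. You write $Q_{\phi+t\eta}=Q_\phi+t\,Q_\eta$, split the difference quotient of $J_k$, pass to the limit by dominated convergence using $\phi_t\in[\delta/2,1]$ on $E$, convert the spillover term into $\int_0^1\eta\,A_\phi$ by Fubini, and obtain (ii) from convexity of $\Phi$ and optimality at $t=0$. Your treatment of the kink of $q$ at $z=1$ (using $\eta\le 0$ and $Q_\phi-\phi=0$ wherever $\phi=1$) makes explicit a point the paper passes over by simply asserting that $q$ is $C^1$ with bounded derivative on $[\delta/2,1]$.
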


\begin{proof}
Let $E:=\{u:\eta(u)\neq 0\}$ and $h(u):=\frac{1}{1-u}\int_u^1 \eta(s)\,ds$.
Then $|h(u)|\le \|\eta\|_\infty$ for all $u<1$, and $Q_{\phi+t\eta}=Q_\phi+t h$.

For part (i), by assumption there exist $\delta>0$ and $\tau>0$ such that $\phi\ge \delta$ a.e.\ on $E$ and $\phi+t\eta\in\Phi$ for all $t\in[0,\tau]$. After possibly shrinking $\tau$, we have $\phi+t\eta\in[\delta/2,1]$ a.e. on $E$, for all $t\in[0,\tau]$.
Hence on $E$, both $q$ and $\pi$ are $C^1$ with bounded derivatives, and $\pi'(\phi)=q(\phi)$. Therefore, in $L^1(E)$,
\[
\frac{q(\phi+t\eta)-q(\phi)}{t}\to \eta q'(\phi),
\qquad
\frac{\pi(\phi+t\eta)-\pi(\phi)}{t}\to \eta q(\phi),
\]
while on $E^c$ both differences vanish identically.

Writing $Q_t:=Q_{\phi+t\eta}$, we have

\begin{align*}
\frac{J_k(\phi+t\eta)-J_k(\phi)}{t}=k\int_0^1 h(u)q(\phi+t\eta(u))\,du
+k\int_0^1 (Q_\phi(u)-\phi(u))
\frac{q(\phi+t\eta(u))-q(\phi(u))}{t}\,du-\\
-k\int_0^1 \eta(u)q(\phi+t\eta(u))\,du
+(1-k)\int_0^1 \frac{\pi(\phi+t\eta(u))-\pi(\phi(u))}{t}\,du.
\end{align*}

Dominated convergence gives
\begin{align*}
\frac{d}{dt}J_k(\phi+t\eta)\Big|_{t=0+}
= k\int_0^1 h(u)q(\phi(u))\,du
+k\int_0^1 (Q_\phi(u)-\phi(u))\eta(u)q'(\phi(u))\,du\\
+(1-2k)\int_0^1 \eta(u)q(\phi(u))\,du.
\end{align*}
Finally,
\[
\int_0^1 h(u)q(\phi(u))\,du
=
\int_0^1 \eta(s)\Bigl(\int_0^s \frac{q(\phi(u))}{1-u}\,du\Bigr)ds
=
\int_0^1 \eta(s)A_\phi(s)\,ds
\]
by Fubini. This proves part (i).

For part (ii), let $\phi_t:=(1-t)\phi^*+t\phi$ and $\eta:=\phi-\phi^*$.
Since $\Phi$ is convex, $\phi_t\in\Phi$ for all $t\in[0,1]$. By assumption, $\phi^*\ge\delta$ a.e.\ on the support of $\eta$, so part (i) applies and yields
\[
\frac{d}{dt}J_k(\phi_t)\Big|_{t=0+}
=
\int_0^1 H_k[\phi^*](u)\,(\phi(u)-\phi^*(u))\,du.
\]
Because $\phi^*$ maximizes $J_k$ over $\Phi$, the map $t\mapsto J_k(\phi_t)$ has a maximum at $t=0$, so its right derivative is nonpositive. Hence
\[
\int_0^1 H_k[\phi^*](u)\,(\phi(u)-\phi^*(u))\,du\le 0.
\]
\end{proof}


We are now ready to prove that the optimal virtual value schedule must hit its upper bound on the terminal interval $[b,1]$ for some $b>0$. Formally,

\begin{proposition}\label{prop:tail}
Let $k\in(1/2,1]$ and let $\phi^*$ maximize $J_k$ on $\Phi$. Then there exists $b\in(0,1)$ such that $\phi^*(u)=1$ for all  $u\in[b,1]$.
Consequently, $Q_{\phi^*}(u)=1$ for all $u\in[b,1)$, so the induced optimal distribution has an atom at $v=1$ of size $1-b$.
\end{proposition}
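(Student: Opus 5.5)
The argument combines the variational inequality of Lemma~\ref{lem:gateaux}(ii) with the blow-up of the cumulative term $A_{\phi^*}$ near $u=1$.

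\emph{Step 1: positivity of the quality term.} By Lemma~\ref{lem:no-prefix}, $\phi^*(u)>0$ for all $u\in(0,1)$. Fix any $u_0\in(0,1)$ and set $\delta:=\phi^*(u_0)>0$. Monotonicity then gives $\phi^*\ge\delta$ on $[u_0,1]$, hence $q(\phi^*(s))\ge q(\delta)>0$ there.

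\emph{Step 2: $H_k[\phi^*]$ is positive near the top.} For $s\ge u_0$,
\[
A_{\phi^*}(s)\ge q(\delta)\int_{u_0}^s\frac{ds'}{1-s'}=q(\delta)\ln\frac{1-u_0}{1-s}\to\infty \quad\text{as } s\uparrow 1 .
\]
The other two terms of $H_k[\phi^*](s)$ are bounded on $[u_0,1)$. The term $(1-2k)q(\phi^*)$ is bounded below by $-(2k-1)\bar q$. The term $k(Q_{\phi^*}-\phi^*)q'(\phi^*)$ is nonnegative wherever $q'$ is defined, because $Q_{\phi^*}\ge\phi^*$ (a tail average of a nondecreasing function). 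Here $q'(z)=1/c''(q(z))$ on $(0,1)$, and we use the convention $q'=0$ at $z=1$; that is the relevant one-sided derivative, since $\phi\le1$. Hence there is $t\in[u_0,1)$ with $H_k[\phi^*](s)>0$ for all $s\in(t,1)$.

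\emph{Step 3: the cap binds on the tail.} Suppose the set $\{u\in(t,1):\phi^*(u)<1\}$ has positive measure. Take the competitor $\phi:=\phi^*+(1-\phi^*)\mathbf 1_{[t,1)}$, i.e.\ $\phi=\phi^*$ on $[0,t)$ and $\phi\equiv1$ on $[t,1]$. It is nondecreasing, right-continuous and bounded by $1$, so $\phi\in\Phi$. The perturbation $\eta=\phi-\phi^*$ is supported in $[t,1)$, where $\phi^*\ge\delta$, so Lemma~\ref{lem:gateaux}(ii) applies and yields
\[
0\ge\int_t^1 H_k[\phi^*](u)\,(1-\phi^*(u))\,du .
\]
The integrand is strictly positive on a set of positive measure, a contradiction. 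Hence $\phi^*=1$ a.e.\ on $(t,1)$, and by right-continuity and monotonicity $\phi^*\equiv1$ on $[t,1]$.

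\emph{Step 4: the cutoff.} Define $b:=\inf\{u:\phi^*(u)=1\}$. Then $b\le t<1$. If $b=0$, right-continuity forces $\phi^*\equiv1$, which Lemma~\ref{lem:not-all-one} rules out; so $b\in(0,1)$, and $\phi^*(b)=1$ by right-continuity. For $u\in[b,1)$, $Q_{\phi^*}(u)=\frac{1}{1-u}\int_u^1 1\,ds=1$. Thus the quantile equals $1$ on a set of quantile-measure $1-b$, giving an atom of size $1-b$ at $v=1$. This is exactly an atom, not more, because $Q_{\phi^*}<1$ below $b$ when $\phi^*<1$ there.

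\emph{Main obstacle.} The delicate point is Step 2's handling of the local term near $\phi^*=1$, where $q'$ jumps from $1/c''(\bar q)$ to $0$. Either convention works, because the term is nonnegative and cannot offset the divergence. The remaining check is that the dominated-convergence argument of Lemma~\ref{lem:gateaux} goes through when $\phi^*$ already equals $1$ on part of $E$. There $\eta=0$, so such points simply do not belong to $E$, which removes the issue.
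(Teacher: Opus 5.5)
Your proof is correct and follows the paper's argument essentially step for step. Divergence of $A_{\phi^*}$ near $u=1$ makes $H_k[\phi^*]$ positive on a tail, raising $\phi^*$ to $1$ on $[t,1)$ then contradicts the variational inequality of Lemma~\ref{lem:gateaux}, and Lemma~\ref{lem:not-all-one} gives $b>0$. Two small improvements over the paper's write-up: you use only a lower bound on the distortion term (its nonnegativity, rather than boundedness of $q'$ across the kink at $z=1$), and you take $b$ to be the exact cutoff, so the atom has size exactly $1-b$.
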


\begin{proof}
Pick $u_0\in(0,1)$ and set $\delta:=\phi^*(u_0)>0$ (Lemma~\ref{lem:no-prefix}).
Then for all $u\in[u_0,1)$, $\phi^*(u)\ge\delta$ and $q(\phi^*(u))\ge q(\delta)>0$.
Hence for $s\in(u_0,1)$,
\[
A_{\phi^*}(s)=\int_0^s\frac{q(\phi^*(u))}{1-u}\,du
\ge q(\delta)\int_{u_0}^s\frac{1}{1-u}\,du
=q(\delta)\ln\frac{1-u_0}{1-s}\xrightarrow[s\uparrow 1]{}+\infty.
\]
On $[u_0,1)$, the remaining terms in $H_k[\phi^*]$ are bounded because $Q_{\phi^*},\phi^*\in[0,1]$,
$q(\phi^*)\le \bar q$, and $q'(\phi^*)$ is bounded on $[\delta,1]$ (continuity and positivity of $c''$ on the compact range $[q(\delta),q(1)]$).
Therefore $H_k[\phi^*](s)\to+\infty$ as $s\uparrow 1$, so choose $t\in(u_0,1)$ such that $H_k[\phi^*](s)>0$ for a.e.$s\in(t,1)$.
Assume by contradiction that $\phi^*(u)<1$ on a set of positive measure in $(t,1)$.
Let $\eta(u):=(1-\phi^*(u))\mathbf 1_{[t,1)}(u)\ge 0$ and define $\phi_\tau:=\phi^*+\tau\eta$ for $\tau\in[0,1]$.
Then $\phi_\tau\in\Phi$ for all $\tau$ (convex combination with the constant $1$ on $(t,1)$), and for small $\tau$ the support of $\eta$ lies in
a region where $\phi^*\in[\delta,1)$, so Lemma~\ref{lem:gateaux} applies and yields
\[
\frac{d}{d\tau}J_k(\phi_\tau)\Big|_{\tau=0}=\int_t^1 \eta(u)\,H_k[\phi^*](u)\,du>0,
\]
a contradiction to optimality of $\phi^*$.
Hence $\phi^*(u)=1$ a.e.\ on $(t,1)$; right-continuity implies $\phi^*(u)=1$ for all $u\in[b,1]$ for some $b<1$.
Finally, since the degenerate distribution $G_{\delta_1}$ cannot be optimal, we get $b>0$.
The atom statement follows because if $u\in[b,1)$ then $Q_{\phi^*}(u)=\frac{1}{1-u}\int_u^1 1\,dt=1$.
\end{proof}


Next, we define the (unique) cutoff
\[
b:=\inf\{u\in[0,1]:\phi^*(t)=1\ \forall t\in[u,1]\}\in(0,1).
\]
We now argue that on $[0,b)$, no interval can have a constant virtual value as a solution to the upstream actor's problem. We start by proving the derivative of $H_k(\phi^*)$ is strictly positive on a flat interior interval. We let $h:=H_k(\phi^*)$ to save on notation.

\begin{lemma}\label{lem:gprime-positive-on-flat}

Let $(\ell,r)\subset(0,1)$ and $\gamma\in(0,1)$ satisfy $\phi^*(u)=\gamma$ for all $u\in(\ell,r)$.
Then $h=H_k[\phi^*]$ is absolutely continuous on $(\ell,r)$ and satisfies
\[
h'(u)\ \ge\ k\,\frac{q(\gamma)}{1-u}\ >\ 0
\qquad\text{for a.e.\ }u\in(\ell,r).
\]
In particular, $h$ is strictly increasing on $(\ell,r)$.
\end{lemma}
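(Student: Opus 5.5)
On the flat block, every term of $h$ can be written out explicitly, and then we differentiate term by term. Recall
\[
h(u)=kA_{\phi^*}(u)+k\bigl(Q_{\phi^*}(u)-\phi^*(u)\bigr)q'(\phi^*(u))+(1-2k)q(\phi^*(u)).
\]
For $u\in(\ell,r)$ we have $\phi^*(u)=\gamma$. So the last term is the constant $(1-2k)q(\gamma)$, and the factor $q'(\phi^*(u))=q'(\gamma)$ is a positive constant: since $\gamma\in(0,1)$, $q'(\gamma)=1/c''(q(\gamma))>0$. The term $A_{\phi^*}(u)=\int_0^u q(\phi^*(s))/(1-s)\,ds$ is the integral of a function that is bounded on compact subsets of $[0,1)$. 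Hence it is absolutely continuous on $(\ell,r)$ with derivative $q(\gamma)/(1-u)$ everywhere there.

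The remaining piece is $Q_{\phi^*}(u)=R_{\phi^*}(u)/(1-u)$, where $R_{\phi^*}(u)=\int_u^1\phi^*$. This is absolutely continuous (indeed $C^1$ on $(\ell,r)$, because $\phi^*$ is constant there) with
\[
Q_{\phi^*}'(u)=\frac{-\phi^*(u)(1-u)+R_{\phi^*}(u)}{(1-u)^2}=\frac{Q_{\phi^*}(u)-\gamma}{1-u}.
\]
Monotonicity of $\phi^*$ makes $Q_{\phi^*}$ a tail average of values at least $\gamma$, so $Q_{\phi^*}(u)\ge\gamma$ and therefore $Q_{\phi^*}'(u)\ge0$. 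Combining the pieces, $h$ is absolutely continuous on $(\ell,r)$ and
\[
h'(u)=k\frac{q(\gamma)}{1-u}+k\,q'(\gamma)\frac{Q_{\phi^*}(u)-\gamma}{1-u}\ \ge\ k\frac{q(\gamma)}{1-u}.
\]
This bound is strictly positive because $k>1/2$ and $q(\gamma)>0$ for $\gamma>0$. Strict monotonicity of $h$ then follows by integrating $h'$.

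The only point needing care is regularity. We must check that $q$ and $q'$ are evaluated at a fixed interior point, so no differentiability issues arise at $0$ or $1$, and that $A_{\phi^*}$ is finite on $(\ell,r)$, which holds since $r<1$. No step seems to be a real obstacle; the key observation is simply the sign $Q_{\phi^*}\ge\phi^*$, which comes from monotonicity of $\phi^*$.
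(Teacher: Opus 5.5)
Your proof is correct and matches the paper's argument essentially step for step. On the block you freeze $q(\gamma)$ and $q'(\gamma)$, differentiate $A_{\phi^*}$ and $Q_{\phi^*}$, and use $Q_{\phi^*}\ge\gamma$ (from monotonicity of $\phi^*$) to drop the nonnegative $q'(\gamma)$-term. The only cosmetic difference is your observation that $Q_{\phi^*}$ is $C^1$ on the block, so the derivative bound holds everywhere there and not merely almost everywhere.
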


\begin{proof}
On $(\ell,r)$, $\phi^*\equiv \gamma \in(0,1)$, so $q(\phi^*)\equiv q(\gamma)>0$ and $q'(\phi^*)\equiv q'(\gamma)\in(0,\infty)$
are constants. Also $Q_{\phi^*}$ is absolutely continuous and satisfies for a.e.\ $u$,
\[
(Q_{\phi^*})'(u)=\frac{Q_{\phi^*}(u)-\phi^*(u)}{1-u}=\frac{Q_{\phi^*}(u)-\gamma}{1-u}\ \ge\ 0,
\]
because $Q_{\phi^*}(u)=\frac{1}{1-u}\int_u^1\phi^*(t)\,dt\ge \gamma$ by monotonicity of $\phi^*$.

Moreover, $A_{\phi^*}$ is absolutely continuous with
\[
A_{\phi^*}'(u)=\frac{q(\phi^*(u))}{1-u}=\frac{q(\gamma)}{1-u}
\qquad\text{for a.e.\ }u\in(\ell,r).
\]
Since $(1-2k)q(\gamma)$ is constant on $(\ell,r)$, differentiating $h=H_k[\phi^*]$ a.e.\ on $(\ell,r)$ yields
\[
h'(u)
=
k\,A_{\phi^*}'(u)
+
k\,q'(\gamma)\,(Q_{\phi^*})'(u)
\ \ge\ k\,\frac{q(\gamma)}{1-u}\ >\ 0
\qquad\text{a.e. on }(\ell,r).
\]
Thus $h$ is absolutely continuous with strictly positive derivative a.e., hence strictly increasing.
\end{proof}


Equipped with this result, we are ready to prove that the solution to the upstream actor's problem cannot feature bunching.

\begin{proposition}\label{prop:no-pooling-VI}
Fix $k\in(1/2,1]$ and let $\phi^*$ be any maximizer of $J_k$ on $\Phi$. Let $b$ be its cutoff. Then, there cannot be an interval contained in $(0,b)$ where $\phi^*$ is constant. Equivalently, there is no bunching on the interior trading block.
\end{proposition}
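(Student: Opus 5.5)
The argument is by contradiction, combining Lemma~\ref{lem:gprime-positive-on-flat} with the variational inequality of Lemma~\ref{lem:gateaux}(ii).

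\textbf{Setup.} Suppose $\phi^*\equiv\gamma$ on some open interval $(\ell,r)\subset(0,b)$. Then $\gamma\in(0,1)$:
\begin{itemize}
\item $\gamma>0$ by Lemma~\ref{lem:no-prefix};
\item $\gamma<1$, because otherwise monotonicity would force $\phi^*=1$ on $[\ell,1]$, contradicting the definition of $b$ as the infimum of the tail where $\phi^*\equiv1$.
\end{itemize}
Shrinking the interval slightly if needed, take $0<\ell<r<b$ with $\phi^*\equiv\gamma$ on $(\ell,r)$. Let $m=(\ell+r)/2$ and $h=H_k[\phi^*]$. By Lemma~\ref{lem:gprime-positive-on-flat}, $h$ is strictly increasing on $(\ell,r)$, so
\[
D:=\int_m^r h(u)\,du-\int_\ell^m h(u)\,du>0 .
\]

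\textbf{The perturbation.} For small $\e>0$, define the competitor $\phi_\e\in\Phi$ by
\[
\phi_\e(u)=
\begin{cases}
\phi^*(u) & u\notin[\ell,r),\\[2pt]
\max\{\gamma-\e,\ \phi^*(u)\} & u<\ell,\\[2pt]
\gamma-\e & u\in[\ell,m),\\[2pt]
\gamma+\e & u\in[m,r),\\[2pt]
\max\{\gamma+\e,\ \phi^*(u)\} & u\ge r.
\end{cases}
\]
Outside $[\ell,r)$ this keeps $\phi^*$ unchanged except where monotonicity forces a change. The two boundary corrections are supported on the sets
\[
L_\e=\{u<\ell:\phi^*(u)>\gamma-\e\},\qquad R_\e=\{u\ge r:\phi^*(u)<\gamma+\e\}.
\]
Since $\phi^*\equiv\gamma$ only on $(\ell,r)$ near that value, right-continuity and monotonicity give $|L_\e|\to0$ and $|R_\e|\to0$. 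There is one subtlety: $\phi^*$ might equal $\gamma$ on a larger maximal interval. To handle it, I would first enlarge $(\ell,r)$ to the maximal flat interval of value $\gamma$ (still inside $(0,b)$ because $\gamma<1$). The correction sets then genuinely shrink.

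The resulting $\phi_\e$ is nondecreasing, right-continuous, and takes values in $[0,1]$ for small $\e$ (as $\gamma<1$). On the support of $\eta_\e=\phi_\e-\phi^*$, we have $\phi^*\in(\gamma-\e,\gamma+\e)\subset[\gamma/2,1]$.

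\textbf{First-order comparison.} Apply Lemma~\ref{lem:gateaux}(ii) with $\phi=\phi_\e$, which is legitimate since $\phi^*\ge\gamma/2$ on the support of $\eta_\e$. Optimality requires
\[
\int_0^1 h\,\eta_\e\le0 .
\]
Decompose this integral into three pieces:
\begin{itemize}
\item the main block contributes exactly $\e D$;
\item the boundary corrections contribute at most $\e\,\|h\|_{L^\infty([\gamma/2\text{-region}])}\,(|L_\e|+|R_\e|)=o(\e)$, since $|\eta_\e|\le\e$ there.
\end{itemize}
For the second bound, $h$ is bounded on $[0,r']$ for any $r'<1$: $A_{\phi^*}$ is finite there, and $q'$ is bounded on $[\gamma/2,1]$ by continuity and positivity of $c''$. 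The set $R_\e$ lies in $[r,b)$, which is bounded away from $1$.

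Hence $\int h\,\eta_\e\ge \e D-o(\e)>0$ for small $\e$, a contradiction. So no such flat block exists.

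\textbf{Main obstacle.} The delicate step is the $o(\e)$ control of the boundary corrections: one must argue that $|L_\e|,|R_\e|\to0$, which is exactly why I pass to the maximal flat interval, and that $h$ is uniformly bounded on those sets. The last point needs care near $u=b$ only if $b$ is close to $1$, and there the explicit bound $A_{\phi^*}(b)\le \bar q\,(-\ln(1-b))<\infty$ settles it.
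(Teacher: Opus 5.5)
Your argument is correct and is essentially the paper's own proof: you pass to the maximal flat block, split it at the midpoint, use Lemma~\ref{lem:gprime-positive-on-flat} to get the first-order gain $\varepsilon D>0$, and show the boundary corrections are $o(\varepsilon)$ because $h$ is bounded where $\phi^*\ge\gamma/2$ and the correction sets shrink. Your uniform treatment also covers $\ell=0$, which the paper handles separately.

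There is one slip in the competitor's definition. For $u<\ell$ it must be $\min\{\gamma-\varepsilon,\phi^*(u)\}$, not $\max$. The $\max$ version can break monotonicity at $\ell$, and it moves $\phi^*$ by a non-small amount on $\{u<\ell:\phi^*(u)<\gamma-\varepsilon\}$. Your stated support $L_\varepsilon$ shows the $\min$ version is what you intended.
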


\begin{proof}
Suppose, for contradiction, that there exist $\gamma\in(0,1)$ and a maximal open interval
$(\ell,r)\subset[0,b)$ such that
$\phi^*(u)=\gamma$ for all $u\in(\ell,r)$.
Let $h$ denote the representative of $H_k[\phi^*]$ from Lemma~\ref{lem:gprime-positive-on-flat}. Since $h'(u)>0$ for a.e.\ 
$u\in(\ell,r)$, the function $h$ is strictly increasing on $(\ell,r)$.

Set
\[
m:=\frac{\ell+r}{2},
\qquad
D:=\int_m^r h(u)\,du-\int_\ell^m h(u)\,du.
\]
Because $h$ is strictly increasing on $(\ell,r)$, we have $D>0$.

Fix $0<\varepsilon<\min\{\gamma/2,\;1-\gamma\}$, and define
\[
\phi_\varepsilon(u):=
\begin{cases}
\min\{\phi^*(u),\gamma-\varepsilon\},& u<m,\\
\max\{\phi^*(u),\gamma+\varepsilon\},& u\ge m.
\end{cases}
\]
Then $\phi_\varepsilon\in\Phi$.

We first dispose of the case \(\ell=0\), in which case we let \(m:=r/2\). Now, \(\phi_\varepsilon-\phi^*\) is supported on a set on which \(\phi^*\ge \gamma/2\), for all sufficiently small \(\varepsilon\). Hence the variational inequality from Lemma \ref{lem:gateaux} applies. Let \(h:=H_k[\phi^*]\). On \((0,r)\), Lemma \ref{lem:gprime-positive-on-flat} implies that \(h\) is strictly increasing. Therefore
\[
\int_0^r h(u)(\phi_\varepsilon(u)-\phi^*(u))\,du
=
\varepsilon\left(\int_m^r h(u)\,du-\int_0^m h(u)\,du\right)>0 .
\]
Outside \((0,r)\), the perturbation is nonzero only on
$E_\varepsilon:=\{u\ge r:\phi^*(u)<\gamma+\varepsilon\}$.
Since \((0,r)\) is a maximal flat interval at level \(\gamma\), monotonicity of \(\phi^*\) implies
$\lambda(E_\varepsilon)\to0$ as $\varepsilon\downarrow0$. Also \(h\) is bounded on \(E_\varepsilon\) for all sufficiently small \(\varepsilon\). Hence
$\int_{E_\varepsilon} h(u)(\phi_\varepsilon(u)-\phi^*(u))\,du=o(\varepsilon)$. Thus, for all sufficiently small \(\varepsilon\),
$\int_0^1 h(u)(\phi_\varepsilon(u)-\phi^*(u))\,du>0$
contradicting the variational inequality. Hence \(\ell>0\).

Now define
\[
a_\varepsilon:=\sup\bigl(\{u\le \ell:\phi^*(u)\le \gamma-\varepsilon\}\cup\{0\}\bigr),
\qquad
b_\varepsilon:=\inf\bigl(\{u\ge r:\phi^*(u)\ge \gamma+\varepsilon\}\cup\{1\}\bigr),
\]
and $E_\varepsilon:=(a_\varepsilon,\ell)\cup(r,b_\varepsilon)$.
By monotonicity of $\phi^*$ and maximality of $(\ell,r)$, $a_\varepsilon\uparrow \ell$, $b_\varepsilon\downarrow r$ as $\varepsilon\downarrow0$.

On the flat block we have
\[
\phi_\varepsilon(u)-\phi^*(u)=
\begin{cases}
-\varepsilon,& u\in(\ell,m),\\
+\varepsilon,& u\in[m,r),
\end{cases}
\]
hence
\[
\int_\ell^r h(u)\bigl(\phi_\varepsilon(u)-\phi^*(u)\bigr)\,du
=
-\varepsilon\int_\ell^m h(u)\,du+\varepsilon\int_m^r h(u)\,du
=
\varepsilon D.
\]

Also, whenever $\phi_\varepsilon(u)\neq \phi^*(u)$, we necessarily have $\phi^*(u)\in(\gamma-\varepsilon,\gamma+\varepsilon)\subset(\gamma/2,1]$.
Since $\varepsilon<\gamma/2$, Lemma \ref{lem:gateaux}(ii) applies to the competitor $\phi=\phi_\varepsilon$ with
$\delta=\gamma/2$, and yields
\[
0\ge \int_0^1 h(u)\bigl(\phi_\varepsilon(u)-\phi^*(u)\bigr)\,du.
\]

It remains to estimate the contribution on $E_\varepsilon$. Choose $\rho>0$ such that $[\ell-\rho,r+\rho]\subset(0,1)$.
Since $a_\varepsilon\uparrow \ell$ and $b_\varepsilon\downarrow r$, for all sufficiently small
$\varepsilon$ we have $E_\varepsilon\subset(\ell-\rho,r+\rho)$. Fix such an $\varepsilon$. For every $u\in E_\varepsilon$ we have $\phi^*(u)\in(\gamma-\varepsilon,\gamma+\varepsilon)\subset[\gamma/2,1]$. Define
\[
M_0:=q(1)\int_0^{r+\rho}\frac{ds}{1-s}<\infty,
\qquad
M_1:=\sup_{z\in[\gamma/2,1]}|q'(z)|<\infty.
\]
Then for every $u\in E_\varepsilon$,
\[
A_{\phi^*}(u)=\int_0^u \frac{q(\phi^*(s))}{1-s}\,ds
\le q(1)\int_0^{r+\rho}\frac{ds}{1-s}
=M_0.
\]
Also $0\le Q_{\phi^*}(u)\le 1$, $0\le \phi^*(u)\le 1$. Since $\phi^*(u)>\gamma/2>0$ on $E_\varepsilon$, the explicit formula for $h$ gives, for a.e.\ 
$u\in E_\varepsilon$,
$|h(u)|\le kM_0+kM_1+|1-2k|q(1)=:M$.
Thus, $M<\infty$ is independent of $\varepsilon$ for all sufficiently small $\varepsilon$.

Using $|\phi_\varepsilon-\phi^*|\le \varepsilon$ on $E_\varepsilon$, we obtain
\[
\left|\int_{E_\varepsilon} h(u)\bigl(\phi_\varepsilon(u)-\phi^*(u)\bigr)\,du\right|
\le
\varepsilon\int_{E_\varepsilon}|h(u)|\,du
\le
\varepsilon M |E_\varepsilon|.
\]
But $|E_\varepsilon|=(\ell-a_\varepsilon)+(b_\varepsilon-r)\to 0$ as $\varepsilon\downarrow0$, so
\[
\int_{E_\varepsilon} h(u)\bigl(\phi_\varepsilon(u)-\phi^*(u)\bigr)\,du=o(\varepsilon).
\]

Combining the block term and the remainder term,
\[
\int_0^1 h(u)\bigl(\phi_\varepsilon(u)-\phi^*(u)\bigr)\,du
=
\varepsilon D+o(\varepsilon).
\]
Since $D>0$, the right-hand side is strictly positive for all sufficiently small $\varepsilon>0$,
contradicting
\[
0\ge \int_0^1 h(u)\bigl(\phi_\varepsilon(u)-\phi^*(u)\bigr)\,du.
\]
The contradiction proves that no such interval $(\ell,r)$ can exist.
\end{proof}

An immediate corollary of our results so far is that the optimal $\phi^*$ must be strictly increasing on $(0,b)$.

\begin{corollary}[Strict increase and interior region]\label{cor:strict}
Let $k\in(1/2,1]$ and $\phi^*$ be any maximizer of $J_k$ on $\Phi$. Let $b$ be its cutoff.
Then $\phi^*$ is strictly increasing on $(0,b)$ and satisfies $0<\phi^*(u)<1$ for all $u\in(0,b)$ and $\phi^*(u)=1$ for all $u\in[b,1]$.
\end{corollary}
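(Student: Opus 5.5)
The plan is to assemble Corollary~\ref{cor:strict} from Lemma~\ref{lem:no-prefix}, Proposition~\ref{prop:tail} and Proposition~\ref{prop:no-pooling-VI}, handling the three claims in turn.

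\textbf{Step 1: the tail.} The identity $\phi^*\equiv 1$ on $[b,1]$ follows from the definition of $b$ as an infimum, combined with right-continuity of $\phi^*$. Proposition~\ref{prop:tail} guarantees that the defining set is nonempty and that $b\in(0,1)$. If $\phi^*(t)=1$ for all $t\in(b,1]$, right-continuity at $b$ gives $\phi^*(b)=1$ as well.

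\textbf{Step 2: the bounds $0<\phi^*<1$ on $(0,b)$.} Positivity is exactly Lemma~\ref{lem:no-prefix}. For the upper bound, suppose $\phi^*(u_0)=1$ for some $u_0\in(0,b)$. Since $\phi^*$ is nondecreasing with values in $[0,1]$, this gives $\phi^*(t)=1$ for every $t\ge u_0$. Then $u_0$ belongs to the set whose infimum defines $b$, so $b\le u_0$, a contradiction.

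\textbf{Step 3: strict monotonicity on $(0,b)$.} Since $\phi^*$ is nondecreasing, a failure of strict increase means $\phi^*(u_1)=\phi^*(u_2)=:\gamma$ for some $0<u_1<u_2<b$. Monotonicity then forces $\phi^*\equiv\gamma$ on $[u_1,u_2]$, and Step 2 gives $\gamma\in(0,1)$.

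\textbf{Main obstacle.} The only real subtlety is that Proposition~\ref{prop:no-pooling-VI} is phrased for a \emph{maximal} open flat interval. I would handle this by enlarging $(u_1,u_2)$ to the maximal open interval on which $\phi^*=\gamma$, namely the interior of the level set $\{\phi^*=\gamma\}$. This level set is an interval because $\phi^*$ is monotone. Since $\gamma<1$, the enlarged interval stays inside $[0,b)$: every $u\ge b$ has $\phi^*(u)=1\neq\gamma$. Proposition~\ref{prop:no-pooling-VI}, including its $\ell=0$ case, then yields a contradiction. This establishes strict increase and completes the proof.
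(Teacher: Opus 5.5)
Your proposal is correct and follows essentially the same route as the paper. Positivity comes from Lemma~\ref{lem:no-prefix}, the top tail and $\phi^*<1$ below $b$ come from Proposition~\ref{prop:tail} together with the definition of the cutoff, and strict increase follows by ruling out a flat block at a level $\gamma\in(0,1)$ via Proposition~\ref{prop:no-pooling-VI}. Your enlargement of the flat block to the maximal interval on which $\phi^*=\gamma$ (which stays inside $[0,b)$ because $\gamma<1$) just makes explicit a step the paper leaves implicit.
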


\begin{proof}
Lemma~\ref{lem:no-prefix} gives $\phi^*>0$ on $(0,1)$ and Proposition~\ref{prop:tail} gives $\phi^*(u)=1$ on $[b,1]$ and $\phi^*(u)<1$ for $u<b$.
If $\phi^*$ were not strictly increasing on $(0,b)$, then there exist $x<y$ in $(0,b)$ with $\phi^*(x)=\phi^*(y)=:\gamma$.
Monotonicity implies $\phi^*(u)=\gamma$ for all $u\in(x,y)$, and Lemma~\ref{lem:no-prefix} and $y<b$ imply $\gamma\in(0,1)$.
By Lemma~\ref{lem:gprime-positive-on-flat}, the corresponding VI density $h=H_k[\phi^*]$ has $h'(u)>0$ a.e.\ on $(x,y)$,
so Proposition~\ref{prop:no-pooling-VI} rules out such a flat block. Hence $\phi^*$ is strictly increasing on $(0,b)$.
\end{proof}


Up until now we have proved part 1 and part 2(i) of Theorem \ref{thm:main-structure}. We now proceed to part 2(ii). The next Proposition establishes that once we know the optimizer has no bunching on the interior region
(so $\phi^*$ is strictly increasing there), the monotonicity constraint is locally slack enough to force the pointwise first-variation density to vanish a.e. on the interior block.

\begin{proposition}\label{prop:EL-from-no-pooling}
Fix $k\in(1/2,1]$ and let $\phi^*\in\Phi$ maximize $J_k$ over $\Phi$.
Let $b\in(0,1)$ be its cutoff, so that $\phi^*(u)=1$ for all $u\in[b,1]$.
Assume moreover (by Corollary~\ref{cor:strict}) that $\phi^*$ is strictly increasing on $(0,b)$ and
satisfies $0<\phi^*(u)<1$ for all $u\in(0,b)$.
Let $H_k[\phi^*](u)$
be the first-variation density defined in \eqref{eq:Gk}.
Then $H_k[\phi^*](u)=0$ for a.e. $u\in(0,b)$.
\end{proposition}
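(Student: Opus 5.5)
The plan is to exploit the fact that, once $\phi^*$ is strictly increasing with values in $(0,1)$ on $(0,b)$, we can perturb it \emph{in both directions} through a reparametrization of its values rather than of its argument. This way monotonicity is preserved automatically, even though $\phi^*$ may have jumps or a singular part. Concretely, fix $0<x<y<b$. Then $\phi^*(x)>0$ by Lemma~\ref{lem:no-prefix} and $\phi^*(y)<1$ by Corollary~\ref{cor:strict}. Choose $\delta>0$ with $[\phi^*(x)-\delta,\phi^*(y)+\delta]\subset(0,1)$. For any Lipschitz $g:\mathbb R\to\mathbb R$ supported in this compact interval, define
\[
\phi_t(u):=\phi^*(u)+t\,g(\phi^*(u)),\qquad |t|<1/\mathrm{Lip}(g).
\]
The map $z\mapsto z+tg(z)$ is continuous and strictly increasing, and it fixes $[0,1]$ outside $\operatorname{supp} g$. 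Hence $\phi_t$ is nondecreasing, right-continuous, $[0,1]$-valued, i.e.\ $\phi_t\in\Phi$ for both signs of $t$. Moreover, $\eta=g(\phi^*)$ is supported where $\phi^*\ge \phi^*(x)-\delta>0$, and since $\phi^*$ is monotone this set is contained in $[x',1)$ for some $x'>0$.

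Lemma~\ref{lem:gateaux}(i) then applies to $t\downarrow0$ with direction $\eta$ and also with direction $-\eta$. Optimality gives both right derivatives $\le 0$, hence
\[
\int_0^1 H_k[\phi^*](u)\,g(\phi^*(u))\,du=0 .
\]
Note that $\operatorname{supp}\eta$ stays away from the tail, since $g$ vanishes near $1$. It therefore lies in a compact subinterval $[x',y']\subset(0,b)$ on which $\phi^*\in[\delta',1)$. On such an interval $H_k[\phi^*]$ is bounded: $A_{\phi^*}\le \bar q\int_0^{y'}\frac{ds}{1-s}$, and $q'$ is bounded on $[\delta',1]$. This is the same bound used in the proof of Proposition~\ref{prop:no-pooling-VI}.

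Next I pass from these test functions to indicators. Take Lipschitz $g_n\uparrow \mathbf 1_{(\phi^*(x),\phi^*(y))}$ pointwise, all supported in the fixed $\delta$-neighborhood. Dominated convergence, using boundedness of $H_k[\phi^*]$ on the relevant compact $u$-interval, gives
\[
\int H_k[\phi^*](u)\,\mathbf 1\{\phi^*(x)<\phi^*(u)<\phi^*(y)\}\,du=0.
\]
Because $\phi^*$ is strictly increasing on $(0,b)$ and equals $1>\phi^*(y)$ on $[b,1]$, the set $\{u:\phi^*(x)<\phi^*(u)<\phi^*(y)\}$ coincides with the interval $(x,y)$ up to its endpoints. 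Jumps of $\phi^*$ only remove values from the image, not points from the preimage. Hence $\int_x^y H_k[\phi^*]=0$ for all $0<x<y<b$. Since $H_k[\phi^*]$ is locally integrable on $(0,b)$, the Lebesgue differentiation theorem yields $H_k[\phi^*]=0$ a.e.\ on $(0,b)$.

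The main obstacle is choosing admissible two-sided perturbations. Naive additive perturbations $\phi^*+t\eta$ with arbitrary smooth $\eta$ may break monotonicity when $\phi^*$ is merely strictly increasing, for instance if it is singular with zero derivative a.e. The value-reparametrization $z\mapsto z+tg(z)$ circumvents this. What must be checked carefully is the identification of the preimage set with $(x,y)$, and the uniform bound on $H_k[\phi^*]$ that justifies the limit in $n$.
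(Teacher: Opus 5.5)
Your proof is correct, and it takes a different route from the paper's.

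\textbf{The paper's route.} The paper never builds a two-sided admissible direction. It uses two one-sided additive perturbations: raise $\phi^*$ by $\varepsilon$ on $[t,b)$ (capped at $1$), and lower it by $\varepsilon$ on $[t,b)$ (clipping $\phi^*$ at $\phi^*(t)-\varepsilon$ to the left of $t$). Each goes into the variational inequality of Lemma~\ref{lem:gateaux}(ii). This yields $\int_t^b H_k[\phi^*]\le 0$ and $\int_t^b H_k[\phi^*]\ge 0$, but only after showing that the clipping errors on $[\tau_\varepsilon,b)$ and on $(a_\varepsilon,t)$ are $o(\varepsilon)$. Strict monotonicity enters there, through $a_\varepsilon\uparrow t$.

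\textbf{Your route.} Your reparametrization $z\mapsto z+t\,g(z)$ is admissible for both signs of $t$. This buys three things:
\begin{itemize}
\item You get the exact identity $\int_0^1 H_k[\phi^*]\,g(\phi^*)\,du=0$, with no remainder to control.
\item The behavior of $\phi^*$ near $b$, including a possible jump there, is irrelevant, because $g$ vanishes near $1$.
\item Strict monotonicity is used only once, to identify $\{\phi^*(x)<\phi^*<\phi^*(y)\}$ with $(x,y)$. The identity itself does not need it.
\end{itemize}
In exchange, the paper's argument uses only the elementary additive shifts that recur throughout its appendix, and it works directly with tail integrals up to $b$.

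\textbf{Two harmless slips.} First, the set $\{\phi^*\ge\phi^*(x)-\delta\}$ need not start at some $x'>0$. Second, if $\phi^*$ jumps at $b$, the set $\{\phi^*\le\phi^*(y)+\delta\}$ can reach all the way up to $b$. Neither matters, because the bound on $H_k[\phi^*]$ only needs two facts:
\begin{itemize}
\item $\phi^*\in[\phi^*(x)-\delta,\phi^*(y)+\delta]\subset(0,1)$ on the support of $\eta$;
\item $A_{\phi^*}\le\bar q\,(-\ln(1-b))$ on $[0,b)$.
\end{itemize}
Alternatively, you can shrink $\delta$ using strict monotonicity so that both sets sit in a compact subinterval of $(0,b)$.
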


\begin{proof}
Fix $t\in(0,b)$. By Corollary~\ref{cor:strict}, $\phi^*(t)\in(0,1)$.
Choose $\varepsilon_0\in(0,1)$ such that
$0<\varepsilon_0<\phi^*(t)/2$ and $\varepsilon_0<1-\phi^*(t)$.
Fix $\varepsilon\in(0,\varepsilon_0)$ throughout, and set $\underline t:=t/2\in(0,t)$.

Since $b<1$, we have $1-u\ge 1-b>0$ for all $u\in[0,b]$.
On $[0,b]$ we also have $0\le q(\phi^*(u))\le q(1)<\infty$.
Hence, for $u\in[0,b]$,
\[
A_{\phi^*}(u)=\int_0^u \frac{q(\phi^*(s))}{1-s}\,ds
\le \frac{1}{1-b}\int_0^b q(\phi^*(s))\,ds
\le \frac{b\,q(1)}{1-b}<\infty.
\]
Moreover, on $[\underline t,b]$ we have $\phi^*(u)\ge \phi^*(\underline t)>0$, so $q'(\phi^*(u))$
is bounded on $[\underline t,b]$ (since $q'$ is continuous on any compact subset of $(0,1]$).
Since also $0\le Q_{\phi^*},\phi^*\le 1$ on $[0,1]$, each term in $h=H_k[\phi^*]$ (cf.\ \eqref{eq:Gk})
is bounded on $[\underline t,b]$. In particular, since $h\in L^1([\underline t,b])$ it follows that $h\in L^1([t,b])$ and $h\in L^1([\underline t,t])$.

We first prove
$\int_t^b h(u)\,du\le 0$.
Fix $\varepsilon\in(0,\varepsilon_0)$ and define
\[
\phi_\varepsilon^+(u):=
\begin{cases}
\phi^*(u),& u<t,\\
\min\{\phi^*(u)+\varepsilon,1\},& u\ge t.
\end{cases}
\]
Then $\phi_\varepsilon^+\in\Phi$.

Define $\tau_\varepsilon:=\inf\{u\in[t,b]:\phi^*(u)\ge 1-\varepsilon\}\in[t,b]$.
This is well-defined because $\phi^*(b)=1$. By right-continuity of $\phi^*$, $\phi^*(\tau_\varepsilon)\ge 1-\varepsilon$,
and since $\phi^*(u)<1$ for every $u<b$, we have $\tau_\varepsilon\uparrow b$ as $\varepsilon\downarrow0$.

By construction,
\[
\phi_\varepsilon^+(u)-\phi^*(u)=
\begin{cases}
0,& u<t,\\
\varepsilon,& t\le u<\tau_\varepsilon,\\
1-\phi^*(u)\in[0,\varepsilon],& \tau_\varepsilon\le u<b,\\
0,& u\ge b.
\end{cases}
\]
Since $\phi^*(u)\ge \phi^*(t)>0$ for all $u\ge t$, Lemma \ref{lem:gateaux}(ii) applies to the competitor
$\phi=\phi_\varepsilon^+$ with $\delta:=\phi^*(t)/2$. Hence
\[
0\ge \int_0^1 h(u)\bigl(\phi_\varepsilon^+(u)-\phi^*(u)\bigr)\,du
=
\varepsilon\int_t^{\tau_\varepsilon} h(u)\,du
+
\int_{\tau_\varepsilon}^b h(u)\bigl(1-\phi^*(u)\bigr)\,du.
\]
Using $0\le 1-\phi^*(u)\le \varepsilon$ on $[\tau_\varepsilon,b)$, we get
\[
0\ge
\varepsilon\int_t^{\tau_\varepsilon} h(u)\,du
-
\varepsilon\int_{\tau_\varepsilon}^b |h(u)|\,du.
\]
Divide by $\varepsilon>0$:
\[
\int_t^{\tau_\varepsilon} h(u)\,du
\le
\int_{\tau_\varepsilon}^b |h(u)|\,du.
\]
Letting $\varepsilon\downarrow0$, and using $h\in L^1([t,b])$ together with
$\tau_\varepsilon\uparrow b$, yields
$\int_t^b h(u)\,du\le 0$.

We now prove $\int_t^b h(u)\,du\ge 0$.
Fix again $\varepsilon\in(0,\varepsilon_0)$ and define
\[
\phi_\varepsilon^-(u):=
\begin{cases}
\min\{\phi^*(u),\phi^*(t)-\varepsilon\},& u<t,\\
\phi^*(u)-\varepsilon,& t\le u<b,\\
\phi^*(u),& u\ge b.
\end{cases}
\]
Then $\phi_\varepsilon^-\in\Phi$ and is $[0,1]$-valued because $\varepsilon<\phi^*(t)$.

Let $L_\varepsilon:=\{u\in[0,t):\phi^*(u)>\phi^*(t)-\varepsilon\}$.
On $L_\varepsilon$ we have
$\phi_\varepsilon^-(u)-\phi^*(u)=\phi^*(t)-\varepsilon-\phi^*(u)\in[-\varepsilon,0]$,
while on $[t,b)$ we have $\phi_\varepsilon^-(u)-\phi^*(u)=-\varepsilon$, and outside $L_\varepsilon\cup[t,b)$ the difference is zero.

Since $\varepsilon\le \varepsilon_0<\phi^*(t)/2$, on the support of
$\phi_\varepsilon^- - \phi^*$ we have
$\phi^*(u)\ge \phi^*(t)-\varepsilon\ge \phi^*(t)/2>0$.
Thus Lemma~\ref{lem:gateaux}(ii) applies to the competitor $\phi=\phi_\varepsilon^-$ with
$\delta:=\phi^*(t)/2$, giving
\[
0\ge \int_0^1 h(u)\bigl(\phi_\varepsilon^-(u)-\phi^*(u)\bigr)\,du
=
-\varepsilon\int_t^b h(u)\,du
+
\int_{L_\varepsilon} h(u)\bigl(\phi^*(t)-\varepsilon-\phi^*(u)\bigr)\,du.
\]
Hence
\[
\int_t^b h(u)\,du
\ge
\frac{1}{\varepsilon}\int_{L_\varepsilon}
h(u)\bigl(\phi^*(t)-\varepsilon-\phi^*(u)\bigr)\,du.
\]
Since
$\bigl|\phi^*(t)-\varepsilon-\phi^*(u)\bigr|\le \varepsilon$ for all $u\in L_\varepsilon$, we obtain
\[
\left|
\frac{1}{\varepsilon}\int_{L_\varepsilon}
h(u)\bigl(\phi^*(t)-\varepsilon-\phi^*(u)\bigr)\,du
\right|
\le
\int_{L_\varepsilon}|h(u)|\,du.
\]

Now define
\[
a_\varepsilon:=\sup\bigl(\{u\le t:\phi^*(u)\le \phi^*(t)-\varepsilon\}\cup\{0\}\bigr)\le t.
\]
By monotonicity, $L_\varepsilon\subset(a_\varepsilon,t)$.
By strict increase of $\phi^*$ on $(0,b)$, we have $a_\varepsilon\uparrow t$ as $\varepsilon\downarrow0$.
Indeed, if $\delta\in(0,t)$, then $\phi^*(t-\delta)<\phi^*(t)$; hence for every $0<\varepsilon<\phi^*(t)-\phi^*(t-\delta)$
we have $t-\delta\in\{u\le t:\phi^*(u)\le \phi^*(t)-\varepsilon\}$, so $a_\varepsilon\ge t-\delta$.

Therefore, for all sufficiently small $\varepsilon$, $a_\varepsilon\ge \underline t$ and $L_\varepsilon\subset(a_\varepsilon,t)\subset[\underline t,t]$. Since $h\in L^1([\underline t,t])$,
\[
\int_{L_\varepsilon}|h(u)|\,du
\le
\int_{a_\varepsilon}^{t}|h(u)|\,du
\longrightarrow 0
\qquad\text{as }\varepsilon\downarrow0.
\]
Letting $\varepsilon\downarrow0$ yields
$\int_t^b h(u)\,du\ge 0$.

Combining the two inequalities, we obtain $\int_t^b h(u)\,du=0$ for all $t\in(0,b)$.

Now fix any $s\in(0,b)$. Since $\phi^*(s)>0$ and $\phi^*$ is nondecreasing, the same boundedness
argument as above gives $h\in L^1([s,b])$.
Define
\[
F_s(t):=\int_t^b h(u)\,du,
\qquad t\in[s,b].
\]
Then $F_s$ is absolutely continuous on $[s,b]$ and satisfies $F_s'(t)=-h(t)$ for a.e. $t\in(s,b)$. But $F_s(t)\equiv 0$ on $[s,b]$, so $h(t)=0$ for a.e. $t\in(s,b)$. Finally, take any sequence $s_n\downarrow0$ (for instance $s_n=b/n$). Since $(0,b)=\bigcup_{n\ge1}(s_n,b)$, it follows that $h(u)=0$ for a.e. $u\in(0,b)$.
\end{proof}

In order to reduce the Euler-Lagrange-type equation of Proposition \ref{prop:EL-from-no-pooling} and prove the uniqueness of the solution to the upstream actor's problem, in addition to Assumption \ref{ass:cost}, from now on, we also maintain Assumption \ref{ass:cost-c3-unique}.


Let $\lambda:=(2k-1)/k\in(0,1]$. First, under Assumption \ref{ass:cost-c3-unique}, we can prove that $\phi^*(0)>0.$

\begin{lemma}\label{lem:endpoint-positivity}
Fix $k\in(1/2,1]$ and let $\phi^*\in\Phi$ maximize $J_k$ over $\Phi$. Let $b\in(0,1)$ be the cutoff, so that $\phi^*(u)=1$ for all $u\in[b,1]$. Then $\phi^*(0)>0$.
\end{lemma}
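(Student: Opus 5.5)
The plan is to argue by contradiction, using the Euler--Lagrange identity of Proposition~\ref{prop:EL-from-no-pooling} near $u=0$. Suppose $\phi^*(0)=0$. By right-continuity and Corollary~\ref{cor:strict}, $\phi^*(u)\downarrow 0$ as $u\downarrow 0$, with $0<\phi^*(u)<1$ on $(0,b)$. Hence $q(\phi^*(u))>0$ on $(0,b)$ and $q(\phi^*(u))\to 0$. On the other hand, $\phi^*\equiv 1$ on $[b,1]$ gives
\[
Q_{\phi^*}(0)=\int_0^1\phi^*\ge 1-b>0,
\]
so the wedge $Q_{\phi^*}(u)-\phi^*(u)$ stays bounded below by $(1-b)/2$ for $u$ small. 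Dividing the Euler--Lagrange condition $H_k[\phi^*](u)=0$, valid for a.e.\ $u\in(0,b)$, by $k\,q(\phi^*(u))>0$ gives, with $\lambda=(2k-1)/k$,
\[
\frac{A_{\phi^*}(u)}{q(\phi^*(u))}+\bigl(Q_{\phi^*}(u)-\phi^*(u)\bigr)\frac{q'(\phi^*(u))}{q(\phi^*(u))}=\lambda .
\]

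The first term is nonnegative; in fact it tends to zero, because $\phi^*$ is nondecreasing and so $A_{\phi^*}(u)\le u\,q(\phi^*(u))/(1-u)$. We only need nonnegativity, though. It yields, for a.e.\ small $u$,
\[
\frac{q'(\phi^*(u))}{q(\phi^*(u))}\le \frac{2\lambda}{1-b}=:C .
\]
So the logarithmic derivative of $q$ is bounded along the values $z=\phi^*(u)$, which accumulate at $0$.

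The contradiction should come from showing that $q'(z)/q(z)\to+\infty$ as $z\downarrow 0$. Writing $z=c'(x)$ with $x=q(z)$, we have
\[
\frac{z\,q'(z)}{q(z)}=\frac{c'(x)}{x\,c''(x)},
\]
the elasticity of the quality schedule. It therefore suffices to bound $c'(x)/(x c''(x))$ below by a positive constant for small $x$; then $q'(z)/q(z)\ge \kappa/z\to\infty$, which is incompatible with the bound $C$ along $z=\phi^*(u)\downarrow 0$.

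\textbf{Main obstacle.} The key step, and the one I expect to be hardest, is extracting this elasticity lower bound from Assumption~\ref{ass:cost-c3-unique}. The assumption says $(x^2c''(x))'=x(2c''+xc''')>0$, i.e.\ $x^2c''(x)$ is strictly increasing, and $(xc''+c')'=2c''+xc'''>0$. I would first try to combine these with $c'(x)=\int_0^x c''(s)\,ds$ and a Grönwall-type comparison to bound $x c''(x)/c'(x)$ above near $0$. For the constant-elasticity case the ratio equals $1/(\eta-1)$ exactly.

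If that route fails, the fallback avoids pointwise elasticity. Integrate $q'/q\le C$ over the range of $\phi^*$ near $0$, which would give $\ln q(z_0)-\ln q(z)\le C(z_0-z)$ and so contradict $q(0+)=0$. For this I would need to handle possible jumps of $\phi^*$. I would rule them out by a local perturbation: at a jump of $\phi^*$ inside $(0,b)$, filling in values between the left and right limits should give a positive first variation under Lemma~\ref{lem:gateaux}, analogous to the no-bunching argument of Proposition~\ref{prop:no-pooling-VI}.
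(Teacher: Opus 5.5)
Your reduction matches the paper's. Assuming $\phi^*(0)=0$, the wedge $Q_{\phi^*}-\phi^*$ stays above $(1-b)/2$ near $0$, and dropping $A_{\phi^*}\ge 0$ from $H_k[\phi^*]=0$ gives $q'(\phi^*(u))/q(\phi^*(u))\le C$ for a.e.\ small $u$. The gap is in how you propose to contradict this bound.

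Your main route needs $c'(x)/(x\,c''(x))\ge\kappa>0$ near $0$, and this does not follow from Assumption~\ref{ass:cost-c3-unique}. The assumption only says that $h(x):=x^2c''(x)$ is strictly increasing, and the ratio equals $x\int_0^x h(s)s^{-2}\,ds/h(x)$. Take $h$ smooth and strictly increasing but nearly flat at level $2^{-m^2}$ on $(2^{-m-1},2^{-m}]$, apart from a steep rise just to the right of $2^{-m-1}$. Right after each rise the ratio is of order $2^{-2m}$ plus the relative width of the rise, so its $\liminf$ is zero and no Gr\"onwall comparison can produce $\kappa$.

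You do not need the elasticity at all. Since $q'(z)/q(z)=1/\bigl(x\,c''(x)\bigr)$ with $x=q(z)$, you only need $x\,c''(x)\to 0$. This follows from the monotonicity of $x^2c''(x)$ that you already wrote down: for $s\in[x,2x]$, $c''(s)\ge c''(x)\,x^2/s^2\ge c''(x)/4$. Hence $x\,c''(x)\le 4\bigl(c'(2x)-c'(x)\bigr)\to 0$, because $c'(0+)=0$. This is exactly the paper's Lemma~\ref{lem:mc_flatten_bound}. With it, $q'(z)/q(z)\to\infty$ as $z\downarrow 0$, which contradicts your bound along $z=\phi^*(u)\downarrow 0$ without any regularity of $\phi^*$.

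Your fallback is sound once $\phi^*$ is known to be continuous near $0$: the bound then holds on a whole interval $(0,z_0]$, and integrating $q'/q\le C$ contradicts $q(0+)=0$. But your plan for excluding jumps cannot work. At that point you are already using $H_k[\phi^*]=0$ a.e.\ on $(0,b)$. So by Lemma~\ref{lem:gateaux}, every perturbation supported in a compact subset of $(0,b)$ has first variation exactly zero. Filling in a jump therefore never produces a strictly positive first variation, unlike the flat-block case, where $H_k[\phi^*]$ was strictly increasing.

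Continuity can instead come from the pointwise argument of Lemma~\ref{lem:sec-unique-smooth-fit}. Both one-sided limits at a jump satisfy the same Euler--Lagrange identity $Q=H(A,x)$, and $z\mapsto H(A,z)$ is strictly increasing by Lemma~\ref{lem:H-monotone-curvature}. Imported that way, the fallback becomes a valid alternative ending, but as written neither of your routes closes the argument.
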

\begin{proof} 
We first show the following result.

\begin{lemma}\label{lem:mc_flatten_bound}
Maintain Assumption \ref{ass:cost-c3-unique}. Then $\lim_{q\downarrow 0} q\,c''(q)=0$.
Equivalently,
\[
\lim_{\phi\downarrow 0}\frac{q(\phi)}{dq/d\phi(\phi)}=0.
\]
\end{lemma}

\begin{proof}
Set $\beta:=(3k-1)/(2k-1)>0$. By Assumption \ref{ass:cost-c3-unique},
$-\,q\,c'''(q)/c''(q)<\beta$ for all $q\in(0,\bar q]$. Equivalently,
$\frac{d}{dq}\log c''(q)>-\beta/q$.
Fix \(q\in(0,\bar q/2]\) and \(s\in[q,2q]\). Integrating from \(q\) to \(s\) gives $\log c''(s)-\log c''(q)
>-\beta\log (s/q)$,
hence $c''(s)\ge c''(q)\Big(q/s\Big)^\beta \ge 2^{-\beta}c''(q)$.
Therefore,
\[
c'(2q)-c'(q)=\int_q^{2q} c''(s)\,ds
\ge q\,2^{-\beta}c''(q).
\]
Since \(c'(0)=0\) and \(c'\) is continuous at \(0\), the left-hand side tends to \(0\) as \(q\downarrow0\). Hence $q\,c''(q)\to 0$.

Finally, because \(q(\phi)\downarrow 0\) as \(\phi\downarrow0\) and $\frac{dq}{d\phi}(\phi)=1/c''(q(\phi))$, we get
\[
\frac{q(\phi)}{dq/d\phi(\phi)}
= q(\phi)c''(q(\phi))\to 0.
\]
\end{proof}

\textbf{Proof of Lemma \ref{lem:endpoint-positivity}}
Suppose, for contradiction, that \(\phi^*(0)=0\).
Since \(\phi^*\in\Phi\) is right-continuous and nondecreasing, \(\phi^*(u)\downarrow0\) as \(u\downarrow0\).

For every \(u\in[0,b)\),
\[
Q(u)=\frac{1}{1-u}\int_u^1 \phi^*(t)\,dt
\ge \frac{1}{1-u}\int_b^1 1\,dt
=\frac{1-b}{1-u}
\ge 1-b.
\]
Hence there exists \(\varepsilon_1\in(0,b)\) such that $\phi^*(u)\le (1-b)/2$ for all $u\in[0,\varepsilon_1]$.
Therefore, $Q(u)-\phi^*(u)\ge (1-b)/2$ for all $u\in[0,\varepsilon_1]$.

By the auxiliary lemma,
\[
\frac{q(z)}{dq/d\phi(z)}\to 0
\qquad\text{as }z\downarrow 0.
\]
Choose \(\delta>0\) such that
\[
(2k-1)q(z)\le \frac{k(1-b)}{4}\frac{dq}{d\phi}(z)
\qquad\forall z\in(0,\delta].
\]
Since \(\phi^*(u)\downarrow0\) as \(u\downarrow0\), there exists \(\varepsilon_2\in(0,\varepsilon_1]\) such that $\phi^*(u)\le \delta$ for all $u\in[0,\varepsilon_2]$.

For every \(u\in(0,\varepsilon_2)\), we have \(A_{\phi^*}(u)\ge0\), so
\[
H_k[\phi^*](u)
=
kA_{\phi^*}(u)
+k(Q(u)-\phi^*(u))\frac{dq}{d\phi}(\phi^*(u))
+(1-2k)q(\phi^*(u))
\]
\[
\ge
k\frac{1-b}{2}\frac{dq}{d\phi}(\phi^*(u))
-(2k-1)q(\phi^*(u))
\ge
k\frac{1-b}{4}\frac{dq}{d\phi}(\phi^*(u))
>0.
\]
This contradicts Proposition \ref{prop:EL-from-no-pooling}, which states that $H_k[\phi^*](u)=0$ for a.e. $u\in(0,b)$. Therefore \(\phi^*(0)>0\).
\end{proof}


Next, recall the notation we use:
\begin{equation}\label{eq:sec-unique-x-def}
 x(u)=q(\phi^*(u)),
 \quad
 A(u)=\int_0^u \frac{x(s)}{1-s}\,ds,
 \quad
 Q(u)=\frac{1}{1-u}\int_u^1 \phi^*(t)\,dt,
 \quad u\in[0,b].
\end{equation}
Since $0\le x\le q(1)=\bar q$, the function $A$ is continuous on $[0,b]$.
Since $0\le \phi^*\le 1$, the function $Q$ is continuous on $[0,b]$.

By Proposition~\ref{prop:EL-from-no-pooling}, $\phi^*$ satisfies the Euler-Lagrange equality
\begin{equation}\label{eq:sec-unique-EL-ae}
 kA(u)+k(Q(u)-\phi^*(u))q'(\phi^*(u))+(1-2k)q(\phi^*(u))=0
 \qquad\text{for a.e. }u\in(0,b).
\end{equation}
Because $\phi^*(u)>0$ on $(0,b)$, we obtain
\[
 x(u)>0,
 \qquad
 \phi^*(u)=c'(x(u)),
 \qquad
 q'(\phi^*(u))=\frac{1}{c''(x(u))}
 \qquad\text{for every }u\in(0,b).
\]
Hence \eqref{eq:sec-unique-EL-ae} is equivalent to
\begin{equation}\label{eq:sec-unique-EL-x-ae}
 A(u)+\frac{Q(u)-c'(x(u))}{c''(x(u))}-\lambda x(u)=0
 \qquad\text{for a.e. }u\in(0,b),
\end{equation}
or, equivalently,
\begin{equation}\label{eq:sec-unique-psi-H-ae}
 Q(u)=c'(x(u))+c''(x(u))(\lambda x(u)-A(u))
 \qquad\text{for a.e. }u\in(0,b).
\end{equation}
Since $Q(u)\ge \phi^*(u)=c'(x(u))$ for every $u\in(0,b)$, \eqref{eq:sec-unique-psi-H-ae} implies
$ \lambda x(u)-A(u)\ge 0$ for a.e. $u\in(0,b)$

Define the auxiliary function
\begin{equation}\label{eq:sec-unique-H-def}
 H(A,z):=c'(z)+c''(z)(\lambda z-A),
 \qquad (A,z)\in \mathbb R\times[0,\infty).
\end{equation}
 Equation \ref{eq:sec-unique-psi-H-ae} above says that the current induced value $Q(u)$ must equal the current marginal cost plus a wedge determined by the upstream actor's current desire to sustain rents. Since $Q(u)\ge c'(x(u))$, the upstream actor chooses an interior branch with $\lambda x(u)-A(u)\ge 0$. We start by showing that under Assumption \ref{ass:cost-c3-unique}, the map $z\longmapsto H(A,z)$
is strictly increasing.

\begin{lemma}\label{lem:H-monotone-curvature}
Fix $k>1/2$, so $\lambda>0$ and maintain Assumption \ref{ass:cost-c3-unique}. Let $I\subset(0,\bar q]$ be an interval and let $A\ge 0$ satisfy
$A\le \lambda z$ for all $z\in I.$
Then the map $z\longmapsto H(A,z)$
is strictly increasing on $I$. Equivalently,
$\partial_z H(A,z)>0$ for all $z\in I$.
\end{lemma}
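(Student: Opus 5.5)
The plan is to compute $\partial_z H(A,z)$ directly, bound the term involving $c'''$ using the constraint $0\le \lambda z-A\le \lambda z$, and then invoke Assumption~\ref{ass:cost-c3-unique}. Since $c$ is $C^3$ on $(0,\infty)$ by Assumption~\ref{ass:cost-c3-unique}, $H(A,\cdot)$ is differentiable on $I\subset(0,\bar q]$. Differentiating \eqref{eq:sec-unique-H-def} in $z$ gives
\[
\partial_z H(A,z)=c''(z)+c'''(z)(\lambda z-A)+\lambda c''(z)=(1+\lambda)c''(z)+c'''(z)(\lambda z-A).
\]
It therefore suffices to show that this expression is strictly positive at every $z\in I$. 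Strict monotonicity on the interval $I$ then follows from the mean value theorem.

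Fix $z\in I$ and set $w:=\lambda z-A$. The hypotheses $A\ge0$ and $A\le\lambda z$ give $0\le w\le \lambda z$. I split according to the sign of $c'''(z)$.

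\emph{Case $c'''(z)\ge 0$.} Since $w\ge0$, we get $\partial_zH\ge(1+\lambda)c''(z)>0$, because $c''>0$ on $(0,\bar q]$ by Assumption~\ref{ass:cost}.

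\emph{Case $c'''(z)<0$.} Here the term $c'''(z)w$ is smallest when $w=\lambda z$, so $c'''(z)w\ge \lambda z\,c'''(z)$. Hence
\[
\partial_zH(A,z)\ge c''(z)\Bigl[(1+\lambda)-\lambda\Bigl(-z\frac{c'''(z)}{c''(z)}\Bigr)\Bigr].
\]
This bracket is positive exactly when $-z c'''(z)/c''(z)<(1+\lambda)/\lambda$.

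The only real step is to check that the curvature bound delivers this inequality. With $\lambda=(2k-1)/k$ we get
\[
\frac{1+\lambda}{\lambda}=\frac{3k-1}{2k-1}.
\]
For $k\in(1/2,1]$ this quantity is at least $2$, as noted in the footnote to Assumption~\ref{ass:cost-c3-unique}; at $k=1$ it equals $2$, and it is larger for smaller $k$. Assumption~\ref{ass:cost-c3-unique} gives $-zc'''(z)/c''(z)<2\le(3k-1)/(2k-1)$, so the bracket is strictly positive.

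Combining the two cases, $\partial_zH(A,z)>0$ for every $z\in I$, which proves the lemma.
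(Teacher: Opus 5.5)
Your proof is correct and follows the paper's argument essentially verbatim: you compute $\partial_z H(A,z)=(1+\lambda)c''(z)+(\lambda z-A)c'''(z)$, split on the sign of $c'''(z)$, and in the case $c'''(z)<0$ bound $(\lambda z-A)c'''(z)\ge \lambda z\,c'''(z)$ before invoking the curvature condition. Your explicit check that $(1+\lambda)/\lambda=(3k-1)/(2k-1)\ge 2$ on $(1/2,1]$, and your attribution of the bound $\lambda z-A\le\lambda z$ to $A\ge 0$, spell out steps the paper leaves implicit.
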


\begin{proof}
Fix $\lambda>0$, $A\ge 0$, and $z\in I$. We have
\[
\partial_z H(A,z)
=
(1+\lambda)c''(z)+(\lambda z-A)c'''(z).
\]

If $c'''(z)\ge 0$, then
$\partial_z H(A,z)\ge (1+\lambda)c''(z)>0$.

If $c'''(z)<0$, then $A\le \lambda z$ implies
$(\lambda z-A)c'''(z)\ge \lambda z\,c'''(z)$,
hence 
\[\partial_z H(A,z)
\ge
(1+\lambda)c''(z)+\lambda z\,c'''(z)
=
c''(z)\left(1+\lambda-\lambda\left(-z\frac{c'''(z)}{c''(z)}\right)\right).
\]
By assumption, $-\,z\,c'''(z)/c''(z)<1+1/\lambda$, so
\[
1+\lambda-\lambda\left(-z\frac{c'''(z)}{c''(z)}\right)
>
1+\lambda-\lambda\left(1+\frac{1}{\lambda}\right)
=
1-\frac{\lambda}{\lambda}= 0.
\]
The inequality is strict, and $c''(z)>0$, so again $\partial_z H(A,z)>0$.

Thus $\partial_z H(A,z)>0$ for every $z\in I$, and so $z\mapsto H(A,z)$ is strictly increasing on $I$.
\end{proof}

This is a free-boundary problem because the endpoint $b$ is not fixed ex ante. The interior branch ends exactly when the optimizer reaches the efficient top tail $\phi^*=1$. In Lemma \ref{lem:sec-unique-smooth-fit} below, we prove that at the cutoff $b$, the optimal $\phi^*$ does not jump and we identify the terminal conditions that must hold. Moreover, we establish that the function $x=q\circ \phi^*$ is continuous.

\begin{lemma}\label{lem:sec-unique-smooth-fit}
We have $\phi^*(b^-)=1.$ Consequently, $ x(b^-)=\bar q$ and $A(b)=\lambda\bar q.$ Moreover, the function $x(\cdot)$ is continuous on $(0,b)$.
\end{lemma}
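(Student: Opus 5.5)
The plan is to exploit two facts already established: $h:=H_k[\phi^*]$ vanishes a.e.\ on $(0,b)$ (Proposition~\ref{prop:EL-from-no-pooling}), and $A$ and $Q$ are continuous on $[0,b]$. Let $\alpha:=\phi^*(b^-)$, which exists by monotonicity. We have $\alpha\in(0,1]$ by Corollary~\ref{cor:strict} and Lemma~\ref{lem:endpoint-positivity}. Along a full-measure set of $u\uparrow b$, pass to the limit in the Euler--Lagrange equality \eqref{eq:sec-unique-EL-ae}. This is legitimate because $\phi^*(u)\to\alpha$, and $q$ and $q'$ are continuous on compact subsets of $(0,1]$ (here $q'(\phi)=1/c''(q(\phi))$ with $c''>0$ on $(0,\bar q]$). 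It also uses $Q(b)=1$. The limit gives the left-limit identity
\[
kA(b)+k(1-\alpha)q'(\alpha)+(1-2k)q(\alpha)=0 .
\]

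\textbf{Step 1: $\alpha=1$.} On the tail $[b,1)$ we have $Q=\phi^*=1$, so $h(u)=kA(u)+(1-2k)\bar q$ there, and its right limit at $b$ is $kA(b)+(1-2k)\bar q$. Subtracting the left-limit identity gives
\[
h(b^+)=(1-2k)\bigl(\bar q-q(\alpha)\bigr)-k(1-\alpha)q'(\alpha).
\]
If $\alpha<1$, this is strictly negative, because $k>1/2$, $q(\alpha)<\bar q$ and $q'(\alpha)>0$. By continuity of $A$, $h<0$ on some $[b,b+\varepsilon)$. Now consider the competitor that lowers $\phi^*$ to the level $\alpha$ on $[b,b+\varepsilon)$, i.e.\ $\eta=-(1-\alpha)\mathbf 1_{[b,b+\varepsilon)}$.
\begin{itemize}
\item It is feasible: monotonicity is preserved since $\phi^*\le\alpha$ on $[0,b)$.
\item Lemma~\ref{lem:gateaux}(ii) applies with $\delta=\alpha$, since $\phi^*=1\ge\alpha$ on the support of $\eta$.
\end{itemize}
The resulting first variation is $-(1-\alpha)\int_b^{b+\varepsilon}h>0$, contradicting the variational inequality. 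Hence $\alpha=1$ and $x(b^-)=q(1)=\bar q$. Plugging $\alpha=1$ into the left-limit identity gives $kA(b)+(1-2k)\bar q=0$, i.e.\ $A(b)=\lambda\bar q$.

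\textbf{Step 2: continuity of $x$ on $(0,b)$.} Since $x=q\circ\phi^*$ and $q$ is continuous, it suffices to rule out a jump of $\phi^*$ at some $t\in(0,b)$. Suppose $\alpha:=\phi^*(t^-)<\beta:=\phi^*(t^+)$, with both values in $(0,1]$ by Lemma~\ref{lem:endpoint-positivity}. Passing to one-sided limits in \eqref{eq:sec-unique-psi-H-ae} along full-measure sets, and using continuity of $A$ and $Q$, yields
\[
Q(t)=H(A(t),q(\alpha))=H(A(t),q(\beta)).
\]
Likewise, passing to the limit in the a.e.\ inequality $\lambda x(u)-A(u)\ge0$ from the left gives $A(t)\le\lambda q(\alpha)$. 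Hence $A(t)\le\lambda z$ for every $z\in I:=[q(\alpha),q(\beta)]\subset(0,\bar q]$, and $A(t)\ge0$. Lemma~\ref{lem:H-monotone-curvature} then says $z\mapsto H(A(t),z)$ is strictly increasing on $I$, contradicting the equality above since $q(\alpha)<q(\beta)$. So $\phi^*$, and hence $x$, is continuous on $(0,b)$.

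The main technical point is justifying the one-sided limits of the a.e.\ identities. I expect to handle this by taking limits along full-measure sequences, using monotone one-sided limits of $\phi^*$ together with the uniform continuity of $q$ and $q'$ on $[\phi^*(t/2),1]$. In Step 1, the same boundedness is what justifies using continuity of $h$ on the tail to get strict negativity on a whole interval $[b,b+\varepsilon)$.
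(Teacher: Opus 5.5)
Your proposal is correct and is essentially the paper's proof. The paper likewise passes to the limit in the Euler--Lagrange identity as $u\uparrow b$, obtaining $A(b)=\lambda q(p)-(1-p)q'(p)<\lambda\bar q$ when $p<1$, and uses the downward tail perturbation $-\mathbf 1_{[b,b+\delta)}$ to obtain $A(b)\ge\lambda\bar q$; this is your computation $h(b^+)=k\bigl(A(b)-\lambda\bar q\bigr)<0$ arranged in the opposite order. Continuity of $x$ on $(0,b)$ is handled exactly as you do, via one-sided limits of \eqref{eq:sec-unique-psi-H-ae}, the bound $A\le\lambda x^-$, and the strict monotonicity from Lemma~\ref{lem:H-monotone-curvature}.
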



\begin{proof}
Set $ p:=\phi^*(b^-)=\lim_{u\uparrow b}\phi^*(u)\in(0,1]$. We first prove that $p=1$.

Assume for contradiction that $p<1$.
Choose $\tau>0$ so small that $ p<1-\tau<1$. Fix any $\delta\in(0,1-b)$ and define $ \eta_\delta(u):=-\mathbf 1_{[b,b+\delta)}(u)$ and $\phi_t(u):=\phi^*(u)+t\eta_\delta(u)$ for $t\in[0,\tau]$.
Because $\phi^*(u)=1$ for every $u\in[b,1]$ and $1-\tau>p$, the perturbed profile $\phi_t$ is still right-continuous, nondecreasing, and $[0,1]$-valued for every $t\in[0,\tau]$. Thus $\phi_t\in\Phi$ for every $t\in[0,\tau]$.

The support of $\eta_\delta$ lies in $[b,b+\delta)$, where $\phi^*=1$; in particular, $\phi^*$ is bounded away from $0$ on the support of $\eta_\delta$.
Therefore the one-sided directional-derivative formula established in Lemma \ref{lem:gateaux} applies and yields
\[
 \frac{d}{dt}J_k(\phi_t)\Big|_{t=0^+}
 =
 \int_0^1 \eta_\delta(u)
 \Big(kA(u)+k(Q(u)-\phi^*(u))q'(\phi^*(u))+(1-2k)q(\phi^*(u))\Big)\,du.
\]
Since on $[b,b+\delta)$ we have $\phi^*(u)=1$ and $Q(u)=1$, this becomes
\[
 \frac{d}{dt}J_k(\phi_t)\Big|_{t=0^+}
 =-
 \int_b^{b+\delta} \Big(kA(u)+(1-2k)\bar q\Big)\,du
 =-k\int_b^{b+\delta}\bigl(A(u)-\lambda\bar q\bigr)\,du.
\]
Optimality of $\phi^*$ implies the one-sided derivative is nonpositive, hence
\[
 \int_b^{b+\delta}\bigl(A(u)-\lambda\bar q\bigr)\,du\ge 0.
\]
Divide by $\delta$ and let $\delta\downarrow 0$.
Since $A$ is continuous, we obtain  \begin{equation}\label{eq:sec-unique-Ab-lower}
 A(b)\ge \lambda\bar q.
\end{equation}

We now derive the opposite strict inequality from the interior Euler-Lagrange equation.
Because \eqref{eq:sec-unique-EL-ae} holds on a set of full measure in $(0,b)$, there exists a sequence $u_n\uparrow b$ such that \eqref{eq:sec-unique-EL-x-ae} holds at every $u_n$.
Evaluating \eqref{eq:sec-unique-EL-x-ae} at $u_n$ gives
\[
 A(u_n)+\frac{Q(u_n)-c'(x(u_n))}{c''(x(u_n))}-\lambda x(u_n)=0.
\]
As $n\to\infty$, monotonicity of $\phi^*$ yields $\phi^*(u_n)\to p$, hence continuity of $q$ gives $x(u_n)\to q(p)$.
Also $A(u_n)\to A(b)$ and $Q(u_n)\to 1$ by continuity of $A$ and $Q$.
Therefore
\[
 A(b)+\frac{1-c'(q(p))}{c''(q(p))}-\lambda q(p)=0.
\]
Since $c'(q(p))=p$, this is
\begin{equation}\label{eq:sec-unique-Ab-identity-p}
 A(b)=\lambda q(p)-(1-p)q'(p).
\end{equation}
Because $p<1$, we have $q(p)<q(1)=\bar q$ and $q'(p)>0$.
Hence $ A(b)=\lambda q(p)-(1-p)q'(p)<\lambda q(p)<\lambda\bar q$,
contradicting \eqref{eq:sec-unique-Ab-lower}.
Thus, $p=1$.

Therefore, $x(b^-)=q(1)=\bar q$.
Taking $p=1$ in \eqref{eq:sec-unique-Ab-identity-p} gives $A(b)=\lambda q(1)=\lambda\bar q$.
For the continuity of $x(\cdot)$, fix $u_0\in(0,b)$.
Since $x$ is monotone, the one-sided limits
$ x^-:=\lim_{u\uparrow u_0}x(u)$ and $x^+:=\lim_{u\downarrow u_0}x(u)$
exist and satisfy $0<x^-\le x^+<\bar q$.
Assume for contradiction that $x^-<x^+$.

Because \eqref{eq:sec-unique-EL-ae} holds on a set of full measure in $(0,b)$, there exist sequences $ u_n^-\uparrow u_0$ and $u_n^+\downarrow u_0$ such that \eqref{eq:sec-unique-psi-H-ae} holds at every $u_n^-$ and $u_n^+$.
Thus, $ Q(u_n^-)=H(A(u_n^-),x(u_n^-))$ and $Q(u_n^+)=H(A(u_n^+),x(u_n^+))$.
Passing to the limit and using continuity of $A$ and $Q$, we obtain
\begin{equation}\label{eq:sec-unique-H-equality-lr}
 Q(u_0)=H(A(u_0),x^-)=H(A(u_0),x^+).
\end{equation}

We next show that the map $z\mapsto H(A(u_0),z)$ is strictly increasing on $[x^-,x^+]$.
From \eqref{eq:sec-unique-H-equality-lr} at $z=x^-$ we get
\[
\lambda x^- -A(u_0)=\frac{Q(u_0)-c'(x^-)}{c''(x^-)}\ge 0,
\]
because $Q(u_0)\ge \phi^*(u_0^-)=c'(x^-)$.
Hence for every $z\in[x^-,x^+]$, $A(u_0)\le \lambda x^- \le \lambda z$.
Therefore Lemma~\ref{lem:H-monotone-curvature} implies that $z\longmapsto H(A(u_0),z)$ is strictly increasing on $[x^-,x^+]$.
This contradicts \eqref{eq:sec-unique-H-equality-lr}, since a strictly increasing function cannot take the same value at two distinct points.
Therefore $x^-=x^+$, and $x$ is continuous at $u_0$.
\end{proof}

Therefore, the optimizer is characterized not only by a local first-order equation, but by an interior state system together with endpoint conditions that determine the stopping point $b$ itself. We can now prove the fact that the Euler--Lagrange optimality condition reduces to a pointwise identity.

\begin{corollary}\label{cor:sec-unique-pointwise-EL}
The function $x$ extends continuously to $[0,b]$ by setting $x(b):=\bar q$.
Moreover,
\begin{equation}\label{eq:sec-unique-pointwise-EL}
 Q(u)=H(A(u),x(u))=c'(x(u))+c''(x(u))(\lambda x(u)-A(u))
 \quad\text{for every }u\in[0,b].
\end{equation}
In particular,
\begin{equation}\label{eq:sec-unique-A-le-lambdax-pointwise}
 A(u)\le \lambda x(u)
 \qquad\text{for every }u\in[0,b].
\end{equation}
\end{corollary}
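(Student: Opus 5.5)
The plan is to upgrade the almost-everywhere identity \eqref{eq:sec-unique-psi-H-ae} to a pointwise one by continuity, and then handle the two endpoints separately.

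\emph{Extension of $x$ to $[0,b]$.} By Lemma~\ref{lem:sec-unique-smooth-fit}, $x$ is continuous on $(0,b)$ and $x(b^-)=\bar q$. Setting $x(b):=\bar q$ therefore gives continuity at $b$. At $u=0$, right-continuity of $\phi^*$ and continuity of $q$ give $x(0)=q(\phi^*(0))=\lim_{u\downarrow0}x(u)$. Lemma~\ref{lem:endpoint-positivity} gives $\phi^*(0)>0$, so $x(0)>0$, and $c''(x(0))$ is finite and positive. Note also that $x(b)=\bar q=q(\phi^*(b))$, since $\phi^*(b)=1$, so the extension agrees with the original definition of $x$. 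Hence $x$ is continuous on all of $[0,b]$ with values in $(0,\bar q]$.

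\emph{The identity on $[0,b]$.} The map $u\mapsto H(A(u),x(u))=c'(x(u))+c''(x(u))(\lambda x(u)-A(u))$ is continuous on $[0,b]$. This uses continuity of $A$, of $x$, and of $c',c''$ on $(0,\bar q]$; the last holds because $c$ is $C^2$ on $(0,\infty)$ under Assumption~\ref{ass:cost} and $x$ stays in a compact subset of $(0,\bar q]$. The map $Q$ is also continuous on $[0,b]$. Two continuous functions that agree on a set of full measure in $(0,b)$ agree on its closure $[0,b]$, which yields \eqref{eq:sec-unique-pointwise-EL}. As a sanity check at $u=b$: there $Q(b)=1$ and the right-hand side equals $c'(\bar q)+c''(\bar q)(\lambda\bar q-A(b))=1$, by $A(b)=\lambda\bar q$ from Lemma~\ref{lem:sec-unique-smooth-fit}.

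\emph{The inequality.} For $u\in[0,b)$, $Q(u)=\frac{1}{1-u}\int_u^1\phi^*\ge\phi^*(u)$ by monotonicity of $\phi^*$, and $\phi^*(u)=c'(x(u))$ because $0<\phi^*(u)\le 1$. Thus $c''(x(u))(\lambda x(u)-A(u))\ge0$, and since $c''>0$, this gives $A(u)\le\lambda x(u)$. At $u=b$ the inequality holds with equality.

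The only delicate point is the endpoint $u=0$. Continuity of $c''\circ x$ there requires $x(0)>0$, which is exactly what Lemma~\ref{lem:endpoint-positivity}, and through it Assumption~\ref{ass:cost-c3-unique}, provides. Everything else is routine continuity.
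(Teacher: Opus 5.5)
Your proposal is correct and follows essentially the same route as the paper: you extend $x$ continuously to $[0,b]$ via Lemma~\ref{lem:sec-unique-smooth-fit} and right-continuity at $0$, upgrade the a.e.\ identity \eqref{eq:sec-unique-psi-H-ae} to an everywhere identity by continuity, and read off $A\le\lambda x$ from $Q\ge\phi^*=c'(x)$. The only difference is at $u=0$: you invoke Lemma~\ref{lem:endpoint-positivity} to get $x(0)>0$, which makes explicit the continuity of $c''\circ x$ at $0$ that the paper leaves implicit and gives $c''(x(0))>0$, whereas the paper handles the inequality at $u=0$ by noting $A(0)=0$.
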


\begin{proof}
Continuity of $x$ on $(0,b)$ and continuity at $b$ follow from Lemma~\ref{lem:sec-unique-smooth-fit}. Continuity at $0$ follows from right-continuity of $\phi^*$ and continuity of $q$.

Since $A$ and $x$ are continuous on $[0,b]$, the function $H$ is continuous on $[0,b]$.
By \eqref{eq:sec-unique-psi-H-ae}, we have $H(u)=Q(u)$ for a.e. $u\in(0,b)$.
Hence the continuous function $H-Q$ vanishes a.e. on $(0,b)$, and therefore vanishes identically on $[0,b]$.
This proves \eqref{eq:sec-unique-pointwise-EL}.

Finally, since $Q(u)\ge \phi^*(u)=c'(x(u))$ for every $u\in[0,b]$, \eqref{eq:sec-unique-pointwise-EL} implies
$ c''(x(u))(\lambda x(u)-A(u))=Q(u)-c'(x(u))\ge 0$.
Because $c''(x(u))>0$ for $u\in(0,b]$ and the inequality is trivial at $u=0$ (where $A(0)=0$), we obtain \eqref{eq:sec-unique-A-le-lambdax-pointwise}.
\end{proof}

The next step is to prove that $Q$ is absolutely continuous on $[0,b]$.

\begin{lemma}\label{lem:sec-unique-psi-AC}
The function $Q$ is absolutely continuous on $[0,b]$.
More precisely,
\[
 Q(u)=\frac{1}{1-u}\Bigl(\int_u^b \phi^*(t)\,dt +1-b\Bigr)
 \qquad (u\in[0,b]),
\]
and for a.e. $u\in(0,b)$,
\begin{equation}\label{eq:sec-unique-psi-prime-basic}
 Q'(u)=\frac{Q(u)-\phi^*(u)}{1-u}.
\end{equation}
\end{lemma}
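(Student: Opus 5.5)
The plan is to read everything off the integral definition of $Q$ together with the tail structure from Proposition~\ref{prop:tail}. Since $\phi^*(t)=1$ for all $t\in[b,1]$, we have $\int_b^1\phi^*(t)\,dt=1-b$. Splitting $\int_u^1\phi^*(t)\,dt=\int_u^b\phi^*(t)\,dt+\int_b^1\phi^*(t)\,dt$ in the definition \eqref{eq:sec-unique-x-def} of $Q$ then gives the displayed representation
\[
Q(u)=\frac{1}{1-u}\Bigl(\int_u^b \phi^*(t)\,dt +1-b\Bigr),\qquad u\in[0,b].
\]

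For absolute continuity, I will write $Q=N\cdot D$ on $[0,b]$, where $N(u):=\int_u^b\phi^*(t)\,dt+1-b$ and $D(u):=1/(1-u)$. The function $N$ is an indefinite integral of the bounded measurable function $\phi^*$ (which lies in $[0,1]$ and is monotone, hence Borel). It is therefore Lipschitz with constant $1$, hence absolutely continuous, and by the Lebesgue differentiation theorem $N'(u)=-\phi^*(u)$ for a.e.\ $u$. Since $b<1$, the factor $D$ is $C^1$ on $[0,b]$ with $D'(u)=1/(1-u)^2$, and it is bounded by $1/(1-b)$. A product of two absolutely continuous functions on a compact interval is absolutely continuous; here in fact both factors are Lipschitz, so $Q$ is Lipschitz on $[0,b]$.

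The product rule then holds a.e.:
\[
Q'(u)=\frac{-\phi^*(u)}{1-u}+\frac{N(u)}{(1-u)^2}=\frac{Q(u)-\phi^*(u)}{1-u},
\]
using $N(u)/(1-u)=Q(u)$. This gives \eqref{eq:sec-unique-psi-prime-basic}.

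There is no real obstacle here. The only point needing care is that the factor $1/(1-u)$ stays bounded, which is exactly why the statement restricts to $[0,b]$ with $b<1$ (Proposition~\ref{prop:tail}). On all of $[0,1)$ one still gets local absolute continuity, but not uniform control near $1$.
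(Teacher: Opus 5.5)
Your proof is correct and matches the paper's argument: split off the tail $\int_b^1\phi^*(t)\,dt=1-b$, observe that $\int_u^b\phi^*(t)\,dt+1-b$ is absolutely continuous with derivative $-\phi^*$ a.e., and apply the product/quotient rule with the $C^1$ factor $(1-u)^{-1}$ on $[0,b]$, which is where $b<1$ is used. Your remark that $Q$ is in fact Lipschitz on $[0,b]$ is a harmless strengthening of the stated conclusion.
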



\begin{proof}
For $u\in[0,b]$, since $\phi^*(t)=1$ for all $t\in[b,1]$, we have
\[
 Q(u)=\frac{1}{1-u}\int_u^1 \phi^*(t)\,dt
 =\frac{1}{1-u}\Bigl(\int_u^b \phi^*(t)\,dt +1-b\Bigr).
\]
Define
\[
 F(u):=\int_u^b \phi^*(t)\,dt +1-b,
 \qquad u\in[0,b].
\]
Because $\phi^*$ is bounded on $[0,b]$, the function $F$ is absolutely continuous on $[0,b]$ and $ F'(u)=-\phi^*(u)$ for a.e. $u\in(0,b)$. Since the function $u\mapsto (1-u)^{-1}$ is $C^1$ on $[0,b]$, the product/quotient rule for absolutely continuous functions yields that $ Q(u)=F(u)/(1-u)$ is absolutely continuous on $[0,b]$, and for a.e. $u\in(0,b)$,
\[
 Q'(u)=\frac{F'(u)(1-u)+F(u)}{(1-u)^2}
 =\frac{-\phi^*(u)(1-u)+F(u)}{(1-u)^2}
 =\frac{Q(u)-\phi^*(u)}{1-u}.
\]
This proves \eqref{eq:sec-unique-psi-prime-basic}.
\end{proof}


We proceed by showing that there exist open neighborhoods on which we can apply the Implicit Function Theorem. This will then be used in proving the absolute continuity of the optimal solution $\phi^*$.

\begin{lemma}[Local implicit parameterization]\label{lem:sec-unique-local-implicit}
Fix $u_0\in(0,b]$ and set
$A_0:=A(u_0)$, $x_0:=x(u_0)$ and $Q_0:=Q(u_0)$. Then there exist neighborhoods $U_0\subset\mathbb R$ of $A_0$, $V_0\subset\mathbb R$ of $Q_0$, $I_0\subset(0,\infty)$ of $x_0$, a relative open interval $J_0$ of $u_0$ in $(0,b]$, and a $C^1$ map $X_0:U_0\times V_0\to I_0$
such that:
\begin{enumerate}
\item[(i)] for every $(A,Q)\in U_0\times V_0$,
$ H(A,X_0(A,Q))=Q$;
\item[(ii)] for every $u\in J_0$, $x(u)=X_0(A(u),Q(u))$;
\item[(iii)] there exists $L_0>0$ such that
\begin{equation}\label{eq:sec-unique-local-lip-from-implicit}
 |x(u)-x(v)|\le L_0\bigl(|A(u)-A(v)|+|Q(u)-Q(v)|\bigr)
 \qquad\text{for all }u,v\in J_0.
\end{equation}
\end{enumerate}
\end{lemma}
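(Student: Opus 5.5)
The plan is to apply the implicit function theorem to $F(A,Q,z):=H(A,z)-Q$ at the point $(A_0,Q_0,x_0)$. This point is admissible because Corollary~\ref{cor:sec-unique-pointwise-EL} gives $H(A_0,x_0)=Q_0$. Since $u_0\in(0,b]$ and $\phi^*(u_0)>0$ (Lemma~\ref{lem:no-prefix}, with $x(b)=\bar q$), we have $x_0\in(0,\bar q]$. Assumption~\ref{ass:cost-c3-unique} makes $c$ of class $C^3$ on $(0,\infty)$, so $H$ is $C^1$ in $(A,z)$ near $(A_0,x_0)$. Hence $F$ is $C^1$ in a neighborhood that stays inside $(0,\infty)$ in the $z$-variable.

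The key step is the nondegeneracy condition $\partial_z H(A_0,x_0)>0$. By Corollary~\ref{cor:sec-unique-pointwise-EL}, $A_0\le\lambda x_0$ and $A_0\ge 0$. Applying Lemma~\ref{lem:H-monotone-curvature} on the degenerate interval $I=\{x_0\}$ is not quite enough, so I would use its proof directly, which is a pointwise computation: $\partial_zH(A,z)=(1+\lambda)c''(z)+(\lambda z-A)c'''(z)>0$ whenever $0\le A\le\lambda z$ and $z\in(0,\bar q]$.

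The one delicate point is $u_0=b$, where $x_0=\bar q$ sits at the edge of the range in Assumption~\ref{ass:cost-c3-unique}. There I would rely on the fact that $c''$ and $c'''$ are continuous on $(0,\overline Q]$ with $\bar q<\overline Q$. Continuity then makes $\partial_z H(A_0,x_0)$ a well-defined positive number, and positivity persists on a small open neighborhood of $(A_0,x_0)$ by continuity. This is the main technical care needed, together with the observation that the implicit function theorem only requires $C^1$ and a nonzero partial derivative at the base point.

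The IFT then yields neighborhoods $U_0,V_0,I_0$ and a $C^1$ map $X_0$ with $H(A,X_0(A,Q))=Q$, and $X_0(A,Q)$ is the unique solution in $I_0$. This gives (i). I would shrink $U_0\times V_0$ to a compact-closure box on which $\partial_z H\ge\kappa>0$. Then $\nabla X_0=(-\partial_AH,1)/\partial_zH$ is bounded, with $|\partial_A H|=c''(z)$ bounded on $I_0$. This gives a Lipschitz constant $L_0$ for $X_0$ on the (convex) box.

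For (ii), I would use continuity of $A$, $Q$ (Lemma~\ref{lem:sec-unique-psi-AC}) and of $x$ on $[0,b]$ (Corollary~\ref{cor:sec-unique-pointwise-EL}) to choose a relatively open interval $J_0\ni u_0$ in $(0,b]$ on which $(A(u),Q(u))\in U_0\times V_0$ and $x(u)\in I_0$. Since $x(u)$ solves $H(A(u),z)=Q(u)$ in $I_0$ by the pointwise identity, uniqueness from the IFT forces $x(u)=X_0(A(u),Q(u))$. Finally, (iii) follows by composing with the Lipschitz bound on $X_0$.
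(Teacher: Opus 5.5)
Your proposal is correct and follows the paper's proof essentially step for step. Like the paper, you apply the implicit function theorem to $F(A,Q,z)=H(A,z)-Q$ at $(A_0,Q_0,x_0)$, get nondegeneracy $\partial_zH(A_0,x_0)>0$ from the pointwise curvature computation of Lemma~\ref{lem:H-monotone-curvature} using $A_0\le\lambda x_0$ from Corollary~\ref{cor:sec-unique-pointwise-EL}, and obtain (ii) from IFT uniqueness plus continuity of $A,Q,x$. Your extra care at $u_0=b$ is harmless but unnecessary, since Assumption~\ref{ass:cost-c3-unique} already covers $z=\bar q$ and the conclusion $\partial_zH>0$ is pointwise; likewise, your explicit gradient bound on a convex box is just a concrete version of the paper's ``$C^1$, hence locally Lipschitz after shrinking $J_0$'' step for (iii).
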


\begin{proof}
If $u_0<b$, Corollary~\ref{cor:strict} gives $x_0=q(\phi^*(u_0))>0$. If $u_0=b$, Lemma~\ref{lem:sec-unique-smooth-fit} gives
$x_0=x(b)=\bar q>0$. Thus $x_0>0$ in all cases. By Corollary~\ref{cor:sec-unique-pointwise-EL}, $H(A_0,x_0)=Q_0$ and $\lambda x_0-A_0\ge 0$.
By Lemma~\ref{lem:H-monotone-curvature},
$\partial_z H(A_0,x_0)
=
(1+\lambda)c''(x_0)+(\lambda x_0-A_0)c'''(x_0)>0$.

Apply the implicit function theorem to
$F(A,Q,z):=H(A,z)-Q$. It yields neighborhoods $U_0$ of $A_0$, $V_0$ of $Q_0$, $I_0$ of $x_0$, and a unique
$C^1$ map $X_0:U_0\times V_0\to I_0$ such that $H(A,X_0(A,Q))=Q$ for all $(A,Q)\in U_0\times V_0$. This proves part~(i).

Since $A$, $Q$, and $x$ are continuous at $u_0$, after shrinking the neighborhoods if
necessary there exists a relative open interval $J_0\subset(0,b]$ containing $u_0$ such that $A(J_0)\subset U_0$, $Q(J_0)\subset V_0$, and $x(J_0)\subset I_0$.
For every $u\in J_0$, Corollary~\ref{cor:sec-unique-pointwise-EL} gives $H(A(u),x(u))=Q(u)$.
By uniqueness in the implicit function theorem, $x(u)=X_0(A(u),Q(u))$ for all $u\in J_0$. This proves part~(ii).

Finally, $X_0$ is $C^1$, hence locally Lipschitz. Shrinking $J_0$ once more so that
$(A(J_0),Q(J_0))$ has compact closure in $U_0\times V_0$, there exists $L_0>0$ such
that $|x(u)-x(v)|
\le
L_0\bigl(|A(u)-A(v)|+|Q(u)-Q(v)|\bigr)$ for all $u,v\in J_0$.
This proves part~(iii).
\end{proof}


This allows us to obtain:

\begin{proposition}[Absolute continuity of the optimizer]\label{prop:sec-unique-AC}
The function $x=q\circ\phi^*$ is absolutely continuous on $[0,b]$.
Consequently, $\phi^*$ is absolutely continuous on $[0,1]$.
\end{proposition}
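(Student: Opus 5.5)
The plan is to glue the local Lipschitz estimates of Lemma~\ref{lem:sec-unique-local-implicit} together by compactness. Both $A$ and $Q$ are absolutely continuous on $[0,b]$. For $A$, the integrand $x(s)/(1-s)$ is bounded by $\bar q/(1-b)$ there, so $A$ is in fact Lipschitz. For $Q$, absolute continuity is Lemma~\ref{lem:sec-unique-psi-AC}. By Lemma~\ref{lem:sec-unique-local-implicit}(iii), every $u_0\in(0,b]$ has a relatively open interval $J_0\ni u_0$ on which
\[
|x(u)-x(v)|\le L_0\bigl(|A(u)-A(v)|+|Q(u)-Q(v)|\bigr).
\]
A composition-type estimate of this form transfers absolute continuity: given $\varepsilon>0$, the joint absolute continuity of $A$ and $Q$ supplies a $\delta$ that works for $x$ on $J_0$. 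Hence $x$ is absolutely continuous on each $J_0$.

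To pass from local to global, I first treat the left endpoint $u_0=0$, which Lemma~\ref{lem:sec-unique-local-implicit} does not cover. Lemma~\ref{lem:endpoint-positivity} gives $\phi^*(0)>0$, so $x(0)=q(\phi^*(0))>0$. Corollary~\ref{cor:sec-unique-pointwise-EL} gives $H(A(0),x(0))=Q(0)$ with $A(0)=0\le\lambda x(0)$, so Lemma~\ref{lem:H-monotone-curvature} yields $\partial_zH>0$ at that point. The same implicit-function argument therefore runs at $u_0=0$ and produces a relatively open interval $[0,\epsilon)$ carrying the Lipschitz estimate. (Alternatively, I would extend $A$ and $Q$ slightly to the left.) The intervals $J_0$ then cover the compact set $[0,b]$, so finitely many suffice. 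Using a Lebesgue number, on consecutive overlapping intervals I get a uniform constant $L=\max L_i$ and the global bound $|x(u)-x(v)|\le L(|A(u)-A(v)|+|Q(u)-Q(v)|)$ whenever $|u-v|$ is small. Since absolute continuity is local, or since a finite union of overlapping AC pieces of a continuous function is AC, $x$ is absolutely continuous on $[0,b]$.

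Finally I transfer the result to $\phi^*$. On $[0,b]$ we have $x\in[x(0),\bar q]$ with $x(0)>0$, and $\phi^*=c'(x)$ there, since $0<\phi^*<1$ on $(0,b)$ and $\phi^*(b)=1=c'(\bar q)$. Because $c'$ is $C^1$ on the compact interval $[x(0),\bar q]\subset(0,\overline Q]$, it is Lipschitz there, so $\phi^*=c'\circ x$ is absolutely continuous on $[0,b]$. Lemma~\ref{lem:sec-unique-smooth-fit} gives $\phi^*(b^-)=1$, so there is no jump at $b$, and $\phi^*\equiv1$ on $[b,1]$. Hence $\phi^*$ is absolutely continuous on $[0,1]$.

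The main obstacle is the endpoint $u=0$. It has to be ensured that the implicit parameterization and the bound $\partial_zH>0$ survive there, and this is exactly where $\phi^*(0)>0$ (Lemma~\ref{lem:endpoint-positivity}) is needed so that $x$ stays away from $0$, where $c''$ or $c'''$ could degenerate. The endpoint $u=b$ is already covered by Lemma~\ref{lem:sec-unique-local-implicit}.
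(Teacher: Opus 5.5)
Your proof is correct and is essentially the paper's argument. You use the local Lipschitz bounds of Lemma~\ref{lem:sec-unique-local-implicit}, glued by a finite cover, to get absolute continuity of $x$, and then write $\phi^*=c'\circ x$ with $c'$ Lipschitz on $[x(0),\bar q]$, matching at $b$. The only difference is at $u=0$: the paper proves absolute continuity on each $[a,b]$ with $a>0$ and extends to $[0,b]$ using just monotonicity of $x$ and continuity at $0$, needing $\phi^*(0)>0$ only for the transfer to $\phi^*$. You instead use $\phi^*(0)>0$ to run the implicit-function step at $u_0=0$ directly, which is equally valid since $x(0)>0$, $A(0)=0\le\lambda x(0)$, and the pointwise identity of Corollary~\ref{cor:sec-unique-pointwise-EL} holds at $u=0$.
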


\begin{proof}
We first prove that $x$ is absolutely continuous on every interval $[a,b]$ with $a\in(0,b)$.
Fix such an $a$.
For each $u_0\in[a,b]$, Lemma~\ref{lem:sec-unique-local-implicit} provides a relative open interval $J_{u_0}$ containing $u_0$ in $(0,b]$ and a constant $L_{u_0}$ such that
$ |x(u)-x(v)|\le L_{u_0}\bigl(|A(u)-A(v)|+|Q(u)-Q(v)|\bigr)$ for $u,v\in J_{u_0}$. The compact interval $[a,b]$ is covered by finitely many such intervals, $[a,b]\subset J_{u_1}\cup\cdots\cup J_{u_m}$. Set
$L_*:=\max\{L_{u_1},\dots,L_{u_m}\}$.
Because $A$ and $Q$ are absolutely continuous on $[0,b]$, there exists $\delta>0$ such that for every finite disjoint family of intervals $\{(\alpha_\ell,\beta_\ell)\}_\ell$ contained in $[a,b]$ with total length less than $\delta$,
\[
 \sum_\ell |A(\beta_\ell)-A(\alpha_\ell)|<\frac{\varepsilon}{2L_*},
 \qquad
 \sum_\ell |Q(\beta_\ell)-Q(\alpha_\ell)|<\frac{\varepsilon}{2L_*}.
\]
Now take any finite disjoint family of intervals $\{(a_\ell,b_\ell)\}_\ell$ contained in $[a,b]$ with total length less than $\delta$.
Refining the family by splitting at the finitely many endpoints of the intervals $J_{u_j}\cap[a,b]$ if necessary, we may assume that each resulting subinterval lies inside one of the sets $J_{u_j}\cap[a,b]$.
Assign to each refined subinterval one index $j(\ell)$ such that $(a_\ell,b_\ell)\subset J_{u_{j(\ell)}}\cap[a,b]$.
Then for each refined subinterval,
\[
 |x(b_\ell)-x(a_\ell)|
 \le L_{u_{j(\ell)}}\bigl(|A(b_\ell)-A(a_\ell)|+|Q(b_\ell)-Q(a_\ell)|\bigr)
 \le L_*\bigl(|A(b_\ell)-A(a_\ell)|+|Q(b_\ell)-Q(a_\ell)|\bigr).
\]
Summing over all refined subintervals gives
\begin{align*}
 \sum_\ell |x(b_\ell)-x(a_\ell)|
 &\le L_*\sum_\ell \bigl(|A(b_\ell)-A(a_\ell)|+|Q(b_\ell)-Q(a_\ell)|\bigr)\\
 &<L_*\left(\frac{\varepsilon}{2L_*}+\frac{\varepsilon}{2L_*}\right)=\varepsilon.
\end{align*}
Hence $x$ is absolutely continuous on $[a,b]$.

We now extend this from $[a,b]$ to $[0,b]$.
Since $x$ is continuous at $0$, for every $\varepsilon>0$ there exists $a\in(0,b)$ such that $|x(a)-x(0)|<\varepsilon/2$.
Because $x$ is absolutely continuous on $[a,b]$, there exists $\delta>0$ such that for every finite disjoint family of intervals $\{(a_\ell,b_\ell)\}_\ell$ contained in $[a,b]$ with total length less than $\delta$,
$\sum_\ell |x(b_\ell)-x(a_\ell)|<\varepsilon/2$.
Now take any finite disjoint family of intervals $\{(\alpha_\ell,\beta_\ell)\}_\ell$ contained in $[0,b]$ with total length less than $\delta$.
Splitting at the point $a$ if necessary, we may assume that every interval in the family lies either in $[0,a]$ or in $[a,b]$.
Since $x$ is nondecreasing,
\[
 \sum_{\ell:\, (\alpha_\ell,\beta_\ell)\subset[0,a]} |x(\beta_\ell)-x(\alpha_\ell)|
 \le x(a)-x(0)<\frac{\varepsilon}{2}.
\]
The contribution of the intervals contained in $[a,b]$ is less than $\varepsilon/2$ by the choice of $\delta$.
Therefore, $ \sum_\ell |x(\beta_\ell)-x(\alpha_\ell)|<\varepsilon$.
This proves that $x$ is absolutely continuous on $[0,b]$.

Finally, Lemma \ref{lem:endpoint-positivity} gives $\phi^*(0)>0$, hence $x(0)=q(\phi^*(0))>0$. Since $x$ is continuous on $[0,b]$, there exists $\underline x>0$ such that $x(u)\in[\underline x,\bar q]$ for all $u\in[0,b]$. Because $c'$ is $C^1$ on $[\underline x,\bar q]$, it is Lipschitz on that
interval. Therefore $\phi^*(u)=c'(x(u))$, $u\in[0,b])$ is absolutely continuous on $[0,b]$.
On $[b,1]$ we have $\phi^*\equiv 1$, which is absolutely continuous.
Because Lemma~\ref{lem:sec-unique-smooth-fit} gives $\phi^*(b^-)=1=\phi^*(b)$, the two pieces match continuously at $b$, and therefore $\phi^*$ is absolutely continuous on $[0,1]$.
\end{proof}


We are now ready to derive in closed form the free-boundary system that pins down every solution to the upstream actor's problem on $[0,b).$

\begin{proposition}\label{prop:sec-unique-free-boundary}
Let $\phi^*$ be a global maximizer, let $b$ be its cutoff, and let $x$, $A$, and $Q$ be as in \eqref{eq:sec-unique-x-def}.
Then:
\begin{enumerate}
\item[(i)] $x\in AC([0,b])$, $A\in C^1([0,b])$, and $Q\in AC([0,b])$;
\item[(ii)] for every $u\in[0,b]$,
\begin{equation}\label{eq:sec-unique-FB-algebraic}
 Q(u)=c'(x(u))+c''(x(u))(\lambda x(u)-A(u));
\end{equation}
\item[(iii)] for a.e. $u\in(0,b)$,
\begin{align}
 A'(u)&=\frac{x(u)}{1-u}, \label{eq:sec-unique-FB-A}\\[2pt]
Q'(u)&=\frac{Q(u)-c'(x(u))}{1-u}=\frac{c''(x(u))(\lambda x(u)-A(u))}{1-u}, \label{eq:sec-unique-FB-psi}\\[2pt]
 x'(u)&=\frac{(1+\lambda)x(u)-A(u)}{(1-u)\left((1+\lambda)+(\lambda x(u)-A(u))\dfrac{c'''(x(u))}{c''(x(u))}\right)}; \label{eq:sec-unique-FB-x}
\end{align}
\item[(iv)] the boundary conditions are
\begin{equation}\label{eq:sec-unique-FB-bc}
 A(0)=0,
 \qquad
 x(b)=\bar q,
 \qquad
 A(b)=\lambda\bar q,
 \qquad
 Q(b)=1.
\end{equation}
\end{enumerate}
\end{proposition}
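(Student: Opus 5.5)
Almost everything here is already in hand; the plan is to assemble the earlier results and do one differentiation.

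For part (i), Proposition~\ref{prop:sec-unique-AC} gives $x\in AC([0,b])$ and Lemma~\ref{lem:sec-unique-psi-AC} gives $Q\in AC([0,b])$. For $A$, note that $A(u)=\int_0^u x(s)/(1-s)\,ds$ has a continuous integrand on $[0,b]$, since $x$ is continuous by Corollary~\ref{cor:sec-unique-pointwise-EL} and $1-s\ge 1-b>0$. Hence $A\in C^1([0,b])$ with $A'(u)=x(u)/(1-u)$ for every $u$, which is \eqref{eq:sec-unique-FB-A}.

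Part (ii) is exactly \eqref{eq:sec-unique-pointwise-EL} from Corollary~\ref{cor:sec-unique-pointwise-EL}.

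For the boundary conditions in (iv), I argue each one directly:
\begin{itemize}
\item $A(0)=0$ holds by definition.
\item $x(b)=\bar q$ and $A(b)=\lambda\bar q$ come from Lemma~\ref{lem:sec-unique-smooth-fit}, together with the continuous extension in Corollary~\ref{cor:sec-unique-pointwise-EL}.
\item $Q(b)=1$ follows from $\phi^*\equiv 1$ on $[b,1]$, so $Q(b)=\frac{1}{1-b}\int_b^1 1\,dt=1$.
\item These conditions are consistent with (ii) at $u=b$: there $c'(\bar q)=1$ and $\lambda\bar q-A(b)=0$.
\end{itemize}

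In (iii), the first expression for $Q'$ is \eqref{eq:sec-unique-psi-prime-basic} with $\phi^*(u)=c'(x(u))$ on $(0,b)$. Substituting (ii) for $Q(u)-c'(x(u))$ gives the second form.

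The only genuine step is the ODE for $x$. Both sides of (ii) are absolutely continuous on $[0,b]$, so I differentiate the identity a.e. On the right-hand side, $x\in AC$ takes values in $[\underline x,\bar q]$ with $\underline x>0$, using $x(0)>0$ from Lemma~\ref{lem:endpoint-positivity}. On that range $c'$ and $c''$ are $C^1$ by Assumption~\ref{ass:cost-c3-unique}, so the compositions $c'\circ x$ and $c''\circ x$ are absolutely continuous. The chain rule then holds a.e., and products of AC functions on a compact interval are AC. This gives, a.e.,
\[
Q'(u)=(1+\lambda)c''(x)x'+c'''(x)(\lambda x-A)x'-c''(x)A'(u).
\]
Setting this equal to $c''(x)(\lambda x-A)/(1-u)$ and inserting $A'=x/(1-u)$ yields
\[
\bigl[(1+\lambda)c''(x)+(\lambda x-A)c'''(x)\bigr]x'=\frac{c''(x)\bigl((1+\lambda)x-A\bigr)}{1-u}.
\]
The bracket equals $\partial_zH(A(u),x(u))$. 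By Lemma~\ref{lem:H-monotone-curvature}, together with $A\le\lambda x$ from \eqref{eq:sec-unique-A-le-lambdax-pointwise}, this bracket is strictly positive. I can therefore divide by it, and dividing numerator and denominator by $c''(x)>0$ gives \eqref{eq:sec-unique-FB-x}.

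I expect the main care point to be justifying the a.e. chain rule for $c''\circ x$. It is routine once $x$ is bounded away from $0$ on $[0,b]$, which is precisely why Lemma~\ref{lem:endpoint-positivity} was needed, since $c'''$ is only guaranteed continuous on $(0,\infty)$.
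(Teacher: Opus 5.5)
Your proposal is correct and follows the paper's proof essentially step by step. Parts (i), (ii) and (iv) are assembled from the earlier absolute-continuity, pointwise Euler--Lagrange and smooth-fit results, and the ODE for $x$ comes from differentiating the algebraic identity a.e.\ and dividing by $\partial_z H(A(u),x(u))>0$ via Lemma~\ref{lem:H-monotone-curvature}; your explicit justification of the a.e.\ chain rule through the lower bound $x\ge x(0)>0$ from Lemma~\ref{lem:endpoint-positivity} makes precise a step the paper leaves implicit.
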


\begin{proof}
Part (i) follows from Proposition~\ref{prop:sec-unique-AC}, the definition of $A$, and Lemma~\ref{lem:sec-unique-psi-AC}.
Part (ii) is Corollary~\ref{cor:sec-unique-pointwise-EL}.

For \eqref{eq:sec-unique-FB-A}, note that the integrand $u\mapsto x(u)/(1-u)$ is continuous on $[0,b]$ because $x$ is continuous and $b<1$.
Hence the fundamental theorem of calculus gives
\[
 A'(u)=\frac{x(u)}{1-u}
 \qquad\text{for every }u\in[0,b].
\]
For \eqref{eq:sec-unique-FB-psi}, use Lemma~\ref{lem:sec-unique-psi-AC} together with \eqref{eq:sec-unique-FB-algebraic}.

To derive \eqref{eq:sec-unique-FB-x}, differentiate \eqref{eq:sec-unique-FB-algebraic} a.e. on $(0,b)$.
Since $x\in AC([0,b])$ and $c\in C^3$, the chain rule gives
\begin{align*}
 Q'(u)
 &=c''(x(u))x'(u)+c'''(x(u))x'(u)(\lambda x(u)-A(u))
 +c''(x(u))(\lambda x'(u)-A'(u))\\
 &=\Bigl((1+\lambda)c''(x(u))+(\lambda x(u)-A(u))c'''(x(u))\Bigr)x'(u)-c''(x(u))A'(u).
\end{align*}
Now substitute \eqref{eq:sec-unique-FB-A} and \eqref{eq:sec-unique-FB-psi}:
\[
 \frac{c''(x(u))(\lambda x(u)-A(u))}{1-u}
 =\Bigl((1+\lambda)c''(x(u))+(\lambda x(u)-A(u))c'''(x(u))\Bigr)x'(u)
 -\frac{c''(x(u))x(u)}{1-u}.
\]
Move the last term to the left-hand side:
\[
 \frac{c''(x(u))((1+\lambda)x(u)-A(u))}{1-u}
 =\Bigl((1+\lambda)c''(x(u))+(\lambda x(u)-A(u))c'''(x(u))\Bigr)x'(u).
\]
Because $x(u)>0$ for every $u\in(0,b)$ and \eqref{eq:sec-unique-A-le-lambdax-pointwise} gives $A(u)\le \lambda x(u)$ for a.e. $u\in(0,b)$, Lemma~\ref{lem:H-monotone-curvature} yields
\[
(1+\lambda)c''(x(u))+(\lambda x(u)-A(u))c'''(x(u))>0
\qquad\text{for a.e. }u\in(0,b).
\]
Hence the coefficient of $x'(u)$ on the right-hand side is strictly positive for a.e. $u\in(0,b)$.
Dividing yields \eqref{eq:sec-unique-FB-x}.

Finally, the boundary conditions \eqref{eq:sec-unique-FB-bc} follow from the definitions of $A$ and $Q$, together with Lemma~\ref{lem:sec-unique-smooth-fit}.
\end{proof}

Note that, equivalently, if one sets $T:=-\ln(1-b)$, $ \widetilde A(t):=A(1-e^{-t})$, $\widetilde Q(t):=Q(1-e^{-t})$ and $\widetilde x(t):=x(1-e^{-t})$,
then on $(0,T)$,
\begin{align}
 \dot{\widetilde A}(t)&=\widetilde x(t), \label{eq:sec-unique-aut-A}\\
 \dot{\widetilde Q}(t)&=c''(\widetilde x(t))(\lambda \widetilde x(t)-\widetilde A(t)), \label{eq:sec-unique-aut-psi}\\
 \dot{\widetilde x}(t)&=\frac{(1+\lambda)\widetilde x(t)-\widetilde A(t)}{(1+\lambda)+(\lambda \widetilde x(t)-\widetilde A(t))\dfrac{c'''(\widetilde x(t))}{c''(\widetilde x(t))}}, \label{eq:sec-unique-aut-x}
\end{align}
and the boundary conditions become
$ \widetilde A(T)=\lambda\bar q$, $\widetilde x(T)=\bar q$ and $ \widetilde Q(T)=1$.


Finally, we can now state and prove that the solution to the upstream actor's problem is unique.

\begin{proposition}\label{prop:sec-unique-half-plus}
For every $k\in(1/2,1]$, the  problem has a unique global maximizer.
In particular, the optimizer free-boundary system of Proposition~\ref{prop:sec-unique-free-boundary} admits at most one solution arising from a global maximizer.
\end{proposition}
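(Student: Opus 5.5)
Existence is already given by Proposition~\ref{prop:exist}, so uniqueness is all that remains. I would prove it by a backward-shooting argument on the autonomous system \eqref{eq:sec-unique-aut-A}--\eqref{eq:sec-unique-aut-x}. Take any global maximizer $\phi^*$ with cutoff $b$ and set $T=-\ln(1-b)$. By Proposition~\ref{prop:sec-unique-free-boundary}, the triple $(\widetilde A,\widetilde Q,\widetilde x)$ solves the system on $(0,T)$ with terminal data $(\lambda\bar q,1,\bar q)$ at $t=T$.

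The key observation is that the vector field does not depend on $t$, and the terminal values do not depend on $b$. Reverse time by setting $s:=T-t$ and consider the initial value problem at $s=0$ with data $(\lambda\bar q,1,\bar q)$. Two facts make the right-hand side locally Lipschitz along any admissible trajectory. First, $c\in C^3$. Second, the denominator $(1+\lambda)+(\lambda x-A)c'''(x)/c''(x)$ is bounded below by a positive constant on the relevant compact range, by Lemma~\ref{lem:H-monotone-curvature}: there $x\in[\underline x,\bar q]$ with $\underline x>0$ (Lemma~\ref{lem:endpoint-positivity}), and $0\le A\le\lambda x$ (Corollary~\ref{cor:sec-unique-pointwise-EL}). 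Picard--Lindelöf then gives a unique backward solution $\Psi(s)=(A,Q,x)(s)$ on its maximal interval, and this solution is independent of $b$.

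With this in hand, compare two maximizers $\phi_1,\phi_2$ with cutoffs $b_1,b_2$. Both reversed trajectories coincide with $\Psi$ on $[0,\min(T_1,T_2)]$. The condition $A(0)=0$ then says $T_i$ must be a zero of the first component $A_\Psi(s)$. I also need $A_\Psi$ to have only one zero. Along the reversed flow, $dA_\Psi/ds=-x<0$, since $x>0$ on the admissible range. So $A_\Psi$ is strictly decreasing and vanishes at most once. Hence $T_1=T_2$, so $b_1=b_2$, and then $x_1=x_2$ on $[0,b]$.

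Since $\phi_i=c'(x_i)$ on $[0,b)$ and $\phi_i\equiv1$ on $[b,1]$, this gives $\phi_1=\phi_2$. The same argument shows that at most one solution of the free-boundary system arises from a maximizer.

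The main obstacle is justifying uniqueness of the backward IVP rigorously. Proposition~\ref{prop:sec-unique-free-boundary} only gives the ODE almost everywhere for absolutely continuous $x$. I would handle this by writing each solution in integral form $\Psi(s)=\Psi(0)+\int_0^s F(\Psi)$, which is valid because $x$ is absolutely continuous and the right-hand side is continuous along the trajectory. Gronwall's inequality then applies on the common compact region where $F$ is Lipschitz. One must check that both trajectories stay in a region where the denominator is bounded away from zero. This follows from $A_i\le\lambda x_i$ and $x_i\ge\underline x_i>0$, after taking the minimum of the two lower bounds.

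A subtler point is that $Q$ does not enter the $(A,x)$ subsystem. So uniqueness of the $(A,x)$ subsystem alone already pins down $x$, and through it $\phi^*$. Consistency of $Q$ with $\phi^*$ via Lemma~\ref{lem:sec-unique-psi-AC} is then automatic and needs no separate argument.
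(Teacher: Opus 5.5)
Your overall architecture matches the paper's proof: shoot backward from the common terminal data $(\lambda\bar q,1,\bar q)$, pin down $T$ from $A(0)=0$ and strict monotonicity of $A$, then recover $\phi=c'(x)$. The central step, however, fails as justified. The vector field of your $(A,x)$ subsystem is
\[
F(A,x)=\frac{(1+\lambda)x-A}{(1+\lambda)+(\lambda x-A)\,c'''(x)/c''(x)},
\]
and it contains $c'''$. Assumption~\ref{ass:cost-c3-unique} only makes $c'''$ continuous, so $F$ is continuous and Lipschitz in $A$ but in general not Lipschitz in $x$. Contrary to your claim, $c\in C^3$ does not give local Lipschitz continuity, so neither Picard--Lindel\"of nor Gronwall applies. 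For merely continuous vector fields, uniqueness of the backward initial value problem is exactly what is in doubt, so this is the crux of the proof. The a.e.\ versus integral-form issue you flag is the easy part. Your argument would go through if $c'''$ were locally Lipschitz (e.g.\ $c\in C^4$), but that is not assumed.

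The paper avoids this by doing the opposite of your final remark: it keeps $Q$ and eliminates $x$. Since $c\in C^3$, $H(A,z)=c'(z)+c''(z)(\lambda z-A)$ is $C^1$, and $\partial_z H>0$ on $\{A\le\lambda z\}$ by Lemma~\ref{lem:H-monotone-curvature}. The implicit function theorem (Lemma~\ref{lem:sec-unique-local-implicit}) then gives a $C^1$ map $x=X(A,Q)$. The resulting system $\dot A=X$, $\dot Q=c''(X)(\lambda X-A)$ involves only $c''$ and $X$, so it is locally Lipschitz. A continuation argument on the maximal interval of agreement, needed because $X$ is only local, then shows the two trajectories coincide.

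Alternatively, you can rescue your $(A,x)$ route by using $x$ itself as the independent variable. In reversed time $dx/ds=-F<0$, because the numerator is at least $x>0$ and the denominator is positive. Then
\[
\frac{dA}{dx}=\frac{x\bigl[(1+\lambda)+(\lambda x-A)\,c'''(x)/c''(x)\bigr]}{(1+\lambda)x-A}
\]
is Lipschitz in $A$: the numerator is affine in $A$ with bounded slope, and the denominator is at least $\underline x>0$. It is only continuous in $x$, but that suffices for uniqueness. Compare the two $A$-paths on the common $x$-range. Using $A=0$ at each lower endpoint together with $dA/dx>0$ forces $x_1(0)=x_2(0)$ and identical $A$-paths. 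The elapsed times $s(x)=\int_x^{\bar q}d\xi/F(A(\xi),\xi)$ then also agree, which gives $T_1=T_2$.
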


\begin{proof}
Let $\phi_1^*$ and $\phi_2^*$ be two global maximizers. For $i\in\{1,2\}$, let $b_i$ be the cutoff
of $\phi_i^*$, and let $x_i$, $A_i$, and $Q_i$ be the associated functions from Proposition~~\ref{prop:sec-unique-free-boundary}.
Define $T_i:=-\ln(1-b_i)>0$. For $t\in[0,T_i]$, set $\widetilde A_i(t):=A_i(1-e^{-t})$, $\widetilde Q_i(t):=Q_i(1-e^{-t})$,
and $\widetilde x_i(t):=x_i(1-e^{-t})$.Then Proposition~~\ref{prop:sec-unique-free-boundary} implies that for a.e.\ $t\in(0,T_i)$, $\widetilde A_i(t)$ $\widetilde x_i(t)$ and $\widetilde Q_i(t)$ are given by equations \ref{eq:sec-unique-aut-A}-\ref{eq:sec-unique-aut-x}
with terminal condition
$\widetilde A_i(T_i)=\lambda \bar q$, $\widetilde Q_i(T_i)=1$ and $\widetilde x_i(T_i)=\bar q$.

Reverse time by defining, for $s\in[0,T_i]$,
$\widehat A_i(s):=\widetilde A_i(T_i-s)$, $\widehat Q_i(s):=\widetilde Q_i(T_i-s)$ and $x_i(s):=\widetilde x_i(T_i-s)$.
Then for a.e.\ $s\in(0,T_i)$,
\[
\widehat A_i'(s)=-\widehat x_i(s),
\qquad
\widehat Q_i'(s)=-c''(\widehat x_i(s))\bigl(\lambda \widehat x_i(s)-\widehat A_i(s)\bigr),
\qquad
\widehat Q_i(s)=H(\widehat A_i(s),\widehat x_i(s)),
\]
and $\widehat A_i(0)=\lambda \bar q$, $\widehat Q_i(0)=1$ and $\widehat x_i(0)=\bar q$. Thus, the two reverse-time trajectories start from the same point.

We first prove that the two reverse-time state trajectories coincide on the common time interval.
Set $S^*:=\min\{T_1,T_2\}$, and define
\[
S:=\Bigl\{s\in[0,S^*] : \widehat A_1(r)=\widehat A_2(r),\ \widehat Q_1(r)=\widehat Q_2(r)
\ \text{for every } r\in[0,s]\Bigr\}.
\]
By the common initial condition, $0\in S$, so $S$ is nonempty. Let $s^*:=\sup\mathcal S$. Since \(\widehat A_i,\widehat Q_i\) are continuous, \(\mathcal S\) is closed under increasing limits. Hence \(s^*\in\mathcal S\), and the two state paths coincide on \([0,s^*]\). We claim that \(s^*=S^*\). Suppose instead that \(s^*<S^*\). At \(s^*\), write
$A^*:=\widehat A_1(s^*)=\widehat A_2(s^*)$ and $Q^*:=\widehat Q_1(s^*)=\widehat Q_2(s^*)$.
Both \(\widehat x_1(s^*)\) and \(\widehat x_2(s^*)\) solve $Q^*=H(A^*,x)$,
and both satisfy \(A^*\le \lambda x\). By Lemma 10, this solution is unique, so
$\widehat x_1(s^*)=\widehat x_2(s^*)$.
By Lemma \ref{lem:sec-unique-local-implicit}, in a neighborhood of \((A^*,Q^*)\) there is a \(C^1\) function \(X\) such that
$\widehat x_i(s)=X(\widehat A_i(s),\widehat Q_i(s))$.
Therefore, near \(s^*\), both pairs \((\widehat A_i,\widehat Q_i)\) solve the same locally Lipschitz ODE with the same initial condition at \(s^*\). Local uniqueness for ODEs implies that the two paths coincide on $[s^*,s^*+\eta]\cap[0,S^*]$ for some \(\eta>0\), contradicting the definition of \(s^*\). Hence \(s^*=S^*\) and we have proved $\widehat A_1(s)=\widehat A_2(s)$, $\widehat Q_1(s)=\widehat Q_2(s)$ for all $s\in[0,S^*]$.

We now prove that $T_1=T_2$. Suppose, for contradiction, that $T_1<T_2$. Then $S^*=T_1$,
so by the equality just established, $\widehat A_1(T_1)=\widehat A_2(T_1)$.
But by definition of $\widehat A_1$, $\widehat A_1(T_1)=\widetilde A_1(0)=A_1(0)=0$. On the other hand,
\[
\widehat A_2(T_1)=\widetilde A_2(T_2-T_1)=A_2\bigl(1-e^{-(T_2-T_1)}\bigr).
\]
Since $T_2-T_1>0$, the argument of $A_2$ belongs to $(0,b_2)$. Because $x_2(u)>0$ for every
$u\in(0,b_2)$,
\[
A_2\bigl(1-e^{-(T_2-T_1)}\bigr)
=
\int_0^{1-e^{-(T_2-T_1)}} \frac{x_2(s)}{1-s}\,ds
>0.
\]
This contradicts $\widehat A_1(T_1)=\widehat A_2(T_1)$. Therefore, $T_1=T_2$, $b_1=b_2$.

Since $T_1=T_2$, the equality of the reverse-time states holds on the whole interval $[0,T_1]$.
Reversing the time change gives $A_1(u)=A_2(u)$ and $Q_1(u)=Q_2(u)$ for all $u\in[0,b_1]$.

We next prove equality of the quality paths. Fix $u\in(0,b_1]$. Then $x_i(u)>0$ for both $i$:
if $u<b_1$, this follows from Corollary~\ref{cor:strict}, while if $u=b_1$, Proposition~\ref{prop:sec-unique-free-boundary} gives $x_i(b_1)=\bar q$. Since $A_1(u)=A_2(u)$, $Q_1(u)=Q_2(u)$,
both $x_1(u)$ and $x_2(u)$ solve $H(A_1(u),z)=Q_1(u)$. Also, by~\ref{eq:sec-unique-A-le-lambdax-pointwise}, $A_1(u)\le \lambda x_i(u)$, $(i=1,2)$.
Assume without loss of generality that $x_1(u)\le x_2(u)$. Then for every $z\in[x_1(u),x_2(u)]$ we have $A_1(u)\le \lambda x_1(u)\le \lambda z$.
Hence Lemma~\ref{lem:H-monotone-curvature} implies that $z\mapsto H(A_1(u),z)$ is strictly increasing on $[x_1(u),x_2(u)]$. Since it takes the same value $Q_1(u)$ at both endpoints, we must have $x_1(u)=x_2(u)$ for all $u\in(0,b_1]$. Because both $x_1$ and $x_2$ are continuous on $[0,b_1]$, letting $u\downarrow0$ gives $x_1(0)=x_2(0)$. Thus, $x_1(u)=x_2(u)$ for all $u\in[0,b_1]$.

Finally, $\phi_1^*(u)=c'(x_1(u))=c'(x_2(u))=\phi_2^*(u)$ for all $u\in[0,b_1]$, and both profiles equal $1$ on $[b_1,1]$. Therefore, $\phi_1^*\equiv \phi_2^*$. This proves uniqueness of the optimal virtual value profile.
\end{proof}

Finally, we have:
\begin{lemma}
 Fix \(k>1/2\), and suppose Assumption \ref{ass:cost-c3-unique} holds. Let \(\phi^*\in\Phi\) be the unique maximizer of $\max_{\phi\in\Phi} J_k(\phi)$.
Then \(G_{\phi^*}\) is the unique maximizer of the original problem $\max_{G\in\Delta([0,1])} W_k(G)$.   
\end{lemma}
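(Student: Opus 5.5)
The plan is to transfer uniqueness from the reduced problem over $\Phi$ back to $\Delta([0,1])$ using Lemmas~\ref{lem:concavify} and~\ref{lem:truncate}. The argument has three steps.

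First, we show that $G_{\phi^*}$ attains the supremum. By Lemmas~\ref{lem:concavify} and~\ref{lem:truncate}, for every $G\in\Delta([0,1])$ there is $\phi\in\Phi$, namely the positive part of the ironed virtual value of $G$, with $W_k(G)\le W_k(\widetilde G)=W_k(G_\phi)=J_k(\phi)$. Conversely, every $\phi\in\Phi$ is realized by the regular distribution $G_\phi$. Here we use the fact that $R_\phi$ is concave, so $G_\phi$ has ironed virtual value exactly $\phi$, and $J_k(\phi)=W_k(G_\phi)$ by the seller-side reduction. Hence
\[
\sup_{G} W_k(G)=\max_{\phi\in\Phi}J_k(\phi)=J_k(\phi^*)=W_k(G_{\phi^*}).
\]

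Second, we show that any maximizer $G$ must coincide with $G_{\phi^*}$. Let $G$ be a maximizer, with quantile $Q$, concavified revenue $R$ and ironed virtual value $\phi$.
\begin{itemize}
\item The chain $W_k(G)\le W_k(\widetilde G)=J_k(\phi^+)\le J_k(\phi^*)$ must hold with equalities throughout.
\item By uniqueness (Proposition~\ref{prop:sec-unique-half-plus}), $\phi^+=\phi^*$.
\item Since $\phi^*(0)>0$ by Lemma~\ref{lem:endpoint-positivity} and $\phi$ is nondecreasing, $\phi=\phi^+=\phi^*$ everywhere. In particular there are no negative virtual values, so $R=R_{\phi^*}$ and $\widetilde Q=Q_{\phi^*}$.
\item Since $k>0$, the equality clause of Lemma~\ref{lem:concavify} gives $(\widetilde Q(u)-Q(u))\,q(\phi^*(u))=0$ for a.e.\ $u$.
\item Because $\phi^*\ge\phi^*(0)>0$, we have $q(\phi^*(u))>0$ for all $u$. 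Hence $Q=\widetilde Q=Q_{\phi^*}$ a.e.
\end{itemize}

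Third, we upgrade a.e.\ equality to equality of distributions. Two lower quantile functions that are left-continuous and nondecreasing and agree a.e.\ on $(0,1)$ agree at every point of $(0,1]$, by left-continuity from a dense set. Therefore $G=G_{\phi^*}$.

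The main subtle step is the second one. Without Lemma~\ref{lem:endpoint-positivity}, $q(\phi^*)$ could vanish on an initial segment. The equality condition of Lemma~\ref{lem:concavify} would then not pin $Q$ down there, and the truncation step could also be non-injective. Strict positivity $\phi^*(0)>0$ under Assumption~\ref{ass:cost-c3-unique} is exactly what closes both gaps. A secondary check is that the ironed virtual value of $G_\phi$ is $\phi$ itself, which holds because $\widehat R=R_\phi$ is already concave.
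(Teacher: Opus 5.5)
Your proof is correct and follows the paper's argument essentially step for step: you regularize, then truncate, then use uniqueness of the reduced maximizer to get $\phi^+=\phi^*$, and finally apply the equality clause of Lemma~\ref{lem:concavify} together with $q(\phi^*)>0$ to conclude $Q=Q_{\phi^*}$ a.e. The one difference is that the paper uses Lemma~\ref{lem:no-prefix} ($\phi^*>0$ on $(0,1)$, which needs only Assumption~\ref{ass:cost}) instead of Lemma~\ref{lem:endpoint-positivity}, so your remark that endpoint positivity is ``exactly'' what closes the gap is overstated: a possible zero of $\phi^*$ at the single point $u=0$ is a null set and affects neither the a.e.\ equality clause nor the integral defining $\widetilde Q$.
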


\begin{proof}
By Lemmas \ref{lem:concavify} and \ref{lem:truncate}, \(G_{\phi^*}\) is an optimizer of the original problem. We prove uniqueness.

Let \(G\in\Delta([0,1])\) be any optimizer of the original problem. Let \(Q\) be its lower quantile, let \(\widehat R(u)=(1-u)Q(u)\), let \(R=\operatorname{cav}(\widehat R)\), and let \(\phi\) be the associated ironed virtual value. Let \(\widetilde G\) be the regularization from Lemma \ref{lem:concavify}, with quantile
\[
\widetilde Q(u)=\frac{R(u)}{1-u}.
\]
Since \(G\) is optimal and Lemma \ref{lem:concavify} gives \(W_k(\widetilde G)\ge W_k(G)\), \(\widetilde G\) is also optimal. Because \(k>0\), the equality condition in Lemma \ref{lem:concavify} gives
$\bigl(\widetilde Q(u)-Q(u)\bigr)q(\phi(u))=0$ for a.e. $u\in(0,1)$.

Now apply Lemma \ref{lem:truncate} to \(\widetilde G\). The resulting distribution has virtual value \(\phi^+=\max\{\phi,0\}\) and attains the same value as \(\widetilde G\). Hence \(\phi^+\) is a maximizer of the reduced problem. By uniqueness of the reduced maximizer,
$\phi^+=\phi^*$.
By Lemma \ref{lem:no-prefix}, \(\phi^*(u)>0\) for every \(u\in(0,1)\). Hence
$\phi(u)=\phi^*(u)$ for every $u\in(0,1)$.
The pointwise seller problem then implies
$q(\phi(u))=q(\phi^*(u))>0$ for every $u\in(0,1)$. Therefore, the equality condition from Lemma \ref{lem:concavify} yields $Q(u)=\widetilde Q(u)$ for a.e. $u\in(0,1)$. But
\[
\widetilde Q(u)=\frac{R(u)}{1-u}
=
\frac{1}{1-u}\int_u^1\phi(t)\,dt
=
\frac{1}{1-u}\int_u^1\phi^*(t)\,dt
=
Q_{\phi^*}(u)
\]
for every \(u\in(0,1)\). Hence
$Q(u)=Q_{\phi^*}(u)$ for a.e. $u\in(0,1)$.
Since a distribution is the law of its quantile evaluated at a uniform random variable, equality of lower quantiles almost everywhere implies equality of the corresponding distributions. Thus
$G=G_{\phi^*}$.
Therefore, \(G_{\phi^*}\) is the unique optimizer of the original problem.
\end{proof}

This completes the proof of Theorem \ref{thm:main-structure}. \qed

\subsection*{Proof of Corollary \ref{cor:support-atom-structure}}

\begin{proof}
By Corollary \ref{cor:strict}, $\phi^*$ is strictly increasing on $(0,b)$, satisfies $0<\phi^*(u)<1$ for all
$u\in(0,b)$, and $\phi^*(u)=1$ for all $u\in[b,1]$.

First,
\[
\underline{v}=Q^*(0)=\int_0^1 \phi^*(t)\,dt
=(1-b)+\int_0^b \phi^*(t)\,dt
=1-\int_0^b (1-\phi^*(t))\,dt.
\]
Since $\phi^*(t)=1$ on $[b,1]$, we have $\underline{v}\ge \int_b^1 1\,dt = 1-b>0$.
Since $b>0$ and $\phi^*(t)<1$ on $(0,b)$, it also follows that $\underline{v}< (1-b)+\int_0^b 1\,dt =1$. Hence $0<\underline{v}<1$.

Next, let $0\le u<v<b$. Then
\[
(1-u)Q^*(u)=\int_u^v \phi^*(t)\,dt+(1-v)Q^*(v).
\]
Because $\phi^*$ is strictly increasing on $(0,b)$, we have $\phi^*(t)<\phi^*(v)$ for every
$t\in[u,v)$, and since $Q^*(v)$ is the average of $\phi^*$ on $[v,1]$, we have $\phi^*(v)\le Q^*(v)$.
Therefore, $\int_u^v \phi^*(t)\,dt < (v-u)Q^*(v)$, which implies $(1-u)Q^*(u)<(v-u)Q^*(v)+(1-v)Q^*(v)=(1-u)Q^*(v)$.
Hence $Q^*(u)<Q^*(v)$, so $Q^*$ is strictly increasing on $[0,b)$.
If instead $u\in[b,1]$, then $\phi^*(t)=1$ for all $t\in[u,1]$, so
\[
Q^*(u)=\frac{1}{1-u}\int_u^1 1\,dt=1.
\]
Thus $Q^*$ is constant and equal to $1$ on $[b,1]$.

Moreover, $Q^*$ is continuous on $[0,1)$. Since it is strictly increasing on $[0,b)$ and
$Q^*(b)=1$, its range is exactly $[\underline{v},1]$. Because $G^*$ is the law of $Q^*(U)$ with
$U\sim U[0,1]$, it follows that $\operatorname{supp}(G^*)=[\underline{v},1]$.
Finally, for any $v\in[\underline{v},1)$, strict increase of $Q^*$ on $[0,b)$ implies that the set $\{u\in[0,1]:Q^*(u)=v\}$, has zero Lebesgue measure so $G^*$ has no atom at any $v<1$. On the other hand, $\{u\in[0,1]:Q^*(u)=1\}=[b,1]$, hence $G^*(\{1\})=\lambda([b,1])=1-b$. So $G^*$ has a unique atom at $v=1$ of size $1-b$.
\end{proof}

\subsection*{Proof of Proposition \ref{prop:comp_statics}}

\begin{proof}
Let $\lambda_i:=\frac{2k_i-1}{k_i}$, $i=1,2$.
Since $k_2>k_1>1/2$, we have $\lambda_2>\lambda_1$. 

For each $i$, Proposition~\ref{prop:sec-unique-free-boundary} gives a cutoff $b_i\in(0,1)$, an interior quality schedule $x_i=q(\phi_i)$, and functions $A_i,Q_i$ on $[0,b_i]$ satisfying
\[
Q_i(u)=c'(x_i(u))+c''(x_i(u))\bigl(\lambda_i x_i(u)-A_i(u)\bigr),
\]
\[
A_i'(u)=\frac{x_i(u)}{1-u},
\qquad
Q_i'(u)=\frac{Q_i(u)-c'(x_i(u))}{1-u},
\]
with $x_i(b_i)=\bar q$, $A_i(b_i)=\lambda_i\bar q$ and $Q_i(b_i)=1$. Moreover, $Q_i(u)>c'(x_i(u))$ for all $u\in(0,b_i)$, because the interior is strictly separating and the top tail is efficient.

As in the uniqueness proof, pass to log time $t=-\ln(1-u)$, invert the strictly increasing map $\widetilde A_i(t):=A_i(1-e^{-t})$, and define $s_i:[0,\lambda_i\bar q]\to[0,T_i]$ and $T_i:=-\ln(1-b_i)$, together with $X_i(A):=\widetilde x_i(s_i(A))$ and $\mathcal Q_i(A):=\widetilde Q_i(s_i(A))$.
Then $X_i$ is strictly increasing, $\mathcal Q_i$ is $C^1$, and $\mathcal Q_i'(A)=R_{\lambda_i}(A,\mathcal Q_i(A))$,
where $X_\lambda(A,Q)$ is the unique solution of
$Q=c'(x)+c''(x)(\lambda x-A)$, $x\in[A/\lambda,\bar q]$,
and
\[
R_\lambda(A,Q):=\frac{Q-c'(X_\lambda(A,Q))}{X_\lambda(A,Q)}.
\]

Note that \(X_\lambda(A,Q)\) is well defined on the domain used below. To see this, recall that
$H_\lambda(A,x)=c'(x)+c''(x)(\lambda x-A)$.
If \(0\le A\le\lambda\bar q\) and
$c'(A/\lambda)\le Q\le1$,
then \(X_\lambda(A,Q)\) exists and is unique in \([A/\lambda,\bar q]\). Indeed,
$H_\lambda(A,A/\lambda)=c'(A/\lambda)\le Q$,
while $
H_\lambda(A,\bar q)=1+c''(\bar q)(\lambda\bar q-A)\ge1\ge Q$.
Continuity gives existence. Uniqueness follows from Lemma \ref{lem:H-monotone-curvature}, since \(A\le\lambda x\) on \([A/\lambda,\bar q]\).
In particular, if \(\lambda_2>\lambda_1\), \(A\in[0,\lambda_1\bar q]\), and \(Q_1(A)\) is generated by the \(\lambda_1\)-solution, then \(X_{\lambda_2}(A,Q_1(A))\) is well defined: indeed,
$c'(A/\lambda_2)\le c'(A/\lambda_1)\le Q_1(A)\le1$.

We now record the monotonicity properties of \(X_\lambda\) and \(R_\lambda\) used below.
All comparisons below are on \(A\in(0,\lambda_1\bar q]\), so \(A>0\).
Lemma~\ref{lem:H-monotone-curvature}
implies that $x\mapsto H_\lambda(A,x)$
is strictly increasing on \([A/\lambda,\bar q]\).
It follows first that \(X_\lambda(A,Q)\) is strictly increasing in \(Q\). Indeed, if
\(Q'<Q''\), \(x'=X_\lambda(A,Q')\), and \(x''=X_\lambda(A,Q'')\), then $H_\lambda(A,x')=Q'<Q''=H_\lambda(A,x'')$,
so strict monotonicity of \(H_\lambda(A,\cdot)\) gives \(x'<x''\).

Next fix \(A>0\) and \(Q\), and let \(0<\lambda'<\lambda''\). Suppose $x'=X_{\lambda'}(A,Q)$ and $x''=X_{\lambda''}(A,Q)$ exist. Since \(x'\ge A/\lambda'>A/\lambda''\), the point \(x'\) belongs to the relevant domain for
\(\lambda''\). Moreover,
\[
H_{\lambda''}(A,x')
=
H_{\lambda'}(A,x')+(\lambda''-\lambda')x'c''(x')
=
Q+(\lambda''-\lambda')x'c''(x')
>
Q.
\]
Since \(H_{\lambda''}(A,\cdot)\) is strictly increasing and $H_{\lambda''}(A,x'')=Q$,
we obtain $x''<x'$.
Thus \(X_\lambda(A,Q)\) is strictly decreasing in \(\lambda\) on the region used below.

Finally, we prove the induced monotonicity of
\[
R_\lambda(A,Q):=\frac{Q-c'(X_\lambda(A,Q))}{X_\lambda(A,Q)}.
\]
For fixed \(A,Q\), define
\[
r(x):=\frac{Q-c'(x)}{x}.
\]
Let \(x'=X_{\lambda'}(A,Q)\). Since
$Q=c'(x')+c''(x')(\lambda' x'-A)$
and \(A\le \lambda'x'\), we have \(Q\ge c'(x')\). Hence, for every \(x\in(0,x']\),
$Q-c'(x)\ge Q-c'(x')\ge 0$.
Therefore
\[
r'(x)
=
-\frac{x c''(x)+Q-c'(x)}{x^2}
<0
\qquad \forall x\in(0,x'].
\]
From the previous paragraph, \(X_{\lambda''}(A,Q)<X_{\lambda'}(A,Q)=x'\). Hence
\[
R_{\lambda''}(A,Q)
=
r(X_{\lambda''}(A,Q))
>
r(X_{\lambda'}(A,Q))
=
R_{\lambda'}(A,Q).
\]
Thus \(R_\lambda(A,Q)\) is strictly increasing in \(\lambda\) on the relevant region.

We first compare $\mathcal Q_1$ and $\mathcal Q_2$ on the common $A$-interval $(0,\lambda_1\bar q]$. Since $\mathcal Q_1(\lambda_1\bar q)=1$ and $\mathcal Q_2(\lambda_2\bar q)=1$, while $\mathcal Q_2$ is strictly increasing, we have
$\mathcal Q_2(\lambda_1\bar q)<1=\mathcal Q_1(\lambda_1\bar q)$. Suppose there were $A_0\in(0,\lambda_1\bar q]$ with $\mathcal Q_2(A_0)\ge \mathcal Q_1(A_0)$. Let $A^*$ be the largest point at which the two graphs meet. Then $\mathcal Q_1(A^*)=\mathcal Q_2(A^*)$, but
\[
(\mathcal Q_1-\mathcal Q_2)'(A^*)
=
R_{\lambda_1}(A^*,\mathcal Q_1(A^*))
-
R_{\lambda_2}(A^*,\mathcal Q_2(A^*))
<0,
\]
contradicting the fact that $\mathcal Q_1>\mathcal Q_2$ immediately to the right of $A^*$. Hence $\mathcal Q_2(A)<\mathcal Q_1(A)$ for all $A\in(0,\lambda_1\bar q]$.
Since $X_\lambda$ is increasing in $Q$ and decreasing in $\lambda$, it follows that
$X_2(A)<X_1(A)$ for all $A\in(0,\lambda_1\bar q]$. Because $s_i'(A)=1/X_i(A)$, we obtain
$s_2(A)>s_1(A)$ for all $A\in(0,\lambda_1\bar q]$. In particular, $T_2=s_2(\lambda_2\bar q)>s_2(\lambda_1\bar q)>s_1(\lambda_1\bar q)=T_1$, so $b_2=1-e^{-T_2}>1-e^{-T_1}=b_1$.
This proves part (i).

Now fix $t\in(0,T_1]$ and set $A:=\widetilde A_1(t)$. Then $s_2(A)>s_1(A)=t$,
so, by strict monotonicity of $\widetilde A_2$, $\widetilde A_2(t)<A=\widetilde A_1(t)$.
Using the monotonicity of $X_2$ and the comparison above, $\widetilde x_2(t)=X_2(\widetilde A_2(t))
<X_2(A)<X_1(A)=\widetilde x_1(t)$.
Returning to quantile space, $x_2(u)<x_1(u)$ for all $u\in(0,b_1]$. Since $b_1<b_2$, we also have $x_1(u)=\bar q>x_2(u)$ for all $u\in(b_1,b_2)$, while both equal $\bar q$ on $[b_2,1]$. Therefore, $x_2(u)\le x_1(u)$ for all $u\in[0,1]$ and $x_2(u)<x_1(u)$ for all $u\in(0,b_2)$.
Because $c'$ is strictly increasing and
$\phi_i(u)=c'(x_i(u))$ on $[0,b_i]$, while $\phi_i(u)=1$ on $[b_i,1]$, the same ordering holds for the virtual-value profiles:
$\phi_2(u)\le \phi_1(u)$ for all $u\in[0,1]$ and $\phi_2(u)<\phi_1(u)$ for all $u\in(0,b_2)$.

Integrating the ordered virtual values gives
\[
Q_2(u)=\frac{1}{1-u}\int_u^1 \phi_2(t)\,dt
\le
\frac{1}{1-u}\int_u^1 \phi_1(t)\,dt
=
Q_1(u),
\]
with strict inequality for every $u\in(0,b_2)$. 
The lower endpoints also satisfy a strict inequality. Since $Q_i(0)=\int_0^1\phi_i(t)\,dt$,
and since \(\phi_2(t)<\phi_1(t)\) on the positive-measure interval \((0,b_2)\), while \(\phi_2\le\phi_1\) everywhere, we have
\[
v_2=Q_2(0)=\int_0^1\phi_2(t)\,dt
<
\int_0^1\phi_1(t)\,dt
=Q_1(0)=v_1.
\]
Since, \(\operatorname{supp}(G_i)=[v_i,1]\), we get $\operatorname{supp}(G_1)\subsetneq \operatorname{supp}(G_2)$.
It remains to translate the quantile ordering into first-order stochastic dominance. Let
\(v\in[0,1)\). Since \(G_i\) is the law of \(Q_i(U)\) for \(U\sim\operatorname{Unif}[0,1]\), $G_i(v)=\operatorname{Leb}\{u\in[0,1]:Q_i(u)\le v\}$.
Because \(Q_2(u)\le Q_1(u)\) for every \(u\in[0,1]\), we have $\{u:Q_1(u)\le v\}\subseteq \{u:Q_2(u)\le v\}$.
Therefore, $G_1(v)\le G_2(v)\qquad \forall v\in[0,1)$. Thus \(G_2\) is first-order stochastically dominated by \(G_1\). Since the atom at \(1\) has size
\(1-b_i\), and \(b_2>b_1\), the top atom is strictly decreasing in \(k\).

Finally, $\pi$ is strictly increasing on $(0,1]$ because $\pi'(\phi)=q(\phi)>0$.
Since $\phi_2<\phi_1$ on a set of positive measure,
\[
\Pi_2
=
\int_0^1 \pi(\phi_2(u))\,du
<
\int_0^1 \pi(\phi_1(u))\,du
=
\Pi_1.
\]
Using optimality of $\phi_1$ at weight $k_1$,
$k_1TS_1 + (1-2k_1)\Pi_1 \ge k_1TS_2 + (1-2k_1)\Pi_2$,
so $k_1(TS_1-TS_2) \ge (2k_1-1)(\Pi_1-\Pi_2) > 0$. Hence $TS_1 > TS_2$.

We next prove $CS_2 > CS_1$. If $k_2 < 1$ and $CS_2 \le CS_1$, then, since
$\Pi_2 < \Pi_1$,
\[
J_{k_2}(\phi_2)
=
k_2CS_2 + (1-k_2)\Pi_2
<
k_2CS_1 + (1-k_2)\Pi_1
=
J_{k_2}(\phi_1),
\]
contradicting the optimality of $\phi_2$ at weight $k_2$. Hence $CS_2 > CS_1$
whenever $k_2<1$.

If $k_2 = 1$, then $J_{k_2}=CS$. Part (2) already shows that
$\phi_2(u) < \phi_1(u)$ for all $u \in (0,b_2)$, so $\phi_2 \neq \phi_1$.
By Proposition~\ref{prop:sec-unique-half-plus}, the maximizer at each $k \in (1/2,1]$ is unique. Therefore
$\phi_2$ is the unique maximizer of $CS$, and it follows that $CS_2 > CS_1$. This proves part (iv).
\end{proof}

\subsection*{Proof of Corollary \ref{cor:value-space-rank-matched}}
\begin{proof}
For each $i=1,2$, the main structural result implies that $\phi_i$ is strictly increasing on $(0,b_i)$ and
satisfies $\phi_i(u)=1$ on $[b_i,1]$. Hence $Q_i$ is strictly increasing on $[0,b_i)$ and constant equal
to $1$ on $[b_i,1]$. Since $x_i(u)=q(\phi_i(u))$ and $\phi_i(u)=1$ on $[b_i,1]$, we also have $x_i(u)=q(1)=\bar q$ for all $u\in[b_i,1]$. Therefore $Q_i^{-1}$ is well defined on $[Q_i(0),1)$, and the definition of $x_i^V$ is unambiguous at
$v=1$ because the entire flat tail maps to the same quality $\bar q$.

The comparative statics result established above gives $Q_2(u)\le Q_1(u)$ for all $u\in[0,1]$ and $x_2(u)\le x_1(u)$ for all $u\in[0,1]$, with strict inequalities on $(0,b_2)$. Since
$x_i^V(Q_i(u))=x_i(u)$ for all $u\in[0,1]$,
the claim follows immediately.
\end{proof}

\subsection*{Proof of Theorem \ref{thm:supported-equals-pareto}}

We start with a preliminary lemma.

\begin{lemma}
\label{lem:optimizer-path-continuity}
Maintain Assumptions \ref{ass:cost} and \ref{ass:cost-c3-unique}. For each $k\in[1/2,1]$, let $\phi_k$ denote the unique maximizer of $J_k$ on
$\Phi$. Then:

\begin{enumerate}
\item[(i)] If $k_n\to k$ in $[1/2,1]$, then
$\phi_{k_n}^*\to \phi_k^*$ in $L^1([0,1]).$

\item[(ii)] The maps $k\mapsto CS(\phi_k^*)$ and $k\mapsto \Pi(\phi_k^*)$
are continuous on $[1/2,1]$.
\end{enumerate}
\end{lemma}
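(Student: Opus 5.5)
This is a standard Berge maximum-theorem argument, run on the compact set $\Phi$ with the $L^1$ topology from Lemma~\ref{lem:compactness-continuity}.

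\emph{Step 1: joint continuity of the objective.} Note that $J_k(\phi)=k\,CS(\phi)+(1-k)\Pi(\phi)$ is affine in $k$. By Lemma~\ref{lem:compactness-continuity}(iii), $CS$ and $\Pi$ are continuous on $\Phi$ in $L^1$. Both are also uniformly bounded: $0\le \Pi\le \pi(1)$ and $0\le CS\le \bar q$. Hence, whenever $k_n\to k$ and $\phi_n\to\phi$ in $L^1$,
\[
|J_{k_n}(\phi_n)-J_k(\phi)|\le |k_n-k|\bigl(|CS(\phi_n)|+|\Pi(\phi_n)|\bigr)+|J_k(\phi_n)-J_k(\phi)|\to 0 .
\]

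\emph{Step 2: convergence of maximizers (part (i)).} Take $k_n\to k$ and suppose, for contradiction, that $\phi_{k_n}^*\not\to\phi_k^*$ in $L^1$. Pass to a subsequence with $\|\phi_{k_n}^*-\phi_k^*\|_{L^1}\ge\varepsilon$. By Lemma~\ref{lem:compactness-continuity}(i), a further subsequence converges in $L^1$ to some $\bar\phi\in\Phi$. For any $\psi\in\Phi$, optimality gives $J_{k_n}(\phi_{k_n}^*)\ge J_{k_n}(\psi)$. Step 1 lets us pass to the limit, so $J_k(\bar\phi)\ge J_k(\psi)$, and $\bar\phi$ maximizes $J_k$.

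Uniqueness of the maximizer then forces $\bar\phi=\phi_k^*$ a.e. For $k\in(1/2,1]$ this comes from Proposition~\ref{prop:sec-unique-half-plus}. For $k=1/2$, part 1 of Theorem~\ref{thm:main-structure} gives $\delta_1$ as the unique maximizer over $\Delta([0,1])$. In $\Phi$ this is $\phi\equiv1$: any $\phi\in\Phi$ attaining the value induces $G_\phi$ attaining it too, so $G_\phi=\delta_1$, hence $Q_\phi\equiv1$ and $\phi=1$ a.e. Equality a.e.\ contradicts $\|\bar\phi-\phi_k^*\|_{L^1}\ge\varepsilon$, which proves (i).

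\emph{Step 3: continuity of the payoffs (part (ii)).} Compose (i) with the $L^1$-continuity of $CS$ and $\Pi$ from Lemma~\ref{lem:compactness-continuity}(iii).

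\emph{Main obstacle.} The delicate point is the identification of the limit at the endpoint $k=1/2$, where uniqueness comes from part 1 of Theorem~\ref{thm:main-structure} rather than from the free-boundary argument. It must be checked that uniqueness over distributions transfers to uniqueness in $\Phi$ up to a.e.\ equality. A second thing to verify is that maximizers are only determined a.e., which is harmless because all convergence is in $L^1$.
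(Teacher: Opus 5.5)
Your proof is correct and is essentially the paper's argument. The paper invokes Berge's maximum theorem for the constant correspondence $\Gamma(k)=\Phi$ (compact in $L^1$ by Lemma~\ref{lem:compactness-continuity}), using joint continuity of $(k,\phi)\mapsto J_k(\phi)$ and the fact that the argmax is a singleton; you unpack the upper-hemicontinuity step as an explicit subsequence argument, and your transfer of the $k=1/2$ uniqueness from $\Delta([0,1])$ to $\Phi$ spells out a step the paper only cites.
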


\begin{proof}
We first note that $(\ell,\phi)\mapsto J_\ell(\phi)$ is jointly continuous on
$[1/2,1]\times\Phi$. Indeed, Lemma~\ref{lem:compactness-continuity} implies that the maps $\phi\mapsto CS(\phi)$ and $\phi\mapsto \Pi(\phi)$ are continuous on $\Phi$ with respect to the $L^1$ topology. 
Consider the constant correspondence $\Gamma:[1/2,1]\rightrightarrows \Phi$, $\Gamma(k)=\Phi$.
By Lemma~\ref{lem:compactness-continuity}(i), $\Phi$ is compact in the $L^1$ topology, so $\Gamma$ is continuous
and compact-valued. Hence, Berge's maximum theorem implies that the argmax correspondence
$M(k):=\arg\max_{\phi\in\Phi} J_k(\phi)$
is upper hemicontinuous with nonempty compact values, and the value function is
continuous.

By Theorem~\ref{thm:main-structure} and Proposition~\ref{prop:sec-unique-half-plus}, $M(k)$ is a singleton for every
$k\in[1/2,1]$, say $M(k)=\{\phi_k^*\}$.
For singleton-valued correspondences, upper hemicontinuity is exactly continuity of the
selector. Therefore $\phi_{k_n}^*\to \phi_k^*$ in $L^1([0,1])$ whenever $k_n\to k$,
which proves part~(i). Part~(ii) then follows immediately from continuity of
$\CS$ and $\Pi$ on $\Phi$.
\end{proof}

\textbf{Proof of Theorem \ref{thm:supported-equals-pareto}}

\begin{proof}
We begin by identifying the supported frontier.
For each $k\in[0,1]$, the point $(c_k,\pi_k)$ is supported by construction, because $G_k^*$ solves $\max_G \bigl[k\,CS(G)+(1-k)\Pi(G)\bigr]$. Conversely, every supported point must arise as the optimizer of this weighted-sum problem for some
$k\in[0,1]$. Therefore, $\mathcal F^{\mathrm{sup}}=\{(c_k,\pi_k):\ k\in[0,1]\}$.
Since $(c_k,\pi_k)=(c_{1/2},\pi_{1/2})$ for all $k\in[0,1/2]$, this reduces to
$$
\mathcal F^{\mathrm{sup}}
=
\{(c_{1/2},\pi_{1/2})\}\cup \{(c_k,\pi_k):\ k\in(1/2,1]\}.
$$
It remains to prove that this set is exactly the Pareto frontier.

First, we show that every supported point is Pareto efficient.
Fix $k\in(1/2,1)$. Since both weights $k$ and $1-k$ are strictly positive, any feasible point
that weakly dominates $(c_k,\pi_k)$ with at least one strict inequality would yield a strictly larger
value of $kc+(1-k)\pi,$
contradicting optimality of $(c_k,\pi_k)$. Hence $(c_k,\pi_k)\in\mathcal F^P$, for all $k\in(1/2,1).$

The same argument applies at $k=1/2$, since the weights are again both strictly positive. Thus, $(c_{1/2},\pi_{1/2})\in\mathcal F^P.$

Finally, $(c_1,\pi_1)\in\mathcal F^P$ because it is the unique maximizer of consumer surplus.
Indeed, if some feasible point $(c,\pi)$ weakly dominated $(c_1,\pi_1)$ with at least one strict
inequality, then necessarily $c\ge c_1$. Since $c_1$ is maximal, this forces $c=c_1$, and then
$\pi>\pi_1$ would contradict uniqueness of the maximizer at $k=1$. Thus $\mathcal F^{\mathrm{sup}}\subseteq \mathcal F^P.$

Now, by Lemma~\ref{lem:optimizer-path-continuity}, the map $k\mapsto \pi_k$
is continuous on $[1/2,1]$. We claim that it is in fact strictly decreasing on all of $[1/2,1]$.
It is already strictly decreasing on $(1/2,1]$ by part 4 of Proposition \ref{prop:comp_statics} .
Suppose for contradiction that there exists some $k_0\in(1/2,1]$ such that $\pi_{k_0}=\pi_{1/2}.$
Then for every $t\in(1/2,k_0)$, strict decrease on $(1/2,1]$ gives $\pi_t>\pi_{k_0}=\pi_{1/2}.$
Taking $t\downarrow 1/2$ contradicts continuity at $1/2$.
Hence $\pi_k<\pi_{1/2}$, for all $k\in(1/2,1],$
so $k\mapsto \pi_k$ is strictly decreasing on $[1/2,1]$. 
It follows that for every $\pi\in[\pi_1,\pi_{1/2}],$ 
there exists a unique $k\in[1/2,1]$ such that
$\pi_k=\pi.$

Now, take any $(c,\pi)\in\mathcal F^P$. Because $(c_{1/2},\pi_{1/2})$ maximizes profit over all implementable pairs, we must have $\pi\le \pi_{1/2}.$
Because $(c_1,\pi_1)$ maximizes consumer surplus, we must have $c\le c_1.$
Moreover, if $\pi<\pi_1$, then $(c_1,\pi_1)$ would dominate $(c,\pi)$, since $c_1\ge c$ and $\pi_1>\pi,$
contradicting Pareto efficiency.
Therefore $\pi\in[\pi_1,\pi_{1/2}].$ Moreover, as we established, there exists a unique $k\in[1/2,1]$ such that $\pi_k=\pi.$ 

We now compare $(c,\pi)$ with $(c_k,\pi_k)$.
If $c<c_k$, then $(c_k,\pi_k)$ dominates $(c,\pi)$, since $\pi_k=\pi$ and $c_k>c$.
This contradicts $(c,\pi)\in\mathcal F^P$.
If $c>c_k$, then $(c,\pi)$ dominates $(c_k,\pi_k)$, since again $\pi=\pi_k$ and $c>c_k$.
But we have shown that $(c_k,\pi_k)\in\mathcal F^P$, so this is impossible.

Hence neither $c<c_k$ nor $c>c_k$ can occur. Therefore $c=c_k.$
Since also $\pi=\pi_k$, we obtain $(c,\pi)=(c_k,\pi_k)$. Thus, every Pareto-efficient point belongs to $\{(c_{1/2},\pi_{1/2})\}\cup \{(c_k,\pi_k):\ k\in(1/2,1]\}$. That is,
$\mathcal F^P\subseteq
\{(c_{1/2},\pi_{1/2})\}\cup \{(c_k,\pi_k):\ k\in(1/2,1]\}
=
\mathcal F^{\mathrm{sup}}$.

We therefore obtain
\[
\mathcal F^P=\mathcal F^{\mathrm{sup}}
=
\{(c_{1/2},\pi_{1/2})\}\cup \{(c_k,\pi_k):\ k\in(1/2,1]\}.
\]
which proves the result.
\end{proof}

\subsection*{Proof of Corollary \ref{cor:segmentation}}

\begin{proof}
We first show that \(S=\operatorname{co}(V)\). Fix any \(z\in S\). Then, for some aggregate market
\(H\), some integer \(M\), weights \(\lambda_m\ge 0\) with \(\sum_{m=1}^M\lambda_m=1\), and segment
markets \(G_1,\ldots,G_M\in\Delta([0,1])\), we have $z=\sum_{m=1}^M \lambda_m \bigl(CS(G_m),\Pi(G_m)\bigr)$.
Since each \(\bigl(CS(G_m),\Pi(G_m)\bigr)\in V\), this implies \(z\in\operatorname{co}(V)\). Hence
\(S\subseteq\operatorname{co}(V)\).

Conversely, fix any \(z\in\operatorname{co}(V)\). Then there exist \(M\in\mathbb N\), weights
\(\lambda_m\ge 0\) summing to one, and distributions \(G_1,\ldots,G_M\in\Delta([0,1])\) such that
$z=\sum_{m=1}^M \lambda_m \bigl(CS(G_m),\Pi(G_m)\bigr)$.
Define $H:=\sum_{m=1}^M \lambda_m G_m$.
Then \(\{(\lambda_m,G_m)\}_{m=1}^M\) is a feasible observable segmentation of \(H\), and its induced
aggregate payoff is exactly \(z\). Thus \(z\in S(H)\subseteq S\). Therefore
$S=\operatorname{co}(V)$.

 For \(k\in[0,1]\), write $\ell_k(c,\pi):=kc+(1-k)\pi$.
Because \(\ell_k\) is linear, $\max_{(c,\pi)\in K}\ell_k(c,\pi)
=
\max_{(c,\pi)\in V}\ell_k(c,\pi)$.
Moreover, since the maximizer of \(\ell_k\) over \(V\) is unique for each \(k\in[0,1]\), the maximizer
over \(S\) is the same singleton. Indeed, if
$z=\sum_{m=1}^M\lambda_m z_m\in S$, $z_m\in V$,
maximizes \(\ell_k\) over \(S\), then every \(z_m\) with \(\lambda_m>0\) must maximize \(\ell_k\) over
\(V\). By uniqueness, all such \(z_m\) equal the single-market maximizer.

Let $p_k:=(c_k,\pi_k)$
denote the unique maximizer of \(\ell_k\). By Theorem 2,
\[
F^P(V)=F^{sup}(V)=\{p_{1/2}\}\cup\{p_k:k\in(1/2,1]\}.
\]
As established in the proof of Theorem 2, the map \(k\mapsto \pi_k\) is continuous and strictly
decreasing on \([1/2,1]\).

We now prove that \(F^P(S)\) is the same set. First, every \(p_k\), \(k\in[1/2,1)\), is Pareto efficient
in \(S\): if some \(z\in S\) weakly dominated \(p_k\) with at least one strict inequality, then, since
both weights \(k\) and \(1-k\) are strictly positive, $\ell_k(z)>\ell_k(p_k)$,
contradicting optimality of \(p_k\) over \(S\). The same argument applies to \(p_{1/2}\). For \(p_1\),
note that \(p_1\) is the unique maximizer of consumer surplus over \(S\). Hence no point in \(S\) can
weakly dominate \(p_1\) with one strict inequality. Therefore
\[
\{p_{1/2}\}\cup\{p_k:k\in(1/2,1]\}\subseteq F^P(S).
\]

Conversely, take any \(z=(c,\pi)\in F^P(S)\). Since \(p_{1/2}\) maximizes profit over \(S\), $\pi\le \pi_{1/2}$.
Since \(p_1\) maximizes consumer surplus over \(S\), $c\le c_1$.
If \(\pi<\pi_1\), then \(p_1=(c_1,\pi_1)\) weakly dominates \(z\), with a strict improvement in profit,
contradicting \(z\in F^P(S)\). Hence
$\pi\in[\pi_1,\pi_{1/2}]$.
By continuity and strict monotonicity of \(k\mapsto \pi_k\) on \([1/2,1]\), there exists a unique
\(k\in[1/2,1]\) such that
$\pi_k=\pi$.
Optimality of \(p_k\) over \(S\) gives
$kc+(1-k)\pi
\le
kc_k+(1-k)\pi_k$.
Since \(\pi=\pi_k\) and \(k\ge 1/2>0\), this implies $c\le c_k$.
If \(c<c_k\), then \(p_k\) dominates \(z\), contradicting Pareto efficiency of \(z\). Therefore
\(c=c_k\), and hence \(z=p_k\). Thus
\[
F^P(S)\subseteq \{p_{1/2}\}\cup\{p_k:k\in(1/2,1]\}.
\]
Combining the two inclusions,
$F^P(S)=\{p_{1/2}\}\cup\{p_k:k\in(1/2,1]\}$.
and thus, $F^P(S)=F^P(V)=F^{sup}(V)$.
\end{proof}

\subsection*{Proof of Corollary \ref{cor:info-after-Gk}}

\begin{proof}
The seller can ignore the signal and use a menu that is optimal for the pooled market \(G_k\). Let \((q_k,t_k)\) be such a menu, and define its profit integrand by
$p_k(v):=t_k(v)-c(q_k(v))$.
By incentive compatibility, individual rationality, nonnegative transfers, and \(q_k(v)\in[0,\bar q]\), the function \(p_k\) is bounded and Borel. Therefore Bayes plausibility gives
\[
\int_{\Delta([0,1])}\left(\int p_k(v)\,dG(v)\right)\mu(dG)
=
\int p_k(v)\,dG_k(v)
=
\pi_k.
\]
Since \(\Pi(G)\) is the seller's maximal profit in market \(G\),
\[
\bar\pi
=
\int \Pi(G)\,\mu(dG)
\ge
\int\left(\int p_k(v)\,dG(v)\right)\mu(dG)
=
\pi_k.
\]

For every \(G\in\Delta([0,1])\),
$kCS(G)+(1-k)\Pi(G)\le kc_k+(1-k)\pi_k$,
because \(G_k\) maximizes the weighted objective. Integrating with respect to \(\mu\) yields
$k\bar c+(1-k)\bar\pi\le kc_k+(1-k)\pi_k$.
Using \(\bar\pi\ge\pi_k\) and \(k>0\), we obtain
$\bar c\le c_k$.

If \(\mu(\{G_k\})<1\), then uniqueness of \(G_k\) implies
$kCS(G)+(1-k)\Pi(G)<kc_k+(1-k)\pi_k$
for every \(G\neq G_k\). Hence, since the integrand is nonnegative and strictly positive on a set of positive \(\mu\)-measure,
$k\bar c+(1-k)\bar\pi<kc_k+(1-k)\pi_k$.
Together with \(\bar\pi\ge\pi_k\), this implies
$\bar c<c_k$.
\end{proof}

\subsection*{Proof of Corollary \ref{cor:radial-inner-set}}

\begin{proof}
The cases \(\theta=1\) and \(\theta=0\) are immediate. If \(\theta=1\), then
\(\widetilde G_{k,\theta}=G_k\). If \(\theta=0\), then \(\widetilde G_{k,\theta}=\delta_0\). Since transfers are nonnegative and type \(0\) has value zero, every feasible mechanism gives seller profit at most zero and buyer utility at most zero; choosing \((q,t)=(0,0)\) gives $\Pi(\delta_0)=CS(\delta_0)=0$.
Hence assume \(\theta\in(0,1)\).

Recall that $v_k=\inf\operatorname{supp}(G_k)>0$.
Take a seller-optimal mechanism \((q_k,t_k)\) for \(G_k\), normalized so that the lowest active type obtains zero utility, $U_k(v_k)=0$.
Define a mechanism for \(\widetilde G_{k,\theta}\) by assigning \((0,0)\) to all reports \(v<v_k\), and using \((q_k,t_k)\) on reports \(v\ge v_k\). This mechanism is incentive compatible. Indeed, if \(x<v_k\) deviates to some \(y\ge v_k\), then
\[
xq_k(y)-t_k(y)
=
U_k(y)-(y-x)q_k(y)
\le
(y-v_k)q_k(y)-(y-x)q_k(y)
=
(x-v_k)q_k(y)\le0,
\]
where the inequality uses monotonicity of \(q_k\) and $U_k(y)=\int_{v_k}^y q_k(s)\,ds\le (y-v_k)q_k(y)$.
Active types face the original incentive-compatible mechanism, and deviation to the low block gives utility zero, which is weakly below their original utility. Thus the constructed mechanism is feasible and generates seller profit \(\theta\pi_k\). Therefore,
$\Pi(\widetilde G_{k,\theta})\ge \theta\pi_k$.

We now prove the reverse inequality. Let \((q,t)\) be any incentive-compatible and individually rational mechanism for \(\widetilde G_{k,\theta}\).
Since type \(0\) is present, individual rationality and nonnegative transfers imply
$U(0)=0$ and $t(0)=0$.
By the envelope formula,
$\alpha:=U(v_k)=\int_0^{v_k}q(s)\,ds\ge0$.
Define a restricted mechanism on \([v_k,1]\) by
$q^r(v):=q(v)$, $U^r(v):=U(v)-\alpha=\int_{v_k}^v q(s)\,ds$ and
$t^r(v):=vq^r(v)-U^r(v)=t(v)+\alpha$.
Then \((q^r,t^r)\) is incentive compatible and individually rational for \(G_k\). Hence $\int_{v_k}^1\left[t^r(v)-c(q^r(v))\right]dG_k(v)\le \pi_k$.
Using \(t(0)=0\), we get
\[
\begin{aligned}
\int\left[t(v)-c(q(v))\right]d\widetilde G_{k,\theta}(v)
&=
-(1-\theta)c(q(0))
+\theta\int_{v_k}^1\left[t(v)-c(q(v))\right]dG_k(v)\\
&=
-(1-\theta)c(q(0))
+\theta\int_{v_k}^1\left[t^r(v)-c(q^r(v))\right]dG_k(v)
-\theta\alpha\\
&\le \theta\pi_k .
\end{aligned}
\]
Taking the supremum over mechanisms gives
$\Pi(\widetilde G_{k,\theta})\le \theta\pi_k$.
Combined with the lower bound, this proves
$\Pi(\widetilde G_{k,\theta})=\theta\pi_k$.

It remains to compute consumer surplus under a seller-optimal mechanism for \(\widetilde G_{k,\theta}\). In the preceding inequality, equality can hold only if
$c(q(0))=0$,  $\alpha=0$ and \((q^r,t^r)\) is seller-optimal for \(G_k\). Since \(c\) is strictly increasing and \(c(0)=0\), \(c(q(0))=0\) implies \(q(0)=0\). Also \(\alpha=0\) implies that active-type utilities under \((q,t)\) coincide with those under the restricted mechanism \((q^r,t^r)\). Therefore the mass at zero obtains zero utility, and the active population obtains expected utility \(c_k\). Hence $CS(\widetilde G_{k,\theta})=\theta c_k$.
\end{proof}
\medskip

\newpage
\section*{Online Appendix: Omitted Proofs and Additional Results}

\subsection*{Proof of Proposition \ref{prop:eta-comparative-statics}}

\begin{proof}
Fix $k\in(1/2,1]$ and let $\lambda:=\frac{2k-1}{k}$.
For each $\eta>1$, let $x_\eta,A_\eta,Q_\eta$ denote the interior solution from Proposition~\ref{prop:sec-unique-free-boundary}. On $(0,b_\eta)$,
\[
Q_\eta(u)=x_\eta(u)^{\eta-1}+(\eta-1)x_\eta(u)^{\eta-2}\bigl(\lambda x_\eta(u)-A_\eta(u)\bigr),
\qquad
A_\eta'(u)=\frac{x_\eta(u)}{1-u},
\]
with boundary conditions
$x_\eta(b_\eta)=1$, $A_\eta(b_\eta)=\lambda$ and $Q_\eta(b_\eta)=1$.
Define
\[
z_\eta(u):=\lambda-\frac{A_\eta(u)}{x_\eta(u)}\in[0,\lambda].
\]
Then $z_\eta(b_\eta)=0$, while $z_\eta(u)\to\lambda$ as $u\downarrow0$.

A direct calculation gives
\[
\frac{x_\eta'(u)}{x_\eta(u)}
=
\frac{1+z_\eta(u)}{(1-u)\bigl(1+\lambda+(\eta-2)z_\eta(u)\bigr)},
\]
and
\[
z_\eta'(u)
=
-\frac{1+(\eta-1-\lambda)z_\eta(u)+z_\eta(u)^2}
{(1-u)\bigl(1+\lambda+(\eta-2)z_\eta(u)\bigr)}<0.
\]
To see the inequality note that the denominator is strictly positive for \(z_\eta(u)\in[0,\lambda]\). Indeed, if \(\eta\ge2\) this is immediate, while if \(1<\eta<2\), then
$1+\lambda+(\eta-2)z_\eta(u)
\ge
1+\lambda+(\eta-2)\lambda
=
1+(\eta-1)\lambda
>0$.
The numerator is also strictly positive:
$1+(\eta-1-\lambda)z+z^2
=
1-\lambda z+z^2+(\eta-1)z>0$ 
for \(z\in[0,\lambda]\), because \(\lambda\le1\) and
$1-\lambda z+z^2\ge 1-z+z^2>0$.
Hence \(z'_\eta(u)<0\) on \((0,b_\eta)\).Thus $z_\eta$ is strictly decreasing from $\lambda$ to $0$ along the interior.

Now set $t:=-\ln(1-u)$ and $T_\eta:=-\ln(1-b_\eta)$, and write $\widetilde z_\eta(t):=z_\eta(1-e^{-t})$. Then
\[
\frac{dt}{dz}
=
-\frac{1+\lambda+(\eta-2)z}{1+(\eta-1-\lambda)z+z^2}.
\]
Hence
\[
T_\eta
=
\int_0^\lambda f_\eta(s)\,ds,
\qquad
f_\eta(s):=
\frac{1+\lambda+(\eta-2)s}{1+(\eta-1-\lambda)s+s^2},
\]
and therefore $b_\eta=1-e^{-T_\eta}$.

Likewise,
\[
\log \widehat x_\eta(z)
=
-\int_0^z \frac{1+s}{1+(\eta-1-\lambda)s+s^2}\,ds,
\]
where $x_\eta(u)=\widehat x_\eta(z_\eta(u))$. Moreover,
$Q_\eta(u)=\widehat Q_\eta(z_\eta(u))$ and $\widehat Q_\eta(z):=\widehat x_\eta(z)^{\eta-1}\bigl(1+(\eta-1)z\bigr)$, 
so in particular $\underline{v}_\eta=\widehat Q_\eta(\lambda)$ and $m_\eta=1-b_\eta$.

\paragraph*{Part 1: cutoff and top atom.}
Differentiate $f_\eta$:
\[
\partial_\eta f_\eta(s)
=
\frac{s(s-\lambda)(1+s)}
{\bigl(1+(\eta-1-\lambda)s+s^2\bigr)^2}
<0
\qquad\forall s\in(0,\lambda).
\]
Thus $T_\eta$ is strictly decreasing in $\eta$, so $b_\eta$ is strictly decreasing and $m_\eta=1-b_\eta$ is strictly increasing.

\paragraph*{Part 2: quality schedule.}
For fixed $z>0$,
\[
\partial_\eta \log \widehat x_\eta(z)
=
\int_0^z
\frac{s(1+s)}
{\bigl(1+(\eta-1-\lambda)s+s^2\bigr)^2}
\,ds
>0.
\]
So $\widehat x_\eta(z)$ is strictly increasing in $\eta$ at each fixed $z$.

Now fix $u\in(0,b_{\eta_2}]$ and set $t:=-\ln(1-u)$. Since
$t_\eta(z):=\int_z^\lambda f_\eta(s)\,ds$
is strictly smaller for $\eta_2$ than for $\eta_1$ at every fixed $z<\lambda$, we obtain $z_{\eta_2}(u)<z_{\eta_1}(u)$.
Because $z\mapsto \widehat x_\eta(z)$ is strictly decreasing,
\[
x_{\eta_2}(u)
=
\widehat x_{\eta_2}(z_{\eta_2}(u))
>
\widehat x_{\eta_2}(z_{\eta_1}(u))
>
\widehat x_{\eta_1}(z_{\eta_1}(u))
=
x_{\eta_1}(u)
\qquad\forall u\in(0,b_{\eta_2}].
\]
If $u\in(b_{\eta_2},b_{\eta_1})$, then the $\eta_2$-optimizer is already on the top tail, so $x_{\eta_2}(u)=1>x_{\eta_1}(u)$.
On $[b_{\eta_1},1]$, both schedules equal $1$.

\paragraph*{Part 3: lower support endpoint.}
Write $B(s):=1-\lambda s+s^2$ and $N_\eta(s):=B(s)+(\eta-1)s$.
Then
\[
\log \widehat Q_\eta(z)
=
-(\eta-1)\int_0^z \frac{1+s}{N_\eta(s)}\,ds
+\log\bigl(1+(\eta-1)z\bigr),
\]
so
\[
\partial_\eta \log \widehat Q_\eta(z)
=
-\int_0^z \frac{(1+s)B(s)}{N_\eta(s)^2}\,ds
+\frac{z}{1+(\eta-1)z}.
\]
For \(s\in(0,\lambda]\), we have \(B(s)>0\). Moreover, writing \(a:=\eta-1>0\),
$N_\eta(s)=B(s)+as$.
Since \(\lambda\le1\) and \(s\in(0,\lambda]\),
$B(s)=1-\lambda s+s^2\le 1$,
so $N_\eta(s)\le 1+as$.
Also, $(1+s)B(s)>1$. Indeed,
$(1+s)B(s)-1
=
(1-\lambda)s+(1-\lambda)s^2+s^3>0$.
Therefore, 
\[N_\eta(s)^2
\le (1+as)^2
<
(1+as)^2(1+s)B(s)
=
(1+(\eta-1)s)^2(1+s)B(s),\]
hence
\[
\frac{(1+s)B(s)}{N_\eta(s)^2}
>
\frac{1}{(1+(\eta-1)s)^2}.
\]
Thus,
\[
\partial_\eta \log \widehat Q_\eta(z)
<
-\int_0^z \frac{ds}{(1+(\eta-1)s)^2}
+\frac{z}{1+(\eta-1)z}
=0.
\]
So $\widehat Q_\eta(z)$ is strictly decreasing in $\eta$, and in particular $\underline{v}_{\eta_2}=\widehat Q_{\eta_2}(\lambda)<\widehat Q_{\eta_1}(\lambda)=\underline{v}_{\eta_1}$.

\paragraph*{Part 4: no FOSD ranking.}
Since $\underline{v}_{\eta_2}<\underline{v}_{\eta_1}$ and each $Q_\eta$ is continuous on $[0,1)$, we have $Q_{\eta_2}(u)<Q_{\eta_1}(u)$
for all sufficiently small $u>0$. But if $u\in(b_{\eta_2},b_{\eta_1})$, then $Q_{\eta_2}(u)=1>Q_{\eta_1}(u)$, because the $\eta_2$-optimizer is already on the top tail while the $\eta_1$-optimizer is still in the interior. Thus the quantile functions cross, so neither induced distribution first-order stochastically dominates the other.

\paragraph*{Part 5: optimal value.}
In virtual-value space,
\[
V_k(\eta)=\sup_{\phi\in\Phi} J_{k,\eta}(\phi),
\]
where
\[
J_{k,\eta}(\phi)
=
\int_0^1
\Bigl(
k(Q_\phi(u)-\phi(u))q_\eta(\phi(u))
+
(1-k)\pi_\eta(\phi(u))
\Bigr)\,du,
\]
with
\[
q_\eta(\phi)=\phi^{1/(\eta-1)},
\qquad
\pi_\eta(\phi)=\frac{\eta-1}{\eta}\phi^{\eta/(\eta-1)}.
\]
For every $\phi\in(0,1)$,
$\partial_\eta q_\eta(\phi)>0$ and $\partial_\eta \pi_\eta(\phi)>0$.
Evaluate $J_{k,\eta}$ at the optimizer $\phi_{\eta_1}$. On the positive-measure interior $(0,b_{\eta_1})$, we have $Q_{\phi_{\eta_1}}(u)>\phi_{\eta_1}(u)$,
so the integrand is strictly increasing in $\eta$ there. Hence $J_{k,\eta_2}(\phi_{\eta_1})>J_{k,\eta_1}(\phi_{\eta_1})=V_k(\eta_1).$
Taking the supremum at $\eta_2$ yields
$V_k(\eta_2)>V_k(\eta_1)$.

\end{proof}

\subsection*{Proof of Proposition \ref{prop:linear-cost-posted}} 

\begin{proof}
For fixed $G$, the map
$p_G(r):=(r-M)S_G(r)$, $r\in[M,1]$,
is upper semicontinuous because $S_G$ is nonincreasing and left-continuous. Hence $R^*(G)$ is nonempty and compact.

Now fix $G$ and $r\in R^*(G)$. If $r=M$, then seller optimality implies
$(v-M)S_G(v)\le 0$ for all $v\in[M,1]$,
so $S_G(v)=0$ for every $v>M$ and therefore $J_k(G,M)=0$. If $r>M$, seller optimality gives $(v-M)S_G(v)\le (r-M)S_G(r)$ for all $v\in[r,1]$,
hence
\[
S_G(v)\le \frac{(r-M)S_G(r)}{v-M}.
\]
Therefore
\[
J_k(G,r)
=
\bar Q\Bigl[
k\int_r^1 S_G(v)\,dv+(1-k)(r-M)S_G(r)
\Bigr]
\le
\bar QF_k(r),
\]
where
\[
F_k(M):=0,
\qquad
F_k(r):=(r-M)\Bigl[(1-k)+k\ln\frac{1-M}{r-M}\Bigr],
\quad r>M.
\]

If $k\in[0,1/2]$, then
\[
F_k'(r)=1-2k+k\ln\frac{1-M}{r-M}>0
\qquad\forall r<1,
\]
so $F_k$ is strictly increasing on $[M,1]$. Hence $V_k^{pp}=\bar QF_k(1)=\bar Q(1-k)(1-M)$,
attained at $(G,r)=(\delta_1,1)$. Equality in the upper bound forces $r=1$ and $S_G(1)=1$, so $G=\delta_1$. This proves part (ii).

If $k\in(1/2,1]$, then
$F_k''(r)=-k/(r-M)<0$,
so $F_k$ is strictly concave and has a unique maximizer $r_k\in(M,1)$ given by
\[
F_k'(r_k)=0
\quad\Longleftrightarrow\quad
r_k=M+(1-M)e^{-(2k-1)/k}.
\]
At that point,
$F_k(r_k)=k(1-M)e^{-(2k-1)/k}$.

Define $G_k$ by its survival function
\[
S_{G_k}(v)=
\begin{cases}
1,&v<r_k,\\[1mm]
\dfrac{r_k-M}{v-M},&r_k\le v\le 1.
\end{cases}
\]
Then every $r\in[r_k,1]$ is seller-optimal, so in particular $r_k\in R^*(G_k)$. Moreover,
\[
\Pi_{G_k}(r_k)=\bar Q(r_k-M)=\bar Q(1-M)e^{-(2k-1)/k},
\]
and
\[
CS_{G_k}(r_k)
=
\bar Q(r_k-M)\ln\frac{1-M}{r_k-M}
=
\bar Q(1-M)e^{-(2k-1)/k}\frac{2k-1}{k}.
\]
Hence $J_k(G_k,r_k)=\bar QF_k(r_k)$, so $(G_k,r_k)$ is optimal and
$V_k^{pp}=\bar Qk(1-M)e^{-(2k-1)/k}$.

For uniqueness when $k>1/2$, equality in
$J_k(G,r)\le \bar QF_k(r)\le \bar QF_k(r_k)$
forces $r=r_k$. Then
\[
J_k(G,r_k)
\le
\bar Q(r_k-M)S_G(r_k)
\Bigl[(1-k)+k\ln\frac{1-M}{r_k-M}\Bigr]
=
\bar Qk(r_k-M)S_G(r_k).
\]
Since the optimal value is $\bar Qk(r_k-M)$, we must have $S_G(r_k)=1$. Seller optimality then gives
\[
S_G(v)\le \frac{r_k-M}{v-M}\qquad\forall v\in[r_k,1].
\]
But equality of the objective with the upper bound implies equality of the integral term, so
\[
S_G(v)=\frac{r_k-M}{v-M}
\qquad\text{for a.e. }v\in[r_k,1].
\]
Because $S_G$ is nonincreasing and left-continuous, this a.e.\ equality upgrades to pointwise equality on $[r_k,1]$. Together with $S_G(v)=1$ for all $v<r_k$, this pins down $G=G_k$. Thus the maximizing prior is unique.
\end{proof}

\subsection*{Proof of Proposition \ref{prop:linear-cost-comp}}

\begin{proof}
For $k\le 1/2$, Proposition~\ref{prop:linear-cost-posted} gives $r_k=1$ and $G_k^*=\delta_1$. For $k>1/2$, $r_k=M+(1-M)e^{-(2k-1)/k}$, so
\[
\frac{dr_k}{dk}
=
-\frac{1-M}{k^2}e^{-(2k-1)/k}<0.
\]
Hence $r_k$ is weakly decreasing on $[0,1]$ and strictly decreasing on $(1/2,1]$. Since $\operatorname{supp}(G_k^*)=[r_k,1]$,
the support expands downward as $k$ rises.

The atom at $1$ is
\[
a_k=
1-\lim_{v\uparrow1}G_k^*(v)
=
\frac{r_k-M}{1-M}
=
e^{-(2k-1)/k}
\qquad (k>1/2),
\]
while $a_k=1$ for $k\le1/2$. So the top atom is weakly decreasing on $[0,1]$ and strictly decreasing on $(1/2,1]$.

For the stochastic comparison, fix $0\le k_1<k_2\le1$, and write $r_i:=r_{k_i}$, $G_i:=G_{k_i}^*$. If $k_2\le1/2$, then both priors equal $\delta_1$. If $k_1\le1/2<k_2$, then $G_1(v)=0$ for all $v<1$, whereas $G_2(v)>0$ for every $v\in(r_2,1)$, so $G_2(v)\ge G_1(v)$ for all $v\in[0,1]$ If both $k_i>1/2$, then $r_2<r_1$ and
\[
G_i(v)=
\begin{cases}
0,&v<r_i,\\[1mm]
1-\dfrac{r_i-M}{v-M},&r_i\le v<1,\\[1mm]
1,&v=1.
\end{cases}
\]
Therefore $G_2(v)\ge G_1(v)$ for all $v<1$, with strict inequality for every $v\in(r_2,1)$. Thus $G_{k_1}^*$ first-order stochastically dominates $G_{k_2}^*$.
\end{proof}

\subsection*{Proof of Proposition \ref{prop:linear-cost-geometry}}

\begin{proof}
Fix any implementable pair generated by $(G,r)$ with $r\in R^*(G)$, and write
\[
\pi:=\Pi_G(r)=\bar Q(r-M)S_G(r),
\qquad
c:=CS_G(r)=\bar Q\int_r^1 S_G(v)\,dv.
\]

If $\pi=0$, seller optimality implies
$(v-M)S_G(v)\le 0$ for all $v\in[M,1]$,
so $S_G(v)=0$ for all $v>M$ and therefore $c=0$.

Now suppose $\pi>0$. Then $r>M$ and seller optimality gives
$(v-M)S_G(v)\le (r-M)S_G(r)=\pi/\bar Q$ for all $v\in[r,1]$. Hence
\[
c\le \pi\int_r^1 \frac{dv}{v-M}=\pi\ln\frac{1-M}{r-M}.
\]
Also $\pi=\bar Q(r-M)S_G(r)\le \bar Q(r-M)$,
so $r-M\ge \pi/\bar Q$ and therefore
\[
c\le\pi\ln\frac{\bar Q(1-M)}{\pi}=:f(\pi).
\]
Thus every implementable pair lies in
$\{(c,\pi):0\le \pi\le \bar Q(1-M),\ 0\le c\le f(\pi)\}$.

Conversely, fix $(c,\pi)$ with
$0\le \pi\le \bar Q(1-M)$ and $0\le c\le f(\pi)$. If $\pi=0$, take $G=\delta_M$ and $r=M$. Then $(CS_G(r),\Pi_G(r))=(0,0)$.

If $\pi>0$, define
$r:=M+\pi/\bar Q\in(M,1]$ and $b:=M+(r-M)e^{c/\pi}\le 1$, and let $G$ have survival function
\[
S_G(v)=
\begin{cases}
1,&v<r,\\[1mm]
\dfrac{r-M}{v-M},&r\le v\le b,\\[1mm]
0,&v>b.
\end{cases}
\]
Then $r$ is seller-optimal, and $\Pi_G(r)=\bar Q(r-M)=\pi$, while
\[
CS_G(r)
=
\bar Q\int_r^b \frac{r-M}{v-M}\,dv
=
\bar Q(r-M)\ln\frac{b-M}{r-M}
=
c.
\]
So every point in the hypograph is implementable. This proves part (i).

Part (ii) is immediate: the implementable set is the hypograph of the continuous concave function
\[
f(\pi)=\pi\ln\frac{\bar Q(1-M)}{\pi}
\]
on the compact interval $[0,\bar Q(1-M)]$, hence it is compact and convex.

For part (iii), every feasible point with $c<f(\pi)$ is dominated by the boundary point $(f(\pi),\pi)$, so the Pareto frontier lies on the graph of $f$. Moreover,
\[
f'(\pi)=\ln\frac{\bar Q(1-M)}{\pi}-1\in[-1,0],
\]
so $f$ increases on $\bigl(0,\bar Q(1-M)/e\bigr)$ and decreases on $\bigl(\bar Q(1-M)/e,\bar Q(1-M)\bigr]$. Therefore the efficient branch is exactly
$\bigl\{(f(\pi),\pi):\pi\in[\bar Q(1-M)/e,\ \bar Q(1-M)]\bigr\}$,
running from
$C=\Bigl(\frac{\bar Q(1-M)}{e},\frac{\bar Q(1-M)}{e}\Bigr)$ to $B=(0,\bar Q(1-M))$. It remains to verify that every point on this Pareto frontier is supported. Fix
\(\pi\in[\bar Q(1-M)/e,\bar Q(1-M)]\). Since
$f'(\pi)\in[-1,0]$, let $k=\frac{1}{1-f'(\pi)}\in[1/2,1]$.
Then $kf'(\pi)+(1-k)=0$.
Because \(f\) is concave, \(\pi\) maximizes
$\pi'\mapsto kf(\pi')+(1-k)\pi'$
over \([0,Q(1-M)]\). Since every feasible point satisfies \(c\le f(\pi')\), the point
$(f(\pi),\pi)$ maximizes \(kc+(1-k)\pi\) over \(V^{pp}\). Hence, it is supported, so
$F^P(V^{pp})=F^{sup}(V^{pp})$.
\end{proof}

\subsection*{Proof of Corollary \ref{cor:common-envelope}}

\begin{proof}
Recall that $A:=\bar Q(1-M)$. Let $g(0):=0$ and $g(\pi):=\pi+f(\pi)=\pi\Bigl(1+\ln\!\frac{A}{\pi}\Bigr)$ for $\pi\in(0,A]$.
Since $g'(\pi)=\ln(A/\pi)\ge 0$ for all $\pi\in(0,A]$, the function \(g\) is increasing on \([0,A]\).

We also record the shifted equal-revenue family. For each \(\pi\in(0,A]\), let
$r(\pi):=M+\frac{\pi}{\bar Q}$,
and define
\[
H_\pi(v):=
\begin{cases}
0, & v<r(\pi),\\[0.4em]
1-\dfrac{r(\pi)-M}{v-M}
=
1-\dfrac{\pi/\bar Q}{v-M},
& r(\pi)\le v<1,\\[0.9em]
1, & v\ge 1.
\end{cases}
\]
A direct calculation gives
$\bar Q(v-M)S_{H_\pi}(v)\le \pi$ for all $v\in[M,1]$,
with equality for every \(v\in[r(\pi),1]\). Hence $\Pi^m(H_\pi)=\pi$ and
\[
W(H_\pi)
=
\bar Q\int_M^1 S_{H_\pi}(v)\,dv
=
\bar Q(r(\pi)-M)+\bar Q\int_{r(\pi)}^1 \frac{r(\pi)-M}{v-M}\,dv
=
g(\pi).
\]

\medskip
\noindent\textbf{Part (i):\cite{BergemannBrooksMorris2015} triangles.}
We first show that
\[
\bigcup_{H\in\Delta([0,1])} T^{BBM}(H)\subseteq \mathcal V^{pp}.
\]
Fix any prior \(H\in\Delta([0,1])\) and any \((c,\pi)\in T^{BBM}(H)\). Write
\[
\pi_0:=\Pi^m(H)=\bar Q\max_{r\in[M,1]}(r-M)S_H(r).
\]

If \(\pi_0=0\), then \((v-M)S_H(v)=0\) for every \(v\in[M,1]\). Since \(v-M>0\) for
\(v>M\), this implies \(S_H(v)=0\) for all \(v>M\), hence \(W(H)=0\). Therefore $T^{BBM}(H)=\{(0,0)\}\subseteq \mathcal V^{pp}$.
So suppose \(\pi_0>0\). Then
$(v-M)S_H(v)\le \pi_0/\bar Q$ for all $v\in[M,1]$,
so $S_H(v)\le \min\Bigl\{1,\frac{\pi_0}{\bar Q(v-M)}\Bigr\}$.
Integrating gives
\[
W(H)
=
\bar Q\int_M^1 S_H(v)\,dv
\le
\pi_0+\pi_0\ln\!\Bigl(\frac{A}{\pi_0}\Bigr)
=
g(\pi_0).
\]
Since \((c,\pi)\in T^{BBM}(H)\), we have \(\pi\ge \pi_0\) and \(c+\pi\le W(H)\). Because \(g\) is
increasing, $c+\pi\le g(\pi_0)\le g(\pi)$,
hence $c\le f(\pi)$. Therefore \((c,\pi)\in \mathcal V^{pp}\).

For the reverse inclusion, fix any \((c,\pi)\in\mathcal V^{pp}\). If \(\pi=0\), then \(c=0\), and
taking \(H=\delta_M\) gives \((c,\pi)\in T^{BBM}(H)\). Now suppose \(\pi>0\). By
Proposition~\ref{prop:linear-cost-geometry}(i) $0\le c\le f(\pi)$, so $c+\pi\le g(\pi)=W(H_\pi)$.
Since \(\Pi^m(H_\pi)=\pi\), it follows that
$(c,\pi)\in T^{BBM}(H_\pi)$. Thus
\[
\mathcal V^{pp}\subseteq \bigcup_{H\in\Delta([0,1])} T^{BBM}(H).
\]
Combining the two inclusions yields the result.

\medskip
\noindent\textbf{Part (ii): \cite{roessler_szentes_2017} triangles.}
If $\pi_0:=\underline{\Pi}(H)$, then \cite{roessler_szentes_2017} show that the set of implementable ($\CS,\Pi$) pairs is the triangle
\[
T^{RS}(H)=\{(c,\pi)\in\mathbb R_+^2:\ c\ge 0,\ \pi\ge \pi_0,\ c+\pi\le W(H)\},
\]
and among all priors with seller-profit floor \(\pi_0\), efficient surplus is bounded above by
\(g(\pi_0)\), with equality attained by the shifted equal-revenue prior \(H_{\pi_0}\). Hence $W(H)\le g(\pi_0)$.

We first show that
\[
\bigcup_{H\in\Delta([0,1])} T^{RS}(H)\subseteq \mathcal V^{pp}.
\]
Fix any prior \(H\in\Delta([0,1])\) and any \((c,\pi)\in T^{RS}(H)\). Let $\pi_0:=\underline{\Pi}(H)$.
Then \(\pi\ge \pi_0\) and \(c+\pi\le W(H)\le g(\pi_0)\). Since \(g\) is increasing,
$c+\pi\le g(\pi_0)\le g(\pi)$,
so $c\le f(\pi)$. Thus \((c,\pi)\in\mathcal V^{pp}\).

For the reverse inclusion, fix any \((c,\pi)\in\mathcal V^{pp}\). If \(\pi=0\), then \(c=0\), and
taking \(H=\delta_M\) gives \((c,\pi)\in T^{RS}(H)\). Now suppose \(\pi>0\). By the same
construction as above, $W(H_\pi)=g(\pi)$ and $c+\pi\le g(\pi)=W(H_\pi)$.
Moreover, the shifted
equal-revenue prior \(H_\pi\) has seller-profit floor at most \(\pi\), i.e. $\underline{\Pi}(H_\pi)\le\pi$.
(In fact, equality holds, but only the weak inequality is needed here.) Therefore $(c,\pi)\in T^{RS}(H_\pi)$. Hence
\[
\mathcal V^{pp}\subseteq \bigcup_{H\in\Delta([0,1])} T^{RS}(H).
\]
Combining the two inclusions yields the result. This completes the proof.
\end{proof}

\subsection*{Proof of Proposition \ref{prop:decreasing-icx-structure}}

 The primitive hold-up problem is 
 \[\sup_{G\in\Delta([0,1])}\{W_k(G)-C(G)\}\].
 The next lemma shows that, under Assumption \ref{ass:decreasing-icx}, the two restricting attention to Myerson regular distributions with positive virtual values remain without loss, so the primitive problem is equivalent
 to the reduced-form problem \(\sup_{\phi\in\Phi}\hat J_k(\phi)\).

\begin{lemma}[Without-loss reductions under decreasing-in-icx design costs]
\label{lem:decreasing-icx-reductions}
Maintain Assumptions \ref{ass:cost} and \ref{ass:decreasing-icx}, and fix \(k\in[0,1]\). 

\begin{enumerate}
\item Let \(G\in\Delta([0,1])\), and let \(G^e\) be the regularized distribution from Lemma \ref{lem:concavify}, with quantile
\[
Q^e(u)=\frac{R(u)}{1-u}, \qquad u\in[0,1).
\]
Then
\[
G^e \succeq_{st} G,
\qquad
G^e \succeq_{icx} G,
\qquad
W_k(G^e)-C(G^e)\ge W_k(G)-C(G).
\]

\item Let \(G\in\Omega\) have ironed virtual value \(\phi\), and let \(G^+\) be the truncation from Lemma \ref{lem:truncate}. Then
\[
G^+ \succeq_{st} G,
\qquad
G^+ \succeq_{icx} G,
\qquad
W_k(G^+)-C(G^+)\ge W_k(G)-C(G).
\]
\end{enumerate}

Consequently,
\[
\sup_{G\in\Delta([0,1])}\{W_k(G)-C(G)\}
=
\sup_{\phi\in\Phi}\hat J_k(\phi).
\]
\end{lemma}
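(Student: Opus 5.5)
The plan is to prove the two stochastic comparisons quantile by quantile and then pass to the objective using Lemmas~\ref{lem:concavify} and \ref{lem:truncate} together with Assumption~\ref{ass:decreasing-icx}.

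For part 1, I will use the fact already shown in the proof of Lemma~\ref{lem:concavify}: since $R\ge\widehat R$,
\[
Q^e(u)-Q(u)=\frac{R(u)-\widehat R(u)}{1-u}\ge 0 \qquad \text{for all } u\in[0,1).
\]
At $u=1$ I will pass to the limit. Both $G$ and $G^e$ are laws of their quantiles evaluated at a uniform $U$, so the pointwise ordering of lower quantiles gives $G^e(v)\le G(v)$ for every $v$ (the same set-inclusion argument used in the proof of Proposition~\ref{prop:comp_statics}). This is $G^e\succeq_{st}G$. First-order dominance implies increasing-convex dominance, because every increasing convex $\psi$ is increasing; hence $G^e\succeq_{icx}G$.

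Assumption~\ref{ass:decreasing-icx} then yields $C(G^e)\le C(G)$. Lemma~\ref{lem:concavify} gives $W_k(G^e)\ge W_k(G)$. Adding the two inequalities proves part 1.

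For part 2, I will compare $Q^+$ with $Q$ directly. Since $\phi^+\ge\phi$ pointwise,
\[
R^+(u)=\int_u^1\phi^+\,dt\ \ge\ \int_u^1\phi\,dt=R(u).
\]
Because $G\in\Omega$, we have $Q=R/(1-u)$, and therefore $Q^+\ge Q$ on $[0,1)$, with the value at $1$ handled by limits. The same coupling argument gives $G^+\succeq_{st}G$, hence $G^+\succeq_{icx}G$ and $C(G^+)\le C(G)$. Lemma~\ref{lem:truncate} gives $W_k(G^+)=W_k(G)$, which completes part 2.

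For the consequence, I will start from an arbitrary $G$ and apply part 1 and then part 2. The result $(G^e)^+$ is $G_\phi$ for some $\phi\in\Phi$, and its objective is weakly larger. So the supremum over $\Delta([0,1])$ is at most $\sup_{\phi\in\Phi}\widehat J_k(\phi)$. Conversely, every $G_\phi$ with $\phi\in\Phi$ is itself a distribution in $\Delta([0,1])$, and $\widehat J_k(\phi)=W_k(G_\phi)-C(G_\phi)$, which gives the reverse inequality.

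I expect the only delicate points to be the endpoint $u=1$ and the left-continuity of the quantiles, which the coupling argument needs. Both follow from the constructions in Lemmas~\ref{lem:concavify} and \ref{lem:truncate}, which define the value at $1$ as a limit of nondecreasing functions, so the ordering is preserved there.
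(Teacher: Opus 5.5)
Your proposal is correct and follows the paper's proof essentially step for step: you order the quantiles, use the uniform coupling to obtain first-order dominance and hence increasing-convex dominance and a weakly lower design cost, combine this with Lemmas~\ref{lem:concavify} and~\ref{lem:truncate}, and finish with the two-sided supremum argument. The only difference is cosmetic and occurs in part 2. You obtain $R^+\ge R$ directly from $\phi^+\ge\phi$, whereas the paper splits at $a=\inf\{u:\phi(u)\ge 0\}$.
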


\begin{proof}
For part (i), Lemma \ref{lem:concavify} gives $W_k(G^e)\ge W_k(G)$.
Moreover, by construction of \(G^e\),
\[
Q^e(u)=\frac{R(u)}{1-u}\ge Q(u)\qquad \forall u\in[0,1),
\]
because \(R\) majorizes the raw revenue curve of \(G\). If \(U\sim \mathrm{Unif}[0,1]\), then
\[
Q(U)\sim G,
\qquad
Q^e(U)\sim G^e,
\qquad
Q^e(U)\ge Q(U)\ \text{a.s.}
\]
Hence \(G^e\succeq_{st}G\). First-order stochastic dominance implies increasing-convex order, so Assumption \ref{ass:decreasing-icx} yields $C(G^e)\le C(G)$.
Therefore $W_k(G^e)-C(G^e)\ge W_k(G)-C(G)$.

For part (ii), let $a:=\inf\{u\in[0,1]:\phi(u)\ge 0\}$,
with the convention \(a=1\) if \(\phi(u)<0\) for all \(u<1\). Since \(\phi\) is nondecreasing, we have
$\phi(u)\le 0$ for $u<a$ and $\phi(u)\ge 0$ for $u\ge a$. Hence, for \(u\ge a\), \(\phi^+(u)=\phi(u)\), so \(R^+(u)=R(u)\) and therefore \(Q^+(u)=Q(u)\). For \(u<a\),
$R^+(u)=\int_u^1 \phi^+(t)\,dt=\int_a^1 \phi(t)\,dt=R(a)$,
while $R(u)=R(a)+\int_u^a \phi(t)\,dt \le R(a)$, because \(\phi(t)\le 0\) on \([u,a]\). Thus
\[
Q^+(u)=\frac{R^+(u)}{1-u}=\frac{R(a)}{1-u}\ge \frac{R(u)}{1-u}=Q(u)
\qquad \forall u<a.
\]
So \(Q^+(u)\ge Q(u)\) for all \(u\in[0,1]\). Coupling again with \(U\sim \mathrm{Unif}[0,1]\) gives $G^+\succeq_{st}G$,
hence \(G^+\succeq_{icx}G\). By Assumption \ref{ass:decreasing-icx}, $C(G^+)\le C(G)$.
Lemma \ref{lem:truncate} gives
$W_k(G^+)=W_k(G)$.
Therefore $W_k(G^+)-C(G^+)\ge W_k(G)-C(G)$.

For the final claim, fix any \(G\in\Delta([0,1])\). By part (i), replacing \(G\) by \(G^e\) weakly raises the primitive objective \(W_k-C\). Applying part (ii) to \(G^e\) yields a regular distribution \(H\in\Omega\) with nonnegative ironed virtual value such that $W_k(H)-C(H)\ge W_k(G)-C(G)$. By the construction in Section 2.1, there exists \(\phi\in\Phi\) such that \(H=G_\phi\), and then
\[
\hat J_k(\phi)
=
J_k(\phi)-C(G_\phi)
=
W_k(G_\phi)-C(G_\phi)
=
W_k(H)-C(H).
\]
Thus
\[
\sup_{G\in\Delta([0,1])}\{W_k(G)-C(G)\}
\le
\sup_{\phi\in\Phi}\hat J_k(\phi).
\]
The reverse inequality is immediate, since every \(\phi\in\Phi\) induces a feasible distribution \(G_\phi\in\Delta([0,1])\). This proves the result.
\end{proof}

Next, we prove a lemma that records the order-theoretic implication that drives the analysis.

\begin{lemma}[Monotone perturbations weakly lower the design cost]
\label{lem:decreasing-icx-monotone}
Let \(\phi,\hat\phi\in\Phi\), and suppose
$\hat\phi(u)\ge \phi(u)$ for all $u\in[0,1]$. Then, $Q_{\hat\phi}(u)\ge Q_\phi(u)$ for all  $u\in[0,1]$, hence $G_{\hat\phi}\succeq_{st} G_\phi$. In particular, $G_{\hat\phi}\succeq_{icx} G_\phi$, so $C(G_{\hat\phi})\le C(G_\phi)$.
\end{lemma}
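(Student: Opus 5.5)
The plan is to go through the tail-average representation of the quantile, then a quantile coupling, and then Assumption~\ref{ass:decreasing-icx}. Recall that for every $\phi\in\Phi$ and $u\in[0,1)$ we have $Q_\phi(u)=\frac{1}{1-u}\int_u^1\phi(t)\,dt$, with $Q_\phi(1)$ defined as the left limit. Since $\hat\phi\ge\phi$ pointwise, integrating over $[u,1]$ and dividing by $1-u>0$ gives $Q_{\hat\phi}(u)\ge Q_\phi(u)$ for every $u\in[0,1)$. At $u=1$, pass to the limit $u\uparrow1$ in this inequality; weak inequalities are preserved under limits, so $Q_{\hat\phi}(1)\ge Q_\phi(1)$.

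Next I would translate the quantile ordering into first-order stochastic dominance. The argument is the same coupling already used in Lemma~\ref{lem:decreasing-icx-reductions} and in the proof of Proposition~\ref{prop:comp_statics}. Take $U\sim\mathrm{Unif}[0,1]$. Then $Q_\phi(U)\sim G_\phi$ and $Q_{\hat\phi}(U)\sim G_{\hat\phi}$, and $Q_{\hat\phi}(U)\ge Q_\phi(U)$ almost surely. Equivalently, for each $v$ we have $\{u:Q_{\hat\phi}(u)\le v\}\subseteq\{u:Q_\phi(u)\le v\}$, which gives $G_{\hat\phi}(v)\le G_\phi(v)$, that is, $G_{\hat\phi}\succeq_{st}G_\phi$.

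For the increasing-convex order, take any increasing convex $\psi:[0,1]\to\mathbb R$. Since $\psi$ is increasing, the coupling gives $\psi(Q_{\hat\phi}(U))\ge\psi(Q_\phi(U))$ almost surely. Both sides are bounded, because $\psi$ is convex on a compact interval and hence bounded there. Taking expectations yields $\int\psi\,dG_{\hat\phi}\ge\int\psi\,dG_\phi$, so $G_{\hat\phi}\succeq_{icx}G_\phi$. Only monotonicity of $\psi$ is used here; convexity plays no role. Assumption~\ref{ass:decreasing-icx} then delivers $C(G_{\hat\phi})\le C(G_\phi)$.

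There is no real obstacle in this argument. The only point needing care is the endpoint $u=1$ and the convention for $Q(1)$, which the limit argument above handles; it affects only a null set anyway. A second point is that quantiles are left-continuous, but almost-sure comparison under $U$ is all the coupling requires.
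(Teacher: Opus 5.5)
Your proposal is correct and follows the paper's proof essentially step for step: compare the tail averages to order the quantiles, use the uniform-quantile coupling to get first-order stochastic dominance, then use that FOSD implies the increasing-convex order and apply Assumption~\ref{ass:decreasing-icx}. Your explicit treatment of the endpoint $u=1$ fills in a detail the paper leaves implicit, as does your direct check that FOSD yields the icx order using only monotonicity of $\psi$.
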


\begin{proof}
The pointwise inequality for the tail averages is immediate from the definition:
\[
Q_{\hat\phi}(u)-Q_\phi(u)
=
\frac{1}{1-u}\int_u^1 \bigl(\hat\phi(s)-\phi(s)\bigr)\,ds
\ge 0
\qquad\forall u\in[0,1].
\]
Let \(U\sim \mathrm{Unif}[0,1]\). Then
$Q_{\hat\phi}(U)\ge Q_\phi(U)$ a.s.
Since \(Q_{\hat\phi}(U)\sim G_{\hat\phi}\) and \(Q_\phi(U)\sim G_\phi\), this shows
$G_{\hat\phi}\succeq_{st} G_\phi$.
First-order stochastic dominance implies increasing-convex order, so $G_{\hat\phi}\succeq_{icx} G_\phi$.
The conclusion for the cost follows from Assumption~\ref{ass:decreasing-icx}.
\end{proof}

The next lemma records a global first-order expansion of the baseline objective for
profiles that are uniformly bounded away from zero. The local version needed below,
where only the support of the perturbation is required to lie in a region where the
reference profile is bounded away from zero, is stated in Lemma~\ref{lem:mean-increasing_first-variation}.

\begin{lemma}
Maintain Assumption \ref{ass:cost}. Fix $\phi \in \Phi$ and assume that there exists $\delta \in (0,1]$
such that $\phi(u) \ge \delta$ for a.e. $u \in [0,1]$. Let $\Delta_n$ be a sequence of bounded measurable functions such that
$\phi + \Delta_n \in \Phi$ for all $n$, and $\|\Delta_n\|_\infty \to 0$. Then
\[
J_k(\phi+\Delta_n)-J_k(\phi)
=
\int_0^1 H_k[\phi](u)\Delta_n(u)\,du
+o(\|\Delta_n\|_\infty).
\]
\end{lemma}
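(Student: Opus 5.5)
The plan is to compute $J_k(\phi+\Delta_n)-J_k(\phi)$ exactly, term by term, as in the proof of Lemma~\ref{lem:gateaux}(i), and then control each remainder uniformly by $\|\Delta_n\|_\infty$. Two observations make every quantity well behaved. First, $Q_{\phi+\Delta_n}=Q_\phi+h_n$ with $h_n(u):=\frac{1}{1-u}\int_u^1\Delta_n(s)\,ds$, so that $|h_n|\le\|\Delta_n\|_\infty$. Second, for $\|\Delta_n\|_\infty\le\delta/2$ both $\phi$ and $\phi+\Delta_n$ take values in $[\delta/2,1]$ almost everywhere. On this compact range, $q$ is $C^1$ with $q'=1/c''(q(\cdot))$ bounded and uniformly continuous, since $c''>0$ and continuous on $[q(\delta/2),\bar q]$. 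Likewise $\pi$ is $C^1$ with $\pi'=q$ Lipschitz there.

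Write $\phi_n:=\phi+\Delta_n$. I would then decompose the difference as
\begin{align*}
J_k(\phi_n)-J_k(\phi)
&=k\int_0^1 h_n\,q(\phi_n)\,du
+k\int_0^1 (Q_\phi-\phi)\bigl(q(\phi_n)-q(\phi)\bigr)\,du\\
&\quad-k\int_0^1 \Delta_n\,q(\phi_n)\,du
+(1-k)\int_0^1 \bigl(\pi(\phi_n)-\pi(\phi)\bigr)\,du.
\end{align*}
Each term is then handled against its linear counterpart. In the first and third terms, replacing $q(\phi_n)$ by $q(\phi)$ costs at most $\sup|q'|\,\|\Delta_n\|_\infty^2$, because $|h_n|,|\Delta_n|\le\|\Delta_n\|_\infty$ and $q$ is Lipschitz on $[\delta/2,1]$. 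In the second term, the mean value theorem gives $q(\phi_n)-q(\phi)=q'(\xi_n)\Delta_n$ with $\xi_n$ between $\phi$ and $\phi_n$, so the error against $q'(\phi)\Delta_n$ is bounded by $\omega(\|\Delta_n\|_\infty)\|\Delta_n\|_\infty$, where $\omega$ is the modulus of continuity of $q'$ on $[\delta/2,1]$ and $|Q_\phi-\phi|\le1$. The fourth term is treated the same way using $\pi'=q$ and the Lipschitz bound on $q$. All of these errors are $o(\|\Delta_n\|_\infty)$.

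Finally, the main linear term $k\int_0^1 h_n q(\phi)\,du$ is converted to $k\int_0^1\Delta_n A_\phi\,ds$ by Fubini, exactly as in Lemma~\ref{lem:gateaux}. This step needs $\int_0^1|\Delta_n|A_\phi<\infty$. It holds because $A_\phi(s)\le \bar q\,(-\ln(1-s))$, which is integrable, and it justifies the interchange for bounded $\Delta_n$. Collecting the linear pieces reproduces $\int_0^1 H_k[\phi]\Delta_n\,du$.

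The main obstacle is uniformity. Lemma~\ref{lem:gateaux} only handles a fixed direction $\eta$ scaled by $t$, whereas here the perturbations $\Delta_n$ vary arbitrarily. The argument therefore has to rely on the uniform continuity of $q'$ on the compact set $[\delta/2,1]$, which is exactly where the hypothesis $\phi\ge\delta$ enters. It must avoid dominated convergence along a fixed direction, and all error bounds must be expressed through $\|\Delta_n\|_\infty$ rather than through pointwise limits.
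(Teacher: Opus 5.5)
Your proof is correct and takes essentially the same route as the paper. You restrict to the compact range $[\delta/2,1]$, where $q$ and $\pi$ are $C^1$, obtain a first-order expansion whose error is controlled uniformly by $\|\Delta_n\|_\infty$, and then convert the $Q$-variation term into $\int_0^1 kA_\phi\Delta_n$ by Fubini. The only difference is cosmetic: the paper packages the integrand as a single $C^1$ function $F(\phi,Q)=k(Q-\phi)q(\phi)+(1-k)\pi(\phi)$ on $[\delta/2,1]\times[0,1]$ with one joint modulus of continuity, whereas you bound the four terms of the Lemma~\ref{lem:gateaux} decomposition separately, as the paper itself does in Lemma~\ref{lem:mean-increasing_first-variation}.
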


\begin{proof}
Set $\varepsilon_n := \|\Delta_n\|_\infty$ and
\[
\delta Q_n(u) := Q_{\phi+\Delta_n}(u)-Q_\phi(u)
= \frac{1}{1-u}\int_u^1 \Delta_n(s)\,ds.
\]
Then
$|\delta Q_n(u)| \le \varepsilon_n$ for all $u \in [0,1]$. Since $\varepsilon_n \to 0$ and $\phi \ge \delta$ a.e., for all sufficiently large $n$, $\phi(u),\ \phi(u)+\Delta_n(u) \in [\delta/2,1]$ for a.e. $u \in [0,1]$.

Write $F(\phi,Q) := k(Q-\phi)q(\phi) + (1-k)\pi(\phi)$,
so that $J_k(\phi) = \int_0^1 F(\phi(u),Q_\phi(u))\,du$.
Because $q$ and $\pi$ are $C^1$ on the compact interval $[\delta/2,1]$, the map $F$
is $C^1$ on $[\delta/2,1]\times[0,1]$. Hence there exists a modulus $\omega$ with
$\omega(r)\to 0$ as $r\downarrow 0$ such that, whenever
$(\phi,Q),(\phi+\delta\phi,Q+\delta Q)\in[\delta/2,1]\times[0,1]$,
\[
\bigl|F(\phi+\delta\phi,Q+\delta Q)-F(\phi,Q)
-F_\phi(\phi,Q)\delta\phi-F_Q(\phi,Q)\delta Q\bigr|
\le
\omega(|\delta\phi|+|\delta Q|)(|\delta\phi|+|\delta Q|).
\]

Applying this pointwise with
$(\phi,Q) = (\phi(u),Q_\phi(u))$, $\delta\phi = \Delta_n(u)$ and $\delta Q = \delta Q_n(u)$, and using $|\Delta_n(u)| + |\delta Q_n(u)| \le 2\varepsilon_n$,
we obtain
\[
J_k(\phi+\Delta_n)-J_k(\phi)
=
\int_0^1 \Bigl(
F_\phi(\phi(u),Q_\phi(u))\Delta_n(u)
+
F_Q(\phi(u),Q_\phi(u))\delta Q_n(u)
\Bigr)\,du
+o(\varepsilon_n).
\]

Now $F_\phi(\phi,Q)
=
k(Q-\phi)q'(\phi) + (1-2k)q(\phi)$ and $F_Q(\phi,Q) = kq(\phi)$. Therefore,
\[
\int_0^1 F_\phi(\phi(u),Q_\phi(u))\Delta_n(u)\,du
=
\int_0^1
\Bigl(
k(Q_\phi(u)-\phi(u))q'(\phi(u)) + (1-2k)q(\phi(u))
\Bigr)\Delta_n(u)\,du.
\]
Also, by Fubini,
\[
\int_0^1 F_Q(\phi(u),Q_\phi(u))\delta Q_n(u)\,du
=
\int_0^1 kq(\phi(u))\frac{1}{1-u}\int_u^1 \Delta_n(s)\,ds\,du
=
\int_0^1 kA_\phi(s)\Delta_n(s)\,ds.
\]
Combining the two terms gives
\[
J_k(\phi+\Delta_n)-J_k(\phi)
=
\int_0^1 H_k[\phi](u)\Delta_n(u)\,du
+o(\|\Delta_n\|_\infty).
\]
\end{proof}

\begin{lemma}[Uniform first-order expansion of $J_k$ away from zero]\label{lem:mean-increasing_first-variation}
Maintain Assumption~\ref{ass:cost}. Fix $\phi \in \Phi$. Let $\Delta_n$ be a sequence of bounded
measurable functions such that $\phi+\Delta_n \in \Phi$ for all $n$, and $\|\Delta_n\|_\infty \to 0$.
Assume there exists $\delta \in (0,1]$ such that, for every $n$,
$\phi(u)\ge \delta$ for a.e. $u\in E_n:=\{u\in[0,1]:\Delta_n(u)\neq 0\}$.
Then
\[
J_k(\phi+\Delta_n)-J_k(\phi)
=
\int_0^1 H_k[\phi](u)\Delta_n(u)\,du
+
o(\|\Delta_n\|_\infty).
\]
\end{lemma}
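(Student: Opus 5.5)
The plan is to repeat the pointwise Taylor argument of the preceding (global) lemma, with one change: the expansion of $F(\phi,Q)=k(Q-\phi)q(\phi)+(1-k)\pi(\phi)$ is split according to whether $\phi$ itself moves at $u$. Set $\varepsilon_n:=\|\Delta_n\|_\infty$ and $\delta Q_n(u):=\frac{1}{1-u}\int_u^1\Delta_n(s)\,ds$. As before, $|\delta Q_n|\le\varepsilon_n$ everywhere, and $Q_{\phi+\Delta_n}=Q_\phi+\delta Q_n$. Then
\[
J_k(\phi+\Delta_n)-J_k(\phi)=\int_{E_n}\bigl[F(\phi+\Delta_n,Q_\phi+\delta Q_n)-F(\phi,Q_\phi)\bigr]du+\int_{E_n^c}\bigl[F(\phi,Q_\phi+\delta Q_n)-F(\phi,Q_\phi)\bigr]du .
\]

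On $E_n^c$ only $Q$ moves, and $F$ is affine in $Q$ with slope $kq(\phi)$. The second integral therefore equals $\int_{E_n^c}kq(\phi(u))\,\delta Q_n(u)\,du$ exactly, with no remainder. This holds even where $\phi$ is near $0$, where $q$ fails to be $C^1$; that is the point of not requiring a global lower bound on $\phi$.

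On $E_n$ we have $\phi\ge\delta$ a.e. For $n$ large enough that $\varepsilon_n<\delta/2$, both $\phi$ and $\phi+\Delta_n$ lie in $[\delta/2,1]$ a.e. on $E_n$, and $Q_\phi$, $Q_\phi+\delta Q_n$ lie in $[0,1]$. There $F$ is $C^1$ on the compact set $[\delta/2,1]\times[0,1]$ and so has a uniform modulus $\omega$. This gives
\[
F(\phi+\Delta_n,Q_\phi+\delta Q_n)-F(\phi,Q_\phi)=F_\phi\,\Delta_n+F_Q\,\delta Q_n+R_n,
\qquad |R_n|\le 2\varepsilon_n\,\omega(2\varepsilon_n),
\]
uniformly in $u$. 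Integrating over $E_n$, whose measure is at most $1$, the remainder is $o(\varepsilon_n)$. This is the main step. The only subtlety is that the modulus must not depend on $n$, and it does not, because the compact domain $[\delta/2,1]\times[0,1]$ is fixed once $\delta$ is fixed.

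Collecting terms, the $F_Q$ pieces from $E_n$ and $E_n^c$ combine into $\int_0^1 kq(\phi(u))\,\delta Q_n(u)\,du$. By Fubini, exactly as in the previous lemma, this equals $\int_0^1 kA_\phi(s)\Delta_n(s)\,ds$; this step uses only $0\le q\le\bar q$ together with the bound $|\delta Q_n|\le\varepsilon_n$. The $F_\phi$ piece is $\int_{E_n}\bigl(k(Q_\phi-\phi)q'(\phi)+(1-2k)q(\phi)\bigr)\Delta_n\,du$. Since $\Delta_n=0$ off $E_n$, this is the same as the integral over $[0,1]$, with the integrand well defined because $q'$ only needs to be evaluated where $\phi\ge\delta$. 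Adding the two pieces gives $\int_0^1 H_k[\phi]\Delta_n+o(\varepsilon_n)$. For the finitely many small $n$ with $\varepsilon_n\ge\delta/2$ there is nothing to prove, since the statement is asymptotic.
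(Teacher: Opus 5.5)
Your proof is correct and is essentially the paper's argument. The paper writes $J_k(\phi+\Delta_n)-J_k(\phi)$ as four integrals and Taylor-expands $q$ and $\pi$ only on $E_n$, where $\phi+\Delta_n\in[\delta/2,1]$. It bounds the cross term $\int_0^1\delta Q_n\,(q(\phi+\Delta_n)-q(\phi))\,du$ by $\varepsilon_n\cdot O(\varepsilon_n)$ and finishes with the same Fubini identity. You instead use a joint $C^1$ modulus of $F$ on $[\delta/2,1]\times[0,1]$ on $E_n$, the device of the preceding global lemma, plus the exact affine-in-$Q$ identity off $E_n$; this absorbs the cross term into the remainder and otherwise uses the same ingredients.
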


\begin{proof}
Set $\varepsilon_n:=\|\Delta_n\|_\infty$ and
\[
\delta Q_n(u):=Q_{\phi+\Delta_n}(u)-Q_\phi(u)
=
\frac{1}{1-u}\int_u^1 \Delta_n(s)\,ds.
\]
Then $|\delta Q_n(u)|\le \varepsilon_n$ for every $u\in[0,1]$.

Hence
\begin{align*}
J_k(\phi+\Delta_n)-J_k(\phi)
&=
k\int_0^1 \delta Q_n(u)\,q(\phi(u)+\Delta_n(u))\,du \\
&\quad
+k\int_0^1 (Q_\phi(u)-\phi(u))
\big(q(\phi(u)+\Delta_n(u))-q(\phi(u))\big)\,du \\
&\quad
-k\int_0^1 \Delta_n(u)\,q(\phi(u)+\Delta_n(u))\,du \\
&\quad
+(1-k)\int_0^1
\big(\pi(\phi(u)+\Delta_n(u))-\pi(\phi(u))\big)\,du.
\end{align*}

Since $\varepsilon_n\to 0$ and $\phi\ge \delta$ a.e. on $E_n$, we have
$\phi(u)+\Delta_n(u)\in[\delta/2,1]$ for a.e. $u\in E_n$
for all $n$ large enough. By Assumption~\ref{ass:cost}, $q$ and $\pi$ are $C^1$ on every compact subset
of $(0,1]$, hence on $[\delta/2,1]$. Therefore
\begin{align*}
q(\phi+\Delta_n)-q(\phi)
&=
q'(\phi)\Delta_n + r_n,\\
\pi(\phi+\Delta_n)-\pi(\phi)
&=
\pi'(\phi)\Delta_n + s_n,
\end{align*}
where $r_n=s_n=0$ off $E_n$ and
$\|r_n\|_\infty+\|s_n\|_\infty=o(\varepsilon_n)$.
In particular,
\[
\int_0^1 |r_n(u)|\,du+\int_0^1 |s_n(u)|\,du=o(\varepsilon_n).
\]

Also,
\[
\int_0^1
\delta Q_n(u)\big(q(\phi(u)+\Delta_n(u))-q(\phi(u))\big)\,du
=
o(\varepsilon_n),
\]
because $|\delta Q_n|\le \varepsilon_n$ and
\[
\int_0^1 |q(\phi+\Delta_n)-q(\phi)|\,du = O(\varepsilon_n).
\]

Substituting the expansions above gives
\begin{align*}
J_k(\phi+\Delta_n)-J_k(\phi)
&=
k\int_0^1 \delta Q_n(u)\,q(\phi(u))\,du \\
&\quad
+k\int_0^1 (Q_\phi(u)-\phi(u))q'(\phi(u))\Delta_n(u)\,du \\
&\quad
+(1-2k)\int_0^1 q(\phi(u))\Delta_n(u)\,du
+o(\varepsilon_n).
\end{align*}

Finally, Fubini's theorem yields
\begin{align*}
k\int_0^1 \delta Q_n(u)\,q(\phi(u))\,du
&=
k\int_0^1 \frac{q(\phi(u))}{1-u}\int_u^1 \Delta_n(s)\,ds\,du \\
&=
\int_0^1
\left(
k\int_0^s \frac{q(\phi(u))}{1-u}\,du
\right)\Delta_n(s)\,ds \\
&=
\int_0^1 kA_\phi(s)\Delta_n(s)\,ds.
\end{align*}
Combining terms gives
\[
J_k(\phi+\Delta_n)-J_k(\phi)
=
\int_0^1 H_k[\phi](u)\Delta_n(u)\,du
+
o(\varepsilon_n),
\]
as claimed.
\end{proof}

\textbf{Proof of Proposition \ref{prop:decreasing-icx-structure}}
\begin{proof}
Part (1). Let $1$ denote the constant-one profile. By Theorem~\ref{thm:main-structure}, $1$ is the unique maximizer
of the baseline objective $J_k$ for every $k\in[0,1/2]$. Hence $J_k(1)\ge J_k(\phi)$ for all $\phi\in\Phi$.

Moreover, $1(u)\ge \phi(u)$ for every $\phi\in\Phi$ and every $u\in[0,1]$, so Lemma~\ref{lem:decreasing-icx-monotone}
implies $C(G_1)\le C(G_\phi)$ for all $\phi\in\Phi$. Therefore
\[
\hat J_k(1)=J_k(1)-C(G_1)\ge J_k(\phi)-C(G_\phi)=\hat J_k(\phi)
\qquad \forall \phi\in\Phi.
\]
Thus $1$ is optimal. Since $1$ is also the unique maximizer of $J_k$, it is the unique maximizer
of $\hat J_k$ as well.

Part (2). Assume $k\in(1/2,1]$.

We first show that the zero profile is not optimal. Let $\bar\phi_k$ denote the costless optimizer
from Theorem~\ref{thm:main-structure}. Then
$J_k(\bar\phi_k)>J_k(0)=0$,
while $\bar\phi_k\ge 0$ pointwise, so Lemma~\ref{lem:decreasing-icx-monotone} gives
$C(G_{\bar\phi_k})\le C(G_0)$. Hence
$\hat J_k(\bar\phi_k)>\hat J_k(0)$,
so the zero profile is not optimal.

Next we show that no initial zero block can exist. 
Suppose, toward a contradiction, that \(\phi^*(u_0)=0\) for some \(u_0\in(0,1)\). Since \(\phi^*\) is nondecreasing and \(\phi^*\not\equiv0\), the number
$a:=\inf\{u\in[0,1]:\phi^*(u)>0\}$
belongs to \((0,1)\). Then \(\phi^*(u)=0\) for all \(u<a\), while \(\phi^*(u)>0\) for all \(u>a\). For \(\varepsilon>0\), define
\[
b_\varepsilon:=\inf\{u\in[0,1]:\phi^*(u)\ge \varepsilon\},
\qquad
C_\varepsilon:=\int_{b_\varepsilon}^1 \phi^*(s)\,ds,
\qquad
\phi_\varepsilon:=\max\{\phi^*,\varepsilon\}.
\]
Then
$b_\varepsilon\downarrow a$ and $C_\varepsilon\to C_a:=\int_a^1\phi^*(t)\,dt>0$
as \(\varepsilon\downarrow0\).
The same finite-difference calculation as in the proof of Lemma~\ref{lem:no-prefix} gives
\[
J_k(\phi_\varepsilon)-J_k(\phi^*)
\ge
q(\varepsilon)
\Big(
k[-\ln(1-a)](C_\varepsilon-\varepsilon(1-b_\varepsilon))
-(b_\varepsilon-a)
\Big).
\]
The bracket converges to $kC_a[-\ln(1-a)]>0$.
Hence $J_k(\phi_\varepsilon)>J_k(\phi^*)$
for all sufficiently small $\varepsilon>0$. Moreover, $\phi_\varepsilon\ge \phi^*$ pointwise,
so Lemma~\ref{lem:decreasing-icx-monotone} yields $C(G_{\phi_\varepsilon})\le C(G_{\phi^*})$.
Therefore $\hat J_k(\phi_\varepsilon)>\hat J_k(\phi^*)$, contradicting optimality. Thus no initial zero block can exist.

We now establish the binding tail. If $\phi^*\equiv 1$, the claim holds with $b=0$, so suppose
$\phi^*\not\equiv 1$. Choose $u_0\in(0,1)$. By the no-exclusion step above, $m:=\phi^*(u_0)>0$. Since $\phi^*$ is nondecreasing,
$\phi^*(u)\ge m$ for all $u\in[u_0,1)$. Hence
\[A_{\phi^*}(u)
=
A_{\phi^*}(u_0)+\int_{u_0}^u \frac{q(\phi^*(s))}{1-s}\,ds
\ge
A_{\phi^*}(u_0)+q(m)\int_{u_0}^u \frac{ds}{1-s},
\]
so $A_{\phi^*}(u)\to \infty$ as $u\uparrow 1$. The remaining terms in
\[
H_k[\phi](u)
=
kA_\phi(u)+k(Q_\phi(u)-\phi(u))q'(\phi(u))+(1-2k)q(\phi(u))
\]
are bounded on $[u_0,1)$, because $\phi^*(u)\in[m,1]$ there and $q,q'$ are continuous on the
compact interval $[m,1]$. Therefore
$H_k[\phi^*](u)\to \infty$ as $\uparrow 1$.
Hence there exists $t\in(u_0,1)$ such that
$H_k[\phi^*](u)>0$ for a.e. $u\in(t,1)$.

Assume, toward a contradiction, that $\phi^*$ is not equal to $1$ on any upper tail. Then the set $E:=\{u\in[t,1):\phi^*(u)<1\}$ has positive measure. Define
\[
\eta(u):=(1-\phi^*(u))1_{[t,1)}(u),
\qquad
\phi_s:=\phi^*+s\eta,\quad s\in[0,1].
\]
Then $\phi_s\in\Phi$ for all $s\in[0,1]$, and the support of $\eta$ lies where
$\phi^*\ge m>0$. By Lemma~\ref{lem:gateaux}(i),
\[
\frac{d}{ds}J_k(\phi_s)\Big|_{s=0+}
=
\int_t^1 H_k[\phi^*](u)(1-\phi^*(u))\,du
>0.
\]
Hence $J_k(\phi_s)>J_k(\phi^*)$
for all sufficiently small $s>0$. Also, $\phi_s\ge \phi^*$ pointwise, so Lemma~\ref{lem:decreasing-icx-monotone} implies $C(G_{\phi_s})\le C(G_{\phi^*})$.
Therefore $\hat J_k(\phi_s)>\hat J_k(\phi^*)$,
contradicting optimality. Thus there exists $t\in(0,1)$ such that
$\phi^*(u)=1$ for all $u\in[t,1]$. Let
\[
b:=\inf\{u\in[0,1):\phi^*(v)=1 \ \forall v\in[u,1]\}\in[0,1).
\]
Then $\phi^*(u)=1$ for all $u\in[b,1]$.

For the no-bunching result, suppose, toward a contradiction, that $\phi^*$ is constant on a
nontrivial open interval, that is, there exists $I=(\ell,r)\subset[0,b)$ such that $\phi^*(u)=\gamma\in(0,1)$ for all $u\in I$.

We first rule out the case \(\ell=0\). Let \(m:=r/2\). For \(\varepsilon>0\) small, define
\[
\widehat\phi_\varepsilon(u):=
\begin{cases}
\gamma-\varepsilon, & u<m,\\
\gamma+\varepsilon, & m\le u<r,\\
\max\{\phi^*(u),\gamma+\varepsilon\}, & u\ge r .
\end{cases}
\]
Then \(\widehat\phi_\varepsilon\in\Phi\). Let
$\Delta_\varepsilon:=\widehat\phi_\varepsilon-\phi^*$.
For every \(u\in[0,1]\),
$\int_u^1\Delta_\varepsilon(s)\,ds\ge0$.
Indeed, on \([0,r]\) the negative mass on \([0,m)\) is shifted to the right interval \([m,r)\), and the remaining correction on \([r,1]\) is nonnegative. Therefore $Q_{\widehat\phi_\varepsilon}(u)\ge Q_{\phi^*}(u)$ for all $u$, so \(G_{\widehat\phi_\varepsilon}\succeq_{st}G_{\phi^*}\), and Assumption \ref{ass:decreasing-icx} implies
$C(G_{\widehat\phi_\varepsilon})\le C(G_{\phi^*})$.

Let \(h:=H_k[\phi^*]\). By Lemma \ref{lem:gprime-positive-on-flat}, \(h\) is strictly increasing on \((0,r)\). Hence
\[
\int_0^r h(u)\Delta_\varepsilon(u)\,du
=
\varepsilon\left(\int_m^r h(u)\,du-\int_0^m h(u)\,du\right)>0.
\]
Outside \((0,r)\), the perturbation is nonzero only on
$E_\varepsilon:=\{u\ge r:\phi^*(u)<\gamma+\varepsilon\}$.
By maximality of the flat interval, \(\lambda(E_\varepsilon)\to0\). Since \(h\) is bounded on \(E_\varepsilon\) for all sufficiently small \(\varepsilon\),
$\int_{E_\varepsilon}h(u)\Delta_\varepsilon(u)\,du=o(\varepsilon)$.
Lemma \ref{lem:mean-increasing_first-variation} therefore gives
\[
J_k(\widehat\phi_\varepsilon)-J_k(\phi^*)
=
\int_0^1 h(u)\Delta_\varepsilon(u)\,du+o(\varepsilon)>0
\]
for all sufficiently small \(\varepsilon\). Since the design cost weakly falls, this contradicts optimality of \(\phi^*\). Hence \(\ell>0\).

Next, fix $\varepsilon>0$ small. Define
$a_\varepsilon:=\sup\{u\le \ell:\phi^*(u)\le \gamma-\varepsilon\}$ and $b_\varepsilon:=\inf\{u\ge r:\phi^*(u)\ge \gamma+\varepsilon\}$.
Since $\phi^*$ is nondecreasing and $I$ is a maximal flat block, $a_\varepsilon\uparrow \ell$ and $b_\varepsilon\downarrow r$ as $\varepsilon\downarrow 0$. For $m\in[\ell,r]$, define
\[
\phi_{\varepsilon,m}(u)
=
\begin{cases}
\min\{\phi^*(u),\gamma-\varepsilon\}, & u<m,\\[3pt]
\max\{\phi^*(u),\gamma+\varepsilon\}, & u\ge m.
\end{cases}
\]
Then $\phi_{\varepsilon,m}\in\Phi$. Let
$\Delta_{\varepsilon,m}:=\phi_{\varepsilon,m}-\phi^*$. The perturbation is supported on $[a_\varepsilon,b_\varepsilon)$, satisfies
$\Delta_{\varepsilon,m}\le 0$ on $[a_\varepsilon,m)$ and $\Delta_{\varepsilon,m}\ge 0$ on $[m,b_\varepsilon)$,
and takes the values $\Delta_{\varepsilon,m}(u)=-\varepsilon$ on $[\ell,m)$ and $\Delta_{\varepsilon,m}(u)=+\varepsilon$ on $[m,r)$.

Define $F_\varepsilon(m):=\int_0^1 \Delta_{\varepsilon,m}(u)\,du$.
Exactly as in the proof of Proposition~\ref{prop:no-pooling-VI}, $F_\varepsilon$ is continuous and satisfies
$F_\varepsilon(\ell)=\varepsilon(r-\ell)+o(\varepsilon)>0$ and $F_\varepsilon(r)=-\varepsilon(r-\ell)+o(\varepsilon)<0$.
Hence there exists $m_\varepsilon\in(\ell,r)$ such that $F_\varepsilon(m_\varepsilon)=0$.
Set $\widehat\phi_\varepsilon:=\phi_{\varepsilon,m_\varepsilon}$ and $\Delta_\varepsilon:=\widehat\phi_\varepsilon-\phi^*$. Since the support of $\Delta_\varepsilon$ is contained in $[a_\varepsilon,b_\varepsilon)$, we have
$\int_{a_\varepsilon}^{b_\varepsilon}\Delta_\varepsilon(u)\,du
=
\int_0^1 \Delta_\varepsilon(u)\,du
=0$.
We claim that
$\int_u^1 \Delta_\varepsilon(s)\,ds \ge 0$ for all $u\in[0,1]$.
Indeed, if $u<a_\varepsilon$ or $u\ge b_\varepsilon$, the tail integral is $0$. If
$u\in[a_\varepsilon,m_\varepsilon)$, then
$\int_u^1 \Delta_\varepsilon(s)\,ds
=
-\int_{a_\varepsilon}^u \Delta_\varepsilon(s)\,ds
\ge 0$,
because $\Delta_\varepsilon\le 0$ on $[a_\varepsilon,m_\varepsilon)$ and the total mass on
$[a_\varepsilon,b_\varepsilon)$ is zero. If $u\in[m_\varepsilon,b_\varepsilon)$, then
$\int_u^1 \Delta_\varepsilon(s)\,ds
=
\int_u^{b_\varepsilon} \Delta_\varepsilon(s)\,ds
\ge 0$
because $\Delta_\varepsilon\ge 0$ on $[m_\varepsilon,b_\varepsilon)$.

Dividing by \(1-u\) yields
$Q_{\widehat\varphi_\varepsilon}(u)\ge Q_{\varphi^*}(u)$ for all $u\in[0,1]$. Letting \(U\sim \mathrm{Unif}[0,1]\), we have $Q_{\widehat\varphi_\varepsilon}(U)\ge Q_{\varphi^*}(U)$ a.s. with
$Q_{\widehat\varphi_\varepsilon}(U)\sim G_{\widehat\varphi_\varepsilon}$ and $Q_{\varphi^*}(U)\sim G_{\varphi^*}$.
Hence \(G_{\widehat\varphi_\varepsilon}\succeq_{st}G_{\varphi^*}\). Since first-order stochastic dominance implies increasing-convex order, we have $G_{\widehat\varphi_\varepsilon}\succeq_{icx}G_{\varphi^*}$.
Assumption~\ref{ass:decreasing-icx} therefore gives $C(G_{\widehat\varphi_\varepsilon})\le C(G_{\varphi^*})$.

Moreover, $\Delta_\varepsilon$ is supported where $\phi^*(u)\in(\gamma-\varepsilon,\gamma+\varepsilon)\subset(\gamma/2,1]$
for all sufficiently small $\varepsilon$. Hence Lemma~\ref{lem:mean-increasing_first-variation} applies with $\delta=\gamma/2$:
\[
J_k(\widehat\phi_\varepsilon)-J_k(\phi^*)
=
\int_0^1 H_k[\phi^*](u)\Delta_\varepsilon(u)\,du
+
o(\varepsilon).
\]
Exactly as in the proof of Proposition~\ref{prop:no-pooling-VI}, one has $m_\varepsilon\to \frac{\ell+r}{2}$,
the flat-block contribution equals $c_0\varepsilon+o(\varepsilon)$ for some $c_0>0$, and the boundary
contributions on $[a_\varepsilon,\ell)\cup[r,b_\varepsilon)$ are $o(\varepsilon)$. Therefore
\[
\int_0^1 H_k[\phi^*](u)\Delta_\varepsilon(u)\,du
=
c_0\varepsilon+o(\varepsilon)>0.
\]
So $J_k(\widehat\phi_\varepsilon)>J_k(\phi^*)$ for all sufficiently small $\varepsilon>0$. Together with the cost comparison above, this gives $\hat J_k(\widehat\phi_\varepsilon)>\hat J_k(\phi^*)$, contradicting optimality.

Hence $\phi^*$ has no flat interior block in $(0,b)$. Since $\phi^*$ is nondecreasing, it follows
that $\phi^*$ is strictly increasing on $(0,b)$.
\end{proof}

\section*{ Mean-Preserving Spread Constructions}

\subsection*{Finite Support Distribution}

The first construction illustrates that $MPS(G_k)$ includes discrete distributions supported on $N$ points for all integer $N\geq 2$, that is, the set of mean-preserving spreads of $G_k$ given $k\in (1/2,1]$ contains arbitrarily coarse finitely supported priors.

\begin{lemma}\label{lem:finite-support-mps}
Let $G$ be a Borel probability measure on $[0,1]$ with support $\operatorname{supp}(G)=[\underline{v},1]$ for some $\underline{v}\in(0,1)$,
whose lower quantile $Q_G$ satisfies
\[
Q_G \text{ is strictly increasing on }(0,b),
\qquad
Q_G(u)=1 \text{ for all }u\in[b,1],
\]
for some $b\in(0,1)$. Then, for every integer $N\ge 2$ and every $a\in(0,\underline{v})$, there exists a
probability measure $F$ on $[0,1]$ such that
\begin{enumerate}
    \item $F\succ_{cx} G$;
    \item $|\operatorname{supp}(F)|=N$;
    \item $a\in\operatorname{supp}(F)$.
\end{enumerate}
\end{lemma}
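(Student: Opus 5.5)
Throughout, I will use the quantile (Lorenz) characterization of the convex order: for distributions on $[0,1]$ with the same mean, $F\succ_{cx}G$ holds if and only if
\[
\int_0^u Q_F(s)\,ds\le \int_0^u Q_G(s)\,ds\quad\text{for all }u\in[0,1],
\]
with equality at $u=1$. Equivalently, $F$ is obtained from $G$ by a finite (or limiting) sequence of mean-preserving spreads. The plan is to build $F$ constructively from $G$ in two stages.

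The first stage handles the case $N=2$. Set $\mu:=\int v\,dG(v)$. Since $G$ has mass $1-b>0$ at $1$ and is nondegenerate on $[\underline v,1)$, we have $\underline v<\mu<1$. Take $F_2:=\alpha\delta_a+(1-\alpha)\delta_1$ with $\alpha:=(1-\mu)/(1-a)\in(0,1)$, so that $F_2$ has mean $\mu$. Because all mass of $F_2$ sits at the endpoints $a<\underline v$ and $1$ of an interval containing $\operatorname{supp}(G)$, the standard fact that the two-point distribution on the extremes dominates every distribution on that interval with the same mean gives $F_2\succ_{cx}G$. For a concrete verification, one notes that for convex $\psi$ the chord $\ell$ through $(a,\psi(a))$ and $(1,\psi(1))$ lies above $\psi$ on $[a,1]$, so $\int\psi\,dG\le\int\ell\,dG=\int\ell\,dF_2=\int\psi\,dF_2$.

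The second stage reaches $N\ge3$ while keeping domination. I will split the lower region of $G$ into $N-1$ quantile cells and map each cell's conditional law to a two-point law. Concretely, choose quantiles $0=u_0<u_1<\dots<u_{N-2}<b$ and let $m_j$ be the conditional mean of $G$ on the quantile cell $[u_{j-1},u_j)$. A cleaner route, which I would try first, is sequential: start from $G$ and replace the conditional law on a lowest cell $[0,u_1)$, whose support lies in $[\underline v,Q_G(u_1)]$, by the two-point law on $\{a,Q_G(u_1)\}$ with the same mass and mean. This is a mean-preserving spread by the chord argument applied cellwise. The remaining part of $G$ is then spread over points $w_2<\dots<w_{N-1}<1$ together with the atom at $1$.

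The bookkeeping is simplest if I take $F$ to be a mixture of three kinds of pieces: the atom at $1$ of mass at least $1-b$ (kept), and for each interior quantile cell $[u_{j-1},u_j)$, the two-point law on $\{a,1\}$ or on designated points matching that cell's mass and mean. Each cellwise replacement is a mean-preserving spread, and mixing preserves $\succ_{cx}$, so $F\succ_{cx}G$ holds automatically. The only task is to arrange that exactly $N$ distinct support points arise, with $a$ among them.

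A workable choice is the following. Use $N-2$ distinct interior points $w_1<\dots<w_{N-2}$ in $(\underline v,1)$. Split $G$'s mass into pieces whose conditional supports lie in intervals $[a,w_1]$, $[w_1,w_2]$, \dots, $[w_{N-2},1]$; this is possible since $G$ has a continuous strictly increasing quantile on $(0,b)$, so we can cut at $u$ with $Q_G(u)=w_j$. Replacing each piece by the two-point law on its interval endpoints gives $F$ supported on $\{a,w_1,\dots,w_{N-2},1\}$, which has $N$ points. The first piece must be given positive weight at $a$, which holds because its mean is strictly above $a$.

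The main obstacle is ensuring that every intended support point actually carries positive mass, so that $|\operatorname{supp}(F)|=N$ exactly. Interior points $w_j$ receive mass from both adjacent pieces. A piece's two-point replacement puts zero mass on an endpoint only if the piece is degenerate at the other endpoint, which cannot happen on a nonatomic strictly increasing segment, so this reduces to checking strict inequalities. This is where the hypotheses that $\underline v<1$, that $Q_G$ is strictly increasing, and that $a<\underline v$ are used.
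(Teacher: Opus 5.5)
Your final ``workable choice'' is correct and is essentially the paper's proof. The paper cuts $G$ at quantiles $0=\alpha_0<\dots<\alpha_{N-1}=b$ and uses $d_i=Q_G(\alpha_i)$ as the interior support points, which avoids the continuity of $Q_G$ you invoke when choosing values $w_j$ first; it then replaces each cell's conditional law by the two-point law on the cell's endpoints, moves the bottom endpoint $\underline v$ to $a$, and concludes $F\succ_{cx}G$ by the same cellwise chord-and-mixing argument.

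One small slip: the first cell puts positive weight on $a$ because its mean is strictly below $w_1$ (there is no atom at $w_1$ and there is positive mass below it), not because its mean is strictly above $a$, which instead guarantees positive weight on $w_1$.
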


\begin{proof}
Fix $N\ge 2$ and $a\in(0,\underline{v})$. Choose cutpoints $0=\alpha_0<\alpha_1<\cdots<\alpha_{N-2}<\alpha_{N-1}=b$.
For $i=1,\dots,N-1$, let $I_i:=(\alpha_{i-1},\alpha_i]$ and $m_i:=\alpha_i-\alpha_{i-1}$. Define
\[
c_1:=\underline{v},\qquad
c_i:=Q_G(\alpha_{i-1})\ \text{for }i=2,\dots,N-1,
\qquad
d_i:=Q_G(\alpha_i)\ \text{for }i=1,\dots,N-1.
\]
Since $Q_G$ is strictly increasing on $(0,b)$ and $Q_G(u)=1$ for all $u\in[b,1]$,
\[
\underline{v}=c_1<d_1=c_2<d_2=\cdots=c_{N-1}<d_{N-1}=1.
\]

Set $\widetilde c_1:=a$ and $\widetilde c_i:=c_i$ for $i=2,\dots,N-1$. Then $a=\widetilde c_1<d_1=\widetilde c_2<d_2=\cdots=\widetilde c_{N-1}<d_{N-1}=1$.

For each $i=1,\dots,N-1$, define the block average
\[
\bar x_i:=\frac{1}{m_i}\int_{\alpha_{i-1}}^{\alpha_i} Q_G(u)\,du.
\]
Because $Q_G$ is strictly increasing on each block $I_i\subset(0,b]$, $c_i<\bar x_i<d_i$ for all $i=1,\dots,N-1$. Hence $\widetilde c_i<\bar x_i<d_i$ for all $i=1,\dots,N-1$.

For each $i=1,\dots,N-1$, let $\nu_i$ be the unique two-point law on
$\{\widetilde c_i,d_i\}$ with mean $\bar x_i$, namely
\[
\nu_i
=
\frac{d_i-\bar x_i}{d_i-\widetilde c_i}\,\delta_{\widetilde c_i}
+
\frac{\bar x_i-\widetilde c_i}{d_i-\widetilde c_i}\,\delta_{d_i}.
\]
Define
\[
F:=\sum_{i=1}^{N-1} m_i\nu_i + (1-b)\delta_1.
\]

Since $d_{N-1}=1$, $\operatorname{supp}(F)=\{a,d_1,\dots,d_{N-2},1\}$.
These points are distinct, so $|\operatorname{supp}(F)|=N$ and $a\in\operatorname{supp}(F)$.

Next,
\[
\int_{[0,1]} x\,dF(x)
=
\sum_{i=1}^{N-1} m_i\bar x_i + (1-b)\cdot 1
=
\int_0^b Q_G(u)\,du+\int_b^1 1\,du
=
\int_0^1 Q_G(u)\,du
=
\int_{[0,1]} x\,dG(x).
\]

Finally, let $U\sim\mathrm{Unif}[0,1]$ and define $X:=Q_G(U)$, so $X\sim G$. For each
$i=1,\dots,N-1$, the conditional law of $X$ given $U\in I_i$ is supported on $[c_i,d_i]$, hence on
$[\widetilde c_i,d_i]$, and has mean $\bar x_i$. Let $Q:[0,1]\to\mathbb R$ be convex. The secant line
through $(\widetilde c_i,Q(\widetilde c_i))$ and $(d_i,Q(d_i))$ dominates $Q$ on
$[\widetilde c_i,d_i]$, so
\[
\int Q\,d\nu_i \ge \mathbb E[Q(X)\mid U\in I_i].
\]
Also, on $[b,1]$ we have $Q_G(u)=1$, so $X=1$ almost surely on $\{U\in[b,1]\}$.
Multiplying by block masses and summing yields
\[
\int Q\,dF
=
\sum_{i=1}^{N-1} m_i\int Q\,d\nu_i+(1-b)Q(1)
\ge
\sum_{i=1}^{N-1} m_i\mathbb E[Q(X)\mid U\in I_i]+(1-b)Q(1)
=
\int Q\,dG.
\]
Hence $F\succ_{cx} G$.
\end{proof}

\subsection*{Absolutely Continuous Distributions with an Atom at the Top}

The second construction shows that the set of mean-preserving spreads of $G_k$ also contains smooth priors with an absolutely
continuous body and the same top atom.

\begin{lemma}
\label{lem:ac-mps}
Let $G=b\,G_c+(1-b)\delta_1$
be a Borel probability measure on $[0,1]$, where $b\in(0,1)$, $G_c$ is atomless, $\operatorname{supp}(G_c)=[\underline{v},1]$ and $G_c(\{1\})=0$, for some $\underline{v}\in(0,1)$. Fix $a\in(0,\underline{v})$.

For $x\in[\underline{v},1)$ define
\[
\alpha(x):=\frac{x-a}{1-x},
\]
and let $K(x,\cdot)$ be the probability measure on $[a,1)$ with density
\[
k_x(y):=\frac{\alpha(x)}{(1-a)^{\alpha(x)}}(y-a)^{\alpha(x)-1}\mathbf 1_{(a,1)}(y).
\]
Also define $K(1,\cdot):=\delta_1$.
Let $X\sim G$, and conditional on $X=x$, let $Y$ have law $K(x,\cdot)$. Let $F:=\mathcal L(Y).$
Then:
\begin{enumerate}
\item $F\succ_{cx} G$;
\item $F(\{1\})=1-b$;
\item $F|_{[a,1)}$ is absolutely continuous with respect to Lebesgue measure;
\item $\operatorname{supp}(F)=[a,1]$.
\end{enumerate}
\end{lemma}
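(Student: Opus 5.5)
The plan is to check the four claims directly from the kernel construction, which is a martingale coupling. For claim 1 we verify that each kernel $K(x,\cdot)$ has mean $x$; then $\mathbb E[Y\mid X]=X$, and Jensen's inequality gives $\int\psi\,dF=\mathbb E[\mathbb E[\psi(Y)\mid X]]\ge\mathbb E[\psi(X)]$ for every convex $\psi$, which is $F\succ_{cx}G$.

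For the mean of $K(x,\cdot)$ with $x<1$, set $W:=(Y-a)/(1-a)$. Under $K(x,\cdot)$, $W$ has density $\alpha w^{\alpha-1}$ on $(0,1)$, which is a $\mathrm{Beta}(\alpha,1)$ law. Its mean is $\alpha/(\alpha+1)$. Since $\alpha(x)+1=(1-a)/(1-x)$, we get $\alpha/(\alpha+1)=(x-a)/(1-a)$, and hence $\mathbb E[Y\mid X=x]=a+(1-a)\frac{x-a}{1-a}=x$. The density is a valid probability density because $\alpha(x)>0$ for $x\ge\underline v>a$. For $x=1$ the kernel is $\delta_1$, which trivially has mean $1$. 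Measurability of $x\mapsto K(x,\cdot)$ is clear from continuity in $x$, so $F$ is well defined.

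For claim 2, note that $K(x,\{1\})=0$ for every $x<1$ because the kernel has a density on $(a,1)$. Therefore $F(\{1\})=\Pr(X=1)=G(\{1\})$. Since $G_c(\{1\})=0$, this equals $1-b$.

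For claim 3, write $F|_{[a,1)}=b\int K(x,\cdot)\,dG_c(x)$. This is a mixture of absolutely continuous measures. For any Lebesgue-null set $N$ we have $K(x,N)=0$ for all $x<1$, so $F(N\cap[a,1))=0$ by Fubini.

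For claim 4, the inclusion $\operatorname{supp}F\subseteq[a,1]$ is immediate. For the reverse inclusion, each $k_x$ is strictly positive on $(a,1)$, so every open subinterval of $(a,1)$ receives mass at least $b\int K(x,J)\,dG_c>0$. Together with the atom at $1$ and closedness of the support, this gives $[a,1]$. The step requiring the most care is the mean computation in claim 1, specifically the identity $\alpha+1=(1-a)/(1-x)$; everything else is bookkeeping.
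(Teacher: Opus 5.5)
Your proposal is correct and matches the paper's proof essentially step for step: you check that each kernel $K(x,\cdot)$ is a probability measure with mean $x$, apply conditional Jensen to get the convex order, and then read the top atom, absolute continuity on $[a,1)$, and full support off the kernel's density. Your $\mathrm{Beta}(\alpha,1)$ identification is the same integral the paper computes, and obtaining $a,1\in\operatorname{supp}F$ from closedness of the support (rather than the paper's separate checks) is equally valid.
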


\begin{proof}
For each $x\in[\underline{v},1)$, the function $k_x$ is nonnegative and
\[
\int_a^1 k_x(y)\,dy
=
\frac{\alpha(x)}{(1-a)^{\alpha(x)}}\int_a^1 (y-a)^{\alpha(x)-1}\,dy
=
1.
\]
So $K(x,\cdot)$ is a probability measure on $[a,1)$.

Its mean is
\[
\int_a^1 y\,k_x(y)\,dy
=
a+\frac{\alpha(x)}{\alpha(x)+1}(1-a).
\]
Since
\[
\alpha(x)=\frac{x-a}{1-x},
\qquad
\frac{\alpha(x)}{\alpha(x)+1}=\frac{x-a}{1-a},
\]
it follows that
\[
\int_a^1 y\,k_x(y)\,dy
=
a+\frac{x-a}{1-a}(1-a)
=
x.
\]
Hence $\mathbb E[Y\mid X=x]=x$ for all $x\in[\underline{v},1]$, and therefore
$\mathbb E[Y\mid X]=X$ a.s. In particular,
$\mathbb E[Y]=\mathbb E[X]$.
Now let $Q:[0,1]\to\mathbb R$ be convex. By conditional Jensen,
\[
\mathbb E[Q(Y)\mid X]\ge Q(\mathbb E[Y\mid X])=Q(X)
\quad\text{a.s.}
\]
Taking expectations gives
\[
\int Q\,dF=\mathbb E[Q(Y)]\ge \mathbb E[Q(X)]=\int Q\,dG.
\]
Since $\mathbb E[Y]=\mathbb E[X]$, we conclude that $F\succ_{cx} G$.

Next, for every $x<1$, the kernel $K(x,\cdot)$ is absolutely continuous on $[a,1)$ and has no atom at
$1$. Thus the only contribution to the atom at $1$ comes from the event $\{X=1\}$, on which
$Y=1$ almost surely. Since $G(\{1\})=1-b$, it follows that $F(\{1\})=1-b$.

To prove absolute continuity on $[a,1)$, let $B\subset [a,1)$ be a Borel set of Lebesgue measure zero.
For every $x<1$, $K(x,B)=\int_B k_x(y)\,dy=0$,
and also $K(1,B)=\delta_1(B)=0$ because $B\subset[a,1)$. Therefore, $F(B)=\int K(x,B)\,G(dx)=0$.
Hence $F|_{[a,1)}$ is absolutely continuous with respect to Lebesgue measure.

Finally, let $I\subset(a,1)$ be any nonempty open interval. For every $x<1$, the density $k_x$ is
strictly positive on $I$, so $K(x,I)>0$. Since $G([\underline{v},1))=b>0$, we obtain
$F(I)=\int K(x,I)\,G(dx)>0$.
Thus every nonempty open interval contained in $(a,1)$ has positive $F$-mass, so $(a,1)\subset \operatorname{supp}(F)$.
Also, for every $\varepsilon>0$ and every $x<1$, $K(x,[a,a+\varepsilon))>0$, hence
$F([a,a+\varepsilon))>0$.
Therefore $a\in\operatorname{supp}(F)$. Finally, $1\in\operatorname{supp}(F)$ because $F(\{1\})=1-b>0$. Thus
$\operatorname{supp}(F)=[a,1]$.
\end{proof}

Lemma \ref{lem:finite-support-mps} immediately implies that
for every integer $N\ge 2$ and every $a\in(0,\underline{v}_{k})$, there exists $F^{\mathrm{disc}}_{k,N,a}\in MPS(G_k)$
such that $|\operatorname{supp}(F^{\mathrm{disc}}_{k,N,a})|=N$ and $a\in\operatorname{supp}(F^{\mathrm{disc}}_{k,N,a})$. Now define 
\[
G_{c,k}(A):=\frac{G_k(A\cap[\underline{v}_{k},1))}{b_k},
\qquad
A\subseteq[0,1]\ \text{Borel}.
\]
Since $G_k$ has no atoms on $[\underline{v}_{k},1)$ and a unique atom at $1$ of size $1-b_k$, the measure
$G_{c,k}$ is atomless, satisfies
$\operatorname{supp}(G_{c,k})=[\underline{v}_{k},1]$ and $G_{c,k}(\{1\})=0$, and $G_k=b_k\,G_{c,k}+(1-b_k)\delta_1$.
Hence Lemma~\ref{lem:ac-mps} applies with $G=G_k$, $b=b_k$, and $G_c=G_{c,k}$. Therefore, for every $a\in(0,\underline{v}_{k})$, there exists
$F^{\mathrm{ac}}_{k,a}\in MPS(G_k)$
such that
\[
F^{\mathrm{ac}}_{k,a}(\{1\})=1-b_k,
\qquad
F^{\mathrm{ac}}_{k,a}|_{[a,1)}\ll \mathrm{Leb},
\qquad
\operatorname{supp}(F^{\mathrm{ac}}_{k,a})=[a,1].
\]

\section*{Costs increasing in increasing-convex order}
\label{sec:increasing-icx-cost}

We now turn to the opposite polar case of Section \ref{sec:hold-up}. Rather than making more seller-favorable markets
cheaper, suppose they become weakly \emph{more expensive}: that is, the cost increases as the induced
distribution shifts upward in mean and/or dispersion. Now, the perturbations that deliver no exclusion and a premium top segment in the baseline
costless problem work against the design cost rather than with it.

\begin{assumption}
\label{ass:increasing-icx}
The cost functional \(C:\Delta([0,1])\to\mathbb R\) is increasing in increasing-convex
order: for any distributions \(F\) and \(G\) on \([0,1]\), $F\succeq_{icx} G$ implies $C(F)\ge C(G)$.
\end{assumption}

The next lemma is the analogue of Lemma~\ref{lem:decreasing-icx-monotone}. It simply
records that pointwise upward perturbations raise the seller-facing market in first-order
stochastic dominance and therefore weakly raise the design cost.

\begin{lemma}[Monotone perturbations weakly raise the design cost]
\label{lem:increasing-icx-monotone}
Let \(\phi,\hat\phi\in\Phi\), and suppose $\hat\phi(u)\ge \phi(u)$ for $u\in[0,1]$. Then $Q_{\hat\phi}(u)\ge Q_\phi(u)$ for all $u\in[0,1]$, hence $G_{\hat\phi}\succeq_{st} G_\phi$. In particular,
$G_{\hat\phi}\succeq_{icx} G_\phi$, so $C(G_{\hat\phi})\ge C(G_\phi)$.
\end{lemma}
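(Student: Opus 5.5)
This is the mirror image of Lemma~\ref{lem:decreasing-icx-monotone}, and I would reuse its argument verbatim up to the final step, where only the direction of the cost inequality changes.

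\emph{Step 1: quantile comparison.} From the definition of $Q_\phi$, for every $u\in[0,1)$,
\[
Q_{\hat\phi}(u)-Q_\phi(u)=\frac{1}{1-u}\int_u^1\bigl(\hat\phi(s)-\phi(s)\bigr)\,ds\ge 0,
\]
because the integrand is nonnegative by hypothesis. At $u=1$ the inequality is preserved by passing to the limit $u\uparrow 1$, since both $Q_{\hat\phi}(1)$ and $Q_\phi(1)$ are defined as such limits.

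\emph{Step 2: stochastic dominance.} I would take $U\sim\mathrm{Unif}[0,1]$. Then $Q_{\hat\phi}(U)\sim G_{\hat\phi}$ and $Q_\phi(U)\sim G_\phi$, since each $G_\phi$ is by construction the law of its quantile evaluated at a uniform variable. By Step 1, $Q_{\hat\phi}(U)\ge Q_\phi(U)$ almost surely. This monotone coupling gives $G_{\hat\phi}\succeq_{st}G_\phi$.

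\emph{Step 3: increasing-convex order and cost.} For any increasing convex $\psi$ on $[0,1]$, monotonicity of $\psi$ and the coupling give
\[
\int\psi\,dG_{\hat\phi}=\mathbb E\bigl[\psi(Q_{\hat\phi}(U))\bigr]\ge\mathbb E\bigl[\psi(Q_\phi(U))\bigr]=\int\psi\,dG_\phi,
\]
so $G_{\hat\phi}\succeq_{icx}G_\phi$. Assumption~\ref{ass:increasing-icx} then yields $C(G_{\hat\phi})\ge C(G_\phi)$.

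I do not expect a genuine obstacle here. The only point that needs care is the endpoint $u=1$, where $Q$ is defined by a limit, and it is handled as noted in Step 1.
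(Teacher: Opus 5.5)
Your proposal is correct and follows essentially the same route as the paper: the tail-average inequality $Q_{\hat\phi}\ge Q_\phi$, the monotone coupling through a uniform quantile to get $G_{\hat\phi}\succeq_{st}G_\phi$, then first-order dominance implies increasing-convex order, and Assumption~\ref{ass:increasing-icx} gives the cost comparison. Your explicit treatment of the endpoint $u=1$ and of the implication $\succeq_{st}\Rightarrow\succeq_{icx}$ only spells out steps the paper leaves implicit.
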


\begin{proof}
The proof is identical to that of Lemma~\ref{lem:decreasing-icx-monotone}. From the
pointwise inequality \(\hat\phi\ge \phi\), we get
\[
Q_{\hat\phi}(u)-Q_\phi(u)
=
\frac{1}{1-u}\int_u^1 \bigl(\hat\phi(s)-\phi(s)\bigr)\,ds
\ge 0
\qquad\forall u\in[0,1].
\]
Coupling with a uniform quantile \(U\sim\mathrm{Unif}[0,1]\) yields
$Q_{\hat\phi}(U)\ge Q_\phi(U)$ a.s.,
so \(G_{\hat\phi}\succeq_{st}G_\phi\). First-order stochastic dominance implies
increasing-convex order, and the conclusion for the cost follows from
Assumption~\ref{ass:increasing-icx}.
\end{proof}

At the level of a completely general cost increasing in increasing-convex order, there is one
robust conclusion: for \(k>1/2\), the degenerate top-type market is never optimal.

\begin{proposition}[The top-type market is never optimal for \(k>1/2\)]
\label{prop:increasing-icx-degenerate-not-optimal}
Maintain Assumptions~\ref{ass:cost} and
\ref{ass:increasing-icx}. Let \(k\in(1/2,1]\). Then \(\phi^*=1\) can not be a maximizer of
\[
\widehat J_k(\phi)=J_k(\phi)-C(G_\phi).
\]
Consequently, the degenerate top-type distribution \(\delta_1\) cannot be optimal.
\end{proposition}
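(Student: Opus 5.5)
The plan is to exploit the fact that, in the increasing-in-icx case, the constant profile \(\mathbf 1\) induces \(G_{\mathbf 1}=\delta_1\), which first-order stochastically dominates every other market on \([0,1]\). By Lemma~\ref{lem:increasing-icx-monotone}, since \(\mathbf 1\ge\phi\) pointwise for every \(\phi\in\Phi\), we have \(G_{\mathbf 1}\succeq_{icx}G_\phi\) and hence \(C(G_\phi)\le C(G_{\mathbf 1})\) for all \(\phi\in\Phi\). So \(\delta_1\) is the most expensive market in the feasible set. Any competitor that already beats \(\mathbf 1\) in the baseline objective will therefore also beat it in \(\widehat J_k\).

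The key step is to supply such a competitor. I would take the explicit two-step profile \(\phi_\varepsilon\) from the proof of Lemma~\ref{lem:not-all-one}: \(\phi_\varepsilon=1-\varepsilon\) on \([0,a)\) and \(\phi_\varepsilon=1\) on \([a,1]\), with \(a,\varepsilon\) chosen as in that proof. That proof gives \(J_k(\phi_\varepsilon)>J_k(\mathbf 1)\) for \(k\in(1/2,1]\). One could equally use the baseline optimizer \(\bar\phi_k\) from Theorem~\ref{thm:main-structure}, but the explicit construction avoids any appeal to Assumption~\ref{ass:cost-c3-unique}.

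Combining the two steps gives
\[
\widehat J_k(\phi_\varepsilon)=J_k(\phi_\varepsilon)-C(G_{\phi_\varepsilon})>J_k(\mathbf 1)-C(G_{\mathbf 1})=\widehat J_k(\mathbf 1),
\]
so \(\mathbf 1\) is not a maximizer. The problem is posed on \(\Phi\), so this settles the reduced-form claim. The statement about \(\delta_1\) then follows because \(\delta_1=G_{\mathbf 1}\) and \(\phi_\varepsilon\) induces a feasible distribution \(G_{\phi_\varepsilon}\in\Delta([0,1])\) with a strictly higher primitive value \(W_k-C\). This comparison does not require the regularization reductions to be without loss, which under Assumption~\ref{ass:increasing-icx} they need not be.

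There is no real obstacle here. The only point needing care is the direction of the cost comparison: the upward perturbation used in the baseline is now costly, so the argument must compare against \(\mathbf 1\) from below, where the cost ordering helps. I would also check that the competitor has \(C(G_{\phi_\varepsilon})\) finite, which holds since \(C\) is real-valued.
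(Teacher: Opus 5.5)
Your proposal is correct and follows the paper's proof. Your competitor $\phi_\varepsilon$ is exactly the paper's $\mathbf 1+s\eta_a$ with $\eta_a=-\mathbf 1_{[0,a)}$, and the cost comparison through Lemma~\ref{lem:increasing-icx-monotone} (the competitor lies below $\mathbf 1$) is the same; the only difference is that you import the strict improvement $J_k(\phi_\varepsilon)>J_k(\mathbf 1)$ from the finite-difference bound of Lemma~\ref{lem:not-all-one}, whereas the paper recomputes it as the positive directional derivative $\bar q\bigl((2k-1)a-k\Lambda(a)\bigr)$ via Lemma~\ref{lem:gateaux}.
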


\begin{proof}
Let \(\mathbf 1\) denote the constant-one profile. For \(a\in(0,1)\), define the feasible
direction $\eta_a(u):=-\mathbf 1_{[0,a)}(u)$.
Then \(\mathbf 1+s\eta_a\in\Phi\) for all \(s\in[0,1]\).

By the baseline first-variation formula from
Lemma~\ref{lem:gateaux},
\[
\left.\frac{d}{ds}J_k(\mathbf 1+s\eta_a)\right|_{s=0+}
=
\int_0^1 \eta_a(u)\,H_k[\mathbf 1](u)\,du.
\]
Now, $Q_{\mathbf 1}(u)=1$, $A_{\mathbf 1}(u)=\bar q[-\ln(1-u)]$ and $q(1)=\bar q$, so
$H_k[\mathbf 1](u)=k\bar q[-\ln(1-u)]+(1-2k)\bar q$.
Therefore,
\[
\left.\frac{d}{ds}J_k(\mathbf 1+s\eta_a)\right|_{s=0+}
=
\bar q\Bigl((2k-1)a-k\Lambda(a)\Bigr),
\]
where
\[
\Lambda(a):=\int_0^a [-\ln(1-u)]\,du
=
a+(1-a)\ln(1-a).
\]
Since
\[
\Lambda(a)=\frac{a^2}{2}+o(a^2)
\qquad\text{as }a\downarrow 0,
\]
and \(2k-1>0\), there exists \(a>0\) small enough such that $(2k-1)a-k\Lambda(a)>0$.
Fix such an \(a\). Then for all sufficiently small \(s>0\),
$J_k(\mathbf 1+s\eta_a)>J_k(\mathbf 1)$.

Moreover, $\mathbf 1+s\eta_a\le \mathbf 1$ for all $s\in[0,1]$, so Lemma~\ref{lem:increasing-icx-monotone} gives $C(G_{\mathbf 1+s\eta_a})\le C(G_{\mathbf 1})$.
Therefore,
\[
\widehat J_k(\mathbf 1+s\eta_a)
=
J_k(\mathbf 1+s\eta_a)-C(G_{\mathbf 1+s\eta_a})
>
J_k(\mathbf 1)-C(G_{\mathbf 1})
=
\widehat J_k(\mathbf 1)
\]
for all sufficiently small \(s>0\). Thus \(\mathbf 1\) cannot be optimal.
\end{proof}

Proposition~\ref{prop:increasing-icx-degenerate-not-optimal} is essentially the only
general conclusion available under Assumption~\ref{ass:increasing-icx} alone. The baseline
bottom-lift and upper-tail-lift perturbations raise the market composition pointwise and
therefore are penalized rather than reinforced by the cost. To obtain sharper structure, one
needs additional structure on the cost functional. The natural tractable class is a
differentiable mean-based cost.

\subsection*{Mean-based increasing costs.}
Suppose that $C(G)=\Gamma(\mu_G)$ and $\mu_G:=\int_0^1 v\,dG(v)$,
where \(\Gamma\in C^1([0,1])\) and
$\Gamma'(\mu)\ge0$ for all $\mu\in[0,1]$.
This specification satisfies Assumption~\ref{ass:increasing-icx}. Indeed, if \(F\succeq_{icx}G\), then applying the defining inequality to the increasing convex function \(\psi(v)=v\) gives $\int_0^1 v\,dF(v)\ge \int_0^1 v\,dG(v)$,
and monotonicity of \(\Gamma\) implies \(C(F)\ge C(G)\).

For \(\varphi\in\Phi\),
\[
\mu_\varphi
:=
\mu_{G_\varphi}
=
\int_0^1 Q_\varphi(u)\,du
=
\int_0^1 w(u)\varphi(u)\,du,
\qquad
w(u):=-\ln(1-u),
\]
where the last equality follows from Fubini. We focus on \(k\in(1/2,1]\), where Proposition~\ref{prop:increasing-icx-degenerate-not-optimal} rules out the degenerate top-type profile in the reduced-form problem. The next proposition gives threshold restrictions for reduced-form maximizers under differentiable mean-based increasing costs.

\begin{proposition}[Threshold structure under a mean-based increasing design cost]\label{prop:mean-increasing-threshold}
Maintain Assumptions~\ref{ass:cost} and~\ref{ass:increasing-icx}. Let $k\in(1/2,1]$ and let $\phi^*\in\Phi$ maximize $\hat J_k$, and define
\[
\tau^*:=\Gamma'(\mu_{\phi^*})\ge 0,
\quad
w(u):=-\ln(1-u), \quad
\widetilde H_k[\phi^*](u) := H_k[\phi^*](u) - \tau^* w(u).
\]
Then the following hold.

\begin{enumerate}
\item[(i)] \textbf{Lower free boundary.}
Let $a:=\inf\{u\in[0,1]:\phi^*(u)>0\}$,
with the convention $a=1$ if the set is empty. If $a\in(0,1)$, then
\[
k\,C_a[-\ln(1-a)]
\limsup_{\varepsilon\downarrow 0}\frac{q(\varepsilon)}{\varepsilon}
\le
\tau^*\Lambda(a),
\]
where
\[
C_a:=\int_a^1 \phi^*(s)\,ds,
\qquad
\Lambda(a):=\int_0^a[-\ln(1-u)]\,du.
\]

\item[(ii)] \textbf{Upper no bunching.}
Let $I=(\ell,r)\subset(0,1)$ be a maximal open interval such that $\phi^*(u)=\gamma\in(0,1)$ for all $u\in I.$
Then $kq(\gamma)<\tau^*$. Equivalently, any maximal interior flat block must lie strictly below the threshold $\tau^*/k$
in quantity space. If \(\phi^*\) is constant at a level \(\gamma\in(0,1)\) on a terminal interval
$I=(\ell,1)$, then $kq(\gamma)\le \tau^*$.

\item[(iii)] \textbf{Full no bunching above the threshold.}
If $q(\phi^*(u))\ge \frac{\tau^*}{k}$ for all $u\in(0,1)$  such that $0<\phi^*(u)<1$,
then $\phi^*$ has no non-terminal flat interval while if the inequality is strict, it has no terminal flat interval. In the latter case, $\phi^*$ is strictly increasing on its active interior $\{u\in(0,1):0<\phi^*(u)<1\}$.

\item[(iv)] \textbf{Sufficient condition for a premium top segment.}
Suppose there exist $u_0\in(0,1)$ and $\delta>0$ such that $q(\phi^*(u))\ge \frac{\tau^*}{k}+\delta $ for all $u\in[u_0,1)$.
Then there exists $b\in(0,1)$ such that $\phi^*(u)=1$ for all $u\in[b,1]$.
Moreover, if there exists a sequence $u_n\uparrow b$ with $u_n<b$ such that
$\widetilde H_k[\phi^*](u_n)=0$ for all $n$ and $\phi^*(u_n)\to 1$, then
\[
A_{\phi^*}(b)=\frac{2k-1}{k}\bar q+\frac{\tau^*}{k}[-\ln(1-b)].
\]
\end{enumerate}
\end{proposition}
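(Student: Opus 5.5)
The common engine for all four parts is a first-order expansion of the penalized objective $\hat J_k(\phi)=J_k(\phi)-\Gamma(\mu_\phi)$. Since $\mu_\phi=\int_0^1 w\phi$ is linear in $\phi$ and $\Gamma\in C^1$, for any admissible perturbation $\Delta$ supported where $\phi^*$ is bounded away from zero we have
\[
\hat J_k(\phi^*+\Delta)-\hat J_k(\phi^*)=\int_0^1 \widetilde H_k[\phi^*](u)\Delta(u)\,du+o(\|\Delta\|_\infty+\|\Delta\|_{L^1(w)}).
\]
This follows from Lemma~\ref{lem:mean-increasing_first-variation} for $J_k$ and a Taylor expansion of $\Gamma$ for the cost. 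Each part then reruns the corresponding baseline perturbation with $H_k$ replaced by $\widetilde H_k=H_k-\tau^*w$, and keeps track of the extra term $-\tau^*w$.

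\textbf{Part (i).} We cannot use the first-order formula near zero, so we redo the finite-difference argument of Lemma~\ref{lem:no-prefix} with the lift $\phi_\varepsilon=\max\{\phi^*,\varepsilon\}$. That argument gives
\[
J_k(\phi_\varepsilon)-J_k(\phi^*)\ge q(\varepsilon)\bigl(k(-\ln(1-a))(C_\varepsilon-\varepsilon(1-b_\varepsilon))-(b_\varepsilon-a)\bigr).
\]
The mean rises by at most $\int_0^{b_\varepsilon}w\,\varepsilon=\varepsilon\Lambda(a)+o(\varepsilon)$. Hence the cost rises by at most $\tau^*\varepsilon\Lambda(a)+o(\varepsilon)$. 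Optimality requires that the gain not exceed the cost increase. Dividing by $\varepsilon$ and taking $\limsup$ yields the stated inequality, using $C_\varepsilon\to C_a$ and $b_\varepsilon\downarrow a$.

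\textbf{Part (ii).} On a flat block at level $\gamma$, Lemma~\ref{lem:gprime-positive-on-flat} gives
\[
\widetilde h'(u)=k\frac{q(\gamma)}{1-u}+kq'(\gamma)Q'(u)-\frac{\tau^*}{1-u}.
\]
The plan has three steps.

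\begin{enumerate}
\item Show that interior-block optimality forces $\int_m^r\widetilde h\le\int_\ell^m\widetilde h$ for the midpoint tilt perturbation of Proposition~\ref{prop:no-pooling-VI}. Its boundary corrections are $o(\varepsilon)$, and the mean term is already contained in $\widetilde h$.
\item Argue that if $kq(\gamma)\ge\tau^*$, then $\widetilde h$ is strictly increasing on the block, because $Q'\ge0$ and $q'(\gamma)>0$. The increase is strict in the equality case too: at equality, $Q'>0$ on the block, since $Q>\gamma$ there whenever $\phi^*$ is not flat on the whole tail, which holds for a non-terminal block. This contradicts step 1, giving $kq(\gamma)<\tau^*$.
\item For a terminal block $(\ell,1)$, we have $Q'\equiv 0$, so only the weak conclusion $kq(\gamma)\le\tau^*$ survives. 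The tilt is still feasible there because $\gamma+\varepsilon\le1$.
\end{enumerate}

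Part (iii) is immediate from (ii).

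\textbf{Part (iv).} On $[u_0,1)$ the lower bound on $q(\phi^*)$ gives
\[
kA_{\phi^*}(u)-\tau^*w(u)\ge k\delta\,w(u)-\mathrm{const}\to\infty,
\]
while the other terms of $\widetilde H_k$ stay bounded. So $\widetilde H_k>0$ on some tail $(t,1)$. The upper-tail lift $\eta=(1-\phi^*)\mathbf 1_{[t,1)}$ of Proposition~\ref{prop:tail} then has positive derivative, which contradicts optimality unless $\phi^*=1$ on a tail.

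For the boundary identity, we pass to the limit in $\widetilde H_k[\phi^*](u_n)=0$. As $\phi^*(u_n)\to1$, $Q(u_n)\to1$ and $q'$ stays bounded near $1$, so the middle term vanishes, leaving
\[
kA(b)+(1-2k)\bar q-\tau^*w(b)=0,
\]
which is the claimed formula.

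The main obstacle is controlling the mean-term remainders in (ii), specifically making the $o(\varepsilon)$ boundary estimates compatible with the weight $w$. Since $w$ is bounded on compacts away from $1$, this is fine for non-terminal blocks. Terminal blocks must be handled separately, by bounding $\int_\ell^1 w$ directly.
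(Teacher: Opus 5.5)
Your proposal is correct and follows the paper's proof essentially step for step. It uses the same finite-difference lift $\max\{\phi^*,\varepsilon\}$ with mean increment $\varepsilon\Lambda(a)+o(\varepsilon)$ in (i), the same clipping tilt driven by the sign of $\widetilde h'$ (with $Q_{\phi^*}>\gamma$ on non-terminal blocks) in (ii)--(iii), and the same upper-tail lift and limit passage in (iv). The one deviation is for terminal blocks. There the paper uses the upward lift $\phi^*+\varepsilon\mathbf 1_{[t,1)}$ on a tail where $\widetilde H_k[\phi^*]=(kq(\gamma)-\tau^*)w+O(1)>0$. Your midpoint tilt also works, since $\widetilde h'=(kq(\gamma)-\tau^*)/(1-u)$ on the block and $w\in L^1$.
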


\begin{proof}
Part (i). Assume $a\in(0,1)$. For $\varepsilon>0$, define
\[
b_\varepsilon:=\inf\{u\in[0,1]:\phi^*(u)\ge \varepsilon\},
\qquad
C_\varepsilon:=\int_{b_\varepsilon}^1 \phi^*(s)\,ds,
\qquad
\phi_\varepsilon:=\max\{\phi^*,\varepsilon\}.
\]
Since $\phi^*$ is nondecreasing and $a=\inf\{u:\phi^*(u)>0\}$, we have $b_\varepsilon\downarrow a$, $C_\varepsilon\to C_a$ as $\varepsilon\downarrow 0$.
The same finite-difference calculation as in the proof of Lemma~\ref{lem:not-all-one} gives
\[
J_k(\phi_\varepsilon)-J_k(\phi^*)
\ge
q(\varepsilon)
\Big(
k[-\ln(1-a)](C_\varepsilon-\varepsilon(1-b_\varepsilon))
-(b_\varepsilon-a)
\Big).
\]
Therefore
$J_k(\phi_\varepsilon)-J_k(\phi^*)
\ge
q(\varepsilon)\big(kC_a[-\ln(1-a)]+o(1)\big)$.

Now
\[
\mu_{\phi_\varepsilon}-\mu_{\phi^*}
=
\int_0^1 w(u)(\phi_\varepsilon(u)-\phi^*(u))\,du.
\]
Because $\phi_\varepsilon-\phi^*=\varepsilon$ on $[0,a)$, while
$0\le \phi_\varepsilon-\phi^*\le \varepsilon$ on $[a,b_\varepsilon)$ and
$\phi_\varepsilon-\phi^*=0$ on $[b_\varepsilon,1]$, we have
\[
\mu_{\phi_\varepsilon}-\mu_{\phi^*}
=
\varepsilon\int_0^a w(u)\,du
+
\int_a^{b_\varepsilon} w(u)(\varepsilon-\phi^*(u))\,du
=
\varepsilon \Lambda(a)+o(\varepsilon).
\]
Since $\Gamma\in C^1([0,1])$,
$\Gamma(\mu_{\phi_\varepsilon})-\Gamma(\mu_{\phi^*})
=
\tau^* \varepsilon \Lambda(a)+o(\varepsilon)$.

Optimality of $\phi^*$ implies $0\ge \hat J_k(\phi_\varepsilon)-\hat J_k(\phi^*)$.
Hence $0
\ge
J_k(\phi_\varepsilon)-J_k(\phi^*)
-
\big(\Gamma(\mu_{\phi_\varepsilon})-\Gamma(\mu_{\phi^*})\big)$,
so
$q(\varepsilon)\big(kC_a[-\ln(1-a)]+o(1)\big)
\le
\tau^*\varepsilon\Lambda(a)+o(\varepsilon)$.
Dividing by $\varepsilon$ and taking $\limsup_{\varepsilon\downarrow 0}$ gives
\[
k\,C_a[-\ln(1-a)]
\limsup_{\varepsilon\downarrow 0}\frac{q(\varepsilon)}{\varepsilon}
\le
\tau^*\Lambda(a),
\]
which is the claimed lower-boundary condition.

Before proceeding, we first record the first-order expansion of the full objective at $\phi^*$.
Let $(\Delta_n)$ be any sequence of bounded measurable functions such that
$\phi^*+\Delta_n \in \Phi$ and $\|\Delta_n\|_\infty \to 0$,
and suppose that the support of each $\Delta_n$ is contained in a region where
$\phi^*$ is bounded away from zero. Then Lemma~\ref{lem:mean-increasing_first-variation} gives
\[
J_k(\phi^*+\Delta_n)-J_k(\phi^*)
=
\int_0^1 H_k[\phi^*](u)\Delta_n(u)\,du
+o(\|\Delta_n\|_\infty).
\]
Moreover,
\[
\mu_{\phi^*+\Delta_n}-\mu_{\phi^*}
=
\int_0^1 w(u)\Delta_n(u)\,du,
\]
so, since $\Gamma \in C^1([0,1])$,
\[
\Gamma(\mu_{\phi^*+\Delta_n})-\Gamma(\mu_{\phi^*})
=
\tau^* \int_0^1 w(u)\Delta_n(u)\,du
+o(\|\Delta_n\|_\infty),
\]
because
\[
\left|\int_0^1 w(u)\Delta_n(u)\,du\right|
\le
\|w\|_{L^1([0,1])}\,\|\Delta_n\|_\infty.
\]
Hence
\begin{equation}\label{eq:hatJ-first-order}
\hat J_k(\phi^*+\Delta_n)-\hat J_k(\phi^*)
=
\int_0^1 \widetilde H_k[\phi^*](u)\Delta_n(u)\,du
+o(\|\Delta_n\|_\infty).
\end{equation}

We now prove Parts (ii) and (iii). Let
$\widetilde H_k[\phi^*](u):=
H_k[\phi^*](u)-\tau^*[-\ln(1-u)]$.
By Lemma \ref{lem:mean-increasing_first-variation} and the differentiability of \(\Gamma\), any feasible perturbation \(\Delta\) supported where \(\phi^*\) is bounded away from zero has first variation
\[
\int_0^1 \widetilde H_k[\phi^*](u)\Delta(u)\,du .
\]

First suppose that \(\phi^*\) is constant at level \(\gamma\in(0,1)\) on a nondegenerate interval
$I=(\ell,r)\subset(0,1)$
with \(r<1\). By maximality of \(I\) and monotonicity of \(\phi^*\), there is mass to the right of \(r\) at levels strictly above \(\gamma\). Hence, for every \(u\in I\),
$Q_{\phi^*}(u)>\gamma$.
On \(I\), at points of differentiability,
\[
\frac{d}{du}\widetilde H_k[\phi^*](u)
=
\frac{
kq(\gamma)+kq'(\gamma)(Q_{\phi^*}(u)-\gamma)-\tau^*
}{1-u}.
\]
If \(kq(\gamma)\ge\tau^*\), this derivative is strictly positive on \(I\). The same balanced clipping perturbation used in the no-bunching proof then yields a feasible perturbation \(\Delta\) for which
\[
\int_0^1\widetilde H_k[\phi^*](u)\Delta(u)\,du>0,
\]
contradicting optimality. Therefore
$kq(\gamma)<\tau^*$.

Now suppose that \(\phi^*\) is constant at level \(\gamma\in(0,1)\) on a terminal interval $I=(\ell,1)$.
Then for \(u\in I\),
$Q_{\phi^*}(u)=\gamma$,
and
\[
A_{\phi^*}(u)
=
A_{\phi^*}(\ell)+q(\gamma)\int_\ell^u\frac{ds}{1-s}.
\]
Hence
\[
\widetilde H_k[\phi^*](u)
=
\bigl(kq(\gamma)-\tau^*\bigr)[-\ln(1-u)]+O(1)
\qquad\text{as }u\uparrow1.
\]
If \(kq(\gamma)>\tau^*\), then \(\widetilde H_k[\phi^*](u)>0\) on some tail \([t,1)\subset I\). For sufficiently small \(\varepsilon>0\), the perturbation
$\phi_\varepsilon(u):=\phi^*(u)+\varepsilon 1_{[t,1)}(u)$
belongs to \(\Phi\). Its first variation is
\[
\varepsilon\int_t^1\widetilde H_k[\phi^*](u)\,du+o(\varepsilon)>0,
\]
contradicting optimality. Therefore a terminal flat interval can occur only if $kq(\gamma)\le\tau^*$.
The conclusions in Part (iii) follow immediately.

Part (iv). Define
\[
\underline\phi:=q^{-1}\!\left(\frac{\tau^*}{k}+\delta\right)>0.
\]
Then $\phi^*(u)\ge \underline\phi$ for all $u\in[u_0,1)$. Moreover,
\[
A_{\phi^*}(u)
=
A_{\phi^*}(u_0)+\int_{u_0}^u \frac{q(\phi^*(s))}{1-s}\,ds
\ge
A_{\phi^*}(u_0)
+
\left(\frac{\tau^*}{k}+\delta\right)\int_{u_0}^u \frac{ds}{1-s},
\]
so $kA_{\phi^*}(u)-\tau^*w(u)\to \infty$  as $u\uparrow 1$. The remaining terms in $\widetilde H_k[\phi^*](u)$ are bounded on $[u_0,1)$, because
$\phi^*(u)\in[\underline\phi,1]$ there and $q,q'$ are continuous on the compact interval
$[\underline\phi,1]$. Hence $\widetilde H_k[\phi^*](u)\to \infty$ as $u\uparrow 1$.
Therefore there exists $t\in(u_0,1)$ such that $\widetilde H_k[\phi^*](u)>0$ for a.e. $u\in(t,1)$.

Assume, toward a contradiction, that $\phi^*$ is not equal to $1$ on any upper tail. Then the set $E:=\{u\in[t,1):\phi^*(u)<1\}$
has positive measure. Define $\eta(u):=(1-\phi^*(u))1_{[t,1)}(u)$.
For $s\in[0,1]$, $\phi^*+s\eta\in\Phi$.
For any sequence $s_n \downarrow 0$, set $\Delta_n := s_n\eta$. Since the support of $\eta$ lies where $\phi^* \ge \phi > 0$,
\eqref{eq:hatJ-first-order} applies and gives
\[
\hat J_k(\phi^*+s_n\eta)-\hat J_k(\phi^*)
=
\int_0^1 \widetilde H_k[\phi^*](u)\Delta_n(u)\,du
+o(s_n)
=
s_n \int_t^1 \widetilde H_k[\phi^*](u)(1-\phi^*(u))\,du
+o(s_n).
\]
The leading term is strictly positive, so for all sufficiently small $s>0$, $\hat J_k(\phi^*+s\eta)>\hat J_k(\phi^*)$,
contradicting optimality. Thus there exists $b\in[0,1)$ such that $\phi^*(u)=1$ for all $u\in[b,1]$. By Proposition~\ref{prop:increasing-icx-degenerate-not-optimal}, $\phi^*\not\equiv 1$, so in fact $b\in(0,1)$.

Finally, suppose there exists a sequence $u_n\uparrow b$ with $u_n<b$ such that
$\widetilde H_k[\phi^*](u_n)=0$ for all $n$ and $\phi^*(u_n)\to 1$.
Since $\phi^*(u)=1$ on $[b,1]$, we have
$Q_{\phi^*}(u_n)\to 1$, $A_{\phi^*}(u_n)\to A_{\phi^*}(b)$, $w(u_n)\to w(b)$.
Passing to the limit in
\[
0
=
\widetilde H_k[\phi^*](u_n)
=
kA_{\phi^*}(u_n)
+
k(Q_{\phi^*}(u_n)-\phi^*(u_n))q'(\phi^*(u_n))
+
(1-2k)q(\phi^*(u_n))
-
\tau^*w(u_n)
\]
gives $kA_{\phi^*}(b)+(1-2k)\bar q-\tau^*w(b)=0$,
that is,
\[
A_{\phi^*}(b)=\frac{2k-1}{k}\bar q+\frac{\tau^*}{k}[-\ln(1-b)].
\]
\end{proof}

\paragraph*{Discussion.}
Proposition~\ref{prop:mean-increasing-threshold} shows that increasing mean-based design
costs do not destroy the baseline screening logic uniformly. Instead they confine all
possible pathologies to a low-intensity region of the market. Exclusion, if it occurs, must appear as a lower free boundary governed by the finite-difference balance in part (i). Bunching, if
it occurs, can only occur at values for which the associated quantity satisfies
\(kq(\gamma)<\tau^*\). Once the active interior lies above the threshold \(\tau^*/k\), full
separation is restored. The same threshold also governs the upper tail: if the profile rises
sufficiently above \(\tau^*/k\) near the top, then a premium top segment re-emerges, with the
cutoff shifted by the additional mean-cost term.

Thus, unlike the decreasing-cost case, the increasing-cost case is not an analogue of the
baseline theorem but a thresholded version of it. The effect of the design cost is to push
the solution downward toward weaker, less top-heavy markets. Yet once the solution is
sufficiently far above that threshold, the same logic as in the costless model reappears.

A simple illustration is the quadratic case
$c(q)=q^2/2$ and $C(G)=\tau\mu_G$.
Then \(q(\phi)=\phi\) and the relevant threshold is \(\tau/k\). In the \(k=1\) case analyzed
above, sufficiently large \(\tau\) makes complete exclusion optimal. This shows that the
sufficient conditions in Proposition~\ref{prop:mean-increasing-threshold} are substantive:
when they fail, the baseline structural properties can fail as well.

\section*{Fixed Inventory of Qualities}\label{sec:fixed_inventory}

This appendix studies a fixed-inventory analog of the baseline model. Specifically, following \cite{loertscher_muir_2022}, \cite{ScreeningwithPersuasion} and \cite{BergemannHuemannMorris2026}, we now assume the seller has an exogenous inventory of goods rather than producing them at convex cost. The upstream designer again chooses market composition, but the seller can no longer create quality at a convex cost. Instead, the seller reallocates an exogenous stock of qualities across buyers. The main message is that this removes the seller's pointwise supply response, linearizes the reduced problem, and collapses optimal market composition to a threshold rule. As a result, the induced buyer-value distribution belongs to a shifted equal-revenue family with a top atom. The comparative statics with respect to the welfare weight $k$ have the same signs as in the main model, but they operate by moving a single cutoff rather than by reshaping a smooth interior branch.

Throughout this appendix, the inventory distribution of qualities is represented by a nondecreasing integrable lower quantile
$\bar q:[0,1]\to \mathbb R_+$.

\paragraph*{Reduction}\label{subsec:fixed_inventory_reduction}

Let $G$ be an arbitrary market composition on $[0,1]$ with lower quantile $Q$. As in Section~2.1 of the main text, define the raw revenue curve
$\widetilde R(u):=(1-u)Q(u)$, let
$R:=\operatorname{cav}(\widetilde R)$
be its least concave majorant, and let
$\phi(u):=-R_+'(u)$
be the corresponding ironed virtual value. Define
$\phi_+(u):=\max\{\phi(u),0\}$ and $R^+(u):=\int_u^1 \phi_+(s)\,ds$. Then $R^+$ is the least concave nonincreasing majorant of $\widetilde R$. Indeed, $R^+$ is concave, nonincreasing, satisfies $R^+(1)=0$, and dominates $R$, hence dominates $\widetilde R$. Conversely, if $S$ is any concave nonincreasing majorant of $\widetilde R$, then $S\ge R$. Let
$a:=\inf\{u\in[0,1]:\phi(u)\ge0\}$,
with the convention $a=1$ if the set is empty. For $u\ge a$, one has $R^+(u)=R(u)\le S(u)$, while for $u<a$,
$R^+(u)=R(a)\le S(a)\le S(u)$,
because $S$ is nonincreasing. Thus $S\ge R^+$.

A delivered-quality schedule is a measurable map $x:[0,1]\to\mathbb R_+$. For a given lower quantile $Q$, we say that $x$ is feasible if:
\begin{enumerate}
\item $x$ is nondecreasing;
\item $x$ is weakly majorized by the inventory quantile,
\begin{equation}\label{eq:fi_majorization}
\int_u^1 x(s)\,ds\le \int_u^1 \bar q(s)\,ds
\qquad \forall u\in[0,1];
\end{equation}
\item $x$ is constant on every interval on which $Q$ is constant.
\end{enumerate}
Write $\mathcal X(Q)$ for the set of feasible schedules.

Whenever multiple seller-optimal schedules exist, the seller is assumed to choose one that maximizes consumer surplus.

\begin{remark}[Tie-break]\label{rem:fi_tie}
The tie-break is used only to select a consumer-surplus value when seller-optimal schedules are not unique. Seller profit is unaffected by the tie-break.
\end{remark}

The next three results are the fixed-inventory analogs of the seller-side reduction and the two without-loss reductions in Lemmas~1 and~2 of the main text.

\begin{proposition}\label{prop:fi_primitive_reduction}
Fix a market composition $G$ with lower quantile $Q$, raw revenue curve $\widetilde R$, least concave majorant $R$, ironed virtual value $\phi$, and nonnegative ironed virtual value $\phi_+$. Under Remark~\ref{rem:fi_tie}, seller profit is
\begin{equation}\label{eq:fi_primitive_profit}
\Pi(G)=\int_0^1 \phi_+(u)\bar q(u)\,du.
\end{equation}
Moreover,
\begin{equation}\label{eq:fi_primitive_cs_bound}
CS(G)
\le
\int_a^1 \bigl(Q(u)-\phi(u)\bigr)\bar q(u)\,du
\end{equation}
If $G$ is regular, so that $\widetilde R=R$, then equality holds in \eqref{eq:fi_primitive_cs_bound}.
\end{proposition}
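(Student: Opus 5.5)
We first reduce everything to the quantile domain and write profit via the usual Myerson integration by parts, but now with the inventory constraint \eqref{eq:fi_majorization} in place of a production cost. For a feasible schedule $x\in\mathcal X(Q)$, the envelope formula with the lowest type's rent set to zero makes expected transfers equal $\int_0^1 x(u)\,d(-\widetilde R)$-type expressions. Concretely, we will show that revenue equals $\int_0^1 \widetilde\phi(u)x(u)\,du$, where $\widetilde\phi=-\widetilde R'$ is taken in the distributional sense (a Stieltjes integral $-\int x\,d\widetilde R$ when $Q$ has atoms or gaps). We then iron. Because $x$ is nondecreasing and constant wherever $Q$ is constant, the standard Myerson ironing argument gives
\[
\int x\,d(-\widetilde R)=\int_0^1\phi(u)x(u)\,du-\int (R-\widetilde R)\,dx .
\]
Here the last term is nonnegative, and it vanishes when $x$ is constant on every interval where $R>\widetilde R$.

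Next we bound $\int\phi x$ using only \eqref{eq:fi_majorization}. Write $\int_0^1\phi_+ x\,du=\int_0^1 X(u)\,d\phi_+(u)+\phi_+(0)X(0)$ with $X(u):=\int_u^1 x$. This is a layer-cake and Abel-summation step that uses only that $\phi_+$ is nondecreasing and nonnegative. Since $X\le \bar X:=\int_u^1\bar q$ pointwise, we get
\[
\int\phi x\le\int\phi_+x\le\int\phi_+\bar q .
\]
For attainment, we take $x^*:=\bar q$ on $[a,1]$, averaged over every ironing interval of $R$ and every flat interval of $Q$, and set $x^*=0$ below $a$. Averaging preserves monotonicity and the majorization constraint, and zeroing out the part below $a$ keeps \eqref{eq:fi_majorization}. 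Since $\phi_+$ is constant on the averaging intervals, we have $\int\phi_+x^*=\int\phi_+\bar q$, and the ironing term vanishes. This gives \eqref{eq:fi_primitive_profit}. The step I expect to be delicate is making the averaging consistent when flat intervals of $Q$ and ironing intervals overlap or touch $a$. My plan there is to average over the connected components of the union of both families of intervals, on each of which $\phi$ is constant.

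For consumer surplus, total surplus under any schedule is $\int_0^1 Q x\,du$, because there is no production cost. Hence $CS=\int Qx-\Pi$. Any seller-optimal $x$ must achieve equality in each of the steps above, which forces two things. First, $\int(R-\widetilde R)\,dx=0$. Second, $\int\phi x=\int\phi_+\bar q$, which requires $x=0$ wherever $\phi<0$, i.e. below $a$ (up to null sets and the boundary level $\phi=0$). Combining these,
\[
CS=\int_a^1(Q-\phi)x\,du .
\]
It then remains to show that $\int_a^1 (Q-\phi)x\le\int_a^1(Q-\phi)\bar q$. Note that $(Q-\phi)(1-u)=\int_u^1(\phi(s)-\phi(u))\,ds+\widetilde R-R$. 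In the regular case this equals $\int_u^1\phi-(1-u)\phi$, so $Q-\phi$ behaves as a weight whose antiderivative structure lets us repeat the Abel-summation and majorization argument. The required monotonicity is that $(1-u)(Q-\phi)=\widetilde R(u)+(1-u)R'(u)$ is nonincreasing, since its derivative is $(1-u)R''$ plus the nonpositive term $\widetilde R'-R'$ on contact sets. This step is the main obstacle, because $Q-\phi$ itself need not be monotone. If it fails, my fallback is to use the tie-break: the consumer-surplus-maximizing seller-optimal schedule among those that agree with $\bar q$ up to ironing averages, together with $Q\le R/(1-u)$. In the regular case $\widetilde R=R$, the choice $x^*=\bar q\mathbf 1_{[a,1]}$ is itself seller-optimal with no averaging needed except on flat intervals of $Q$, where $Q-\phi$ is handled directly. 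This gives equality in \eqref{eq:fi_primitive_cs_bound}.
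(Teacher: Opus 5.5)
Your profit argument is essentially the paper's. Bounding $\int\widetilde R\,dx$ by $\int_0^1\phi x\,du$ and then by $\int_0^1\phi_+x\,du$ is the same as the paper's single step $\int\widetilde R\,dx\le\int R^+\,dx=\int_0^1\phi_+x\,du$, which uses the least concave nonincreasing majorant $R^+$. Both are then followed by the layer-cake majorization bound, with attainment by averaging $\bar q$ over the overlap-closed ironing and flat intervals and zeroing it below $a$.

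The gap is in the consumer-surplus bound. You reduce to $\int_a^1(Q-\phi)x\,du\le\int_a^1(Q-\phi)\bar q\,du$ and then try to run the Abel-summation/majorization argument with $Q-\phi$ as the weight. That argument needs a nonnegative \emph{nondecreasing} weight, because \eqref{eq:fi_majorization} only controls upper tails. The weight $Q-\phi$ is not monotone. The fact you propose to prove concerns a different function, $(1-u)(Q-\phi)$, and it goes the wrong way: upper-tail majorization gives no control for nonincreasing weights. For example, $x\equiv\int_0^1\bar q(s)\,ds$ is feasible: it is constant, and since $\bar q$ is nondecreasing its upper-tail averages are at least its mean. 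Yet for $\eta=\mathbf 1_{[0,1/2]}$ one has $\int_0^1\eta x\,du\ge\int_0^1\eta\bar q\,du$, strictly unless $\bar q$ is a.e. constant. The tie-break fallback does not close this either. The tie-break only selects among seller-optimal schedules, and you still need a bound that holds for them.

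The missing idea is to split the weight instead of bounding $Q-\phi$ jointly, and you already have the ingredient. For any seller-optimal $x$, equality in the profit chain gives $\int_a^1\phi x\,du=\int_0^1\phi_+\bar q\,du=\int_a^1\phi\bar q\,du$. Hence $CS=\int_a^1Qx\,du-\int_a^1\phi\bar q\,du$, and only the $Q$-term needs a bound. Since $Q\mathbf 1_{[a,1]}$ is bounded, nonnegative and nondecreasing, the same majorization lemma gives $\int_a^1Qx\,du\le\int_a^1Q\bar q\,du$, which is \eqref{eq:fi_primitive_cs_bound}. This is exactly the paper's route: bound total surplus by $\int_a^1Q\bar q\,du$, then subtract the already-pinned profit.

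Once this bound is in place, your regular-case equality argument works. The schedule $\bar q\mathbf 1_{[a,1]}$, averaged only on flat intervals of $Q$, is seller-optimal. On those intervals $\phi=Q$ a.e., so averaging leaves $\int(Q-\phi)x$ unchanged, and the tie-break then forces equality.
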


\begin{lemma}[Regularization]\label{lem:fi_regularization}
Fix an arbitrary market composition $G$ with lower quantile $Q$, raw revenue curve $\widetilde R(u)=(1-u)Q(u)$, least concave majorant $R$, and ironed virtual value $\phi(u)=-R_+'(u)$. Define the regularized quantile
\[
Q^r(u):=\frac{R(u)}{1-u}
\quad (u\in[0,1)),
\qquad
Q^r(1):=\lim_{u\uparrow 1}Q^r(u),
\]
and let $G^r$ be the market composition with lower quantile $Q^r$. Then $G^r$ has the same ironed virtual value profile as $G$, and under Remark~\ref{rem:fi_tie},
$\Pi(G^r)=\Pi(G)$ and $CS(G^r)\ge CS(G)$.
Hence $k\,CS(G^r)+(1-k)\Pi(G^r)\ge k\,CS(G)+(1-k)\Pi(G)
\qquad\forall k\in[0,1]$.
\end{lemma}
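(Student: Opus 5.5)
The plan is to use Proposition~\ref{prop:fi_primitive_reduction} on both $G$ and $G^r$, the same way Lemma~\ref{lem:concavify} used the seller-side reduction in the convex-cost model.

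First, identify the ironed virtual value of $G^r$. By construction the raw revenue curve of $G^r$ is $(1-u)Q^r(u)=R(u)$. Because $R$ is concave, $R$ is its own least concave majorant. Hence $G^r$ has concavified revenue curve $R$, ironed virtual value $\phi=-R'_+$, and nonnegative part $\phi_+$, all exactly as for $G$. It follows that $G^r$ is regular. The checks that $Q^r$ is a valid quantile are the ones already done in the proof of Lemma~\ref{lem:concavify}: $0\le R(u)\le 1-u$, $Q^r$ is nondecreasing, and it is continuous on $[0,1)$.

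Next, compare the two markets. Since $\Pi$ in \eqref{eq:fi_primitive_profit} depends only on $\phi_+$ and the inventory $\bar q$, we get $\Pi(G^r)=\Pi(G)$. For consumer surplus, regularity of $G^r$ gives equality in \eqref{eq:fi_primitive_cs_bound}, with the same cutoff $a$ (it is defined from $\phi$, which is common to both):
\[
CS(G^r)=\int_a^1\bigl(Q^r(u)-\phi(u)\bigr)\bar q(u)\,du .
\]
Applying the inequality \eqref{eq:fi_primitive_cs_bound} to $G$ and using $Q^r\ge Q$, which holds because $R\ge \widetilde R$, we obtain
\[
CS(G)\le\int_a^1\bigl(Q(u)-\phi(u)\bigr)\bar q(u)\,du\le\int_a^1\bigl(Q^r(u)-\phi(u)\bigr)\bar q(u)\,du=CS(G^r),
\]
where the middle step also uses $\bar q\ge0$. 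The weighted inequality then follows because $k,1-k\ge0$.

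The only delicate step is the appeal to the equality case of Proposition~\ref{prop:fi_primitive_reduction} for $G^r$. That step needs the tie-break (Remark~\ref{rem:fi_tie}) to be applied consistently, so that the consumer-surplus value for $G^r$ is the maximal one. It also needs the feasibility set $\mathcal X(Q^r)$ to be handled correctly, since $Q^r$ may have flat regions where $Q$ does not. Both points are already covered by the statement of that proposition, so no extra work is needed beyond checking that $Q^r$ satisfies its hypotheses.
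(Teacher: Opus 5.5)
Your proposal is correct and is essentially the paper's own argument: show that $G^r$ is regular with the same $\phi$ (hence the same cutoff $a$), read $\Pi(G^r)=\Pi(G)$ off \eqref{eq:fi_primitive_profit}, and then chain the equality case of \eqref{eq:fi_primitive_cs_bound} for $G^r$ with $Q^r\ge Q$, $\bar q\ge 0$, and the inequality case for $G$. Your checks that $Q^r$ is a valid lower quantile and that $a$ is common to both markets make explicit details the paper's short proof leaves implicit.
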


\begin{lemma}[Truncation of negative virtual values]\label{lem:fi_positive_virtual_values}
Fix a regular market composition $G$, so that its lower quantile satisfies
\[
Q(u)=\frac{1}{1-u}\int_u^1 \phi(s)\,ds
\qquad \forall u\in[0,1)
\]
for its virtual value profile $\phi$. Define
\[
\phi^+(u):=\max\{\phi(u),0\},
\qquad
Q^+(u):=\frac{1}{1-u}\int_u^1 \phi^+(s)\,ds
\quad (u\in[0,1)),
\qquad
Q^+(1):=\lim_{u\uparrow1}Q^+(u),
\]
and let $G^+$ be the market composition with lower quantile $Q^+$. Then under Remark~\ref{rem:fi_tie},
$\Pi(G^+)=\Pi(G)$ and $CS(G^+)\ge CS(G)$.
Hence, $k\,CS(G^+)+(1-k)\Pi(G^+)\ge k\,CS(G)+(1-k)\Pi(G)
\qquad\forall k\in[0,1]$.
\end{lemma}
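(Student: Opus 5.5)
The statement to prove is Lemma~\ref{lem:fi_positive_virtual_values}, truncation of negative virtual values in the fixed-inventory model. The plan is to compute both sides with Proposition~\ref{prop:fi_primitive_reduction}, which gives profit and a consumer-surplus bound in terms of the ironed virtual value.

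\textbf{Step 1: $G^+$ is regular with virtual value $\phi^+$.} Since $\phi$ is nondecreasing and right-continuous, so is $\phi^+$. Hence $R^+(u)=\int_u^1\phi^+$ is concave, and $(1-u)Q^+(u)=R^+(u)$. So $G^+$ is regular, its concavified revenue curve is $R^+$, and its ironed virtual value is $\phi^+$. As in Lemma~\ref{lem:truncate}, $Q^+$ is nondecreasing and left-continuous with values in $[0,1]$, because $R^+(u)\le 1-u$.

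\textbf{Step 2: profits coincide.} Since $(\phi^+)_+=\phi^+=\phi_+$, \eqref{eq:fi_primitive_profit} gives
\[
\Pi(G^+)=\int_0^1\phi^+\bar q=\Pi(G).
\]

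\textbf{Step 3: consumer surplus.} Let $a:=\inf\{u:\phi(u)\ge0\}$.
\begin{itemize}
\item For $G$: it is regular, so \eqref{eq:fi_primitive_cs_bound} holds with equality, $CS(G)=\int_a^1(Q-\phi)\bar q$.
\item For $G^+$: it is also regular, so $CS(G^+)=\int_{a^+}^1(Q^+-\phi^+)\bar q$. Here $a^+:=\inf\{u:\phi^+(u)\ge0\}=0$.
\item On $[a,1]$: $\phi^+=\phi$, so $R^+=R$, $Q^+=Q$, and the integrands agree there.
\item On $[0,a)$: $\phi^+=0$ and the integrand equals $Q^+(u)\bar q(u)\ge0$.
\end{itemize}
Together these give $CS(G^+)=CS(G)+\int_0^a Q^+\bar q\ge CS(G)$. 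The weighted inequality follows immediately for every $k\in[0,1]$.

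\textbf{Main obstacle: the lower limit in \eqref{eq:fi_primitive_cs_bound}.} The lower limit is $a$, defined through $\phi$ itself. One must check how that formula treats quantiles with $\phi=0$, versus excluded quantiles where $\phi<0$. Under the consumer-favorable tie-break, zero-virtual-value quantiles may still receive inventory, so the formula for $G^+$ could start at $0$ rather than at some other point.

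To be safe, I would avoid depending on exactly where the integral starts. Instead I would argue directly from the seller problem for $G^+$, in four moves.
\begin{itemize}
\item Any seller-optimal schedule for $G$ is still feasible for $G^+$. Feasibility depends only on monotonicity, the majorization \eqref{eq:fi_majorization}, and constancy on flat parts of the quantile. The flat parts of $Q^+$ are contained in those of $Q$ on $[a,1]$.
\item On $[0,a)$, $Q^+$ is strictly increasing wherever $R(a)>0$.
\item That schedule earns profit $\int\phi^+x=\int\phi_+\bar q$, so it remains seller-optimal for $G^+$.
\item It yields consumer surplus $\int(Q^+-\phi^+)x\ge\int_a^1(Q-\phi)x$. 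The tie-break then selects weakly higher consumer surplus.
\end{itemize}
This comparison only requires $Q^+\ge Q$ and nonnegativity of the integrand on $[0,a)$, both of which follow as in the proof of Lemma~\ref{lem:decreasing-icx-reductions}.
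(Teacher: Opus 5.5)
Your Steps 1--3 are correct and match the paper's proof: profits agree by \eqref{eq:fi_primitive_profit} because $\phi^+=\phi_+$, and the regular-case equality in Proposition~\ref{prop:fi_primitive_reduction}, applied to $G$ (lower limit $a$) and to $G^+$ (lower limit $0$), together with $Q^+=Q$ on $[a,1]$, gives $CS(G^+)-CS(G)=\int_0^a Q^+\bar q\ge 0$. Your fallback argument is valid but not needed: $a$ is defined through $\phi\ge 0$, and the proposition already asserts equality for regular $G$ under the tie-break, so quantiles with zero virtual value are already covered.
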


Lemmas~\ref{lem:fi_regularization} and~\ref{lem:fi_positive_virtual_values} imply that there is no loss in restricting attention to nonnegative regular virtual-value profiles. Accordingly, for the remainder of the appendix,
\[
\Phi:=\{\phi:[0,1]\to[0,1]\colon \phi \text{ is nondecreasing and right-continuous on }[0,1)\}.
\]
For $\phi\in\Phi$, define
\[
Q_\phi(u):=\frac{1}{1-u}\int_u^1 \phi(s)\,ds
\qquad (u\in[0,1)),
\qquad
Q_\phi(1):=\lim_{u\uparrow1}Q_\phi(u).
\]

\begin{proposition}[Reduced-form seller-side reduction]\label{prop:fi_reduced_reduction}
Fix $\phi\in\Phi$. Under Remark~\ref{rem:fi_tie}, seller profit and consumer surplus are
\begin{equation}\label{eq:fi_profit_phi}
\Pi(\phi)=\int_0^1 \phi(u)\bar q(u)\,du,
\end{equation}
and
\begin{equation}\label{eq:fi_cs_phi}
CS(\phi)=\int_0^1 \bigl(Q_\phi(u)-\phi(u)\bigr)\bar q(u)\,du.
\end{equation}
Equivalently, the reduced objective is
\begin{equation}\label{eq:fi_reduced_objective}
J_k(\phi):=k\,CS(\phi)+(1-k)\Pi(\phi)
=\int_0^1 \bigl[kQ_\phi(u)+(1-2k)\phi(u)\bigr]\bar q(u)\,du.
\end{equation}
In particular, $J_k$ is linear in $\phi$.
\end{proposition}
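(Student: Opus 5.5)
The plan is to reduce the seller's inventory problem, for a fixed regular $\phi\in\Phi$, to a linear program over feasible schedules $x\in\mathcal X(Q_\phi)$, and to show that the inventory quantile (after averaging on flat intervals of $Q_\phi$) is optimal. Write $Q:=Q_\phi$.

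\textbf{Step 1: transfers and rents in quantile space.} As in the seller-side reduction of Section~\ref{sec:reformulation}, incentive compatibility and the envelope formula reduce revenue from a monotone schedule $x$ to
\[
\int_0^1 \phi(u)x(u)\,du+U(\underline v).
\]
Since $G_\phi$ is regular, $R_\phi=\widehat R$, so this formula holds without ironing. The seller optimally sets the lowest type's rent $U(\underline v)=0$. Buyer surplus is then $\int_0^1 (Q(u)-\phi(u))x(u)\,du$. This identity uses only the envelope formula and Fubini, exactly as in the derivation of $CS(G)$ in the main text.

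\textbf{Step 2: upper bound on profit via majorization.} Since $\phi\ge 0$ is nondecreasing and right-continuous, write $\phi(u)=\phi(0)+\int_{(0,u]}d\phi$. By Fubini,
\[
\int_0^1\phi x\,du=\phi(0)\int_0^1 x\,ds+\int_{(0,1]}\Bigl(\int_u^1 x(s)\,ds\Bigr)d\phi(u).
\]
The weak majorization \eqref{eq:fi_majorization} and the nonnegativity of $\phi(0)$ and $d\phi$ then give $\int\phi x\le\int\phi\bar q$ for every feasible $x$.

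\textbf{Step 3: attainability.} Here I must respect condition 3. Where $Q$ is constant on an interval $I$, the ODE $Q'=(Q-\phi)/(1-u)$ from Lemma~\ref{lem:sec-unique-psi-AC}'s computation (valid for any $\phi\in\Phi$) forces $\phi=Q$, which is constant on $I$. I define $\hat x$ to equal $\bar q$ off the maximal flat intervals of $Q$, and to equal the average of $\bar q$ over each such interval on it. Averaging a nondecreasing function over disjoint intervals preserves monotonicity. It also preserves \eqref{eq:fi_majorization}: the tail integrals agree outside the intervals, and inside an interval the averaged tail integral dominates nothing new, because $\bar q$ is nondecreasing, so the average front-loads mass less than $\bar q$ does at the top. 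Hence $\hat x\in\mathcal X(Q)$. Because $\phi$ is constant on each averaged interval, $\int\phi\hat x=\int\phi\bar q$. This proves \eqref{eq:fi_profit_phi} and shows that $\hat x$ is seller-optimal.

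\textbf{Step 4: consumer surplus and the tie-break.} For any seller-optimal $x$, Step 1 gives
\[
CS=\int Qx-\int\phi x=\int Qx-\int\phi\bar q .
\]
Since $Q\ge0$ is nondecreasing, the argument of Step 2 applies verbatim with $\phi$ replaced by $Q$, giving $\int Qx\le\int Q\bar q$. So $CS\le\int(Q-\phi)\bar q$. This bound is attained by $\hat x$, because $Q-\phi=0$ on the averaged intervals. Under Remark~\ref{rem:fi_tie}, this yields \eqref{eq:fi_cs_phi}. Formula \eqref{eq:fi_reduced_objective} and the linearity of $J_k$ are then immediate, since $\phi\mapsto Q_\phi$ is linear.

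\textbf{Main obstacle.} I expect the main difficulty to be the Stieltjes integration-by-parts step with possible atoms of $d\phi$, including behaviour at $u=1$ and integrability of $\bar q$. Closely related is checking that averaging on flat intervals preserves weak majorization; there I would first try a direct comparison of tail integrals interval by interval.
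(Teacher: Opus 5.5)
Your proposal is correct and follows the paper's proof essentially step for step. Your Stieltjes/Fubini decomposition of $\phi$ is the same device as the paper's layer-cake weighted-majorization lemma; averaging $\bar q$ on the flat intervals of $Q_\phi$ (where $\phi=Q_\phi$ a.e.) is its averaging lemma combined with its flat-interval lemma; and Step~4 is exactly its tie-break argument, applying the same majorization bound to $Q_\phi$. One small slip: in Step~1 the lowest type's rent enters revenue with a minus sign, i.e.\ revenue is $\int_0^1\phi(u)x(u)\,du-U(\underline v)$, which is why setting $U(\underline v)=0$ is optimal.
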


The fixed-inventory environment removes the seller's pointwise supply response. In the baseline model, a local change in $\phi$ changes the seller's preferred quality through the map $q(\phi)$, which is what generates the nonlinear Euler--Lagrange equation and the smooth separating interior of the optimal market composition. Here the seller only reallocates a fixed stock of qualities. Once negative virtual values are truncated away, the reduced objective becomes linear in $\phi$, so the upstream problem collapses to an extreme-point problem over the monotone cone $\Phi$.

\subsection*{Characterization}\label{subsec:fixed_inventory_characterization}

For $b\in[0,1]$, define the threshold profile
\[
\phi^b(u):=\mathbf 1\{u\ge b\}.
\]
Define the associated payoff functions
\begin{equation}\label{eq:fi_threshold_payoffs}
\Pi(b):=\Pi(\phi^b)=\int_b^1 \bar q(u)\,du,
\qquad
CS(b):=CS(\phi^b)=
\begin{cases}
(1-b)\displaystyle\int_0^b \frac{\bar q(u)}{1-u}\,du, & b\in[0,1),\\[2ex]
0, & b=1,
\end{cases}
\end{equation}
and
\begin{equation}\label{eq:fi_threshold_objective}
J_k(b):=k\,CS(b)+(1-k)\Pi(b).
\end{equation}

\begin{theorem}[Optimal market composition]\label{thm:fi_characterization}
For every $k\in[0,1]$,
\[
\sup_{\phi\in\Phi}J_k(\phi)=\max_{b\in[0,1]}J_k(b).
\]
Hence, for every $k$, an optimal threshold profile exists. More generally, a profile $\phi\in\Phi$ is optimal if and only if in its level-set decomposition
\[
\phi(u)=\int_0^1 \phi^{b_\phi(y)}(u)\,dy
\qquad \text{for a.e. }u\in[0,1],
\]
where $b_\phi(y):=\inf\{u\in[0,1]:\phi(u)\ge y\}$ and with the convention $b_\phi(y)=1$ if the set is empty, one has
\[
b_\phi(y)\in\arg\max_{b\in[0,1]}J_k(b)
\qquad \text{for a.e. }y\in[0,1].
\]
If $b_k\in\arg\max J_k$ and $a_k:=1-b_k$, then, when $b_k<1$, the associated lower quantile of buyer values is
\begin{equation}\label{eq:fi_threshold_quantile}
Q_k(u)=
\begin{cases}
\dfrac{a_k}{1-u}, & u<b_k,\\[1ex]
1, & u\ge b_k,
\end{cases}
\end{equation}
so the induced market composition is
\begin{equation}\label{eq:fi_threshold_distribution}
G_k(v)=
\begin{cases}
0, & v<a_k,\\[1ex]
1-\dfrac{a_k}{v}, & v\in[a_k,1),\\[1ex]
1, & v=1.
\end{cases}
\end{equation}
Thus the continuous part of the optimal buyer-value distribution is shifted equal-revenue on $[a_k,1)$, with a top atom of size $a_k$ at $v=1$. If $b_k=1$, then $\phi^{b_k}=0$ a.e. and the induced market composition is $\delta_0$.

Under Remark~\ref{rem:fi_tie}, if $b_k<1$, the seller uses the inventory schedule on $[0,b_k)$ and averages it on $[b_k,1]$:
\begin{equation}\label{eq:fi_threshold_quality_schedule}
x_k(u)=
\begin{cases}
\bar q(u), & u<b_k,\\[1ex]
\dfrac{1}{1-b_k}\displaystyle\int_{b_k}^1 \bar q(s)\,ds, & u\ge b_k.
\end{cases}
\end{equation}
If $b_k=1$, the seller earns zero profit and, under free disposal, one may take
\[
x_k(u)=0
\qquad\text{for a.e. }u\in[0,1].
\]
\end{theorem}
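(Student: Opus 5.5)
The plan is to exploit the linearity of $J_k$ on $\Phi$ from Proposition~\ref{prop:fi_reduced_reduction} together with the layer-cake decomposition of monotone profiles into threshold indicators. For $\phi\in\Phi$ with values in $[0,1]$, I will first verify that $\phi(u)=\int_0^1 \mathbf 1\{\phi(u)\ge y\}\,dy$ for every $u$. I will then show that $\{u:\phi(u)\ge y\}=[b_\phi(y),1]$ up to a null set; this uses monotonicity and right-continuity. Together these give $\phi=\int_0^1\phi^{b_\phi(y)}\,dy$ a.e.

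Next I will push $J_k$ through this integral. Since $Q_\phi$ is a linear average of $\phi$ and $\bar q$ is integrable, Fubini applies to both $\int_0^1\phi\bar q$ and $\int_0^1 Q_\phi\bar q$. The $Q$ term needs $\int_0^1 \bar q(u)\frac{1}{1-u}\int_u^1|\phi|\,ds\,du\le\int_0^1\bar q<\infty$, because $\phi\le1$. This yields the key identity
\[
J_k(\phi)=\int_0^1 J_k\bigl(b_\phi(y)\bigr)\,dy .
\]
I also need to check that the threshold payoff formulas \eqref{eq:fi_threshold_payoffs} agree with \eqref{eq:fi_profit_phi}--\eqref{eq:fi_cs_phi} at $\phi=\phi^b$. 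This is direct: $Q_{\phi^b}(u)=(1-b)/(1-u)$ for $u<b$, and $Q_{\phi^b}=\phi^b=1$ on $[b,1]$.

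Optimality then reduces to a scalar problem. I will show $b\mapsto J_k(b)$ is upper semicontinuous, so a maximizer exists. $\Pi(b)$ is continuous. $CS(b)$ is continuous on $[0,1)$, and as $b\uparrow1$ one has $(1-b)\int_0^b\bar q/(1-u)\,du\to0$ by dominated convergence on the integrand $\bar q(u)(1-b)/(1-u)\mathbf 1_{u<b}\le\bar q(u)$, which gives continuity at $1$ as well. The identity then gives $J_k(\phi)\le\max_b J_k(b)$, with equality iff $b_\phi(y)\in\arg\max J_k$ for a.e. $y$. Because $\phi^b\in\Phi$, the supremum is attained, and the characterization of all optimal profiles follows.

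Finally I will read off the induced objects at a threshold $b_k<1$. The quantile is $Q_k(u)=a_k/(1-u)$ on $[0,b_k)$ and $1$ on $[b_k,1]$. Inverting this gives $G_k(v)=\mathrm{Leb}\{u:Q_k(u)\le v\}=1-a_k/v$ on $[a_k,1)$, with an atom of size $1-b_k=a_k$ at $1$. For $b_k=1$ the profile is $0$ a.e., $Q\equiv0$, and the market is $\delta_0$.

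For the seller schedule I would use Proposition~\ref{prop:fi_reduced_reduction}'s underlying seller problem. Profit $\int\phi^{b_k} x$ is maximized subject to the majorization constraint \eqref{eq:fi_majorization} by putting the full tail mass $\int_{b_k}^1\bar q$ on $[b_k,1]$. Constancy on the flat region of $Q$ forces $x$ to be the average there.

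The tie-break in Remark~\ref{rem:fi_tie} picks, among schedules with that tail mass on $[0,b_k)$, the one maximizing CS. This is $\int_0^{b_k}(Q_k-0)x$, with $Q_k$ increasing there. Majorization then makes $x=\bar q$ optimal by a rearrangement argument: $x$ is weakly majorized by $\bar q$ on $[0,b_k)$ once the top mass is fixed, and $Q_k$ is increasing, so $\int Q_k x\le\int Q_k\bar q$ follows by Abel summation against the tail integrals.

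I expect this tie-break step to be the main obstacle. The delicate parts are making the majorization and Abel summation rigorous for the restricted interval and confirming that $\bar q$ itself is feasible, since it must be nondecreasing and consistent with the top average.
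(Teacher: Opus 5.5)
Your proposal is correct and follows essentially the same route as the paper. Both arguments use the layer-cake decomposition $\phi=\int_0^1\phi^{b_\phi(y)}\,dy$ and linearity of $CS$ and $\Pi$ in $\phi$ to obtain $J_k(\phi)=\int_0^1 J_k(b_\phi(y))\,dy$, which reduces the problem to maximizing over thresholds. Your version also fills in two points the paper leaves implicit here: continuity of $b\mapsto J_k(b)$, so that the maximum exists (the paper only establishes this later, in the proof of Theorem~\ref{thm:fi_region}), and the tie-break schedule, where your Abel-summation step is exactly Lemma~\ref{lem:fi_weighted_majorization} applied with $\eta=Q_k$, so the step you flag as the main obstacle is already available in the paper.
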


In the fixed-inventory model, the optimal market composition is always selected from the shifted equal-revenue family. Relative to the baseline model, the planner no longer uses market composition to shape a smooth interior screening schedule; instead it uses market composition to choose a single cutoff that determines how broad the market is and how large the premium top segment remains. We note that in this case, whenever the inventory quantile is continuous (see Assumption \ref{ass:fi_single_crossing} below), the optimal market composition features no interior bunching as in the baseline model. On the other hand, if the inventory quantile has atoms, then we may have bunching on the interior (see the for instance the two point inventory example below).

\subsection*{Comparative statics}\label{subsec:fixed_inventory_comparative_statics}

\begin{proposition}[Low consumer-surplus weights]\label{prop:fi_low_k}
For every $k\in[0,1/2]$, $b=0$ maximizes $J_k(b)$. If, in addition,
\[
\int_0^b \bar q(u)\,du>0
\qquad \forall b>0,
\]
then $b=0$ is the unique maximizer.
\end{proposition}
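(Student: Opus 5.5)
We would prove this by comparing $J_k(b)$ with $J_k(0)$ directly, using the closed forms in \eqref{eq:fi_threshold_payoffs}. At $b=0$ we have $CS(0)=0$ and $\Pi(0)=\int_0^1\bar q(u)\,du$, so $J_k(0)=(1-k)\int_0^1\bar q$. For $b\in(0,1)$,
\[
J_k(0)-J_k(b)=(1-k)\int_0^b\bar q(u)\,du-k(1-b)\int_0^b\frac{\bar q(u)}{1-u}\,du .
\]
The key step is the pointwise bound $(1-b)/(1-u)\le 1$ for $u\in[0,b]$. This bound gives $(1-b)\int_0^b \bar q(u)/(1-u)\,du\le\int_0^b\bar q(u)\,du$; in words, the consumer surplus under a threshold profile never exceeds the profit forgone on $[0,b)$. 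Hence
\[
J_k(0)-J_k(b)\ge(1-2k)\int_0^b\bar q(u)\,du\ge0
\]
whenever $k\le1/2$, since $\bar q\ge0$. The case $b=1$ gives $J_k(1)=0\le J_k(0)$ immediately. So $b=0$ maximizes $J_k(b)$ over $[0,1]$, and by Theorem~\ref{thm:fi_characterization} it is optimal in the full problem.

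For uniqueness under the extra hypothesis, we would take any $b\in(0,1)$ and show the inequality is strict. When $k<1/2$, the term $(1-2k)\int_0^b\bar q$ is strictly positive, and we are done. When $k=1/2$, we need strictness in the step $(1-b)/(1-u)\le1$. For $u\in[0,b)$, we have $(1-b)/(1-u)<1$ strictly, so the inequality fails to be strict only if $\bar q=0$ a.e. on $[0,b)$. That would contradict $\int_0^{b}\bar q>0$; more precisely, strictness needs $\bar q>0$ on a positive-measure subset of $[0,b)$, which the hypothesis gives. The endpoint $b=1$ should also be handled: for $k<1$, $J_k(1)=0<(1-k)\int_0^1\bar q=J_k(0)$, using $\int_0^1\bar q>0$. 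For $k\le1/2$ we always have $1-k>0$, so this covers every case in the statement.

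The only delicate point is the $k=1/2$ uniqueness case. There, the leading term vanishes and strictness must come entirely from the gap in $(1-b)/(1-u)<1$ on $[0,b)$ combined with positivity of $\bar q$ on a set of positive measure near the bottom. The proof therefore rests on the stated hypothesis $\int_0^b\bar q>0$ for all $b>0$, and we would state explicitly that this is exactly what prevents ties.
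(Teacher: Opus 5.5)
Your proof is correct and follows essentially the same route as the paper: you write $J_k(0)-J_k(b)$ as an integral over $[0,b]$, bound it using $(1-b)/(1-u)\le 1$, and get uniqueness at $k=1/2$ from the strict inequality on $[0,b)$ together with positivity of $\bar q$ on a set of positive measure. Your separate treatment of $b=1$ adds a little care that the paper skips; the paper's single formula involves the expression $(1-b)\int_0^b \bar q/(1-u)\,du$, which is of the form $0\cdot\infty$ at $b=1$.
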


Now, define
\begin{equation}\label{eq:fi_Psi_def}
\Psi(b):=\frac{1}{\bar q(b)}\int_0^b \frac{\bar q(u)}{1-u}\,du,
\qquad b\in(0,1),
\end{equation}
whenever $\bar q(b)>0$.

\begin{assumption}[Single-crossing inventory condition]\label{ass:fi_single_crossing}
The inventory quantile $\bar q$ is continuous on $[0,1]$, strictly positive on $(0,1]$, and the function $\Psi$ in \eqref{eq:fi_Psi_def} is strictly increasing on $(0,1)$. Moreover,
$\lim_{b\downarrow 0}\Psi(b)=0$ and $\lim_{b\uparrow 1}\Psi(b)=\infty$.
\end{assumption}

\begin{proposition}[Comparative statics for $k>1/2$]\label{prop:fi_high_k}
Maintain Assumption~\ref{ass:fi_single_crossing}. For every $k\in(1/2,1]$, $J_k$ has a unique maximizer $b_k\in(0,1)$, characterized by
\begin{equation}\label{eq:fi_foc}
\Psi(b_k)=\frac{2k-1}{k}.
\end{equation}
Moreover, $b_k$ is strictly increasing in $k$, and:
\begin{enumerate}
\item $\Pi(b_k)$ is strictly decreasing in $k$;
\item $CS(b_k)$ is strictly increasing in $k$;
\item $TS(b_k)$ is strictly decreasing in $k$.
\end{enumerate}
\end{proposition}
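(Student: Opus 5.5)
By Theorem~\ref{thm:fi_characterization}, the problem for each $k$ reduces to maximizing the scalar function $J_k(b)=k\,CS(b)+(1-k)\Pi(b)$ over $b\in[0,1]$, with $CS(b)$ and $\Pi(b)$ given by \eqref{eq:fi_threshold_payoffs}. So the plan is a one-dimensional calculus argument, followed by revealed-preference comparisons for the welfare statements. Throughout, write $\lambda:=(2k-1)/k\in(0,1]$.

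\textbf{Step 1 (derivative and unique maximizer).} Under Assumption~\ref{ass:fi_single_crossing}, $\bar q$ is continuous. Hence both $\Pi(b)=\int_b^1\bar q$ and $CS(b)=(1-b)\int_0^b \bar q(u)/(1-u)\,du$ are $C^1$ on $(0,1)$. Differentiating gives $CS'(b)=-\int_0^b \bar q/(1-u)\,du+\bar q(b)$ and $\Pi'(b)=-\bar q(b)$. Therefore
\[
J_k'(b)=\bar q(b)\bigl[(2k-1)-k\Psi(b)\bigr]=k\,\bar q(b)\bigl[\lambda-\Psi(b)\bigr].
\]
Since $\bar q>0$ on $(0,1)$ and $\Psi$ is strictly increasing from $0$ to $\infty$, there is a unique $b_k\in(0,1)$ with $\Psi(b_k)=\lambda$. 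Moreover $J_k'>0$ on $(0,b_k)$ and $J_k'<0$ on $(b_k,1)$.

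It remains to check continuity of $J_k$ at the endpoints, so that the interior critical point is the global maximizer on $[0,1]$. At $b=0$ this is immediate. At $b=1$, continuity of $CS$ follows from boundedness of $\bar q$: the integral $\int_0^b \bar q/(1-u)\,du$ grows only like $-\ln(1-b)$, so $(1-b)\int_0^b \bar q/(1-u)\,du\to 0=CS(1)$. Hence $J_k$ is strictly increasing then strictly decreasing on $[0,1]$, and $b_k$ is the unique maximizer, characterized by \eqref{eq:fi_foc}.

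\textbf{Step 2 (monotonicity of the cutoff).} The map $k\mapsto\lambda=(2k-1)/k$ is strictly increasing, and $\Psi$ is strictly increasing. Hence $b_k=\Psi^{-1}(\lambda)$ is strictly increasing in $k$.

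\textbf{Step 3 (payoff comparisons).} Fix $1/2<k_1<k_2\le 1$ and write $b_i:=b_{k_i}$, so that $b_1<b_2$. We also write $CS_i:=CS(b_i)$, $\Pi_i:=\Pi(b_i)$ and $TS_i:=CS_i+\Pi_i$.

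\emph{Profit.} Since $\bar q>0$ on $(b_1,b_2)$, we get $\Pi_1-\Pi_2=\int_{b_1}^{b_2}\bar q>0$.

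\emph{Consumer surplus.} Because $b_2$ is the unique maximizer of $J_{k_2}$ and $b_1\neq b_2$,
\[
k_2CS_2+(1-k_2)\Pi_2>k_2CS_1+(1-k_2)\Pi_1 .
\]
Rearranging, $CS_2-CS_1>\tfrac{1-k_2}{k_2}(\Pi_1-\Pi_2)\ge 0$. This also covers $k_2=1$, where the right-hand side is zero.

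\emph{Total surplus.} Rewrite the objective as $J_k=k\,TS+(1-2k)\Pi$. Strict optimality of $b_1$ at weight $k_1$ gives $k_1(TS_1-TS_2)>(2k_1-1)(\Pi_1-\Pi_2)>0$. Hence $TS$ is strictly decreasing in $k$.

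\textbf{Main obstacle.} The only delicate point is the endpoint behavior in Step 1: continuity of $CS(b)$ as $b\uparrow 1$ and as $b\downarrow 0$, which rules out a boundary maximum. Given the boundedness and continuity of $\bar q$ assumed in Assumption~\ref{ass:fi_single_crossing}, this is routine. The sign pattern of $J_k'$ then follows directly from the single-crossing of $\Psi$.
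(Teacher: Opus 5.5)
Your proof is correct. Steps~1 and~2 follow the paper's argument: the same factorization $J_k'(b)=\bar q(b)\bigl[(2k-1)-k\Psi(b)\bigr]$ and the same single-crossing inversion $b_k=\Psi^{-1}\bigl((2k-1)/k\bigr)$. Your explicit check that $CS(b)\to 0$ as $b\uparrow 1$ is a small but useful addition. The paper concludes global maximality from the sign of $J_k'$ alone and leaves endpoint continuity implicit here; it is established separately, by dominated convergence, in the proof of Theorem~\ref{thm:fi_region}.

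Where you genuinely differ is Step~3. The paper proves the welfare claims by differentiating in the threshold: $\Pi'(b)=-\bar q(b)$, $CS'(b)=\bar q(b)\bigl(1-\Psi(b)\bigr)$ and $TS'(b)=-\int_0^b \bar q(u)/(1-u)\,du<0$. It then uses $\Psi(b_k)=(2k-1)/k\le 1$ to place every $b_k$ in the region where $CS$ is increasing. You instead argue by revealed preference, combining strict uniqueness of the maximizers with $\Pi_1>\Pi_2$, and rewriting $J_k=k\,TS+(1-2k)\Pi$ for the total-surplus claim. This is the device the paper itself uses for the main-text comparative statics in Proposition~\ref{prop:comp_statics}.

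Your route needs only uniqueness and the profit ordering, so it would carry over to settings where $CS$ and $TS$ are not explicit, differentiable functions of the design variable. The paper's route yields slightly more. It shows that $TS$ is decreasing along the entire threshold curve, and that $CS$ is increasing on all of $\bigl(0,\Psi^{-1}(1)\bigr]$ and peaks exactly at the $k=1$ cutoff.
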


\paragraph*{Comparison with the baseline model}\label{rem:fi_comparison_baseline}
The fixed-inventory model preserves the same coarse comparative statics as the baseline variable-supply model: more weight on consumer surplus leads to a less top-heavy market, lower profit, higher consumer surplus, and lower total surplus. The difference is the mechanism. In the baseline model, increasing $k$ expands and reshapes a smooth separating interior because the seller's quality choice reacts pointwise to the induced virtual-value profile. Here, increasing $k$ moves a single cutoff within the shifted equal-revenue family because no endogenous supply margin remains.

\subsection*{Implementable consumer-surplus/profit pairs and the frontier}\label{subsec:fixed_inventory_region}

Define the implementable set
\[
V^{FI}:=\{(CS(G),\Pi(G)):G \text{ is a market composition on }[0,1]\}\subseteq\mathbb R_+^2.
\]
For a set $V\subseteq\mathbb R_+^2$, define its supported Pareto frontier by
\[
F^{SP}(V):=
\left\{
z\in F^P(V):
\exists k\in[0,1]\text{ such that }
kz_1+(1-k)z_2
=
\max_{y\in V}\bigl(ky_1+(1-k)y_2\bigr)
\right\}.
\]

\begin{theorem}[Implementable set and frontier]\label{thm:fi_region}
The implementable set admits the equivalent representations
\begin{equation}\label{eq:fi_region_equivalence}
V^{FI}
=
\{(CS(\phi),\Pi(\phi)):\phi\in\Phi\}
=
\operatorname{co}\{(CS(b),\Pi(b)):b\in[0,1]\}.
\end{equation}
Hence $V^{FI}$ is compact and convex, and
$F^P(V^{FI})=F^{SP}(V^{FI})$.
\end{theorem}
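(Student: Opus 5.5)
The proof has two parts: first establish the three representations of $V^{FI}$, then read off compactness, convexity, and the frontier identity from linearity.

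\emph{Reduction to $\Phi$.} Let $G$ be any market composition. Lemma~\ref{lem:fi_regularization} and Lemma~\ref{lem:fi_positive_virtual_values} produce a regular $G^+$ with nonnegative virtual value $\phi_+\in\Phi$. It has the same profit and weakly higher consumer surplus. This shows $V^{FI}$ lies below $\{(CS(\phi),\Pi(\phi))\}$ only in the dominance sense, not as a set inclusion. To obtain equality of sets, I use Proposition~\ref{prop:fi_primitive_reduction} directly. For arbitrary $G$, profit is $\int\phi_+\bar q$, and consumer surplus can be computed under the tie-break. I expect the consumer surplus of a nonregular $G$ to equal that of some element of $\Phi$. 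The trick is that ironing intervals can be reproduced by a regular market.

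A cleaner route, which I will take, is to prove the chain
\[
V^{FI}\subseteq \operatorname{co}\{(CS(b),\Pi(b))\}\subseteq\{(CS(\phi),\Pi(\phi)):\phi\in\Phi\}\subseteq V^{FI}.
\]
The last inclusion is immediate, since $G_\phi$ is a market composition and Proposition~\ref{prop:fi_reduced_reduction} gives its payoffs.

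For the middle inclusion, take a convex combination $\sum_i\lambda_i(CS(b_i),\Pi(b_i))$ and set $\phi=\sum_i\lambda_i\phi^{b_i}\in\Phi$. This profile is a monotone step function with values in $[0,1]$. By linearity of $\phi\mapsto Q_\phi$ together with \eqref{eq:fi_profit_phi} and \eqref{eq:fi_cs_phi}, the maps $CS$ and $\Pi$ are linear in $\phi$, so this $\phi$ attains the combination exactly. Carathéodory's theorem handles general points of the convex hull.

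\emph{The first inclusion.} This is the main obstacle. Take an arbitrary $G$ with profile $\phi_+$. Decompose $\phi_+=\int_0^1\phi^{b(y)}\,dy$ via level sets, as in Theorem~\ref{thm:fi_characterization}. By linearity and Fubini,
\[
(CS(\phi_+),\Pi(\phi_+))=\int_0^1(CS(b(y)),\Pi(b(y)))\,dy,
\]
which is a barycenter of points on the compact curve $b\mapsto(CS(b),\Pi(b))$. Continuity of $CS(b)$ and $\Pi(b)$ in $b$ follows from integrability of $\bar q$ and dominated convergence, using $(1-b)/(1-u)\le1$ for $u<b$. Hence this barycenter lies in the closed convex hull, which equals the convex hull because the curve is compact in $\mathbb R^2$.

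This handles profit exactly, but the actual $CS(G)$ for nonregular $G$ may be strictly smaller than $CS(\phi_+)$, by the bound \eqref{eq:fi_primitive_cs_bound}. So I must show that lowering consumer surplus at fixed profit stays inside the hull. I plan to verify that the hull contains every point $(c',\pi)$ with $0\le c'\le c$ whenever it contains $(c,\pi)$. This holds because the hull contains the points $b=1$, namely $(0,0)$, and $b=0$, namely $(0,\int\bar q)$.

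To see it, note that all hull points have $\pi\in[0,\Pi(0)]$. Consider the segment from $(0,\pi_{\min})$ to $(0,\Pi(0))$ together with the convexity of the hull. The hull contains the vertical segment $\{0\}\times[0,\Pi(0)]$, and convex combinations of $(c,\pi)$ with points on that segment reach $(c',\pi)$. Concretely, $(c',\pi)=t(c,\pi)+(1-t)(0,\pi)$, and $(0,\pi)$ lies on that segment. This closes the inclusion.

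\emph{Compactness, convexity, and the frontier.} Compactness and convexity follow because $V^{FI}$ is the convex hull of a compact curve. For $F^P=F^{SP}$, I use a standard supporting-hyperplane argument for compact convex sets in $\mathbb R^2$. Any Pareto point $z$ lies on the boundary, so a supporting line exists with normal $(k_1,k_2)\ne0$. Pareto optimality together with the downward-comprehensiveness just shown forces $k_1,k_2\ge0$, so normalizing gives $k\in[0,1]$.
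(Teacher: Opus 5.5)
Your derivation of the three representations is correct and is essentially the paper's argument, arranged as a single chain of inclusions. The level-set decomposition exhibits $(CS(\phi_+),\Pi(\phi_+))$ as a barycenter of the compact threshold curve, and linearity of $\phi\mapsto(CS(\phi),\Pi(\phi))$ gives the reverse inclusion. Your filling-in of lower consumer surplus at fixed profit, through the vertical segment joining the threshold points $b=1$ and $b=0$, is the paper's constant profile $\phi_0\equiv\pi/\Pi(0)$ in disguise, since $\lambda\phi^0+(1-\lambda)\phi^1=\lambda$ a.e. Carath\'eodory is not needed for the middle inclusion, where hull points are finite combinations by definition. Its real role is to make the hull of the compact curve compact, hence closed, which you use twice.

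The gap is in the frontier step. An arbitrary supporting line at a Pareto point need not have a nonnegative normal, and neither Pareto optimality nor the comprehensiveness in $c$ that you proved forces one. Take $z=(0,\Pi(0))$ with $\Pi(0)>0$.
\begin{itemize}
\item It is Pareto efficient: any $\phi\in\Phi$ with $\Pi(\phi)=\Pi(0)$ equals $1$ a.e.\ on the upper interval $\{\bar q>0\}$, so $Q_\phi=\phi$ there and $CS(\phi)=0$.
\item Yet the line with normal $(-1,1)$ supports $V^{FI}$ at $z$, because $c\ge 0$ and $\pi\le\Pi(0)$ throughout $V^{FI}$.
\item Since $c_z=0$, comprehensiveness in $c$ gives no sign information about $k_1$.
\end{itemize}
So the normal must be selected, not taken arbitrarily. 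The fix is the paper's argument. Pareto efficiency says $V^{FI}\cap(z+\mathbb R^2_{++})=\emptyset$. Separating these two convex sets yields $\alpha\neq0$ with $\alpha\cdot y\le\alpha\cdot w$ for all $y\in V^{FI}$ and $w\in z+\mathbb R^2_{++}$. Letting $w$ grow in either coordinate forces $\alpha\in\mathbb R^2_+$. Letting $w\to z$ gives $\alpha\cdot y\le\alpha\cdot z$, and normalizing $\alpha=(k,1-k)$ places $z$ in $F^{SP}(V^{FI})$.
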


Define
\begin{equation}\label{eq:fi_general_h_def}
h(\pi):=\max\{c\ge 0:(c,\pi)\in V^{FI}\},
\qquad \pi\in[0,\Pi(0)],
\end{equation}
and
\begin{equation}\label{eq:fi_general_pi_max_def}
\pi^{\max}:=\max\arg\max_{\pi\in[0,\Pi(0)]} h(\pi).
\end{equation}

\begin{corollary}[General graph representation]\label{cor:fi_region_general_graph}
One has
\[
V^{FI}=
\Bigl\{(c,\pi)\in\mathbb R_+^2:0\le \pi\le \Pi(0),\ 0\le c\le h(\pi)\Bigr\},
\]
where $h$ is concave. Moreover,
\[
F^P(V^{FI})=F^{SP}(V^{FI})=
\Bigl\{(h(\pi),\pi):\pi\in[\pi^{\max},\Pi(0)]\Bigr\}.
\]
\end{corollary}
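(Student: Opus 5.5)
The plan is to read everything off Theorem~\ref{thm:fi_region}, which already gives that $V^{FI}$ is compact and convex, that $V^{FI}=\operatorname{co}\{(CS(b),\Pi(b)):b\in[0,1]\}$, and that $F^P(V^{FI})=F^{SP}(V^{FI})$. What remains is to show that the set is the hypograph of a concave function over $[0,\Pi(0)]$, and then to locate its Pareto frontier.

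\emph{Step 1: corner points and the profit range.} By \eqref{eq:fi_threshold_payoffs}, the threshold $b=1$ gives $(0,0)$ and $b=0$ gives $(CS(0),\Pi(0))=(0,\Pi(0))$. Since $\bar q\ge 0$, every threshold satisfies $0\le \Pi(b)\le \Pi(0)$, and convex combinations stay in this range. Hence every point of $V^{FI}$ has $\pi\in[0,\Pi(0)]$. By convexity, the whole segment $\{0\}\times[0,\Pi(0)]$ lies in $V^{FI}$.

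\emph{Step 2: downward closedness in $c$.} If $(c,\pi)\in V^{FI}$, then $(0,\pi)\in V^{FI}$ by Step~1. Convexity then gives $(c',\pi)\in V^{FI}$ for all $c'\in[0,c]$. For each $\pi\in[0,\Pi(0)]$ the fiber $\{c:(c,\pi)\in V^{FI}\}$ is nonempty and compact, so the maximum in \eqref{eq:fi_general_h_def} is attained. The fiber is therefore exactly $[0,h(\pi)]$, which is the claimed hypograph representation. Concavity of $h$ is standard: for $\pi_1,\pi_2$ and $t\in[0,1]$, the convex combination of $(h(\pi_1),\pi_1)$ and $(h(\pi_2),\pi_2)$ lies in $V^{FI}$. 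This gives $h(t\pi_1+(1-t)\pi_2)\ge t h(\pi_1)+(1-t)h(\pi_2)$.

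\emph{Step 3: the frontier.} Any $(c,\pi)$ with $c<h(\pi)$ is dominated by $(h(\pi),\pi)$, so the frontier lies on the graph of $h$. The set $\arg\max h$ is a closed nonempty interval, because $h$ is concave and upper semicontinuous as the upper boundary of a compact set; this is the point to check carefully. Hence $\pi^{\max}$ is well defined. For $\pi<\pi^{\max}$, the point $(h(\pi^{\max}),\pi^{\max})$ weakly dominates $(h(\pi),\pi)$ with strictly higher profit, so such points are excluded.

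For $\pi\ge\pi^{\max}$, concavity together with $\pi^{\max}$ being the largest maximizer makes $h$ strictly decreasing on $[\pi^{\max},\Pi(0)]$. Suppose $(c',\pi')$ dominates $(h(\pi),\pi)$. Then $\pi'\ge\pi$, so $c'\le h(\pi')\le h(\pi)$, and this forces $c'=h(\pi)$. If $\pi'>\pi$, strict decrease gives $h(\pi')<h(\pi)$, a contradiction. Hence $\pi'=\pi$ and there is no strict improvement. This yields $F^P(V^{FI})=\{(h(\pi),\pi):\pi\in[\pi^{\max},\Pi(0)]\}$, and the equality with $F^{SP}$ is Theorem~\ref{thm:fi_region}.

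The only delicate point is the upper semicontinuity of $h$ and the strict-decrease claim beyond $\pi^{\max}$. Both follow from compactness of $V^{FI}$ and elementary properties of concave functions.
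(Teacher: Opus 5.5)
Your proposal is correct and follows the paper's argument. You place the vertical segment $\{0\}\times[0,\Pi(0)]$ inside $V^{FI}$, use convexity to make each fiber equal to $[0,h(\pi)]$, obtain concavity of $h$ from convexity of $V^{FI}$, and take $F^P=F^{SP}$ from Theorem~\ref{thm:fi_region}. The differences are minor. The paper gets the vertical segment from constant profiles $\phi\equiv\lambda$ rather than from the threshold endpoints $b=0,1$; these are the same points, since $\phi\equiv\lambda$ coincides a.e.\ with the $\lambda$-mixture of $\phi^0$ and $\phi^1$. The paper also simply asserts the Pareto frontier of the hypograph, whereas you verify it explicitly through upper semicontinuity of $h$ and strict decrease of $h$ on $[\pi^{\max},\Pi(0)]$.
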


Suppose now that the map $b\mapsto \Pi(b)$ is strictly decreasing on $[0,1]$, so that $\Pi^{-1}$ is well defined on $[0,\Pi(0)]$. Define
\begin{equation}\label{eq:fi_graph_function}
f(\pi):=CS\bigl(\Pi^{-1}(\pi)\bigr),
\qquad \pi\in[0,\Pi(0)].
\end{equation}

\begin{corollary}[Strictly decreasing-profit special case]\label{cor:fi_region_graph}
If $b\mapsto \Pi(b)$ is strictly decreasing on $[0,1]$, then $h(\pi)=\operatorname{cav}f(\pi)$ for all $\pi\in[0,\Pi(0)]$, so
\[
V^{FI}=
\Bigl\{(c,\pi)\in\mathbb R_+^2:0\le \pi\le \Pi(0),\ 0\le c\le \operatorname{cav}f(\pi)\Bigr\},
\]
and
\[
F^P(V^{FI})=F^{SP}(V^{FI})=
\Bigl\{(\operatorname{cav}f(\pi),\pi):\pi\in[\pi^{\max},\Pi(0)]\Bigr\}.
\]
If $f$ is concave, then $\operatorname{cav}f=f$ and the efficient frontier is the decreasing branch of the threshold curve.
\end{corollary}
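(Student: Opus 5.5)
The plan is to establish the three assertions of Corollary~\ref{cor:fi_region_graph} in order: the identification $h=\operatorname{cav}f$, the graph description of $V^{FI}$, and the frontier formula. The graph description and the frontier formula will then follow directly from Corollary~\ref{cor:fi_region_general_graph} once $h$ is replaced by $\operatorname{cav}f$. The starting point is Theorem~\ref{thm:fi_region}, which gives
\[
V^{FI}=\operatorname{co}\{(CS(b),\Pi(b)):b\in[0,1]\}.
\]
Since $b\mapsto\Pi(b)$ is continuous (it is an integral of the integrable $\bar q$) and strictly decreasing, it is a homeomorphism from $[0,1]$ onto $[0,\Pi(0)]$. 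Reparametrizing the threshold curve by $\pi=\Pi(b)$ turns it into the graph $\{(f(\pi),\pi):\pi\in[0,\Pi(0)]\}$. One also needs $f$ to be continuous, which reduces to continuity of $b\mapsto CS(b)$ on $[0,1]$. On $[0,1)$ this is clear. At $b=1$ I would use that $(1-b)\int_0^b\bar q(u)/(1-u)\,du\to0$, by splitting the integral at $b-\delta$ and using integrability of $\bar q$.

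Next comes the inequality $h\ge\operatorname{cav}f$. Fix $\pi$. By concavification on an interval (two-point Carathéodory), $\operatorname{cav}f(\pi)$ is attained as $\theta f(\pi_1)+(1-\theta)f(\pi_2)$ with $\theta\pi_1+(1-\theta)\pi_2=\pi$. The attainment uses continuity of $f$ on a compact interval. The corresponding convex combination of two curve points lies in $V^{FI}$, so $h(\pi)\ge\operatorname{cav}f(\pi)$.

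For the reverse inequality $h\le\operatorname{cav}f$, take any point $(c,\pi)$ of the convex hull, written as $\sum_i\lambda_i(f(\pi_i),\pi_i)$. Then
\[
c=\sum_i\lambda_i f(\pi_i)\le\sum_i\lambda_i\operatorname{cav}f(\pi_i)\le\operatorname{cav}f\Bigl(\sum_i\lambda_i\pi_i\Bigr)=\operatorname{cav}f(\pi),
\]
where the last inequality is concavity of $\operatorname{cav}f$. Taking the maximum over such points gives $h\le\operatorname{cav}f$. Substituting $h=\operatorname{cav}f$ into Corollary~\ref{cor:fi_region_general_graph} yields both displayed formulas, with $\pi^{\max}$ defined exactly as before.

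For the final claim: if $f$ is concave, then $\operatorname{cav}f=f$. The frontier is then $\{(f(\pi),\pi):\pi\in[\pi^{\max},\Pi(0)]\}$, where $\pi^{\max}$ is the largest maximizer of $f$. To justify the phrase ``decreasing branch,'' I would note that concavity makes $f$ nonincreasing on $[\pi^{\max},\Pi(0)]$. Moreover, $f$ is strictly decreasing there, since a flat piece to the right of $\pi^{\max}$ would contradict $\pi^{\max}$ being the largest maximizer.

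The main obstacle I expect is the continuity of $f$, specifically at the endpoint $b=1$ when $\bar q$ is only integrable. This continuity is what guarantees that the concave envelope is attained and that $h$ is a maximum rather than a supremum. Everything else in the argument is routine convex geometry.
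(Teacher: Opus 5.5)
Your proof is correct and follows essentially the same sandwich argument as the paper. Both start from $V^{FI}=\operatorname{co}\{(CS(b),\Pi(b)):b\in[0,1]\}$ (Theorem~\ref{thm:fi_region}) and obtain $h\le\operatorname{cav}f$ from finite convex combinations of curve points, which the paper phrases as the hypograph of any concave majorant containing $V^{FI}$. The only difference is in $h\ge\operatorname{cav}f$: the paper gets it directly because $h$ is a concave majorant of $f$ (Corollary~\ref{cor:fi_region_general_graph}), so the continuity of $f$ and attainment of the two-point envelope that you flag as the main obstacle are not actually needed. Taking the supremum over two-point combinations already suffices, and continuity of $b\mapsto CS(b)$ at $b=1$ is in any case shown by dominated convergence in the proof of Theorem~\ref{thm:fi_region}.
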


Next, for a prior distribution $H$ on $[0,1]$, let $\mathcal S(H)$ denote the set of finite observable segmentations $\{(\alpha_m,G_m)\}_{m=1}^M$
such that $\alpha_m>0$, $\sum_{m=1}^M \alpha_m=1$, and
$H=\sum_{m=1}^M \alpha_m G_m$
as probability distributions on $[0,1]$. Define the corresponding segmentation payoff set
\begin{equation}\label{eq:fi_segmentation_set}
T^{SEG}(H):=
\left\{
\sum_{m=1}^M \alpha_m\bigl(CS(G_m),\Pi(G_m)\bigr):
\{(\alpha_m,G_m)\}_{m=1}^M\in\mathcal S(H)
\right\}.
\end{equation}

\begin{corollary}[Observable segmentation]\label{cor:fi_segmentation}
One has
\[
\bigcup_H T^{SEG}(H)=V^{FI}.
\]
\end{corollary}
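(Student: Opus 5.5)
The plan is to show the two inclusions $\bigcup_H T^{SEG}(H)\subseteq V^{FI}$ and $V^{FI}\subseteq \bigcup_H T^{SEG}(H)$ separately. Both rely only on Theorem~\ref{thm:fi_region}, which says that $V^{FI}$ is convex, and on the convention that $CS(G)$ and $\Pi(G)$ are the single-market payoffs defined in the fixed-inventory reduction under the tie-break of Remark~\ref{rem:fi_tie}. The argument parallels the proof of Corollary~\ref{cor:segmentation}, with convexity of $V^{FI}$ from Theorem~\ref{thm:fi_region} doing the work.

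For the first inclusion ($\subseteq$), I would fix a prior $H$ and a point $z=\sum_{m=1}^M\alpha_m(CS(G_m),\Pi(G_m))$ in $T^{SEG}(H)$. Each segment payoff pair $(CS(G_m),\Pi(G_m))$ lies in $V^{FI}$ by definition, since each $G_m$ is a market composition on $[0,1]$. The weights $\alpha_m>0$ sum to one, so $z$ is a convex combination of points of $V^{FI}$. Theorem~\ref{thm:fi_region} says $V^{FI}$ is convex, so $z\in V^{FI}$. This step uses only that each segment is evaluated as a stand-alone market with the same inventory quantile $\bar q$, which is what the definition of $T^{SEG}(H)$ stipulates.

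For the second inclusion ($\supseteq$), I would take $z\in V^{FI}$, so $z=(CS(G),\Pi(G))$ for some market composition $G$. Setting $H:=G$ and using the trivial segmentation $M=1$, $\alpha_1=1$, $G_1=G$ gives an element of $\mathcal S(H)$, because $H=G_1$. Hence $z\in T^{SEG}(G)$. If one prefers to exhibit nontrivial segmentations, Theorem~\ref{thm:fi_region} also writes each $z$ as a finite convex combination of threshold payoffs $(CS(b_m),\Pi(b_m))$. The aggregate $H=\sum_m\alpha_m G_{\phi^{b_m}}$ of the corresponding threshold markets then realizes $z$ with $M\ge 2$. This refinement is not needed for the equality.

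I do not expect any step to be a real obstacle. The only point to check is that the definition of $T^{SEG}(H)$ gives no additional power: the seller screens each segment separately with the full inventory $\bar q$, and segment payoffs are averaged with weights $\alpha_m$. If the intended reading instead required the segments to share one inventory, the first inclusion would need an argument that splitting the stock across segments cannot beat a single market. I would handle that by again invoking convexity of $V^{FI}$ together with positive homogeneity of payoffs in the inventory scale. However, I read the statement as written and take the per-segment evaluation to be $(CS(G_m),\Pi(G_m))$.
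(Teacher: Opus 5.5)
Your proof is correct and follows the argument the paper intends, which it leaves implicit: segmentation payoffs are convex combinations of single-market pairs, so $\bigcup_H T^{SEG}(H)=\operatorname{co}(V^{FI})$ exactly as in the proof of Corollary~\ref{cor:segmentation}, and the convexity of $V^{FI}$ from Theorem~\ref{thm:fi_region} collapses this to $V^{FI}$; for the reverse inclusion, your trivial segmentation $H=G$ is a slightly shorter route than the mixture construction $H=\sum_m\alpha_m G_m$ and is valid. You can drop the closing remark about a shared-inventory reading, since the definition of $T^{SEG}(H)$ evaluates each segment as a stand-alone market with inventory $\bar q$.
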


\begin{remark}[Why convexification matters]\label{rem:fi_convexification_intuition}
Theorem~\ref{thm:fi_region} says that the threshold family spans the whole implementable set, but not every threshold point is necessarily efficient. When the threshold curve bends inward, parts of the efficient frontier are convex combinations of two threshold points. This can happen both with atomic inventories and with atomless inventories. Atoms are only a particularly transparent source of nonconcavity of the threshold curve.
\end{remark}

\begin{remark}[Segmentation and buyer-side information design]\label{rem:fi_segmentation_info}
Corollary~\ref{cor:fi_segmentation} is the fixed-inventory analogue of the seller-segmentation corollary in the linear-cost section of the main text. The difference is that the fixed-prior sets $T^{SEG}(H)$ need not be triangles: the triangle geometry is special to posted pricing. The same discussion applies to buyer-side information design \`a la Roesler--Szentes. In the fixed-inventory environment, a buyer-side information structure only matters through the induced distribution $G$ of posterior expected values that the seller faces. The direct problem studied here already chooses that distribution $G$ freely, so taking the union over priors cannot enlarge the set beyond $V^{FI}$. What is special to the linear-cost case is not the union-over-priors statement, but the fact that the fixed-prior sets admit simple triangular characterizations.
\end{remark}

\begin{remark}[Why the tie-break is innocuous for the frontier]\label{rem:fi_frontier_tie_break}
Without Remark~\ref{rem:fi_tie}, seller-optimal profit is still pinned down, but seller-optimal consumer surplus need not be unique. The upper frontier identified above is nevertheless robust: for any fixed $\phi$, the tie-break chooses the seller-optimal schedule with the highest consumer surplus at that profit level. Thus the tie-break characterizes the relevant efficient frontier even if one does not want to impose it globally on the full set of seller-optimal outcomes.
\end{remark}

\subsection*{Examples}\label{subsec:fixed_inventory_examples}

\paragraph*{Uniform inventory}
Suppose $\bar q(u)=u$, $u\in[0,1]$.
Then
\[
\Pi(b)=\frac{1-b^2}{2},
\qquad
CS(b)=
\begin{cases}
(1-b)\bigl[-b-\log(1-b)\bigr], & b\in[0,1),\\[1ex]
0, & b=1,
\end{cases}
\]
and the threshold curve is concave. Consequently,
\[
V^{FI}=
\left\{(c,\pi)\in\mathbb R_+^2:0\le \pi\le \frac12,\ 0\le c\le f_U(\pi)\right\},
\]
where
\[
f_U(\pi)=
\begin{cases}
\bigl(1-\sqrt{1-2\pi}\bigr)\Bigl[-\sqrt{1-2\pi}-\log\bigl(1-\sqrt{1-2\pi}\bigr)\Bigr], & \pi\in(0,1/2],\\[1ex]
0, & \pi=0.
\end{cases}
\]
Its efficient frontier is the decreasing branch of this curve. For every $k\in(1/2,1]$, the unique maximizer is the unique solution $b_k\in(0,1)$ of
$(3k-1)b_k+k\log(1-b_k)=0$.
Writing $a_k:=1-b_k$, the induced value distribution is
\[
G_k(v)=
\begin{cases}
0, & v<a_k,\\[1ex]
1-\dfrac{a_k}{v}, & v\in[a_k,1),\\
1, & v=1,
\end{cases}
\]
and under Remark~\ref{rem:fi_tie} the seller's value-space menu on the continuous support is
\[
x_k(v)=1-\frac{a_k}{v},
\qquad
t_k(v)=a_k\log\frac{v}{a_k},
\qquad v\in[a_k,1).
\]
On the continuous part of $G_k$, the classical virtual value is zero, so the continuous part is genuinely equal-revenue.


\paragraph*{Two-point inventory}
Suppose the inventory distribution is
$\frac12\delta_{1/2}+\frac12\delta_1$,
so that
\[
\bar q(u)=
\begin{cases}
\frac12, & 0\le u<\frac12,\\[1ex]
1, & \frac12\le u\le 1.
\end{cases}
\]
Then
\[
\Pi(b)=
\begin{cases}
\frac34-\frac b2, & b\in[0,\frac12],\\[1ex]
1-b, & b\in[\frac12,1],
\end{cases}
\qquad
CS(b)=
\begin{cases}
\frac{1-b}{2}\log\frac{1}{1-b}, & b\in[0,\frac12],\\[2ex]
(1-b)\log\frac{1}{\sqrt2(1-b)}, & b\in[\frac12,1].
\end{cases}
\]
Equivalently,
\[
f_B(\pi)=
\begin{cases}
\pi\log\frac{1}{\sqrt2\,\pi}, & \pi\in(0,\frac12],\\[2ex]
\left(\pi-\frac14\right)\log\frac{1}{2\pi-\frac12}, & \pi\in[\frac12,\frac34],\\[2ex]
0, & \pi=0,
\end{cases}
\]
and
\[
V^{FI}=
\left\{(c,\pi)\in\mathbb R_+^2:0\le \pi\le \frac34,\ 0\le c\le \operatorname{cav}f_B(\pi)\right\}.
\]
In particular, the efficient frontier is strictly above the threshold curve on an interval, so some frontier points are convex combinations of two threshold points.

For $k=1$, the unique maximizer is
\[
b_1=1-\frac{1}{e\sqrt2}.
\]
Writing $a_1:=1-b_1=(e\sqrt2)^{-1}$, the induced value distribution is
\[
G_1(v)=
\begin{cases}
0, & v<a_1,\\
1-\dfrac{a_1}{v}, & v\in[a_1,1),\\
1, & v=1,
\end{cases}
\]
and under Remark~\ref{rem:fi_tie} the seller's quality schedule on the continuous support is
\[
x_1(v)=
\begin{cases}
\frac12, & v\in[a_1,2a_1),\\
1, & v\in[2a_1,1).
\end{cases}
\]
Thus a fixed inventory with atoms can generate genuine interior bunching even under the consumer-favoring tie-break.


\subsection*{Proofs}\label{subsec:fixed_inventory_proofs}

We use the following weighted-majorization lemma repeatedly.

\begin{lemma}\label{lem:fi_weighted_majorization}
Let $x:[0,1]\to\mathbb R_+$ be measurable and suppose that
$\int_u^1 x(s)\,ds\le \int_u^1 \bar q(s)\,ds$ for all $u\in[0,1]$.
If $\eta:[0,1]\to\mathbb R_+$ is bounded and nondecreasing, then
\[
\int_0^1 \eta(u)x(u)\,du\le \int_0^1 \eta(u)\bar q(u)\,du.
\]
\end{lemma}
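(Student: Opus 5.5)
The plan is to write $\eta$ as a mixture of upper-set indicators and apply the majorization constraint to each one. Since $\eta$ is bounded, nonnegative and nondecreasing, the layer-cake formula gives
\[
\eta(u)=\int_0^{\infty}\mathbf 1\{\eta(u)>y\}\,dy=\int_0^{\|\eta\|_\infty}\mathbf 1\{\eta(u)>y\}\,dy .
\]
For each level $y$, the set $\{u:\eta(u)>y\}$ is an upper interval of $[0,1]$, because $\eta$ is nondecreasing. So it equals $(b(y),1]$ or $[b(y),1]$, where $b(y):=\inf\{u:\eta(u)>y\}$, with the convention $b(y)=1$ if the set is empty. Either way it agrees with $[b(y),1]$ up to a Lebesgue-null set.

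Next, substitute this representation into both sides and use Tonelli, which applies because $x\ge 0$, $\bar q\ge 0$ and all integrands are nonnegative:
\[
\int_0^1\eta(u)x(u)\,du=\int_0^{\|\eta\|_\infty}\Big(\int_{b(y)}^1 x(s)\,ds\Big)dy,
\]
and the identical identity holds with $\bar q$ in place of $x$. For each $y$ the hypothesis gives $\int_{b(y)}^1 x\le\int_{b(y)}^1\bar q$. Integrating this inequality over $y$ yields the claim.

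The only delicate points are measure-theoretic bookkeeping. First, $y\mapsto b(y)$ is monotone, hence measurable, so the iterated integrals make sense. Second, if $\int_0^1\bar q=\infty$ the inequality is trivial, and otherwise the hypothesis at $u=0$ shows $x$ is integrable, so both sides are finite. I expect no real obstacle; the main thing to get right is justifying the Fubini/Tonelli interchange via nonnegativity.

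Alternatively, one could integrate by parts against the Stieltjes measure $d\eta$: write $\eta(u)=\eta(0)+\int_{(0,u]}d\eta$ and swap the order of integration. The layer-cake route avoids boundary terms, so I would use that one.
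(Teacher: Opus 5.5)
Your proposal is correct and follows essentially the same route as the paper: the layer-cake decomposition of $\eta$ into indicators of superlevel sets, which are upper tails because $\eta$ is nondecreasing, then Tonelli and the majorization hypothesis applied level by level. The differences are cosmetic. You use strict superlevel sets where the paper uses $\{\eta\ge y\}$, and you add the measurability and integrability bookkeeping that the paper leaves implicit.
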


\begin{proof}
By the layer-cake representation,
$\eta(u)=\int_0^{\sup\eta}\mathbf 1\{\eta(u)\ge y\}\,dy$.
Since $\eta$ is nondecreasing, each superlevel set is an upper tail of $[0,1]$. Tonelli's theorem therefore gives
\begin{align*}
\int_0^1 \eta(u)x(u)\,du
&=\int_0^{\sup\eta}\int_{\{\eta\ge y\}}x(u)\,du\,dy\\
&\le \int_0^{\sup\eta}\int_{\{\eta\ge y\}}\bar q(u)\,du\,dy
=\int_0^1 \eta(u)\bar q(u)\,du,
\end{align*}
where the inequality uses \eqref{eq:fi_majorization}.
\end{proof}

\begin{lemma}[Averaging a monotone inventory]\label{lem:fi_averaging}
Let $\bar q:[0,1]\to\mathbb R_+$ be nondecreasing and integrable. Fix $a\in[0,1]$, and let $\mathcal I$ be a finite or countable collection of pairwise disjoint intervals contained in $[a,1]$. For each $I=(\ell_I,r_I)\in\mathcal I$, set
\[
\bar q_I:=\frac{1}{r_I-\ell_I}\int_{\ell_I}^{r_I}\bar q(s)\,ds.
\]
Define $x$ a.e. by
\[
x(u):=
\begin{cases}
0, & u<a,\\[1mm]
\bar q_I, & u\in I\in\mathcal I,\\[1mm]
\bar q(u), & \text{otherwise}.
\end{cases}
\]
Then $x$ has a nondecreasing version and satisfies
\[
\int_u^1 x(s)\,ds\le \int_u^1\bar q(s)\,ds
\qquad\forall u\in[0,1].
\]
Moreover, if $\eta$ is bounded, $\eta=0$ a.e. on $[0,a)$, and $\eta$ is constant a.e. on every $I\in\mathcal I$, then
\[
\int_0^1\eta(u)x(u)\,du
=
\int_0^1\eta(u)\bar q(u)\,du.
\]
\end{lemma}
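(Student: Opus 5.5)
The plan is to verify the three claims of Lemma~\ref{lem:fi_averaging} in order: monotonicity of $x$, the weak majorization inequality, and the equality of weighted integrals when $\eta$ is constant on the averaging intervals. The underlying idea is that $x$ is obtained from $\bar q\mathbf 1_{[a,1]}$ by two operations. First we zero out $[0,a)$, which only lowers tail integrals. Then we replace $\bar q$ by its average on each $I\in\mathcal I$, which preserves the integral over $I$ and, because $\bar q$ is nondecreasing, only lowers partial tail integrals inside $I$.

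\emph{Monotonicity.} For each $I=(\ell_I,r_I)$ with $\ell_I<r_I$, monotonicity of $\bar q$ gives
\[
\bar q(\ell_I^+)\le \bar q_I\le \bar q(r_I^-).
\]
Take $u<v$ in $[a,1]$. In each configuration (both points outside every $I$, one point inside some $I$, or the points in two different intervals $I,I'$ with $r_I\le \ell_{I'}$), the pointwise values are compared through these bracketing inequalities and monotonicity of $\bar q$. On $[0,a)$ we have $x=0\le x(u)$ for $u\ge a$, since $\bar q\ge0$. We then modify $x$ on the countably many endpoints, a null set, by taking, for instance, the right-continuous version. This yields a nondecreasing version of $x$.

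\emph{Majorization.} Fix $u$ and set $D(u):=\int_u^1(\bar q-x)\,ds$; the goal is $D(u)\ge0$.
\begin{itemize}
\item If $u\le a$, then $D(u)=\int_u^a \bar q\,ds+\sum_I\int_I(\bar q-\bar q_I)\,ds$. Each summand in the series vanishes, and the first term is nonnegative.
\item If $u\ge a$ and $u$ lies in no open $I$, then every $I$ is either contained in $[u,1]$ or disjoint from it, so $D(u)=0$.
\item If $u\in I$ for some $I$, the intervals to the right of $I$ contribute zero, and $D(u)=\int_u^{r_I}(\bar q-\bar q_I)\,ds$. This is nonnegative because the average of a nondecreasing function over the right subinterval $[u,r_I]$ is at least its average $\bar q_I$ over all of $I$.
\end{itemize}
The only technical points are the interchange of summation and integration over countably many intervals, which dominated convergence handles since $\bar q$ is integrable, and the null-set endpoints.

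\emph{Weighted equality.} Since $\eta=0$ a.e.\ on $[0,a)$, both sides vanish there. On $[a,1]\setminus\bigcup\mathcal I$ we have $x=\bar q$, so the integrands agree. On each $I$, if $\eta=c_I$ a.e., then
\[
\int_I\eta x\,ds=c_I\,\bar q_I\,(r_I-\ell_I)=c_I\int_I\bar q\,ds=\int_I\eta\,\bar q\,ds.
\]
Summing over countably many $I$ is justified by boundedness of $\eta$ and integrability of $\bar q$.

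The main obstacle is purely bookkeeping: the countable family and the choice of a version. Writing $D(u)$ as an absolutely convergent series of per-interval contributions should handle it cleanly.
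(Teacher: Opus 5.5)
Your proposal is correct and follows essentially the same route as the paper. You bracket $\bar q_I$ between one-sided limits of $\bar q$ where the paper uses the essential infimum and supremum. You split the tail inequality according to whether $u$ lies inside some $I$, using that the average of a nondecreasing function over the upper piece $[u,r_I]$ dominates $\bar q_I$. You get the weighted equality from the fact that averaging preserves $\int_I\bar q$ when $\eta$ is constant on $I$. Your separate treatment of $u\le a$ and of the countable sums just spells out what the paper compresses into one line.
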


\begin{proof}
Since $\bar q$ is nondecreasing, 
$\operatorname*{ess\,inf}_{u\in I}\bar q(u)\le \bar q_I\le \operatorname*{ess\,sup}_{u\in I}\bar q(u)$
for every interval $I$. Hence replacing $\bar q$ on each interval by its average, and setting it equal to zero on $[0,a)$, admits a nondecreasing version.

We prove the tail inequality. If $u$ is not in one of the intervals $I$, then every interval from $\mathcal I$ that intersects $[u,1]$ is either included entirely in $[u,1]$ or excluded entirely, except possibly for endpoints, which are irrelevant. Averaging over a whole interval preserves its integral, and setting $[0,a)$ to zero can only lower tail integrals.

If $u\in I=(\ell,r)$, all intervals other than $I$ are either included entirely in the tail or excluded entirely. Thus it suffices to show
\[
\int_u^r \bar q_I\,ds\le \int_u^r \bar q(s)\,ds.
\]
Because $\bar q$ is nondecreasing, the average of $\bar q$ over the upper subinterval $[u,r]$ is at least its average over $[\ell,r]$:
\[
\frac{1}{r-u}\int_u^r\bar q(s)\,ds
\ge
\frac{1}{r-\ell}\int_\ell^r\bar q(s)\,ds
=
\bar q_I.
\]
This proves the tail inequality.

Finally, if $\eta=0$ a.e. on $[0,a)$ and is constant a.e. on every interval $I$, then replacing $\bar q$ by its average on $I$ leaves $\int_I\eta(u)\bar q(u)\,du$ unchanged. Hence, $\int_0^1\eta x=\int_0^1\eta\bar q$.
\end{proof}

\begin{lemma}\label{lem:fi_flat_interval}
Fix $\phi\in\Phi$. If $Q_\phi$ is constant on an interval $I=[a,b]\subseteq[0,1]$ with $a<b$, then
$\phi(u)=Q_\phi(u)$ for a.e. $u\in I$.
\end{lemma}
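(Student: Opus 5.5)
The idea is to read off $\phi$ from the tail integral. Write $R_\phi(u)=(1-u)Q_\phi(u)=\int_u^1\phi(s)\,ds$. Then $R_\phi$ is absolutely continuous on $[0,1]$ and $R_\phi'(u)=-\phi(u)$ for a.e. $u$, because $\phi$ is bounded and Lebesgue's differentiation theorem applies. On the other hand, if $Q_\phi\equiv c$ on $I=[a,b]$, then on $I$ we have $R_\phi(u)=c(1-u)$, so $R_\phi'(u)=-c$ for every $u\in(a,b)$. Comparing the two expressions gives $\phi(u)=c=Q_\phi(u)$ for a.e. $u\in(a,b)$, and the endpoints have measure zero.

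The key steps, in order, are as follows.

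First, I verify that $Q_\phi$ really is constant on the closed interval. The endpoint $b=1$ needs care because $Q_\phi(1)$ is defined by a limit. I only use the identity $R_\phi(u)=c(1-u)$ for $u\in[a,b)$, which involves no boundary issue.

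Second, I apply the fundamental theorem of calculus for Lebesgue integrals to $u\mapsto\int_u^1\phi$. This gives $R_\phi'=-\phi$ almost everywhere.

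Third, I equate this with the classical derivative $-c$ on $(a,b)$.

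An alternative that avoids differentiation is to take $a\le u<v<b$ and use
\[
\int_u^v\phi(s)\,ds=R_\phi(u)-R_\phi(v)=c\,(v-u).
\]
Since this holds for all subintervals, $\phi=c$ almost everywhere on $(a,b)$. This version is cleaner and is the one I would write.

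Monotonicity of $\phi$ even upgrades the conclusion to equality at every interior point of continuity, though that is not needed.

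There is no real obstacle here. The only point requiring attention is the endpoint convention at $u=1$ and the fact that the statement is for almost every $u$. Both are handled by working on $[a,b)$ and noting that endpoints are null sets.
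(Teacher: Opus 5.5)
Your proposal is correct and matches the paper's proof, which likewise writes $\int_u^1\phi(s)\,ds=(1-u)c$ on $I$ and differentiates almost everywhere to obtain $\phi=c$. Your variant via $\int_u^v\phi(s)\,ds=c(v-u)$ on subintervals rephrases the same idea; it needs only the standard fact, or your monotonicity remark, to conclude $\phi=c$ a.e.
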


\begin{proof}
If $Q_\phi(u)\equiv c$ on $I$, then
\[
\int_u^1 \phi(s)\,ds=(1-u)Q_\phi(u)=(1-u)c
\qquad \forall u\in I.
\]
Differentiating at points of differentiability of the left-hand side yields $-\phi(u)=-c$ for a.e. $u\in I$.
\end{proof}

\begin{proof}[Proof of Proposition~\ref{prop:fi_primitive_reduction}]
For a nondecreasing feasible allocation $x$, the envelope formula gives expected seller revenue $\int_{[0,1]}\widetilde R(u)\,dx(u)$,
where $dx$ is the Stieltjes measure generated by $x$, including the mass at zero. Since $R^+\ge \widetilde R$,
$\int_{[0,1]}\widetilde R(u)\,dx(u)
\le
\int_{[0,1]}R^+(u)\,dx(u)$.
By integration by parts for monotone functions, using $R^+(1)=0$ and $\phi_+=-(R^+)'_+$,
$\int_{[0,1]}R^+(u)\,dx(u)=\int_0^1\phi_+(u)x(u)\,du$.
Since $\phi_+$ is bounded, nonnegative, and nondecreasing, Lemma~\ref{lem:fi_weighted_majorization} implies
$\int_0^1\phi_+(u)x(u)\,du
\le
\int_0^1\phi_+(u)\bar q(u)\,du$.
Thus seller profit is at most the right-hand side of \eqref{eq:fi_primitive_profit}.

We now show that the bound is attainable. Since $R^+$ is the least concave nonincreasing majorant of $\widetilde R$, the same chord argument used for the least concave majorant implies that $R^+$ is affine on every connected component of
$\{u\in(0,1):R^+(u)>\widetilde R(u)\}$.
Hence $\phi_+$ is constant a.e. on each such component. Also, on any interval on which $Q$ is constant and $R^+=\widetilde R$, differentiating $\widetilde R(u)=(1-u)Q(u)$ shows that $\phi_+=Q$ a.e. on that interval, so $\phi_+$ is constant there as well.

Let $\mathcal I$ be the countable collection of maximal intervals obtained by closing under overlaps the connected components of $\{R^+>\widetilde R\}$ and the maximal nondegenerate flat intervals of $Q$. On each interval in $\mathcal I$, $\phi_+$ is constant a.e. Let $x^G$ be the schedule obtained from $\bar q$ by averaging on every interval in $\mathcal I$ and by setting $x^G=0$ on the initial interval where $\phi_+=0$. By Lemma~\ref{lem:fi_averaging}, $x^G$ is feasible and
$\int_0^1\phi_+(u)x^G(u)\,du
=
\int_0^1\phi_+(u)\bar q(u)\,du$.
Moreover, $x^G$ is constant on every interval where $R^+>\widetilde R$, so its Stieltjes measure places no mass inside such intervals. Therefore
\[
\int_{[0,1]}\widetilde R(u)\,dx^G(u)
=
\int_{[0,1]}R^+(u)\,dx^G(u)
=
\int_0^1\phi_+(u)\bar q(u)\,du.
\]
This proves \eqref{eq:fi_primitive_profit}.

Let $x$ be any seller-optimal schedule selected by the tie-break. Since seller profit equals $\int\phi_+\bar q$, the total surplus generated by $x$ satisfies
\[
TS_G(x)=\int_0^1 Q(u)x(u)\,du
\le
\int_a^1 Q(u)x(u)\,du,
\]
because seller optimality rules out allocation on ranks where $\phi<0$, up to null sets. Applying Lemma~\ref{lem:fi_weighted_majorization} to $Q\mathbf 1_{[a,1]}$ gives
\[
TS_G(x)\le \int_a^1 Q(u)\bar q(u)\,du.
\]
Subtracting seller profit \(\int_a^1\phi(u)\bar q(u)\,du\) gives \eqref{eq:fi_primitive_cs_bound}.

If $G$ is regular, then $R=\widetilde R$. The only pooling required for direct-mechanism feasibility is on flat intervals of $Q$. On each such interval, $\phi=Q$ a.e.; hence averaging does not change either profit or total surplus. Therefore the upper bound in \eqref{eq:fi_primitive_cs_bound} is attained.
\end{proof}

\begin{proof}[Proof of Lemma~\ref{lem:fi_regularization}]
The regularized distribution $G^r$ has the same ironed virtual value profile $\phi$ as $G$ and satisfies $Q^r\ge Q$. Proposition~\ref{prop:fi_primitive_reduction} gives
\[
\Pi(G^r)=\int_0^1 \phi_+(u)\bar q(u)\,du=\Pi(G).
\]
Since $G^r$ is regular, equality holds in \eqref{eq:fi_primitive_cs_bound} for $G^r$. Hence
\[
CS(G^r)
=
\int_a^1 \bigl(Q^r(u)-\phi(u)\bigr)\bar q(u)\,du
\ge
\int_a^1 \bigl(Q(u)-\phi(u)\bigr)\bar q(u)\,du
\ge CS(G).
\]
\end{proof}

\begin{proof}[Proof of Lemma~\ref{lem:fi_positive_virtual_values}]
Let
$a:=\inf\{u\in[0,1]:\phi(u)\ge0\}$.
Since $\phi^+=\phi$ on $[a,1]$ and $\phi^+=0$ on $[0,a)$, Proposition~\ref{prop:fi_primitive_reduction} gives
\[
\Pi(G^+)=\int_0^1 \phi^+(u)\bar q(u)\,du
=
\int_a^1 \phi(u)\bar q(u)\,du
=
\Pi(G).
\]
For $u\ge a$, one has $Q^+(u)=Q(u)$. Since $G$ is regular, Proposition~\ref{prop:fi_primitive_reduction} gives $CS(G)=\int_a^1\bigl(Q(u)-\phi(u)\bigr)\bar q(u)\,du$.
Also, because $G^+$ is regular with nonnegative virtual value $\phi^+$,
$CS(G^+)
=
\int_0^1\bigl(Q^+(u)-\phi^+(u)\bigr)\bar q(u)\,du$.
Therefore,
\[
CS(G^+)-CS(G)
=
\int_0^a Q^+(u)\bar q(u)\,du\ge0.
\]
\end{proof}

\begin{proof}[Proof of Proposition~\ref{prop:fi_reduced_reduction}]
Let $(I_j)_j$ be the maximal nondegenerate flat intervals of $Q_\phi$, and let $x^\phi$ be obtained from $\bar q$ by averaging on each $I_j$. By Lemma~\ref{lem:fi_averaging}, $x^\phi$ is feasible. For any feasible $x$, Lemma~\ref{lem:fi_weighted_majorization} gives $\int_0^1 \phi(u)x(u)\,du
\le
\int_0^1 \phi(u)\bar q(u)\,du$.
By Lemma~\ref{lem:fi_flat_interval}, $\phi$ is constant a.e. on every $I_j$, so $x^\phi$ attains the bound. This proves \eqref{eq:fi_profit_phi}.

Among seller-optimal schedules, maximizing consumer surplus is equivalent to maximizing total surplus
$\int_0^1 Q_\phi(u)x(u)\,du$.
Applying Lemma~\ref{lem:fi_weighted_majorization} with $\eta=Q_\phi$ gives
$\int_0^1 Q_\phi(u)x(u)\,du
\le
\int_0^1 Q_\phi(u)\bar q(u)\,du$.
Because $Q_\phi$ is constant on every $I_j$, the schedule $x^\phi$ attains this bound. Therefore
$CS(\phi)
=
\int_0^1 Q_\phi(u)\bar q(u)\,du
-
\int_0^1 \phi(u)\bar q(u)\,du$,
which is \eqref{eq:fi_cs_phi}. The formula for $J_k$ is immediate.
\end{proof}

\begin{proof}[Proof of Theorem~\ref{thm:fi_characterization}]
For $b\in[0,1]$,
\[
Q_{\phi^b}(u)=
\begin{cases}
\dfrac{1-b}{1-u}, & u<b,\\[1ex]
1, & u\ge b,
\end{cases}
\]
so \eqref{eq:fi_threshold_payoffs} follows from Proposition~\ref{prop:fi_reduced_reduction}.

Now fix $\phi\in\Phi$. For $y\in[0,1]$, define
$b_\phi(y):=\inf\{u\in[0,1]:\phi(u)\ge y\}$,
with the convention $b_\phi(y)=1$ if the set is empty. For every continuity point $u$ of $\phi$,
$\mathbf 1\{u\ge b_\phi(y)\}=\mathbf 1\{y\le\phi(u)\}$
for Lebesgue-a.e. $y\in[0,1]$. Since monotone functions have at most countably many discontinuities,
\[
\phi(u)=\int_0^1 \phi^{b_\phi(y)}(u)\,dy
\qquad \text{for a.e. }u\in[0,1].
\]
Since both $CS$ and $\Pi$ are linear in $\phi$, this gives
\[
(CS(\phi),\Pi(\phi))=\int_0^1 (CS(b_\phi(y)),\Pi(b_\phi(y)))\,dy.
\]
Hence
\[
J_k(\phi)=\int_0^1 J_k(b_\phi(y))\,dy\le \max_{b\in[0,1]}J_k(b).
\]
This proves the optimization formula and shows that every threshold maximizing $J_k$ is optimal. The same identity implies that $\phi$ is optimal if and only if $J_k(b_\phi(y))=\max_bJ_k(b)$ for a.e. $y$.

The quantile, distribution, and seller allocation formulas follow by direct substitution of $\phi^{b_k}$ into the definitions above. The case $b_k=1$ gives $\phi^{b_k}=0$ a.e., hence $Q_k=0$ a.e. and the induced market composition is $\delta_0$.
\end{proof}

\begin{proof}[Proof of Proposition~\ref{prop:fi_low_k}]
For $b\in[0,1]$,
\begin{align*}
J_k(b)-J_k(0)
&=k(1-b)\int_0^b \frac{\bar q(u)}{1-u}\,du-(1-k)\int_0^b \bar q(u)\,du\\
&=\int_0^b\left[k\frac{1-b}{1-u}-(1-k)\right]\bar q(u)\,du.
\end{align*}
For $u\in[0,b]$, one has $(1-b)/(1-u)\le 1$, so the bracketed term is at most $2k-1\le 0$. Hence $J_k(b)\le J_k(0)$ for all $b$.

If $\int_0^b \bar q(u)\,du>0$ for every $b>0$, then the inequality is strict for all $b>0$: when $k<1/2$ the bracketed term is strictly negative everywhere, and when $k=1/2$ it is strictly negative for a.e. $u\in[0,b)$. Thus $b=0$ is unique.
\end{proof}

\begin{proof}[Proof of Proposition~\ref{prop:fi_high_k}]
For $b\in(0,1)$,
\begin{align*}
J_k'(b)
&=k\left[\bar q(b)-\int_0^b \frac{\bar q(u)}{1-u}\,du\right]-(1-k)\bar q(b)\\
&=\bar q(b)\bigl[(2k-1)-k\Psi(b)\bigr].
\end{align*}
By Assumption~\ref{ass:fi_single_crossing}, $\Psi$ is strictly increasing from $0$ to $\infty$ on $(0,1)$, so for every $k\in(1/2,1]$ there is a unique $b_k\in(0,1)$ solving \eqref{eq:fi_foc}. The sign of $J_k'$ is positive below $b_k$ and negative above $b_k$, so $b_k$ is the unique maximizer. Because the right-hand side of \eqref{eq:fi_foc} is strictly increasing in $k$ and $\Psi$ is strictly increasing, $b_k$ is strictly increasing in $k$.

Next,
\[
\Pi'(b)=-\bar q(b)<0,
\qquad
CS'(b)=\bar q(b)\bigl(1-\Psi(b)\bigr),
\qquad
TS'(b)=-\int_0^b \frac{\bar q(u)}{1-u}\,du<0.
\]
Hence $\Pi(b_k)$ is strictly decreasing in $k$ and $TS(b_k)$ is strictly decreasing in $k$. Since $\Psi(b_k)=(2k-1)/k\le1$, with equality only at $k=1$, and since $b_k$ is strictly increasing in $k$, $CS(b_k)$ is strictly increasing in $k$.
\end{proof}

\begin{proof}[Proof of Theorem~\ref{thm:fi_region}]
Let
$W:=\{(CS(\phi),\Pi(\phi)):\phi\in\Phi\}$.
We first show that $V^{FI}=W$.

Take $\phi\in\Phi$ and let $G_\phi$ be the market composition with lower quantile $Q_\phi$. Proposition~\ref{prop:fi_reduced_reduction} gives
$(CS(G_\phi),\Pi(G_\phi))=(CS(\phi),\Pi(\phi))$,
so $W\subseteq V^{FI}$.

Conversely, fix an arbitrary market composition $G$. By Lemmas~\ref{lem:fi_regularization} and~\ref{lem:fi_positive_virtual_values}, there exists $\phi\in\Phi$ such that
$\Pi(G)=\Pi(\phi)$ and $CS(G)\le CS(\phi)$.
Write
$\pi:=\Pi(G)$ and $c:=CS(G)$.
If $CS(\phi)=0$, then $c=0$. If $CS(\phi)>0$, choose $\lambda\in[0,1]$ such that
$c=\lambda CS(\phi)$.
If $CS(\phi)=0$, set $\lambda:=0$. If $\pi=0$, take $\phi_0\equiv0$. If $\pi>0$, then $\Pi(0)>0$ and we may take the constant profile
\[
\phi_0(u)\equiv \frac{\pi}{\Pi(0)}.
\]
In either case, $\phi_0\in\Phi$, and Proposition~\ref{prop:fi_reduced_reduction} gives
$\Pi(\phi_0)=\pi$ and $CS(\phi_0)=0$.
Because $\Phi$ is convex and both $CS$ and $\Pi$ are linear in $\phi$,
$\widehat \phi:=\lambda \phi+(1-\lambda)\phi_0\in\Phi$
satisfies
$\Pi(\widehat\phi)=\pi$ and $CS(\widehat\phi)=c$.
Hence $(c,\pi)\in W$, so $V^{FI}\subseteq W$.

Next, the level-set decomposition used in the proof of Theorem~\ref{thm:fi_characterization} gives
\[
(CS(\phi),\Pi(\phi))=\int_0^1 (CS(b_\phi(y)),\Pi(b_\phi(y)))\,dy.
\]
Thus every point in $W$ lies in the convex hull of the threshold curve. Conversely, if $\mu$ is a probability measure on $[0,1]$ and
\[
\phi_\mu(u):=\int_{[0,1]}\phi^b(u)\,\mu(db)=\mu([0,u]),
\]
then $\phi_\mu\in\Phi$ and linearity yields
\[
(CS(\phi_\mu),\Pi(\phi_\mu))=\int_{[0,1]}(CS(b),\Pi(b))\,\mu(db).
\]
Since the threshold curve is compact in $\mathbb R^2$, its convex hull is compact by Carathéodory's theorem. Therefore, $W=\operatorname{co}\{(CS(b),\Pi(b)):b\in[0,1]\}$.

It remains to prove compactness of the threshold curve. For $b\in[0,1]$, define
\[
g_b(u):=\frac{1-b}{1-u}\mathbf 1_{\{u<b\}},
\]
with the convention $g_1\equiv0$. Then
\[
CS(b)=\int_0^1 g_b(u)\bar q(u)\,du,
\qquad
\Pi(b)=\int_0^1\mathbf 1_{\{u\ge b\}}\bar q(u)\,du.
\]
If $b_n\to b$, then
$g_{b_n}(u)\to g_b(u)$ and $\mathbf 1_{\{u\ge b_n\}}\to \mathbf 1_{\{u\ge b\}}$
for a.e. $u\in[0,1]$. Moreover,
$0\le g_{b_n}(u)\le1$ and $0\le \mathbf 1_{\{u\ge b_n\}}\le1$.
Since $\bar q\in L^1([0,1])$, dominated convergence gives
$CS(b_n)\to CS(b)$ and $\Pi(b_n)\to \Pi(b)$.
Thus the threshold curve is compact, and so is $V^{FI}$.

Finally, since $V^{FI}$ is compact and convex, every Pareto-efficient point is supported. Indeed, if $z\in F^P(V^{FI})$, then $V^{FI}$ is disjoint from the open convex set
$z+\mathbb R_{++}^2$.
By the separating hyperplane theorem, there exists a nonzero vector $\alpha\in\mathbb R_+^2$ such that
$\alpha\cdot z\ge \alpha\cdot y$ for all $y\in V^{FI}$.
Normalizing $\alpha_1+\alpha_2=1$, write $\alpha=(k,1-k)$ for some $k\in[0,1]$. Hence $z$ is supported. Since $z$ is Pareto efficient, $z\in F^{SP}(V^{FI})$. The reverse inclusion follows from the definition of $F^{SP}$. Therefore, $F^P(V^{FI})=F^{SP}(V^{FI})$.
\end{proof}

\begin{proof}[Proof of Corollary~\ref{cor:fi_region_general_graph}]
For every $\lambda\in[0,1]$, the constant profile $\phi(u)\equiv \lambda$ belongs to $\Phi$ and satisfies
$Q_\phi(u)=\lambda$ for all  $u\in[0,1]$.
Hence \eqref{eq:fi_profit_phi}--\eqref{eq:fi_cs_phi} give
$\Pi(\phi)=\lambda \Pi(0)$ and $CS(\phi)=0$.
Therefore, $(0,\pi)\in V^{FI}$ for all $\pi\in[0,\Pi(0)]$.
Since $V^{FI}$ is compact, $h(\pi)$ is well defined and the maximum is attained.

If $(c,\pi)\in V^{FI}$, then by definition $c\le h(\pi)$. Conversely, if $0\le c\le h(\pi)$, then both $(h(\pi),\pi)$ and $(0,\pi)$ belong to $V^{FI}$, so convexity implies $(c,\pi)\in V^{FI}$. Thus
\[
V^{FI}=
\Bigl\{(c,\pi)\in\mathbb R_+^2:0\le \pi\le \Pi(0),\ 0\le c\le h(\pi)\Bigr\}.
\]
Concavity of $h$ follows from convexity of $V^{FI}$. The Pareto frontier of this hypograph is
\[
\Bigl\{(h(\pi),\pi):\pi\in[\pi^{\max},\Pi(0)]\Bigr\}.
\]
The equality with $F^{SP}(V^{FI})$ follows from Theorem~\ref{thm:fi_region}.
\end{proof}

\begin{proof}[Proof of Corollary~\ref{cor:fi_region_graph}]
Since $b\mapsto\Pi(b)$ is strictly decreasing, the threshold curve can be written as the graph
\[
\{(f(\pi),\pi):\pi\in[0,\Pi(0)]\}.
\]
Because every threshold point belongs to $V^{FI}$, Corollary~\ref{cor:fi_region_general_graph} implies
$h(\pi)\ge f(\pi)$ for all $\pi\in[0,\Pi(0)]$.
Thus $h$ is a concave majorant of $f$.
Conversely, if $g$ is any concave majorant of $f$, then its hypograph is a convex set containing the threshold curve. By Theorem~\ref{thm:fi_region}, it contains $V^{FI}$. Hence
$h(\pi)\le g(\pi)$ for all $\pi\in[0,\Pi(0)]$.
So $h=\operatorname{cav}f$. The remaining statements follow from Corollary~\ref{cor:fi_region_general_graph}. If $f$ is concave, then $\operatorname{cav}f=f$.
\end{proof}

\begin{proof}[Proof of Uniform inventory]
The formulas for $\Pi(b)$ and $CS(b)$ follow from \eqref{eq:fi_threshold_payoffs}. Solving $\Pi(b)=\pi$ gives $b=\sqrt{1-2\pi}$, hence the formula for $f_U$.

To see that the threshold curve is concave, note that for $b\in(0,1)$,
\[
\Psi(b)=\frac{-b-\log(1-b)}{b}
\]
is strictly increasing. Indeed,
\[
\Psi'(b)=\frac{\frac{b}{1-b}+\log(1-b)}{b^2}>0,
\]
because $t^{-1}-1+\log t>0$ on $(0,1)$ after the substitution $t=1-b$. Since $\bar q(b)=b>0$ for $b\in(0,1)$ and $f''(\Pi(b))=-\Psi'(b)/\bar q(b)$, the curve is strictly concave on $(0,1/2]$. Continuity at $\pi=0$ completes the region formula and shows that the frontier is the decreasing branch.

For $k\le 1/2$, Proposition~\ref{prop:fi_low_k} yields $b_k=0$. For $k>1/2$, direct differentiation gives $J_k'(b)=(3k-1)b+k\log(1-b)$, and
\[
J_k''(b)=(3k-1)-\frac{k}{1-b},
\qquad
J_k'''(b)=-\frac{k}{(1-b)^2}<0.
\]
Hence $J_k'$ is strictly concave, satisfies $J_k'(0)=0$, $J_k''(0)=2k-1>0$, and tends to $-\infty$ as $b\uparrow 1$. Therefore it has a unique zero on $(0,1)$, which is the unique maximizer and solves the stated equation.

Finally, the formulas for $G_k$, $x_k$, and $t_k$ follow from Theorem~\ref{thm:fi_characterization}, together with the envelope formula on the continuous support:
\[
U_k'(v)=x_k(v)=1-\frac{a_k}{v},
\qquad U_k(a_k)=0,
\]
which gives $U_k(v)=v-a_k-a_k\log(v/a_k)$ and therefore $t_k(v)=a_k\log(v/a_k)$. On the continuous part,
\[
\psi_k(v)=v-\frac{1-G_k(v)}{g_k(v)}=0,
\]
so it is genuinely equal-revenue.
\end{proof}

\begin{proof}[Proof for Two-point inventory]
The formulas for $\Pi(b)$ and $CS(b)$ follow from \eqref{eq:fi_threshold_payoffs}. Solving $\Pi(b)=\pi$ on each branch gives the stated formula for $f_B$. Since $f_B$ is continuous, the general region formula in Corollary~\ref{cor:fi_region_graph} gives
\[
V^{FI}=\{(c,\pi):0\le \pi\le 3/4,\ 0\le c\le \operatorname{cav}f_B(\pi)\}.
\]
A direct calculation yields
\[
f_B'\!\left(\frac12^-\right)=\frac12\log 2-1
<\log 2-1=f_B'\!\left(\frac12^+\right),
\]
so $f_B$ is not concave and the efficient frontier strictly exceeds the threshold curve on an interval.

For $k=1$, one maximizes $CS(b)$. On $[0,1/2]$,
\[
CS'(b)=\frac12\left(1-\log\frac{1}{1-b}\right)>0,
\]
so the maximizer lies above $1/2$. On $[1/2,1)$,
\[
CS'(b)=1-\log\frac{1}{\sqrt2(1-b)},
\]
which vanishes exactly at $1-b=(e\sqrt2)^{-1}$. This proves the formula for $b_1$.

Write $a_1:=1-b_1$. By Theorem~\ref{thm:fi_characterization},
\[
G_1(v)=
\begin{cases}
0, & v<a_1,\\
1-\dfrac{a_1}{v}, & v\in[a_1,1),\\
1, & v=1.
\end{cases}
\]
Under Remark~\ref{rem:fi_tie}, the seller uses the inventory quantile on the continuous support, so
\[
x_1(v)=\bar q\!\left(1-\frac{a_1}{v}\right).
\]
Since $\bar q(u)=1/2$ for $u<1/2$ and $\bar q(u)=1$ for $u\ge 1/2$, this yields
\[
x_1(v)=
\begin{cases}
\frac12, & v\in[a_1,2a_1),\\[1ex]
1, & v\in[2a_1,1),
\end{cases}
\]
which is genuine interior bunching.
\end{proof}

\section*{Proof of the seller side reduction}

Throughout, the main-text quantile is the lower quantile
\[
Q(u):=\inf\{v\in[0,1]:G(v)\ge u\}, \qquad u\in(0,1], \qquad Q(0):=\lim_{u\downarrow 0}Q(u).
\]
For some arguments we also use its right-limit modification
\[
Q_+(u):=\lim_{\varepsilon\downarrow0}Q(u+\varepsilon)
       =\inf\{v\in[0,1]:G(v)>u\}, \qquad u\in[0,1),
\]
and set $Q_+(1):=\lim_{u\uparrow 1}Q(u).$
Since $Q$ is nondecreasing, $Q_+=Q$ except at the jump points of $Q$, hence except on a
countable set. In particular, for every bounded Borel function $f:[0,1]\to\mathbb R$, $\int_0^1 f(Q(u))\,du=\int_0^1 f(Q_+(u))\,du=\int_{[0,1]} f(v)\,G(dv)$.

Define $\widehat R(u):=(1-u)Q(u)$ and $\widehat R_+(u):=(1-u)Q_+(u)$.
Because concave functions are continuous on $[0,1]$, a concave function majorizes $\widehat R$
if and only if it majorizes $\widehat R_+$. Hence, $\operatorname{cav}(\widehat R)=\operatorname{cav}(\widehat R_+)=:R$.
We write $\phi(u):=-R'_+(u)$ for $u\in[0,1),$ and $\phi(1):=\lim_{u\uparrow1}\phi(u)$.
Then $\phi$ is nondecreasing and right-continuous, and it coincides almost everywhere with the
main-text ironed virtual value.

We use the following standard facts without proof.

\medskip
\noindent\textbf{Fact 1:}
If $U\sim\mathrm{Unif}[0,1]$, then $Q(U)\sim G$.

\medskip
\noindent\textbf{Fact 2:}
A direct mechanism $(\tilde q,\tilde t)$ is IC and IR if and only if $\tilde q$ is nondecreasing
and the buyer's indirect utility satisfies $U(v)=U(\underline{v})+\int_{\underline{v}}^v \tilde q(s)\,ds$ and $\tilde t(v)=v\tilde q(v)-U(v)$, with $U(\underline{v})\ge 0$.

\medskip
We also use the following fact.

\noindent\textbf{Fact 3.}
The least concave majorant \(R=\operatorname{cav}(\widehat R^+)\) is affine on every connected component of
$\{u\in(0,1):R(u)>\widehat R^+(u)\}$.

\begin{proof}[Proof of Fact 3.]
Write
$f(u):=\widehat R^+(u)=(1-u)Q^+(u)$.
First note that \(f\) is upper semicontinuous on \([0,1)\). Indeed, \(Q^+\) is nondecreasing and right-continuous. If \(u_n\to u<1\), then along any subsequence with \(u_n\ge u\) eventually, monotonicity and right-continuity give
$\limsup_{n\to\infty} Q^+(u_n)\le Q^+(u)$,
and along any subsequence with \(u_n<u\) eventually,
$\limsup_{n\to\infty} Q^+(u_n)\le Q^+(u)$.
Multiplying by \(1-u_n\to1-u\) gives
$\limsup_{n\to\infty} f(u_n)\le f(u)$.
Thus \(f\) is upper semicontinuous.

Since \(R\) is concave, it is continuous on \((0,1)\). Hence the set
$O:=\{u\in(0,1):R(u)>f(u)\}$
is open. Let \(I\) be a connected component of \(O\). Suppose, toward a contradiction, that \(R\) is not affine on \(I\). Then there exist \(a<b\) with \([a,b]\subset I\) such that \(R\) is not affine on \([a,b]\). Let
\[
L(u):=\frac{b-u}{b-a}R(a)+\frac{u-a}{b-a}R(b)
\]
be the chord of \(R\) over \([a,b]\). Since \(R\) is concave, \(L\le R\) on \([a,b]\), and because \(R\) is not affine on \([a,b]\), \(L<R\) at some point of \((a,b)\).

Because \([a,b]\subset O\), and because \(R-f\) is lower semicontinuous on \([a,b]\), there is \(\delta>0\) such that
$R(u)-f(u)\ge\delta\qquad\forall u\in[a,b]$.
Let $M:=\max_{u\in[a,b]}(R(u)-L(u))>0$.
Choose \(\varepsilon\in(0,\delta/M)\), and define
\[
\widetilde R(u):=
\begin{cases}
R(u), & u\notin[a,b],\\[1mm]
(1-\varepsilon)R(u)+\varepsilon L(u), & u\in[a,b].
\end{cases}
\]
Replacing a concave function by its chord on an interval preserves concavity; hence the function equal to
\(R\) outside \([a,b]\) and equal to \(L\) on \([a,b]\) is concave. Therefore \(\widetilde R\), being a convex combination of two concave functions, is concave.

Moreover, for \(u\in[a,b]\),
$\widetilde R(u)
=R(u)-\varepsilon(R(u)-L(u))
\ge R(u)-\varepsilon M
> R(u)-\delta
\ge f(u)$,
while outside \([a,b]\), \(\widetilde R=R\ge f\). Thus \(\widetilde R\) is a concave majorant of \(f\). But \(\widetilde R<R\) at some point of \((a,b)\), contradicting the minimality of \(R=\operatorname{cav}(f)\). Therefore \(R\) must be affine on \(I\).
\end{proof}

\begin{lemma}
\label{lem:A-gap-flattening}
Let $(\tilde q,\tilde t)$ be any feasible IC/IR direct mechanism under $G$. Then there exists
another feasible IC/IR mechanism $(\bar q,\bar t)$ such that
\begin{enumerate}
    \item $\bar q(v)=0$ for all $v<Q(0)$;
    \item $\bar q$ is constant on every connected component of
    $(Q(0),1)\setminus \operatorname{supp}(G)$;
    \item $\bar q(v)=\tilde q(v)$ for every $v\in \operatorname{supp}(G)$;
    \item $\Pi(\bar q,\bar t;G)\ge \Pi(\tilde q,\tilde t;G)$ and $    TS(\bar q,\bar t;G)=TS(\tilde q,\tilde t;G).$
\end{enumerate}
Consequently, in solving the seller's problem it is without loss to restrict attention to
mechanisms whose allocation rule is zero below the lower support point and constant on each
support gap.
\end{lemma}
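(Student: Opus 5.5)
Starting from $(\tilde q,\tilde t)$, I will write $\tilde U(v)=\tilde U(\underline v)+\int_{\underline v}^v \tilde q(s)\,ds$ for the associated indirect utility, as in Fact~2. The new allocation $\bar q$ is then built explicitly. On $\operatorname{supp}(G)$ it is set equal to $\tilde q$. For $v<Q(0)$ it is set to $0$. On each connected component $(\alpha,\beta)$ of $(Q(0),1)\setminus\operatorname{supp}(G)$ it is set to the constant $\tilde q(\alpha)$, using the fact that the left endpoint $\alpha$ lies in the closed set $\operatorname{supp}(G)$. This $\bar q$ is nondecreasing: it takes values in $[0,\bar Q]$, it equals zero below $Q(0)$, and on a gap it sits between $\tilde q(\alpha)$ and $\tilde q(\beta)$. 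The last claim holds because $\tilde q$ is monotone and $\beta$ is also in the support (if $\beta=1$, we have $1\in\operatorname{supp}(G)$ or else the gap would be part of $(Q(0),1)$ beyond $\sup\operatorname{supp}(G)$; that tail case is handled the same way, with the constant extended to the right). For the transfers, I will normalize $\bar U(Q(0))=\tilde U(Q(0))$ and define $\bar U(v)=\bar U(Q(0))+\int_{Q(0)}^v \bar q$ for $v\ge Q(0)$ and $\bar U(v)=\bar U(Q(0))$ below. I then set $\bar t=v\bar q-\bar U$. Fact~2 gives IC, and IR holds because $\bar U\ge \bar U(Q(0))\ge 0$. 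Transfers are nonnegative: at support points this follows from the utility comparison below, and off the support we can use the monotone-in-$v$ identity if needed. Since the off-support types carry zero $G$-mass, this does not matter for payoffs anyway.

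Next I compare payoffs at $G$-almost every type. Total surplus is $\int (v\bar q(v)-c(\bar q(v)))\,dG$. It depends only on $\bar q$ restricted to $\operatorname{supp}(G)$, where $\bar q=\tilde q$, so the two total surpluses coincide. For profit it therefore suffices to show that $\bar U(v)\le \tilde U(v)$ for every $v\in\operatorname{supp}(G)$. I will prove this by integrating the pointwise comparison $\bar q\le\tilde q$ on $[Q(0),v]$. On gaps this holds because $\bar q=\tilde q(\alpha)\le \tilde q(s)$ for $s>\alpha$, and on the support the two allocations agree. The starting values match at $Q(0)$, and $\tilde U(Q(0))\ge\tilde U(\underline v)$ is automatic since $\underline v=Q(0)$. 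It follows that $\bar U\le\tilde U$ on the support, and hence $\Pi=TS-CS$ weakly rises.

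The main obstacle is bookkeeping rather than ideas. I must make sure the interval $V=[\underline v,\bar v]$ on which mechanisms are defined is compatible with the construction. In particular, $\underline v=Q(0)$, so the region $v<Q(0)$ is either outside $V$ or can be harmlessly appended. I must also handle an upper gap if $\bar v<1$, and make sure the left endpoint of each gap genuinely belongs to $\operatorname{supp}(G)$, using closedness of the support. The last point needs care when the gap abuts $Q(0)$, but there it is again fine because $Q(0)\in\operatorname{supp}(G)$. Once these details are settled, the consequence follows at once: any optimal mechanism can be replaced by a flattened one with weakly higher profit, so restricting attention to flattened mechanisms is without loss.
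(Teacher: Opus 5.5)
Your proposal is correct and follows essentially the same route as the paper. It sets $\bar q$ to zero below $Q(0)$ and to the left-endpoint value $\tilde q(\alpha)$ on each gap, uses $\bar q\le\tilde q$ pointwise to get $\bar U\le\tilde U$ on $\operatorname{supp}(G)$, and concludes that transfers weakly rise while total surplus is unchanged. The only difference is cosmetic: you anchor $\bar U(Q(0))=\tilde U(Q(0))$, whereas the paper first normalizes $U(\underline v)=0$, and that normalization is what keeps transfers nonnegative if one appends types below $Q(0)$; with your anchoring those appended types would receive $-\tilde U(Q(0))<0$, which is harmless only because they lie outside $V$.
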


\begin{proof}
By Fact 2, we may normalize $U(\underline{v})=0$. Let
$v_0:=Q(0)=\inf \operatorname{supp}(G)$,
and let $\{(a_n,b_n)\}_{n\in\mathcal N}$ denote the connected components of
$(v_0,1)\setminus\operatorname{supp}(G)$. Define
\[
\bar q(v):=
\begin{cases}
0, & v<v_0,\\
\tilde q(a_n), & v\in(a_n,b_n)\text{ for some }n\in\mathcal N,\\
\tilde q(v), & v\in\operatorname{supp}(G).
\end{cases}
\]
Then $\bar q$ is nondecreasing. Let
$\bar U(v):=\int_0^v \bar q(s)\,ds$ and $\bar t(v):=v\bar q(v)-\bar U(v)$.
By Fact 2, $(\bar q,\bar t)$ is feasible.

For every $v\in\operatorname{supp}(G)$, $\bar q(v)=\tilde q(v)$ and $\bar q\le \tilde q$
pointwise, hence $\bar U(v)\le U(v)
\quad\Longrightarrow\quad
\bar t(v)=v\bar q(v)-\bar U(v)\ge v\tilde q(v)-U(v)=\tilde t(v)$.
Thus, realized allocations and realized costs are unchanged on the support of $G$, while realized
transfers weakly rise. Therefore, $\Pi(\bar q,\bar t;G)\ge \Pi(\tilde q,\tilde t;G)$ and $TS(\bar q,\bar t;G)=TS(\tilde q,\tilde t;G)$.
\end{proof}

\begin{lemma}
\label{lem:A-affine-on-atom-blocks}
If $Q_+$ is constant on an interval $I\subset[0,1)$, then $R$ is affine on $I$.
\end{lemma}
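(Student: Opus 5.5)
The plan is to show that on such an interval the raw revenue curve $\widehat R_+(u)=(1-u)Q_+(u)$ is itself affine, and then to argue that the least concave majorant $R$ cannot bend anywhere inside it. Write $Q_+\equiv c$ on $I$, so that $\widehat R_+(u)=(1-u)c$ is affine on $I$. Since $R=\operatorname{cav}(\widehat R_+)$, we have $R\ge \widehat R_+$ everywhere. By Fact 3, $R$ is affine on each connected component of the open set $\{R>\widehat R_+\}$. It is therefore enough to show that $R$ has no kink at any interior point of $I$.

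The core step is a contact-point argument. Suppose $R$ is not affine on $I$. Then there exist $a<m<b$ in $I$ with $R(m)>L(m)$, where $L$ is the chord of $R$ through $(a,R(a))$ and $(b,R(b))$; this follows from concavity and non-affineness. We split into cases according to where $R$ touches $\widehat R_+$ on $[a,b]$.

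\emph{Case 1: no contact.} If $R>\widehat R_+$ on all of $[a,b]$, then $[a,b]$ lies in a single component of $\{R>\widehat R_+\}$, and Fact 3 makes $R$ affine there. This is a contradiction.

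\emph{Case 2: a contact point $p\in(a,b)$.} Here $R(p)=\widehat R_+(p)=(1-p)c$. On $I$ the function $\widehat R_+$ is the affine map $\ell(u)=(1-u)c$, and $R\ge \ell$ on $I$ with equality at the interior point $p$. A concave function lying above an affine function and touching it at an interior point must have $\ell$ as a supporting line at $p$. Concavity gives $R\le \ell$ on $I$, since $\ell$ is a supergradient line at $p$. Combined with $R\ge\ell$, this yields $R=\ell$ on $I$, which is affine, contradicting the assumption.

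\emph{Case 3: contact only at $a$ or $b$.} If contact happens only at endpoints, we can shrink to a subinterval. The set $\{R>\widehat R_+\}\cap(a,b)$ is then all of $(a,b)$, so Fact 3 makes $R$ affine on $(a,b)$, and by continuity on $[a,b]$. This again contradicts $R(m)>L(m)$.

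The point needing care is that Fact 3 is stated for $\widehat R^+$ with the upper-semicontinuity established there. This is exactly why we work with $Q_+$ rather than $Q$, since the paper notes $\operatorname{cav}(\widehat R)=\operatorname{cav}(\widehat R_+)$. If $I$ is closed at an endpoint equal to $0$, continuity of the concave $R$ on $[0,1]$ extends affineness from the interior of $I$ to its closure. That completes the argument; the supporting-line step in Case 2 is the only substantive one.
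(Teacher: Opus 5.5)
Your proof is correct, but it follows a different route from the paper's. The paper never invokes Fact~3. It picks $[a,b]\subset I$ on which the chord $L$ of $R$ lies strictly below $R$ somewhere. It then notes that $D:=R-\widehat R_+$ is concave on $[a,b]$, because $\widehat R_+$ is affine there. Since $D(a),D(b)\ge 0$, the chord of $D$, which is exactly $L-\widehat R_+$, is nonnegative. So replacing $R$ by $L$ on $[a,b]$ gives a strictly smaller concave majorant, contradicting minimality.

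You instead split on the contact set. An interior contact point forces $R=\widehat R_+$ on all of $I$, by the supporting-line property of concave functions. The absence of interior contact places $(a,b)$ inside a single component of $\{R>\widehat R_+\}$, where Fact~3 applies.

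The paper's argument is shorter and self-contained. Because the function being majorized is affine on $I$, the chord is automatically a majorant. None of the upper-semicontinuity and positive-margin machinery behind Fact~3 is needed, and the endpoint $u=0$ needs no continuity argument, since the chord replacement works directly there.

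Your argument gives a sharper by-product. On a constancy block of $Q_+$, either $R$ coincides with $\widehat R_+$ on the whole block, or the block's interior lies entirely off the contact set. That dichotomy already proves the lemma directly. So the contradiction set-up can be dropped: the chord $L$, the point $m$, and your Cases~1 and~3, which are really the single case of no interior contact.
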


\begin{proof}
On $I$, $\widehat R_+(u)=(1-u)v$ for some $v\in[0,1]$, hence $\widehat R_+$ is affine on $I$.
Suppose $R$ were not affine on $I$. Then there would exist a closed subinterval $[a,b]\subset I$
such that the chord
\[
L(u):=\frac{b-u}{b-a}R(a)+\frac{u-a}{b-a}R(b)
\]
satisfies $L(u)<R(u)$ for some $u\in(a,b)$.

Now $D:=R-\widehat R_+$ is concave on $[a,b]$ because $R$ is concave and $\widehat R_+$ is
affine there. Since $D(a)\ge 0$ and $D(b)\ge 0$, the chord joining $(a,D(a))$ and $(b,D(b))$
is nonnegative on $[a,b]$. But
$L-\widehat R_+$
is exactly that chord. Hence $L\ge \widehat R_+$ on $[a,b]$.
Replacing $R$ by $L$ on $[a,b]$ preserves concavity, still majorizes $\widehat R_+$, and lies
strictly below $R$ somewhere on $(a,b)$, contradicting the minimality of
$R=\operatorname{cav}(\widehat R_+)$. Thus $R$ is affine on $I$.
\end{proof}

\begin{proposition}
\label{prop:A-seller-reduction}
Let $G\in\Delta([0,1])$, let $R=\operatorname{cav}((1-u)Q(u))$, and let $\phi(u)=-R'_+(u)$.
Then seller-optimal profit is
\begin{equation}
\Pi(G)=\int_0^1 \pi(\phi(u))\,du.
\label{eq:A-profit-phi}
\end{equation}
Moreover, for any seller-optimal mechanism, the induced realized allocation is uniquely pinned
down $G$-a.s. and equals $q(\phi(u))$ in quantile space. Consequently,
\begin{align}
TS(G)
&=
\int_0^1 \Big(Q(u)q(\phi(u))-c(q(\phi(u)))\Big)\,du,
\label{eq:A-ts-phi}
\\
CS(G)
&=
\int_0^1 (Q(u)-\phi(u))q(\phi(u))\,du.
\label{eq:A-cs-phi}
\end{align}
\end{proposition}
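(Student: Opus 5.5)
We will prove Proposition~\ref{prop:A-seller-reduction} with the standard Myerson argument run in quantile space, using Facts 1--3 and Lemmas~\ref{lem:A-gap-flattening}--\ref{lem:A-affine-on-atom-blocks}. By Lemma~\ref{lem:A-gap-flattening} we may restrict to mechanisms with $\tilde q=0$ below $Q(0)$ and $\tilde q$ constant on support gaps, with $U(Q(0))\ge 0$. For such a mechanism we set $x(u):=\tilde q(Q_+(u))$. This $x$ is nondecreasing and right-continuous, so $x\in\mathcal X$.

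\textbf{Step 1: revenue in quantile space.} From Fact 2 and Fact 1, expected transfers equal $\int_0^1[Q(u)x(u)-U(Q(u))]\,du$. Writing $U(Q(u))=U(Q(0))+\int_{Q(0)}^{Q(u)}\tilde q$ and changing variables $v=Q(s)$, we use Fubini or Stieltjes integration by parts to rewrite this as
\[
\int_{[0,1]}\widehat R_+(u)\,dx(u)-U(Q(0)).
\]
Here $dx$ includes the mass $x(0)$ at zero. Constancy of $\tilde q$ on gaps is exactly what makes the change of variables exact, because $\tilde q$ only enters through its values at points $Q_+(u)$.

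\textbf{Step 2: upper bound.} Since $R\ge \widehat R_+$ and $dx\ge 0$, revenue is at most $\int R\,dx=\int_0^1\phi(u)x(u)\,du$. This last equality is integration by parts using $R(1)=0$ and $R'=-\phi$ a.e. Hence
\[
\text{profit}\le\int_0^1[\phi x-c(x)]\,du\le\int_0^1\pi(\phi(u))\,du,
\]
where the second inequality is pointwise. \textbf{Attainment:} we take $x^*=q\circ\phi$. It is nondecreasing because $\phi$ is, and it is constant wherever $\phi$ is constant. By Fact 3, $\phi$ is constant on components of $\{R>\widehat R_+\}$, so $dx^*$ puts no mass there and $\int R\,dx^*=\int\widehat R_+\,dx^*$. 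By Lemma~\ref{lem:A-affine-on-atom-blocks}, $\phi$ is also constant on intervals where $Q_+$ is constant, that is, on atoms. Hence $x^*$ descends to a well-defined nondecreasing $\tilde q$ on $\operatorname{supp}G$, and we extend it flat across gaps with $U(Q(0))=0$. This gives \eqref{eq:A-profit-phi}.

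\textbf{Step 3: uniqueness and surplus formulas.} Equality in the chain forces $U(Q(0))=0$. It also forces $x(u)=q(\phi(u))$ for a.e.\ $u$, since $z\mapsto\phi z-c(z)$ has a unique maximizer by strict convexity. The flattening of Lemma~\ref{lem:A-gap-flattening} leaves realized allocations unchanged on the support, so any optimal mechanism has realized quantile allocation $q(\phi)$ a.e. Then \eqref{eq:A-ts-phi} follows from Fact 1, and $CS=TS-\Pi$ gives \eqref{eq:A-cs-phi}, using $\pi(\phi)=\phi q(\phi)-c(q(\phi))$. The tie rule $\int R\,dx=\int\widehat R_+\,dx$ is not needed here: a.e.\ uniqueness of $x$ already pins down the realized allocation.

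\textbf{Main obstacle.} The delicate step is Step 1 with atoms and gaps. A jump of $Q$ (a gap) or a flat piece of $Q$ (an atom) makes the change of variables between $v$ and $u$ non-bijective, and the Stieltjes integration-by-parts boundary terms at $u=0,1$ must be handled carefully. We will first verify the identity for step allocations $x=\mathbf 1\{u\ge s\}$, where revenue is a posted price at $Q_+(s)$ and equals $\widehat R_+(s)$. We then extend by linearity and monotone approximation, writing $x$ through its layer-cake representation as a mixture of thresholds.
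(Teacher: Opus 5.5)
Your architecture is the paper's. You decompose $\tilde q$ into posted-price layers and bound revenue by $\int\widehat R_+\,dx\le\int R\,dx=\int_0^1\phi x\,du$. You then maximize pointwise, attain the bound with $x^*=q\circ\phi$ via Fact 3 and Lemma~\ref{lem:A-affine-on-atom-blocks}, and get uniqueness from strict convexity.

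The problem is Step 1. The identity "expected transfers $=\int\widehat R_+\,dx-U(Q(0))$" is false for the mechanisms that Lemma~\ref{lem:A-gap-flattening} delivers, and constancy on gaps is not what makes it exact. That lemma sets $\tilde q$ on a gap $(a,b)$ equal to $\tilde q(a)$. When $a$ is not an atom, $\tilde q$ can jump at $a$ itself. The layers with threshold $a$ then earn $a\,(1-G(a))$, while $\int\widehat R_+\,dx$ credits them $Q_+(G(a))\,(1-G(a))=b\,(1-G(a))$.

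For instance, let $G$ have density $3/2$ on $[0,1/3]\cup[2/3,1]$ and take $\tilde q(v)=\bar q\,\mathbf 1\{v\ge 1/3\}$, which is constant on the gap $(1/3,2/3)$. Then $x=\tilde q\circ Q_+=\bar q\,\mathbf 1\{u\ge 1/2\}$, and your formula gives $\bar q\,\widehat R_+(1/2)=\bar q/3$. But expected transfers are $\tfrac13\bar q\,G([1/3,1])=\bar q/6$. In your planned check on step allocations, a layer is a posted price at $s(z)=\inf\{v:\tilde q(v)\ge z\}$. In general you only get $s(z)\le Q_+(r(z))$, which is the paper's \eqref{eq:A-threshold}. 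With ``$\le$'' in place of ``$=$'', the upper bound and your uniqueness step go through unchanged.

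Attainment, however, needs equality for the mechanism you build. ``Descend $x^*$ to $\operatorname{supp}G$ and extend flat across gaps'' does not pin $\tilde q$ down at a non-atom left endpoint $a$ of a gap, since such an $a$ is not in the range of $Q_+$. The natural descent through $Q$, using $Q(G(a))=a$, sets $\tilde q(a)=x^*(G(a))$. Combined with the flat extension, this loses $(b-a)(1-G(a))\bigl[x^*(G(a))-x^*(G(a)^-)\bigr]$ of revenue.

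This is not a degenerate case. In the example above, $R(u)=2u/3$ on $[0,1/2]$ and $R=\widehat R_+$ on $[1/2,1]$. So $\phi$ jumps from $-2/3$ to $1/3$ at $u=1/2=G(1/3)$, and $x^*$ jumps from $0$ to $q(1/3)>0$ exactly there.

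The missing idea is to build the allocation from left limits, as the paper does: set $\tilde q(v):=$ the $dx^*$-mass of $[0,G(v))$, i.e.\ $x^*(G(v)^-)$. Then every layer $u$ has threshold exactly $Q_+(u)$. Expected transfers become $\int Q_+(u)\Pr(G(V)>u)\,dx^*(u)=\int\widehat R_+\,dx^*$, because $dx^*$ charges no atom-block interior. Fact 3 then turns this into $\int R\,dx^*$.

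Alternatively, add to your Step 1 normalization that $\tilde q\equiv\tilde q(a^-)$ on $[a,b)$ for every gap whose left endpoint $a$ is not an atom. This changes allocations only on a $G$-null set and weakly raises transfers, and under it your equality does hold for almost every layer.
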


\begin{proof}
By Lemma~\ref{lem:A-gap-flattening}, it is enough to consider feasible mechanisms
$(\tilde q,\tilde t)$ such that $\tilde q(v)=0$ for $v<Q(0)$ and $\tilde q$ is constant on every
support gap.

Fix such a mechanism and normalize $U(0)=0$. For each $z\in(0,\tilde q(1)]$, define $A_z:=\{v\in[0,1]:\tilde q(v)\ge z\}$
and the value threshold $s(z):=\inf A_z=\inf\{v\in[0,1]:\tilde q(v)\ge z\}$.
Since $\tilde q$ is nondecreasing, $A_z$ is an upper set. Define
$r(z):=1-G(A_z)$ $z\in(0,\tilde q(1)]$. Thus $1-r(z)$ is the mass of types who receive $z$.

We first record the relation between the value threshold and the right-limit quantile:
\begin{equation}
s(z)\le Q_+(r(z))
\qquad\forall z\in(0,\tilde q(1)].
\label{eq:A-threshold}
\end{equation}
Indeed, if $y<s(z)$, then $\tilde q(y)<z$. Since $\tilde q$ is nondecreasing, every type
$v\le y$ also satisfies $\tilde q(v)<z$, so $[0,y]\subseteq A_z^c$. Hence, $G(y)\le G(A_z^c)=r(z)$. Therefore, no $y<s(z)$ satisfies $G(y)>r(z)$, and by the definition
$Q_+(r(z))=\inf\{v\in[0,1]:G(v)>r(z)\}$
we obtain $Q_+(r(z))\ge s(z)$.

Next define
$x(u):=\int_0^{\tilde q(1)} \mathbf 1\{r(z)\le u\}\,dz$ for $u\in[0,1]$. Then $x$ is nondecreasing and right-continuous. Moreover,
\begin{equation}
x(u)=\tilde q(Q(u))
\qquad\text{for a.e. }u\in[0,1].
\label{eq:A-quantile-allocation}
\end{equation}
To see this, fix $z$. Since $Q(U)\sim G$ for $U\sim\mathrm{Unif}[0,1]$,
$\operatorname{Leb}\{u\in[0,1]:\tilde q(Q(u))\ge z\}
=
G(A_z)
=
1-r(z)$.
The set $\{u:\tilde q(Q(u))\ge z\}$ is an upper interval, because both $Q$ and $\tilde q$
are nondecreasing. Hence its indicator coincides Lebesgue-a.e. with
$\mathbf 1\{r(z)\le u\}$. Integrating over $z$ and applying Fubini gives
\eqref{eq:A-quantile-allocation}.

By construction of $x$, the Lebesgue--Stieltjes measure $dx$ generated by $x$ is the pushforward
of Lebesgue measure on $(0,\tilde q(1)]$ under the map $z\mapsto r(z)$. Therefore, for every
bounded Borel function $f:[0,1]\to\mathbb R$,
\begin{equation}
\int_{[0,1]} f(u)\,dx(u)=\int_0^{\tilde q(1)} f(r(z))\,dz.
\label{eq:A-pushforward}
\end{equation}

Using Fact 2,
$\tilde q(v)=\int_0^{\tilde q(1)} \mathbf 1\{v\in A_z\}\,dz$,
we also have
$U(v)=\int_0^{\tilde q(1)}(v-s(z))_+\,dz$
and hence $\tilde t(v)=v\tilde q(v)-U(v)
=
\int_0^{\tilde q(1)}s(z)\mathbf 1\{v\in A_z\}\,dz$.
Taking expectations under $G$ and using \eqref{eq:A-threshold} gives
\begin{align}
\mathbb E_G[\tilde t(v)]
&=
\int_0^{\tilde q(1)}s(z)G(A_z)\,dz =
\int_0^{\tilde q(1)}s(z)(1-r(z))\,dz \le
\int_0^{\tilde q(1)}(1-r(z))Q_+(r(z))\,dz \notag\\
&=
\int_{[0,1]}(1-u)Q_+(u)\,dx(u).
\label{eq:A-transfer}
\end{align}

Since $Q_+=Q$ except at countably many points, and since
$x(u)=\tilde q(Q(u))$ for Lebesgue-a.e. $u$, Fact 1 yields
\begin{align}
\mathbb E_G[c(\tilde q(v))]
&=
\int_0^1 c(\tilde q(Q(u)))\,du
=
\int_0^1 c(x(u))\,du,
\label{eq:A-cost}
\\
\mathbb E_G[v\tilde q(v)]
&=
\int_0^1 Q(u)\tilde q(Q(u))\,du
=
\int_0^1 Q(u)x(u)\,du.
\label{eq:A-gross-surplus}
\end{align}
Hence
\begin{align}
\Pi(\tilde q,\tilde t;G)
&\le
\int_{[0,1]}(1-u)Q_+(u)\,dx(u)-\int_0^1 c(x(u))\,du,
\label{eq:A-profit-x}
\\
TS(\tilde q,\tilde t;G)
&=
\int_0^1 \Big(Q(u)x(u)-c(x(u))\Big)\,du.
\label{eq:A-ts-x}
\end{align}

Now $R=\operatorname{cav}(\widehat R_+)\ge \widehat R_+$, where
$\widehat R_+(u):=(1-u)Q_+(u)$. Therefore
\[
\Pi(\tilde q,\tilde t;G)
\le
\int_{[0,1]} R(u)\,dx(u)-\int_0^1 c(x(u))\,du.
\]
Using \eqref{eq:A-pushforward} and Fubini,
\begin{align}
\int_{[0,1]} R(u)\,dx(u)
&=
\int_0^{\tilde q(1)} R(r(z))\,dz =
\int_0^{\tilde q(1)}
\left(\int_{r(z)}^1 \phi(t)\,dt\right)dz \notag\\
&=
\int_0^1 \phi(t)
\left(\int_0^{\tilde q(1)}\mathbf 1\{r(z)\le t\}\,dz\right)dt =
\int_0^1 \phi(t)x(t)\,dt.
\label{eq:A-stieltjes-parts}
\end{align}
Therefore
$\Pi(\tilde q,\tilde t;G)
\le
\int_0^1\big(\phi(u)x(u)-c(x(u))\big)\,du
\le
\int_0^1\pi(\phi(u))\,du$.

It remains to show that this upper bound is attained. Define
$x^*(u):=q(\phi(u))$, $u\in[0,1]$.
Because $q(\cdot)$ is continuous and increasing and $\phi$ is nondecreasing and right-continuous,
$x^*$ is nondecreasing and right-continuous.
By Fact 3, $R$ is affine on every connected component of $\{u\in(0,1):R(u)>\widehat R_+(u)\}$,
so $\phi$ is constant on each such interval. Hence $x^*$ is constant on each such interval. By
Lemma~\ref{lem:A-affine-on-atom-blocks}, $R$ is also affine on every interval on which $Q_+$ is
constant, hence $\phi$ and $x^*$ are constant on every atom block
$(G(v^-),G(v))$, $v\in\operatorname{supp}(G))$.
Let $\mu^*$ be the Lebesgue--Stieltjes measure generated by $x^*$, with
$\mu^*([0,u])=x^*(u)$, $u\in[0,1]$.
Define the allocation rule
$q^*(v):=\int_{[0,1]}\mathbf 1\{u<G(v)\}\,\mu^*(du)$, $v\in[0,1]$.
Then $q^*$ is nondecreasing and Borel measurable. It is also zero below $Q(0)$ and constant on
support gaps. Define
$U^*(v):=\int_0^v q^*(s)\,ds$ and $t^*(v):=vq^*(v)-U^*(v)$.
By Fact 2, $(q^*,t^*)$ is IC and IR.

We claim first that the induced quantile allocation is $x^*$:
$q^*(Q(u))=x^*(u)$ for a.e. $u\in[0,1]$.
Indeed, if $u$ lies in an atom block $(G(v^-),G(v))$, then $Q(u)=v$ and
$q^*(v)=\mu^*([0,G(v)))=x^*(G(v)^-)$.
Since $x^*$ is constant on the atom block, this equals $x^*(u)$ for Lebesgue-a.e. $u$ in that
block. Outside atom blocks, the identity follows from the same monotone-quantile argument used in
\eqref{eq:A-quantile-allocation}; possible failures can occur only at jump points of $x^*$ or $Q$,
which form a countable set. Hence
\begin{align}
\mathbb E_G[c(q^*(v))]
&=
\int_0^1 c(x^*(u))\,du,
\\
\mathbb E_G[vq^*(v)]
&=
\int_0^1 Q(u)x^*(u)\,du.
\end{align}

Next compute expected transfers. For each fixed $u$, $\mathbf 1\{u<G(v)\}$ has critical value $Q_+(u)$, and
$\int_0^v \mathbf 1\{u<G(s)\}\,ds=(v-Q_+(u))_+$.
Therefore, by Fubini,
\[
t^*(v)
=
\int_{[0,1]}Q_+(u)\mathbf 1\{u<G(v)\}\,\mu^*(du).
\]
Taking expectations,
\[\mathbb E_G[t^*(v)]
=
\int_{[0,1]}Q_+(u) \Pr(G(V)>u)\,\mu^*(du).\]
Let $\mathcal :=\bigcup_{v\in[0,1]}(G(v^-),G(v))$
be the union of the interiors of the atom blocks. For every $u\notin\mathcal A$,
$\Pr_G(G(V)>u)=1-u$.
Moreover, $\mu^*(\mathcal A)=0$, because $x^*$ is constant on every atom block. Hence
\[
\mathbb E_G[t^*(v)]
=
\int_{[0,1]}(1-u)Q_+(u)\,\mu^*(du)
=
\int_{[0,1]}\widehat R_+(u)\,dx^*(u).
\]

Finally, since $x^*$ is constant on every connected component of
$\{R>\widehat R_+\}$, the measure $dx^*$ is carried by the contact set
$\{u\in[0,1]:R(u)=\widehat R_+(u)\}$.
Consequently, $\int_{[0,1]}\widehat R_+(u)\,dx^*(u)
=
\int_{[0,1]}R(u)\,dx^*(u)$.
Applying \eqref{eq:A-stieltjes-parts} to $x^*$ gives $\int_{[0,1]}R(u)\,dx^*(u)
=
\int_0^1\phi(u)x^*(u)\,du$.
Therefore,
\[
\Pi(q^*,t^*;G)
=
\int_0^1\big(\phi(u)x^*(u)-c(x^*(u))\big)\,du
=
\int_0^1\pi(\phi(u))\,du.
\]
This proves \eqref{eq:A-profit-phi}.

Finally, let $(\tilde q,\tilde t)$ be any seller-optimal mechanism. By Lemma~\ref{lem:A-gap-flattening},
we may replace it by a gap-flattened seller-optimal mechanism with the same realized allocation
$G$-a.s. For that mechanism, the upper-bound argument above produces a quantile allocation $x$
satisfying
\[
\Pi(\tilde q,\tilde t;G)
\le
\int_0^1\big(\phi(u)x(u)-c(x(u))\big)\,du
\le
\int_0^1\pi(\phi(u))\,du.
\]
Since the mechanism is seller-optimal and the upper bound is attained by $(q^*,t^*)$, both
inequalities must hold with equality. Strict convexity of $c$ implies that, for every $u$, the pointwise
maximizer of $x\mapsto \phi(u)x-c(x)$
is unique. Hence $x(u)=q(\phi(u))$ for a.e. $u\in[0,1]$. Using \eqref{eq:A-quantile-allocation}, this says that the realized allocation of any seller-optimal
mechanism is uniquely pinned down $G$-a.s. and equals $q(\phi(u))$ in quantile space.

Substituting $x(u)=q(\phi(u))$ into \eqref{eq:A-ts-x} yields \eqref{eq:A-ts-phi}. Finally,
\eqref{eq:A-cs-phi} follows from $CS=TS-\Pi$ together with \eqref{eq:A-profit-phi}.
\end{proof}

\section*{Quadratic Cost Example}
\label{sec:quadratic-example}
As a tractable example, let $c(q)=q^2/2$. Then $q(\phi)=\phi$ and $\pi(\phi)=\phi^2/2$, so the free-boundary system becomes linear and the optimizer can be written in closed form. For each $k\in(1/2,1]$, the unique maximizer $\phi_k\in\Phi$ has a cutoff $b_k\in(0,1)$ such that $0<\phi_k(u)<1$ for  $u\in(0,b_k)$, $\phi_k(u)=1$ for $u\in[b_k,1]$,
and $H_k[\phi_k](u)=0$ $u\in(0,b_k)$.

\subsection*{The free-boundary problem}

Set $\mu_k:=(3k-1)/k$. Since $q(\phi)=\phi$ and $q'(\phi)=1$, the Euler-Lagrange equation $H_k[\phi_k]=0$ becomes
\[
kA_k(u)+k\bigl(Q_k(u)-\phi_k(u)\bigr)+(1-2k)\phi_k(u)=0,
\]
that is,
\begin{equation}\label{eq:quad-FB-algebraic}
Q_k(u)=\mu_k\,\phi_k(u)-A_k(u)
\qquad\text{for a.e.\ }u\in(0,b_k).
\end{equation}
Moreover,
\begin{equation}\label{eq:quad-FB-kinematics}
A_k'(u)=\frac{\phi_k(u)}{1-u},
\qquad
Q_k'(u)=\frac{Q_k(u)-\phi_k(u)}{1-u}
\qquad\text{for a.e.\ }u\in(0,b_k).
\end{equation}
At the free boundary,
\begin{equation}\label{eq:quad-FB-boundary-u}
A_k(0)=0,
\qquad
\phi_k(b_k)=1,
\qquad
Q_k(b_k)=1,
\qquad
A_k(b_k)=\lambda_k.
\end{equation}

It is convenient to pass to the logarithmic variable so $x:=-\ln(1-u)\in[0,T_k]$ and $T_k:=-\ln(1-b_k)$. Write, with a slight abuse of notation,
\[
A_k(x):=A_k(1-e^{-x}),
\qquad
\Phi_k(x):=\phi_k(1-e^{-x}),
\qquad
Q_k(x):=Q_k(1-e^{-x}).
\]
Then \eqref{eq:quad-FB-algebraic}--\eqref{eq:quad-FB-boundary-u} are equivalent to the free-boundary system
\begin{equation}\label{eq:quad-FB-ode}
A_k'(x)=\Phi_k(x),
\qquad
\Phi_k'(x)=\Phi_k(x)-\frac{k}{3k-1}A_k(x)
\qquad (0<x<T_k),
\end{equation}
with boundary conditions
\begin{equation}\label{eq:quad-FB-boundary-x}
A_k(0)=0,
\qquad
A_k(T_k)=\lambda_k,
\qquad
\Phi_k(T_k)=1.
\end{equation}
Equivalently, $A_k$ solves the linear second-order free-boundary problem
\begin{equation}\label{eq:quad-FB-second-order}
A_k''(x)-A_k'(x)+\frac{k}{3k-1}A_k(x)=0
\qquad (0<x<T_k),
\end{equation}
with the same boundary conditions \eqref{eq:quad-FB-boundary-x}.

\subsection*{Closed-form solution}

Define
\[
\omega_k:=\frac12\sqrt{\frac{k+1}{3k-1}},
\qquad
\eta_k:=\frac{1}{k}\sqrt{\frac{3k-1}{k+1}},
\qquad
\sigma_k:=\frac{1-k}{\sqrt{(k+1)(3k-1)}},
\qquad
\rho_k:=\sqrt{\frac{3k-1}{k+1}}.
\]
Also define
\begin{equation}\label{eq:quad-theta}
\theta_k:=\arctan\!\left((2k-1)\sqrt{\frac{k+1}{3k-1}}\right)\in(0,\pi/4],
\end{equation}
and
\begin{equation}\label{eq:quad-b-formula}
T_k:=\frac{\theta_k}{\omega_k}
=
2\sqrt{\frac{3k-1}{k+1}}\,
\arctan\!\left((2k-1)\sqrt{\frac{k+1}{3k-1}}\right),
\qquad
b_k:=1-e^{-T_k}.
\end{equation}
Finally, for $u\in[0,b_k]$, define
\[
\tau_k(u):=\ln\frac{1-u}{1-b_k}\in[0,T_k].
\]

\begin{proposition}[Closed-form quadratic solution]\label{prop:quadratic-closed-form}
Let $k\in(1/2,1]$. The unique maximizer is given by
\[
\phi_k(u)=
\begin{cases}
\displaystyle
\sqrt{\frac{1-b_k}{1-u}}
\left[
\cos\!\bigl(\omega_k\tau_k(u)\bigr)
-
\sigma_k\sin\!\bigl(\omega_k\tau_k(u)\bigr)
\right],
& 0\le u\le b_k,\\[12pt]
1, & b_k\le u\le 1,
\end{cases}
\]
where $b_k$ is given by \eqref{eq:quad-b-formula}.
The associated quantile $Q_k=G_k^{-1}$ is
\[
Q_k(u)=
\begin{cases}
\displaystyle
\sqrt{\frac{1-b_k}{1-u}}
\left[
\cos\!\bigl(\omega_k\tau_k(u)\bigr)
+
\rho_k\sin\!\bigl(\omega_k\tau_k(u)\bigr)
\right],
& 0\le u\le b_k,\\[12pt]
1, & b_k\le u\le 1.
\end{cases}
\]
\end{proposition}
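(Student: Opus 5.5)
The plan is to use the uniqueness and free-boundary characterization of Theorem~\ref{thm:main-structure} and Proposition~\ref{prop:sec-unique-free-boundary}. Those results say the maximizer is unique and that its interior branch solves the free-boundary system. For quadratic cost this system reduces to \eqref{eq:quad-FB-ode}--\eqref{eq:quad-FB-boundary-x}. So it suffices to solve that linear system explicitly, check that the resulting $\phi_k$ lies in $\Phi$ with $0<\phi_k<1$ on $(0,b_k)$, and conclude by uniqueness that it is the maximizer. One more observation is needed: a solution of the free-boundary system is automatically the one coming from the global maximizer. This holds because the maximizer's trajectory solves the same terminal-value problem backward from the common point $(A,\Phi)=(\lambda_k,1)$, and the linear ODE has unique solutions for each $T$. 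The only freedom is therefore $T$, which is pinned down by $A_k(0)=0$, and we choose the smallest admissible root.

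First I derive \eqref{eq:quad-FB-ode}. Write $\Phi=\phi\circ(1-e^{-x})$. Then $A'(x)=\Phi$ and $Q'(x)=Q-\Phi$ in log time. Substituting $Q=\mu_k\Phi-A$ into the $Q$-equation gives $\mu_k\Phi'-\Phi=\mu_k\Phi-A-\Phi$, which is $\Phi'=\Phi-A/\mu_k$. The characteristic equation of \eqref{eq:quad-FB-second-order} is $r^2-r+k/(3k-1)=0$. Its discriminant is $1-4k/(3k-1)=-(k+1)/(3k-1)<0$, so the roots are $\tfrac12\pm i\omega_k$. Set $s=T_k-x=\tau_k(u)$. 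Solving backward from $s=0$ with $A=\lambda_k$ and $A'=\Phi=1$ gives
\[
\Phi=e^{-s/2}\bigl[\cos(\omega_k s)+\beta\sin(\omega_k s)\bigr],
\]
where $\beta$ is fixed by $\Phi'(s{=}0)=-(1-\lambda_k/\mu_k)$. Next I match $\beta=-\sigma_k$, using $\lambda_k/\mu_k=(2k-1)/(3k-1)$, so that $1-\lambda_k/\mu_k=k/(3k-1)$. Similarly, $A=\mu_k(\Phi-\Phi')$ in forward time, and requiring $A=0$ at $s=T_k$ yields $\tan(\omega_k T_k)=(2k-1)\sqrt{(k+1)/(3k-1)}$, which is \eqref{eq:quad-theta}--\eqref{eq:quad-b-formula}. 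Since $e^{-s/2}=\sqrt{(1-b_k)/(1-u)}$, this gives the displayed $\phi_k$. Finally, $Q=\mu_k\Phi-A$, and the same trigonometric bookkeeping produces the coefficient $\rho_k$.

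Then I verify admissibility. We need $0<\omega_k T_k=\theta_k\le\pi/4$, where the bound follows because $(2k-1)^2(k+1)\le 3k-1$ on $(1/2,1]$. On this range, $\cos-\sigma_k\sin>0$ since $\sigma_k\tan\theta_k<1$. The solution must also be strictly increasing in $u$, i.e.\ $\Phi'(x)>0$, equivalently $\Phi>A/\mu_k$; this can be checked from the sign of $\Phi-\Phi'$ together with $A\ge0$, or directly by differentiating the explicit formula. It also satisfies $\phi_k<1$ on $(0,b_k)$ and $Q_k\ge\phi_k$. These checks are the main obstacle, together with ruling out a smaller positive root $T$ of the condition $A(T)=0$. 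For the latter, $\theta_k$ is the first zero because $\omega_k T<\pi/2$, and the structural results force $A>0$ on $(0,b_k)$, so any other branch is excluded. Once admissibility is shown, uniqueness from Proposition~\ref{prop:sec-unique-half-plus} completes the proof.
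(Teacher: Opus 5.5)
Your proposal is correct and follows essentially the paper's proof: solve the linear system \eqref{eq:quad-FB-ode} backward from the common terminal point $(A,\Phi)=(\lambda_k,1)$, fix $T_k$ through $A_k(0)=0$, select the first root, and invoke uniqueness of the maximizer. The only difference in root selection is that you use $A_k>0$ on the interior, whereas the paper uses $\phi_k>0$.

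One remark on the logic. The argument is carried by the necessity direction: the maximizer exists and must solve this system. The admissibility checks you flag as the main obstacle (monotonicity, $\phi_k<1$, $Q_k\ge\phi_k$) are therefore redundant. A pure verification argument would not suffice on its own, because the free-boundary system is only a necessary condition for optimality.
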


\begin{proof}
The derivation is given in the ``Supported Material for the Quadratic Cost Example" later in this Online Appendix. The free-boundary system \eqref{eq:quad-FB-ode} is linear.
Solving it backward from $x=T_k$ yields the displayed formulas for $A_k$, $\phi_k$, and $Q_k$.
The boundary condition $A_k(0)=0$ gives the equation $\lambda_k\cos(\omega_k T_k)-\eta_k\sin(\omega_k T_k)=0$,
whose smallest positive solution is exactly $T_k=\theta_k/\omega_k$ with $\theta_k$ given by \eqref{eq:quad-theta}.
This is the unique admissible root because it is the only one for which $\phi_k(u)>0$ on $(0,b_k)$.
Uniqueness of the maximizer then follows from the uniqueness theorem proved above.
\end{proof}

\subsection*{Welfare coordinates and the $(\mathrm{CS},\Pi)$ frontier}

Let $\CS_k:=\CS(G_k)$,$\Pi_k:=\PiR(G_k)$ and $
W_k^*:=k\,\CS_k+(1-k)\,\Pi_k$ and define $m_k:=1-b_k=e^{-T_k}.$
 We then have:

\begin{proposition}\label{prop:quadratic-welfare}
For each $k\in(1/2,1]$,
\[
\CS_k
=
m_k\frac{\rho_k+\sigma_k}{\omega_k}
\left[
\frac12\sin^2\theta_k
-
\sigma_k
\left(
\frac{\theta_k}{2}-\frac{\sin(2\theta_k)}{4}
\right)
\right],
\]
and
\[
\Pi_k
=
\frac{m_k}{2}
+
\frac{m_k}{2\omega_k}
\left[
\frac{1+\sigma_k^2}{2}\,\theta_k
+
\frac{1-\sigma_k^2}{4}\,\sin(2\theta_k)
-
\sigma_k\sin^2\theta_k
\right].
\]
In particular, the curve $
k\longmapsto (\CS_k,\Pi_k)$, $k\in(1/2,1]$
 together with the endpoint $(0,1/2)$ at $k=1/2$  gives the Pareto frontier for the quadratic benchmark.
\end{proposition}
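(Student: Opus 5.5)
The plan is to compute $\CS_k$ and $\Pi_k$ directly from the closed form of Proposition~\ref{prop:quadratic-closed-form}. We substitute into the reduced formulas
\[
\Pi(\phi)=\int_0^1\tfrac12\phi(u)^2\,du,\qquad \CS(\phi)=\int_0^1\bigl(Q_\phi(u)-\phi(u)\bigr)\phi(u)\,du,
\]
which hold because $q(\phi)=\phi$ and $\pi(\phi)=\phi^2/2$. Each integral splits into the interior $[0,b_k]$ and the top tail $[b_k,1]$.

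On the tail, $\phi_k=Q_k=1$. So the tail contributes $m_k/2$ to profit and nothing to consumer surplus. This explains the leading term $m_k/2$ in $\Pi_k$ and the absence of such a term in $\CS_k$.

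On the interior we change variables to $\tau=\tau_k(u)=\ln\frac{1-u}{1-b_k}$, which gives $du=-(1-u)\,d\tau$. The prefactor $\sqrt{(1-b_k)/(1-u)}$ squares to $(1-b_k)/(1-u)$, and this factor cancels the Jacobian $(1-u)$. Every interior integral therefore becomes $m_k\int_0^{T_k}(\cdots)\,d\tau$, where the integrand is a quadratic trigonometric polynomial in $\omega_k\tau$. Substituting $s=\omega_k\tau\in[0,\theta_k]$, since $\omega_kT_k=\theta_k$, produces the overall factor $m_k/\omega_k$.

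For profit, the interior integrand is $\tfrac12(\cos s-\sigma_k\sin s)^2$. Expanding it and integrating on $[0,\theta_k]$ uses three identities:
\[
\int_0^{\theta}\cos^2 s\,ds=\frac{\theta}{2}+\frac{\sin 2\theta}{4},\qquad \int_0^{\theta}\sin^2 s\,ds=\frac{\theta}{2}-\frac{\sin 2\theta}{4},\qquad \int_0^{\theta}2\sin s\cos s\,ds=\sin^2\theta.
\]
Collecting terms gives exactly the bracket stated for $\Pi_k$.

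For consumer surplus, $Q_k-\phi_k=\sqrt{(1-b_k)/(1-u)}\,(\rho_k+\sigma_k)\sin s$. The integrand is then $(\rho_k+\sigma_k)\sin s\,(\cos s-\sigma_k\sin s)$. Integrating $\sin s\cos s$ gives $\tfrac12\sin^2\theta_k$, and integrating $\sigma_k\sin^2 s$ gives $\sigma_k(\theta_k/2-\sin(2\theta_k)/4)$, which reproduces the displayed formula.

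The frontier claim then follows from Theorem~\ref{thm:supported-equals-pareto}. The quadratic cost satisfies Assumptions~\ref{ass:cost} and~\ref{ass:cost-c3-unique}, since $c'''=0$ and $\bar q=1$. By that theorem the Pareto frontier is $\{(c_{1/2},\pi_{1/2})\}\cup\{(c_k,\pi_k):k\in(1/2,1]\}$. By Theorem~\ref{thm:main-structure} part 1, the endpoint is generated by $G=\delta_1$, so $(c_{1/2},\pi_{1/2})=(0,\pi(1))=(0,1/2)$.

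The main obstacle is bookkeeping rather than anything conceptual. Two points need care. First, the orientation of $\tau$: it runs from $T_k$ at $u=0$ down to $0$ at $u=b_k$, so the sign flip in $du=-(1-u)\,d\tau$ must be absorbed by reversing the limits. Second, the expansion of the square must correctly produce the cross term coefficient $-\sigma_k\sin^2\theta_k$ together with the $(1\pm\sigma_k^2)$ weights.

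As a sanity check I would verify continuity at $k=1$: there $\sigma_1=0$, the profit bracket reduces to $\tfrac{\theta_1}{2}+\tfrac{\sin 2\theta_1}{4}$, and $\theta_1=\pi/4$.
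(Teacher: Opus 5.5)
Your proposal is correct and follows essentially the same route as the paper's own computation. You split off the efficient tail, which contributes $m_k/2$ to profit and nothing to consumer surplus. You change variables to $\tau_k(u)$, so the squared prefactor cancels the Jacobian $(1-u)$ and leaves the factor $m_k$. You then rescale by $\omega_k$ and evaluate elementary trigonometric integrals over $[0,\theta_k]$. The frontier statement is, as you say, Theorem~\ref{thm:supported-equals-pareto} applied to $c(q)=q^2/2$, with the endpoint $(0,\pi(1))=(0,1/2)$ coming from $G=\delta_1$.
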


\subsection*{Supporting Material for the Quadratic Cost Example}\label{app:quadratic-appendix}

\paragraph*{A. Solving the free-boundary problem.}
Let $B_k(\tau):=A_k(T_k-\tau)$, $\tau\in[0,T_k]$.
Since $A_k$ solves \eqref{eq:quad-FB-second-order}, the function $B_k$ solves
\[
B_k''(\tau)+B_k'(\tau)+\frac{k}{3k-1}B_k(\tau)=0.
\]
The characteristic roots are
\[
r_{\pm}=-\frac12\pm i\omega_k,
\qquad
\omega_k=\frac12\sqrt{\frac{k+1}{3k-1}}.
\]
Hence $B_k(\tau)=e^{-\tau/2}\Big(C_{1,k}\cos(\omega_k\tau)+C_{2,k}\sin(\omega_k\tau)\Big)$.
The boundary conditions at $\tau=0$ are
$B_k(0)=A_k(T_k)=\lambda_k$ and $B_k'(0)=-A_k'(T_k)=-\Phi_k(T_k)=-1$.
Thus, $C_{1,k}=\lambda_k$ and
$-\dfrac{\lambda_k}{2}+\omega_k C_{2,k}=-1$,
so
\[
C_{2,k}
=
-\frac{1}{k}\sqrt{\frac{3k-1}{k+1}}
=
-\eta_k.
\]
Therefore
\[
A_k(u)=
\sqrt{\frac{1-b_k}{1-u}}
\left[
\lambda_k\cos\!\bigl(\omega_k\tau_k(u)\bigr)
-
\eta_k\sin\!\bigl(\omega_k\tau_k(u)\bigr)
\right]
\qquad (0\le u\le b_k).
\]

Since $\phi_k=(dA_k/dx)$ in the logarithmic variable $x=-\ln(1-u)$, equivalently
$\phi_k(u)=-B_k'(\tau_k(u))$, we obtain
\[
\phi_k(u)
=
\sqrt{\frac{1-b_k}{1-u}}
\left[
\left(\frac{\lambda_k}{2}+\omega_k\eta_k\right)\cos\!\bigl(\omega_k\tau_k(u)\bigr)
+
\left(\omega_k\lambda_k-\frac{\eta_k}{2}\right)\sin\!\bigl(\omega_k\tau_k(u)\bigr)
\right].
\]
Now
\[
\frac{\lambda_k}{2}+\omega_k\eta_k
=
\frac{2k-1}{2k}
+\frac12\sqrt{\frac{k+1}{3k-1}}\cdot\frac{1}{k}\sqrt{\frac{3k-1}{k+1}}
=1,
\]
and
\[
\omega_k\lambda_k-\frac{\eta_k}{2}
=
\frac{2k-1}{2k}\sqrt{\frac{k+1}{3k-1}}
-
\frac{1}{2k}\sqrt{\frac{3k-1}{k+1}}
=
-\frac{1-k}{\sqrt{(k+1)(3k-1)}}
=
-\sigma_k.
\]
Hence
\[
\phi_k(u)=
\sqrt{\frac{1-b_k}{1-u}}
\left[
\cos\!\bigl(\omega_k\tau_k(u)\bigr)
-
\sigma_k\sin\!\bigl(\omega_k\tau_k(u)\bigr)
\right].
\]

Using the algebraic relation \eqref{eq:quad-FB-algebraic}, $Q_k(u)=\mu_k\phi_k(u)-A_k(u)$, we obtain
\[
Q_k(u)=
\sqrt{\frac{1-b_k}{1-u}}
\left[
\cos\!\bigl(\omega_k\tau_k(u)\bigr)
+
\rho_k\sin\!\bigl(\omega_k\tau_k(u)\bigr)
\right],
\]
because $\mu_k-\lambda_k=1$,
and
\[
\eta_k-\mu_k\sigma_k
=
\frac{1}{k}\sqrt{\frac{3k-1}{k+1}}
-
\frac{3k-1}{k}\cdot\frac{1-k}{\sqrt{(k+1)(3k-1)}}
=
\sqrt{\frac{3k-1}{k+1}}
=
\rho_k.
\]

\paragraph*{B. Determination of the free boundary.}
The condition $A_k(0)=0$ is equivalent to
\[
0
=
A_k(0)
=
e^{-T_k/2}
\Big(
\lambda_k\cos(\omega_k T_k)-\eta_k\sin(\omega_k T_k)
\Big),
\]
hence
\[
\tan(\omega_k T_k)=\frac{\lambda_k}{\eta_k}
=
(2k-1)\sqrt{\frac{k+1}{3k-1}}.
\]
Define $\theta_k=\omega_k T_k$ by
\[
\theta_k:=\arctan\!\left((2k-1)\sqrt{\frac{k+1}{3k-1}}\right).
\]
Then the principal branch gives the smallest positive root
\[
T_k=\frac{\theta_k}{\omega_k}
=
2\sqrt{\frac{3k-1}{k+1}}\,
\arctan\!\left((2k-1)\sqrt{\frac{k+1}{3k-1}}\right),
\]
and therefore $b_k=1-e^{-T_k}$.

To see that this is the admissible root, note that for $k<1$,
\[
\frac{\lambda_k/\eta_k}{1/\sigma_k}
=
\frac{(2k-1)(1-k)}{3k-1}<1.
\]
Hence $\theta_k<\arctan(1/\sigma_k)$,
so the factor $\cos(\omega_k\tau)-\sigma_k\sin(\omega_k\tau)$
stays strictly positive for $0\le \tau\le T_k$, which gives $\phi_k(u)>0$ on $(0,b_k)$.
For $k=1$, $\sigma_1=0$ and positivity is immediate.

\paragraph*{C. Welfare calculations.}
On the interior region $[0,b_k]$, write $\tau=\tau_k(u)$.
Then
\[
Q_k(u)-\phi_k(u)
=
\sqrt{\frac{1-b_k}{1-u}}\,(\rho_k+\sigma_k)\sin(\omega_k\tau).
\]
Also $du=-(1-u)\,d\tau$.
Therefore
\begin{align*}
\CS_k
&=
\int_0^{b_k} (Q_k(u)-\phi_k(u))\phi_k(u)\,du=
(1-b_k)(\rho_k+\sigma_k)
\int_0^{T_k}
\Big(\cos(\omega_k\tau)-\sigma_k\sin(\omega_k\tau)\Big)\sin(\omega_k\tau)\,d\tau.
\end{align*}
With $w=\omega_k\tau$ and $\theta_k=\omega_k T_k$,
\begin{align*}
\CS_k
&=
(1-b_k)\frac{\rho_k+\sigma_k}{\omega_k}
\int_0^{\theta_k}
\Big(\cos w-\sigma_k\sin w\Big)\sin w\,dw\\
&=
(1-b_k)\frac{\rho_k+\sigma_k}{\omega_k}
\left[
\frac12\sin^2\theta_k
-
\sigma_k\left(\frac{\theta_k}{2}-\frac{\sin(2\theta_k)}{4}\right)
\right].
\end{align*}

For profit,
\[
\Pi_k
=
\frac12\int_0^1 \phi_k(u)^2\,du
=
\frac12(1-b_k)+\frac12\int_0^{b_k}\phi_k(u)^2\,du.
\]
Again using $du=-(1-u)\,d\tau$ and the explicit form of $\phi_k$,
\begin{align*}
\Pi_k
&=
\frac{1-b_k}{2}
+
\frac{1-b_k}{2}
\int_0^{T_k}
\Big(\cos(\omega_k\tau)-\sigma_k\sin(\omega_k\tau)\Big)^2\,d\tau\\
&=
\frac{1-b_k}{2}
+
\frac{1-b_k}{2\omega_k}
\int_0^{\theta_k}
\Big(\cos w-\sigma_k\sin w\Big)^2\,dw.
\end{align*}
Now
\[
\int_0^{\theta}
(\cos w-\sigma\sin w)^2\,dw
=
\frac{1+\sigma^2}{2}\theta
+
\frac{1-\sigma^2}{4}\sin(2\theta)
-
\sigma\sin^2\theta,
\]
so
\[
\Pi_k
=
\frac{1-b_k}{2}
+
\frac{1-b_k}{2\omega_k}
\left[
\frac{1+\sigma_k^2}{2}\theta_k
+
\frac{1-\sigma_k^2}{4}\sin(2\theta_k)
-
\sigma_k\sin^2\theta_k
\right].
\]

\end{document}